\def\draft{1}
\newtheorem{theorem}{Theorem}[section]
\newtheorem{lemma}[theorem]{Lemma}
\newtheorem{proposition}[theorem]{Proposition}
\newtheorem{corollary}[theorem]{Corollary}
\theoremstyle{definition}
\newtheorem{definition}[theorem]{Definition}
\newcommand{\eps}{\epsilon}
\newcommand{\dmax}{d_{\mathrm{max}}}
\newcommand{\ds}{\text{\textcircled{s}}}
\newcommand{\R}{\mathbb{R}}
\newcommand{\Q}{\mathbb{Q}}
\newcommand{\Space}{\mathrm{Space}}
\newcommand{\class}[1]{\mathbf{#1}}
\newcommand{\DSPACE}{\class{DSPACE}}
\newcommand{\eqdef}{\mathbin{\stackrel{\rm def}{=}}}
\newcommand{\capprox}{\mathbin{\stackrel{\rm \circ}{\approx}}}
\newcommand{\tO}{\tilde{O}}
\newcommand{\poly}{\mathrm{poly}}
\newcommand{\polylog}{\mathrm{polylog}}
\def\textprob#1{\textmd{\textsc{#1}}}
\newcommand{\USTConn}{\textprob{Undirected S-T Connectivity}}
\newcommand{\Real}{\mathrm{Re}}
\newcommand{\authnote}[2]{{ $\ll$\textsf{\color{green} \footnotesize #1 notes: #2}$\gg$}}
\newcommand{\authnote}[2]{}
\newcounter{myalgctr}
\numberwithin{myalgctr}{section}
\newtheoremstyle{named}{}{}{\itshape}{}{\bfseries}{.}{.5em}{\thmnote{#3}}
\theoremstyle{named}
\newtheorem{namedtheorem}{Theorem}[section]
\newcommand{\super}[2]{#1^{(#2)}}
\title{High-precision Estimation of Random Walks in Small Space}
\date{\today}
\author{AmirMahdi Ahmadinejad\\
Stanford University\\
\texttt{ahmadi@stanford.edu}
\and Jonathan Kelner \\
Massachusetts Institute of Technology\\
\texttt{kelner@mit.edu}
\and Jack Murtagh\thanks{Research supported by NSF grant CCF-1763299.} \\
Harvard University\\ 
\texttt{jmurtagh@g.harvard.edu}
\and John Peebles\thanks{Research supported by Swiss National Science Foundation Grant \#200021\_182527.}\\
Yale University\\
\texttt{john.peebles@yale.edu}
\and Aaron Sidford\thanks{Research supported by NSF CAREER Award CCF-1844855.} \\
Stanford University\\
\texttt{sidford@stanford.edu}
\and Salil Vadhan\thanks{Research supported by NSF grant CCF-1763299 and a Simons Investigator Award.}\\
Harvard University\\
\texttt{salil\_vadhan@harvard.edu}
}
\begin{document}
\global\long\def\R{\mathbb{R}}%
 
\global\long\def\Z{\mathbb{Z}}%
 
\global\long\def\C{\mathbb{C}}%
\global\long\def\Q{\mathbb{Q}}%

\global\long\def\ellOne{\ell_{1}}%
 
\global\long\def\ellTwo{\ell_{2}}%
 
\global\long\def\ellInf{\ell_{\infty}}%

\global\long\def\boldVar#1{\mathbf{#1}}%
\global\long\def\mvar#1{\boldVar{#1}}%
\global\long\def\vvar#1{\vec{#1}}%


\global\long\def\defeq{\stackrel{\mathrm{{\scriptscriptstyle def}}}{=}}%
\global\long\def\E{\mathbb{E}}%
\global\long\def\otilde{\tilde{O}}%


\global\long\def\gradient{\bigtriangledown}%
 
\global\long\def\grad{\gradient}%
 
\global\long\def\hessian{\gradient^{2}}%
 
\global\long\def\hess{\hessian}%
 
\global\long\def\jacobian{\mvar J}%

 
\global\long\def\setVec#1{\onesVec_{#1}}%
 
\global\long\def\indicVec#1{\onesVec_{#1}}%

\global\long\def\innerProduct#1#2{\big\langle#1 , #2 \big\rangle}%
 
\global\long\def\norm#1{\|#1\|}%
\global\long\def\normFull#1{\left\Vert #1\right\Vert }%

\global\long\def\opt{\mathrm{opt}}%

\global\long\def\va{\vvar a}%
 
\global\long\def\vb{\vvar b}%
 
\global\long\def\vc{\vvar c}%
 
\global\long\def\vd{\vvar d}%
 
\global\long\def\ve{\vvar e}%
 
\global\long\def\vf{\vvar f}%
 
\global\long\def\vg{\vvar g}%
 
\global\long\def\vh{\vvar h}%
 
\global\long\def\vl{\vvar l}%
 
\global\long\def\vm{\vvar m}%
 
\global\long\def\vn{\vvar n}%
 
\global\long\def\vo{\vvar o}%
 
\global\long\def\vp{\vvar p}%
 
\global\long\def\vq{\vvar q}%
 
\global\long\def\vr{\vvar r}%
 
\global\long\def\vs{\vvar s}%
 
\global\long\def\vu{\vvar u}%
 
\global\long\def\vv{\vvar v}%
 
\global\long\def\vw{\vvar w}%
 
\global\long\def\vx{\vvar x}%
 
\global\long\def\vy{\vvar y}%
 
\global\long\def\vz{\vvar z}%

\global\long\def\vpi{\vvar{\pi}}%
\global\long\def\vxi{\vvar{\xi}}%
\global\long\def\vchi{\vvar{\chi}}%
 
\global\long\def\valpha{\vvar{\alpha}}%
 
\global\long\def\veta{\vvar{\eta}}%
 
\global\long\def\vlambda{\vvar{\lambda}}%
 
\global\long\def\vmu{\vvar{\mu}}%
\global\long\def\vdelta{\vvar{\Delta}}%
 
\global\long\def\vsigma{\vvar{\sigma}}%
 
\global\long\def\vzero{\vvar 0}%
 
\global\long\def\vones{\vvar 1}%
\global\long\def\onesVec{\vvar 1}%

\global\long\def\xopt{\vvar x^{*}}%
 
\global\long\def\varVec{\vvar x}%
 
\global\long\def\varVecA{\vvar x}%
 
\global\long\def\varVecB{\vvar y}%
 
\global\long\def\ib{\mvar I \mvar B}%
 
\global\long\def\varMat{\mvar A}%
 
\global\long\def\varMatA{\mvar A}%
 
\global\long\def\varMatB{\mvar B}%
 
\global\long\def\varSubedges{H}%

\global\long\def\ma{\mvar A}%
 
\global\long\def\mb{\mvar B}%
 
\global\long\def\mc{\mvar C}%
 
\global\long\def\md{\mvar D}%
\global\long\def\mE{\mvar E}%
 
\global\long\def\mf{\mvar F}%
 
\global\long\def\mg{\mvar G}%
 
\global\long\def\mh{\mvar H}%
\global\long\def\mI{\mvar I}%
\global\long\def\mJ{\mvar J}%
\global\long\def\mK{\mvar K}%
 
\global\long\def\mL{\mvar L}%
\global\long\def\mm{\mvar M}%
 
\global\long\def\mn{\mvar N}%
\global\long\def\mq{\mvar Q}%
 
\global\long\def\mr{\mvar R}%
 
\global\long\def\ms{\mvar S}%
 
\global\long\def\mt{\mvar T}%
 
\global\long\def\mU{\mvar U}%
 
\global\long\def\mv{\mvar V}%
 
\global\long\def\mw{\mvar W}%
 
\global\long\def\mx{\mvar X}%
 
\global\long\def\my{\mvar Y}%
\global\long\def\mz{\mvar Z}%
 
\global\long\def\mproj{\mvar P}%
 
\global\long\def\mSigma{\mvar{\Sigma}}%
 
\global\long\def\mDelta{\mvar{\Delta}}%
\global\long\def\mLambda{\mvar{\Lambda}}%
 
\global\long\def\mha{\hat{\mvar A}}%
 
\global\long\def\mzero{\mvar 0}%
\global\long\def\mlap{\mvar{\mathcal{L}}}%
\global\long\def\mpi{\mvar{\mathcal{\Pi}}}%

\global\long\def\mdiag{\mvar{diag}}%
\global\long\def\diag{\mathrm{diag}}%

 
\global\long\def\oracle{\mathcal{O}}%
 
\global\long\def\moracle{\mvar O}%
 
\global\long\def\oracleOf#1{\oracle\left(#1\right)}%
 
\global\long\def\nSamples{s}%
 
\global\long\def\simplex{\Delta}%

\global\long\def\abs#1{\left|#1\right|}%
\global\long\def\tr{\mathrm{tr}}%

\global\long\def\timeNearlyOp{\tilde{\mathcal{O}}}%
 
\global\long\def\timeNearlyLinear{\timeNearlyOp}%

\global\long\def\im{\mathrm{im}}%

\global\long\def\ceil#1{\left\lceil #1 \right\rceil }%

\global\long\def\runtime{\mathcal{T}}%
 
\global\long\def\timeOf#1{\runtime\left(#1\right)}%

\global\long\def\domain{\mathcal{D}}%

\global\long\def\argmin{\mathrm{argmin}}%
\global\long\def\argmax{\mathrm{argmax}}%
\global\long\def\nnz{\mathrm{nnz}}%
\global\long\def\vol{\mathrm{vol}}%
\global\long\def\supp{\mathrm{supp}}%
\global\long\def\dist{\mathcal{D}}%
\global\long\def\energy{\xi}%
\global\long\def\indicDiff{\vec{\delta}}%
\global\long\def\congest{\mathrm{cong}}%
\global\long\def\poly{\mathrm{poly}}%
\global\long\def\congest{\mathrm{cong}}%
\global\long\def\conductance{\mathrm{\phi}}%
\global\long\def\sparsity{\mathrm{\sigma}}%
\global\long\def\prodop{\mathrm{\otimes}}%
\global\long\def\kron{\mathrm{\otimes}}%
\global\long\def\imbal{\mathrm{imbalance}}%
\global\long\def\boundary{\partial}%
\global\long\def\opt{\mathrm{OPT}}%
\newcommand{\bone}{\mathbf{1}}%
\newcommand{\bzero}{\mathbf{0}}%

\global\long\def\schur{\mathrm{Sc}}%

\begin{titlepage}

\maketitle
\thispagestyle{empty}

\begin{abstract}
In this paper, we provide a deterministic $\tilde{O}(\log N)$-space algorithm for estimating random walk probabilities on 
undirected graphs, and more generally Eulerian directed graphs, to within inverse polynomial additive error ($\eps=1/\poly(N)$) where $N$ is the length of the input. Previously, this problem was known to be solvable by a randomized algorithm using space $O(\log N)$ (following Aleliunas et al., FOCS `79) and by a deterministic algorithm using space $O(\log^{3/2} N)$ (Saks and Zhou, FOCS `95 and JCSS `99), both of which held for arbitrary directed graphs but had not been improved even for undirected graphs.  We also give improvements on the space complexity of both of these previous algorithms for non-Eulerian directed graphs when the error is negligible ($\eps=1/N^{\omega(1)}$), generalizing what Hoza and Zuckerman (FOCS `18) recently showed for the special case of distinguishing whether a random walk probability is $0$ or greater than $\eps$.

We achieve these results by giving new reductions between powering Eulerian random-walk matrices and inverting Eulerian Laplacian matrices, providing a new notion of spectral approximation for Eulerian graphs that is preserved under powering, and giving the first deterministic $\tilde{O}(\log N)$-space algorithm for inverting Eulerian Laplacian matrices. The latter algorithm builds on the work of Murtagh et al. (FOCS `17) that gave a deterministic $\tilde{O}(\log N)$-space algorithm for inverting undirected Laplacian matrices, and the work of Cohen et al. (FOCS `19) that gave a randomized $\tilde{O}(N)$-time algorithm for inverting Eulerian Laplacian matrices.  A running theme throughout these contributions is an analysis of ``cycle-lifted graphs,'' where we take a graph and ``lift'' it to a new graph whose adjacency matrix is the tensor product of the original adjacency matrix and a directed cycle (or variants of one). 
\end{abstract}

\vfill

\paragraph{Keywords:} derandomization, space complexity, random walks, Markov chains, Laplacian systems, spectral sparsification, Eulerian graphs

\end{titlepage}

\section{Introduction}
\label{sec:introduction}

In this paper, we give  the first deterministic, nearly logarithmic-space algorithm for accurately estimating random walk probabilities on undirected graphs.  Our algorithm extends to Eulerian digraphs, which are directed graphs where the indegree of a vertex $v$ is equal to its outdegree for every vertex $v$.  (Note that a random walk on an undirected graph is equivalent to a random walk on the associated Eulerian digraph obtained by replacing each undirected edge $\{u,v\}$ with two directed edges $(u,v)$ and $(v,u)$.) 
In more detail, our main result is as follows:
\begin{theorem}[informally stated (see also Theorem \ref{thm:state_probs})] \label{thm:EulerianPowering-intro}\label{thm:main-intro}
There is a deterministic, $\tO(\log (k\cdot N))$-space algorithm that given an
Eulerian digraph $G$ (or an undirected graph $G$), two vertices $s,t$, and a positive integer $k$, outputs the probability that a $k$-step random walk in $G$ started at $s$ ends at $t$, to within
additive error of $\epsilon$, where $N$
is the length of the input and $\epsilon=1/\poly(N)$ is any desired polynomial accuracy parameter.
\end{theorem}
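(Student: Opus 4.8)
The plan is to reduce estimating $e_t^\top W^k e_s$, where $W$ is the random-walk matrix of $G$, to solving Eulerian Laplacian linear systems in $\tO(\log(kN))$ space, and to obtain such a solver by derandomizing a sparsification-based recursion. First I would reduce to the case that $G$ is a regular, strongly connected Eulerian digraph with doubly-stochastic walk matrix $W=A/d$ (restricting to the weakly connected component of $s$, splitting vertices, and adding self-loops are all logarithmic-space operations), and dispatch the regime where $k$ is so large that the walk has already reached its (periodic-)stationary behavior by a separate elementary computation --- computing the period $p$ of $G$ and the cyclic classes of $s$ and $t$ in small space. For the remaining (``short'') $k$, the key object is the \emph{cycle-lifted graph} $H=G\otimes\vec C_\ell$, with walk matrix $W\otimes P_\ell$, where $P_\ell$ is the length-$\ell$ cyclic shift and $\ell=\poly(N)$ is a suitable multiple of $p$ exceeding both $k$ and the mixing time; $H$ is again regular, Eulerian, and of $\poly(N)$ size. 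The algebraic point is that the block-cyclic system $L_H x=b$ with right-hand side supported on layer $0$ has, in layer $j$, the vector $W^j$ applied to (a projection of) $b$, so a single Eulerian-Laplacian solve in $H$ produces $W^0 b,\dots,W^{\ell-1}b$ at once, from which $e_t^\top W^k e_s$ is read off from layer $k$ after adding back the easily-computed peripheral (cyclic-class) correction. Since we only have an \emph{approximate} solve, I would track how the error propagates through the lift --- this is exactly what the new notion of spectral approximation for Eulerian graphs, engineered to be preserved under powering and under tensoring with a cycle, is for.

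The technical heart is a deterministic $\tO(\log M)$-space algorithm that, given the Laplacian $L=I-W$ of an $M$-bit regular Eulerian digraph and $b\in\im(L)$, outputs $L^+b$ to within $1/\poly(M)$; I would build it following the undirected solver of Murtagh et al.\ and the randomized near-linear-time Eulerian solver of Cohen et al. The skeleton is to repeatedly apply the \emph{derandomized square} to $G$, producing a short chain $G=G_0,G_1,\dots,G_t$ with $t=O(\log(M/\eps))$ in which $G_i$ approximates the $2^i$-step walk; since $\sum_{j<2^t}W^j=\prod_{i<t}(I+W^{2^i})$, telescoping the chain gives a crude approximation of $(I-W)^+$, which is then boosted to accuracy $\eps$ by preconditioned Richardson iteration. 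Making this go through over Eulerian digraphs requires: (i) checking that the derandomized square of a regular Eulerian digraph is again one; (ii) defining the new approximation notion --- roughly, $\tilde L$ approximates $L$ when $\tilde L^{+}L$ is close to the identity in the norm induced by the symmetrization $U=(L+L^\top)/2$ --- strengthened to a variant that is simultaneously \emph{produced} by one step of derandomized squaring of an Eulerian graph, \emph{preserved} under composition along the chain (and under the cycle lift above), and \emph{compatible} with Richardson boosting, and proving these closure properties; and (iii) verifying that evaluating a derandomized-square edge, a matrix-vector product, and an iteration step all compose within $\tO(\log M)$ space. Assembling everything and composing the reductions --- $H$ has $\poly(N)$ vertices and only $O(\log k)$ extra space is needed to manipulate the exponent --- then yields the claimed $\tO(\log(kN))$ bound.

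I expect the main obstacle to be step (ii): finding the right notion of spectral approximation for Eulerian graphs. The undirected analysis rests on the spectral theorem and the positive-semidefinite order, neither of which is available for the non-symmetric $W$, and the directed-approximation notions used in the near-linear-time literature are not obviously stable under the many rounds of squaring and the tensoring-with-a-cycle that this algorithm performs. Reconciling robustness under powering with achievability by a single derandomized-square step --- so that the crude solver both exists and can be boosted to inverse-polynomial error --- is where the real difficulty lies.
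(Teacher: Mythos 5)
Your overall architecture coincides with the paper's: reduce $k$-step walk probabilities to pseudo-inverting the Laplacian of a cycle-lifted (or path-lifted) Eulerian graph, read the answer off one block of the pseudoinverse, and build the $\tO(\log N)$-space Eulerian solver from iterated derandomized squaring plus preconditioned Richardson boosting. The differences in the reduction are minor and workable (you read off layer $k$ of the lifted pseudoinverse applied to $b$ with a stationary correction, handling large $k$ by mixing; the paper instead collapses the last layer to a single vertex and reads off escape probabilities via the formula of Cohen et al.) --- the paper itself remarks that such variants are essentially equivalent.

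There is, however, a genuine gap at what you yourself identify as the technical heart, and it is not a detail that "falls out": it is the main new contribution. The theorem needs a notion of approximation that is simultaneously (a) achieved by a single derandomized square of a regular digraph, (b) preserved under tensoring with a directed cycle and under the Schur complements that turn cycle-lifts into powers, and (c) strong enough to give a Richardson preconditioner. You leave this notion unspecified, and the candidate you sketch --- closeness of $\widetilde{L}^{+}L$ to the identity in the norm of the symmetrization $\mU_L$, i.e.\ essentially the directed approximation of Cohen et al. --- provably fails (b): the paper exhibits $\widetilde{\mc}=(1-\epsilon/2)\mc_4+(\epsilon/2)\mc_4^\top$ satisfying $\widetilde{\mc}\approx_\epsilon\mc_4$ \emph{and} $-\widetilde{\mc}\approx_\epsilon-\mc_4$, yet $\widetilde{\mc}^4$ fails to approximate $\mc_4^4$ for any finite parameter (the eigenvalue $1$ has different multiplicities). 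The missing idea is \emph{unit-circle approximation}: demand $z\widetilde{\mw}\approx_\epsilon z\mw$ for every complex $z$ with $|z|=1$, equivalently place the complex magnitude $|x^{*}\mw x+y^{*}\mw y|$ on the right-hand side of the quadratic-form inequality. Preservation under cycle-lifts then follows from the Fourier diagonalization of $\mc_k$ (whose spectrum is the $k$th roots of unity); preservation under powering follows from a new theorem that directed approximation is preserved under Schur complements; and achievability follows from a decomposition of the derandomized square into unidirectional bipartite expander pieces, each of which unit-circle approximates its complete counterpart. None of these is routine, and without them your chain $G_0,\dots,G_t$ does not compose. Relatedly, the paper does not telescope $\prod_i(\mI+\mw^{2^i})$ directly; it controls the cumulative error across levels through an approximate LU factorization of the cycle-lifted Laplacian in a carefully constructed $\mf$-norm, which is how requirement (c) is actually verified.
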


Estimating random walk probabilities to inverse polynomial accuracy, even in general digraphs, can easily be done by randomized algorithms running in space $O(\log N)$. Since that much space is sufficient to simulate random walks~\cite{AleliunasKaLiLoRa79}. In fact, estimating random walk probabilities in general digraphs is complete for randomized logspace.\footnote{Formally, given $G,s,t,k$, a threshold $\tau$, and a unary accuracy parameter $1^a$, the problem of deciding whether the $k$-step random walk probability from $s$ to $t$ is larger than $\tau+1/a$ or smaller than $\tau$ is complete for the class BPL of promise problems having randomized logspace algorithms with two-sided error.  By binary search over the threshold $\tau$, this problem is log-space equivalent to estimating the same probability to within error $1/a$.}
The best known deterministic algorithm prior to our work was that of Saks and Zhou~\cite{SaksZh99}, which 
runs in 
space $O(\log^{3/2} N)$, and works for general digraphs. 
(See the excellent survey of Saks~\cite{Saks96} for more discussion of the close connection between randomized space-bounded computation and random walks, as well as the state-of-art in derandomizing such computations up to the mid-1990's.)

For undirected graphs, Murtagh et al.~\cite{MRSV19} recently gave a deterministic $\tO(\log (k\cdot N))$-space algorithm that computes a much weaker form of approximation for random walks: given any subset $S$ of vertices, the algorithm estimates, to within a multiplicative factor of $(1\pm 1/\polylog(N))$, the {\em conductance} of the set $S$, namely the probability that a $k$-step random walk started at a random vertex of $S$ (with probability proportional to vertex degrees) ends outside of $S$.  Our result is stronger because in undirected graphs, all nonzero conductances can be shown to be of magnitude at least $1/\poly(N)$ and can be expressed as a sum of at most $N^2/4$ random walk probabilities.  Consequently, with the same space bound our
algorithm can estimate the conductance of any set $S$ to within a
multiplicative factor of $(1\pm 1/\poly(N))$.

Like \cite{MRSV19}, our work is part of a larger project, initiated in \cite{MRSV17}, that seeks to
make progress on the derandomization of space-bounded computation by importing ideas
from the literature on time-efficient randomized algorithms for solving graph Laplacian systems \cite{ST04,KMP10,KMP11,KOSZ13,LS13,PS13,CohenKMPPRX14,kyng2016sparsified,KyngS16,CKPPSV16,CKPPRSV17,CKKPPRS18,AhmadinejadJSS19} .
While we consider Theorem~\ref{thm:main-intro} to be a natural derandomization result in its own right, it
and our analogous result for solving Eulerian Laplacian systems (Theorem~\ref{thm:EulerianLaplacians-intro} below)
can also be viewed
as a step toward handling general directed graphs and thereby having an almost-complete derandomization of randomized logspace.
Indeed, in recent years, nearly linear-time randomized algorithms for estimating properties of random walks on general directed graphs (with polynomial mixing time, which also yield complete problems for randomized logspace~\cite{ReingoldTrVa06})  were obtained by reduction to
solving Eulerian Laplacian systems~\cite{CKPPSV16,CKPPRSV17,CKKPPRS18}.  A deterministic and sufficiently space-efficient analogue of such
a reduction, combined with our results, would put randomized logspace in deterministic space $\tO(\log N)$.

To achieve our main result and prove Theorem~\ref{thm:main-intro}, we provide several results that may be of interest even  outside of the space-bounded derandomization context, such as a new notion of
spectral approximation and new reductions between estimating random walk probabilities and inverting Laplacian systems.
Below we provide more details on how our work relates to both the space-bounded derandomization and the Laplacian solving literature, and describe some of our other contributions.

\subsection{Derandomization of Space-Bounded Computation}

It is widely conjectured that every problem solvable
in randomized logarithmic space can also be solved in deterministic logarithmic space (i.e. RL = L, BPL=L for the one-sided and two-sided error versions, respectively)~\cite{Saks96}. Though this is known to follow from mild assumptions in complexity theory (e.g. that there is a Boolean function in $\DSPACE(n)$ that requires branching programs of size $2^{\Omega(n)}$ ~\cite{KlivansMe02}), the best known unconditional derandomization is the aforementioned, quarter-century-old result of Saks and Zhou~\cite{SaksZh99}, which places randomized logspace in deterministic space $O(\log^{3/2} N)$.

Most of the effort on derandomizing logspace computations over the past three decades has gone towards the design of {\em pseudorandom generators} that fool {\em ordered branching programs}.  An ordered branching program of {\em width} $w$ and {\em length} $k$ is given by a directed graph on vertex set $[k]\times [w]$. which we view as consisting of $k$ layers of $w$ vertices. All of the edges from the $i$th layer go to the $(i+1)$'st layer (so there are no edges entering the first layer or exiting the last layer).  We call the first vertex of the first layer (i.e. vertex $(1,1)$) the
{\em start vertex},
and the first vertex of the last layer (i.e. vertex $(k,1)$) the \emph{accept vertex} $t$.  Typically, every vertex in layers 1 to $k-1$ has outdegree 2, with the two edges labelled by 0 and 1.  Intuitively, the vertices in the $i$th layer correspond to possible states of a space-bounded algorithm before it makes its $i$th coin toss, and the two edges lead to its two possible states after that coin toss.  The acceptance probability of an ordered branching program is exactly the probability that a random walk from the start vertex $s$ of length $k-1$ ends at accept vertex $t$.  Generating such a truly random walk takes $k-1$ random bits, so the goal of a pseudorandom generator for ordered branching programs is to generate a walk of length $k-1$ using a much shorter random seed such that the acceptance probability is preserved up to an additive $\epsilon$. 
Given such a pseudorandom generator, we can obtain a deterministic algorithm for estimating the acceptance probability by enumerating all seeds of the pseudorandom generator.

A general $O(\log N)$-space computation can have $w=2^{O(\log N)}=\poly(N)$ states and toss $k=\poly(N)$ coins.
The best known pseudorandom generator for such ordered branching programs is the classic generator of Nisan~\cite{Nisan91}, which has a seed length of $O(\log^2 N)$ (for any error $\epsilon\geq 1/\poly(N)$) and thus does not directly yield a derandomization of space complexity better than $O(\log^2 N)$ (due to enumerating the seeds), which can be achieved more easily by recursively multiplying the transition matrices between layers.  (Multiplying $k$ boolean $w\times w$ matrices to within a final accuracy of $\epsilon$ can be done recursively in space $O((\log k)\cdot (\log w+\log\log(k/\epsilon)))$.) Nisan's generator is also a crucial tool in the algorithm of Saks and Zhou~\cite{SaksZh99}. 

Due to the long lack of progress in improving Nisan's generator, effort has turned to restricted classes of branching programs, such as those of constant width ($w=O(1)$), with there being significant progress in recent years for the case of width $w=3$.~\cite{SimaZa11,GopalanMeReTrVa12,MekaReTa18}.
Another restriction that has been studied is that of {\em regular} branching programs, where every vertex in layers $2,\ldots,k$ in the branching program has indegree $2$. 
For this case, Braverman, Rao, Raz, and Yehudayoff~\cite{BravermanRaRaYe10} give a pseudorandom generator with seed length $\tO(\log N)$ when $w\leq \polylog(N)$ and $\epsilon\geq 1/\polylog(N)$, which again does not yield a deterministic algorithm that improves upon recursive matrix multiplications.

In contrast, our algorithm for Eulerian graphs can be used to estimate the acceptance probability of a regular branching program in space $\tO(\log N)$ even when $w=\poly(N)$ and $\epsilon=1/\poly(N)$.  Indeed, by adding edges from the $k$th layer back to the first layer, a regular branching program can be made into an Eulerian graph, without changing the probability that a random walk of length $k-1$ from $s$ ends at $t$.
In addition, our techniques also yield an improved pseudorandom
generator for {\em permutation branching programs} (regular branching programs where the labelling is constrained so that for each $b\in \{0,1\}$, the edges labelled $b$ form perfect matchings between the vertices in consecutive layers).  Specifically, \cite{HPV21} use our results to derive a pseudorandom generator
with seed length $\tO(\log N)$ for permutation branching programs (with a single accept vertex in layer $k$) of width $w=\poly(N)$, albeit with error only $\epsilon=1/\polylog(N)$.  Even for the special case
of permutation branching programs, seed length $\tO(\log N)$ was previously only achieved for width $w=\polylog(N)$~\cite{KouckyNiPu11,De11,Steinke12}.

It is also worth comparing our result with Reingold's Theorem, which gives a deterministic logspace algorithm for deciding $s$-$t$ connectivity in undirected graphs.   Reingold, Trevisan, and Vadhan~\cite{ReingoldTrVa06} interpreted and generalized Reingold's methods to obtain ``pseudoconverging walk generators'' for  regular digraphs where each edge label forms a permutation of the vertices, as in the permutation branching programs described above.  These generators provide a way to use a seed of $O(\log N)$ random bits to generate walks of length $\poly(N)$ that converge to the uniform distribution on the connected component of the start vertex (just like a truly random walk would).  Such generators turn out to suffice for deciding $s$-$t$ connectivity on arbitrary Eulerian digraphs.  However, these generators are not guaranteed to closely approximate the behavior of a truly random walk at shorter walk lengths.  Indeed, even the length of the walks needed for mixing is polynomially larger than with a truly random walk.
Nevertheless, one of the techniques we use, the {\em derandomized square}, originated from an effort to simplify Reingold's algorithm and these pseudoconverging walk generators~\cite{RozenmanVa05}.

Our work builds on recent papers of Murtagh et al.~\cite{MRSV17,MRSV19}, which gave deterministic, nearly logarithmic-space algorithms for estimating certain quantities associated with random walks on undirected graphs.  Specifically, the first of these papers~\cite{MRSV17} gave a ``Laplacian solver'' (defined below) that implied
accurate (to within $1/\poly(N)$ error) estimates of {\em escape probabilities} (the probability that a random walk from $s$ visits $t$ before visiting another vertex $v$); these again refer to the long-term behavior of random walks, rather than the behavior at a given time below mixing.   The second paper~\cite{MRSV19} dealt with random walks of a fixed length $k$, but as discussed earlier, only gave a weak approximation to the conductance of subsets of vertices.

\subsection{Inverting Laplacian Systems}

We prove Theorem~\ref{thm:main-intro} by a novel reduction from estimating $k$-step random walk probabilities to solving linear systems given by graph Laplacians, and giving a small-space algorithm for the latter in the case of
Eulerian graphs.

Let $G$ be a digraph on $n$ vertices, and let $\mw$ be the $n\times n$ transition matrix for the random walk on $G$.  Then we will call $\mL=\mI-\mw$ the {\em random-walk Laplacian of $G$}.\footnote{The standard Laplacian of $G$, which we simply refer to as the {\em Laplacian of $G$} is  $\md-\ma$, where $\md$ is the diagonal matrix of outdegrees and $\ma$ is the adjacency matrix (where we define $(\ma)_{ij}$ to be the number of edges from $j$ to $i$ in $G$).  Notice that $\mI-\mw=(\md-\ma)\md^{-1}$.  For undirected graphs, it is common to use a symmetric normalization of the Laplacian given by $\md^{-1/2} (\md-\ma)\md^{-1/2} = \md^{-1/2}(\mI-\mw) \md^{1/2}$, but
the $\mI-\mw$ formulation will be more convenient for us.} 
Solving Laplacian systems refers to the problem of given a vector $b\in \R^n$, finding a vector $x\in \R^n$ such that $\mL x=b$ (if one exists).  Since the matrix $\mL$ is not invertible (a stationary distribution of the random walk on $G$ is in the kernel), a Laplacian system can be solved by instead computing its {\em pseudoinverse} $\mL^+$, which acts as an inverse on $\mathrm{Image}(\mL)$, and is zero on the orthogonal complement of $\mathrm{Image}(\mL)$.

We show that we can compute the pseudoinverse of an Eulerian Laplacian in nearly logarithmic space.

\begin{theorem}[informally stated (see also Theorem \ref{thm:space_main})] \label{thm:EulerianLaplacians-intro}
There is a deterministic, $\tO(\log N)$-space algorithm that given an
Eulerian digraph $G$ with random-walk transition matrix $\mw$, outputs a matrix $\widetilde{\mL^+}$ whose entries differ from $\mL^+$ by at most an additive $\epsilon$, where $N$
is the length of the input and $\epsilon=1/\poly(N)$ is any desired polynomial accuracy parameter.
\end{theorem}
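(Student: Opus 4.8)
The plan is to follow the architecture of the undirected Laplacian solver of Murtagh et al.~\cite{MRSV17}, but to replace its notion of spectral approximation of undirected graphs by a new notion, preserved under powering, that makes sense for Eulerian digraphs. I first reduce to a convenient form: decomposing $G$ into its connected components — each of which is strongly connected, since a connected Eulerian digraph is strongly connected — and inverting each block separately, I may assume $G$ is strongly connected; adding self-loops (passing from $\mw$ to $(\mI+\mw)/2$, which perturbs $\mL^{+}$ in a controlled and reversible way) ensures $1$ is the only eigenvalue of $\mw$ on the unit circle; and I may assume $G$ is given with $\poly(N)$-bounded integer edge weights. A key quantitative input is that any such Eulerian graph has ``mixing gap'' $1/\poly(N)$: once $2^{\ell}\ge\poly(N)$ for a suitable polynomial, $\mw^{2^{\ell}}$ is within $\eps/4$ entrywise of the stationary projection $\mpi$, so $\ell=O(\log N)$ suffices. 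I then use the telescoping identity, valid after restricting to the orthogonal complements of $\ker(\mI-\mw)$ and $\ker(\mI-\mw)^{\top}$,
\[
\mL^{+}\;=\;(\mI-\mw)^{+}\;=\;\Bigl[\,\prod_{j=0}^{\ell-1}\bigl(\mI+\mw^{2^{j}}\bigr)\Bigr]\bigl(\mI-\mw^{2^{\ell}}\bigr)^{+},
\]
together with the fact that $(\mI-\mw^{2^{\ell}})^{+}$ is close to the easily computed $\mI-\mpi$, to reduce computing $\mL^{+}$ to: producing small-space approximations of the powers $\mw,\mw^{2},\ldots,\mw^{2^{\ell-1}}$; computing $\mpi$; and multiplying $O(\log N)$ matrices of $\poly(N)$ bit-complexity, which by recursion costs space $O(\log N\cdot\loglog N)$.

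The core is to approximate the powers $\mw^{2^{j}}$ in space $\tO(\log N)$ while controlling error, and this is where the new approximation notion and cycle-lifted graphs enter. For an Eulerian $\mw$ and a candidate $\widetilde{\mw}$ with the same stationary distribution, I define $\widetilde{\mw}\capprox_{\gamma}\mw$ to mean that $\mI-z\widetilde{\mw}$ is a $\gamma$-spectral-approximation of $\mI-z\mw$ — in a complex analogue of the Cohen et al.~\cite{CKPPRSV17} notion for directed Laplacians, measured with respect to the symmetrization $\mU=\tfrac12\bigl((\mI-\mw)+(\mI-\mw)^{\top}\bigr)$ — simultaneously for every $z\in\C$ with $|z|=1$. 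Equivalently, the random-walk Laplacian of the cycle-lifted graph $\widetilde{G}\kron C_{c}$ $\gamma$-approximates that of $G\kron C_{c}$ for every cycle length $c$: Fourier-diagonalizing the cycle shows the nontrivial spectral behavior ranges over roots of unity, which are dense in the unit circle, so the two conditions coincide. The point of quantifying over all $|z|=1$ is that the resulting notion is \emph{closed under squaring}: from $\mI-z\mw^{2}=(\mI-z^{1/2}\mw)(\mI+z^{1/2}\mw)$ and the fact that $\pm z^{1/2}$ again sweeps the unit circle, $\widetilde{\mw}\capprox_{\gamma}\mw$ yields $\widetilde{\mw}^{2}\capprox_{O(\gamma)}\mw^{2}$; with transitivity of $\capprox$ this gives, along any chain $\widetilde{\mw}_{0}=\mw$ and $\widetilde{\mw}_{j+1}\capprox_{\gamma}\widetilde{\mw}_{j}^{2}$, that $\widetilde{\mw}_{j}\capprox_{O(\gamma\ell)}\mw^{2^{j}}$ for all $j\le\ell$. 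Finally, $\capprox$-approximation specialized to $z=1$, together with the $1/\poly(N)$ mixing gap and $\poly(N)$ degree bounds, translates into the target $\eps$ additive entrywise error for the pseudoinverse obtained from the telescoping product, by controlling $\mU^{+/2}$ and the relevant resolvents.

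To realize a chain $\widetilde{\mw}_{j+1}\capprox_{\gamma}\widetilde{\mw}_{j}^{2}$ in small space I use a \emph{derandomized square} adapted to Eulerian digraphs. Replacing the ordinary square (which squares the degree) by a derandomized one that uses a consistently two-way-labeled expander on the edge set, I show the result is again Eulerian with the same stationary distribution, has degree multiplied by only the expander's degree, and its transition matrix $\capprox_{O(\lambda)}$-approximates $\mw^{2}$, where $\lambda$ is the expander's second-largest singular value. The approximation bound is proved by passing to cycle-lifts: the derandomized square of $G\kron C_{c}$ is the cycle-lift of the derandomized square of $G$, so the claim reduces — uniformly in $c$, hence uniformly over the roots of unity $z$ — to an expander-mixing estimate essentially identical to the undirected analyses of \cite{MRSV17,MRSV19,RozenmanVa05}. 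Taking $\lambda=1/\poly(N)$ small enough that $O(\lambda\ell)$ is within the error budget above, using that such expanders and their rotation maps are $O(\log N)$-space computable, and evaluating the iterated derandomized square $\widetilde{\mw}_{\ell}$ entrywise by the space-reusing recursion of \cite{MRSV17,MRSV19}, each $\widetilde{\mw}_{j}$ is accessible in space $\tO(\log N)$. Rounding all intermediate quantities to $O(\log(N/\eps))=O(\log N)$ bits and substituting into the telescoping product produces $\widetilde{\mL^{+}}$ with the stated accuracy and space bound.

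The step I expect to be the main obstacle is designing $\capprox$ so that it simultaneously (i) is satisfied by the Eulerian derandomized square with a clean $O(\lambda)$ dependence, (ii) is closed under squaring with only an additive-per-level, not multiplicative, loss — so that the $\ell=O(\log N)$ iterations cost only an $O(\log N)$ factor in $\gamma$ — which forces the uniform-in-$z$ / cycle-lift formulation and rests on operator-norm bounds such as $\|\mU^{+/2}(\mI-z\mw)\mU^{+/2}\|=O(1)$ for the cycle-lifts, and (iii) is strong enough to yield entrywise control of the pseudoinverse. Secondary hurdles are establishing the $1/\poly(N)$ mixing gap for arbitrary (polynomially weighted) Eulerian inputs so that $\ell=O(\log N)$, checking that the two-way-labeled derandomized square genuinely preserves the Eulerian property and the stationary distribution along the whole chain, and fitting the nested recursion into $\tO(\log N)$ rather than $O(\log^{2}N)$ space — the same delicate accounting already present in \cite{MRSV17}.
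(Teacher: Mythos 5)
Your proposal reproduces the paper's three key innovations essentially verbatim: the unit-circle notion $\capprox_{\gamma}$ (defined as $z\widetilde{\mw}\approx_{\gamma}z\mw$ for all $|z|=1$, equivalent via Fourier-diagonalizing the cycle to approximation of all cycle-lifts), its preservation under squaring, the fact that the derandomized square of a regular digraph is a unit-circle approximation of the true square, Richardson boosting, and the spectral-to-entrywise conversion. Where you diverge is the final assembly: you plug the approximate powers directly into the Peng--Spielman telescoping identity $(\mI-\mw)^{+}=\bigl[\prod_{j}(\mI+\mw^{2^{j}})\bigr](\mI-\mw^{2^{\ell}})^{+}$, whereas the paper builds an approximate LU factorization of the Laplacian of the cycle-lifted graph $\mI_{2^{k}n}-\mc_{2^{k}}\otimes\mw$ (Schur-complementing alternate layers halves the cycle and squares $\mw$), inverts that, and only projects down to $n\times n$ at the very end. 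The paper explicitly notes the two constructions produce the same matrix, so algorithmically you are computing the same thing.

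The genuine gap is the error analysis of the telescoping product, which you flag as the main obstacle but do not resolve. In the symmetric case \cite{MRSV17} uses the \emph{symmetric} form of the identity and the Loewner order to show errors accumulate additively; for the one-sided asymmetric product you write down, a per-level error $\gamma$ in the $j$th factor gets conjugated by the non-symmetric, non-commuting product of the remaining factors, whose spectral norm can be $2^{\Theta(\ell)}=\poly(N)$, and the bound $\|\mU^{+/2}(\mI-z\mw)\mU^{+/2}\|=O(1)$ you invoke does not by itself control this amplification. The paper's resolution is precisely the machinery you are missing: it measures all errors in a single norm $\mf=\frac{2}{k}\sum_{i}\mU_{\ms^{(i)}}$ built from the symmetrizations of the partial Schur complements of the cycle-lifted Laplacian (Lemma~\ref{lem:CumulativeErrorBlock}, adapted from \cite{CKKPPRS18}), proves $\|\mI-\widetilde{\mL}^{+}\mL\|_{\mf}\le O(k\epsilon)$ with $\mU_{\mL}/O(k)\preceq\mf\preceq\poly(N)\,\mU_{\mL}$, and runs Richardson on the \emph{lifted} system before projecting down (Lemma~\ref{lem:pinv_reduce}) --- the paper remarks it delays the projection exactly ``for technical reasons in our analysis.'' To complete your route you would need an analogous lemma showing additive error accumulation for the raw product, or you should adopt the cycle-lift LU formulation, which is the paper's answer to the obstacle you correctly identified.
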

Previously, Cohen et al.~\cite{CKPPRSV17,CKKPPRS18} showed how to solve Eulerian Laplacian systems by randomized, nearly linear-time algorithms, and Murtagh et al.~\cite{MRSV17} showed how to solve undirected Laplacian systems by deterministic, nearly logarithmic-space algorithms.  Our proof of Theorem~\ref{thm:EulerianLaplacians-intro} draws on all of these works. 

As explained below, the extension from undirected graphs (handled by \cite{MRSV17}) to Eulerian graphs (as in Theorem~\ref{thm:EulerianLaplacians-intro}) is 
crucial for obtaining our high-precision estimation of random walks (Theorem~\ref{thm:main-intro}) even for the case of undirected graphs.
In addition, as discussed earlier, this extension 
can also be viewed
as a step toward handling general directed graphs and thereby having an almost-complete derandomization of randomized logspace.
Recall that nearly linear-time randomized algorithms for estimating properties of random walks on general directed graphs were obtained by reduction to
solving Eulerian Laplacian systems~\cite{CKPPSV16,CKPPRSV17,CKKPPRS18}.  A deterministic and sufficiently space-efficient analogue of such
a reduction, combined with Theorem~\ref{thm:EulerianLaplacians-intro}, would put randomized logspace in deterministic space $\tO(\log N)$, i.e.
$\mathrm{BPL}\subseteq \widetilde{\mathrm{L}}$.

We will describe the ideas in the proof of Theorem~\ref{thm:EulerianLaplacians-intro} below in Section~\ref{sec:techniques-intro}.  Here we
describe our reduction from powering (Theorem~\ref{thm:EulerianPowering-intro}) to computing the pseudoinverse of a Laplacian (Theorem~\ref{thm:EulerianLaplacians-intro}).

Let $G$ be an $n$-vertex digraph with random-walk transition matrix $\mw$.  
Let $\mproj_k$ be the adjacency matrix of a $k$-vertex directed path. Note that $\mproj_k$ is not stochastic, but rather {\em substochastic} (nonnegative with column sums at most 1), since there are no edges leaving the last vertex of the path (i.e. random walks ``die off'' when leaving that vertex).  Then the $kn\times kn$ matrix $\mw'=\mproj_k\otimes \mw$,  i.e. the Kronecker product of $\mproj_k$ and $\mw$, is a $k\times k$ block matrix consisting of $n\times n$ blocks that equal $\mw$ just below the diagonal and are zero elsewhere.  For example,
when $k=4$, we have:
\[
\mw'=\mproj_k\otimes \mw
= \begin{bmatrix}
0 & 0 & 0 & 0 \\
\mw & 0 & 0 & 0 \\
0 & \mw & 0 & 0 \\
0 & 0 & \mw & 0
\end{bmatrix}.
\]
$\mw'$ is
also a substochastic matrix describing random walks on a graph with $k$ layers of $n$ vertices each, where there is a bipartite version of $G$ going from the $i$th layer to the $(i+1)$'st layer for each $i=1,\ldots,k-1$, and again there are no edges leaving the $k$th layer. 
We call this the {\em path-lift of $G$ of length $k$}, or the {\em path-lifted graph} when $G$ and $k$ are clear from context.
(This construction is inspired by the ordered branching programs that arise in space-bounded computation, as described above.) 

The ``Laplacian'' of this layered graph, $\mL=\mI_{kn}-\mw'$, is invertible, and noting that $(\mw')^k=0$, we can calculate its inverse as:
\begin{equation} \label{eqn:path-intro}
\mL^{-1} = \mI_{nk} + \mw' + (\mw')^2+\cdots+ (\mw')^{k-1}
= \mI_k\otimes \mI_n + \mproj_k\otimes \mw+ \mproj_k^2 \otimes \mw^2 + \cdots 
\mproj_k^{k-1}\otimes \mw^{k-1}.
\end{equation}
Thinking of $\mL^{-1}$ as a block matrix, the term $\mproj_k^j \otimes \mw^j$ places a copy of $\mw^j$ in each of the blocks that are in the $j$th diagonal below the main diagonal.  (So on the main diagonal are blocks of $\mI_n$, just below the main diagonal are blocks of $\mw$, below that $\mw^2$, and so on.)  For example, for $k=4$, we can write $\mL^{-1}$ in block form as
\[
\mL^{-1} =
\begin{bmatrix}
\mI & 0 & 0 & 0 \\
\mw & \mI & 0 & 0 \\
\mw^2 & \mw & \mI & 0 \\
\mw^3 & \mw^2 & \mw & \mI
\end{bmatrix}.
\]
Thus from an accurate estimate of $\mL^{-1}$, we can read off accurate estimates of the powers of $\mw$.  For example, entry $((\ell,t),(1,s))$ of $\mL^{-1}$ is exactly the probability that a length $\ell-1$ random walk in $G$ started at $s$ ends at $t$.

However, we can only apply Theorem~\ref{thm:EulerianLaplacians-intro} directly if $\mL$ is the Laplacian of an Eulerian graph, and the above $\mw'$ is not even stochastic.  We can fix this by (a) starting with an Eulerian graph $G$ and (b)
considering a {\em cycle-lifted graph} instead of a path-lifted graph, i.e. considering transition matrix $\mc_k\otimes \mw$.  Additionally, it is convenient to collapse all of the vertices in layer $k$ to a single vertex $v$.
Then it turns out from the Laplacian pseudoinverse $\mL^+$, it is possible to read off {\em escape probabilities} --- the probability that a random walk from one vertex, say $(1,s)$, visits another vertex, say $(\ell,t)$, before visiting a third vertex, say $v$.   The condition ``before visiting $v$'' allows us to not worry about walks that traverse all the way around the cycle, and thus we get exactly the probability that a length $\ell-1$ random walk in $G$ started at $s$ ends at $t$.\footnote{Alternatively (and essentially equivalently), we could note that if $D-A$ is the Laplacian of $G$, then 
$I_k\otimes D-P_k\otimes A = I_k\otimes D\cdot (I_{nk}- P_k\otimes W)$ is a ``row-column diagonally dominant matrix'' and apply the reduction from inverting such matrices to pseudo-inverting Eulerian Laplacian systems~\cite{CKPPSV16}.}

Note that even if $G$ is undirected, this reduction requires inverting a Laplacian of a directed layered graph.  Thus, our extension of the small-space undirected Laplacian solver of \cite{MRSV17} to Eulerian graphs (Theorem~\ref{thm:EulerianLaplacians-intro}) seems essential for obtaining high-precision estimates of powers even for undirected graphs (Theorem~\ref{thm:EulerianPowering-intro}). 

The reduction above from computing powers to inverting also allows us to obtain new algorithms for general digraphs and Markov chains:

\begin{theorem}[informally stated (see also Theorem \ref{thm:arbitrary_powers})] \label{thm:GeneralPowers-intro}
Given a Markov chain $G$ specified by a stochastic matrix $\mw$, two states $s,t$, and a positive integer $k$, we can compute the probability that a $k$-step random walk in $G$ started at $s$ ends at $t$, to within
an additive error of $\eps$:
\begin{enumerate}
    \item In randomized space $O((\log Nk)\cdot \log(\log_{Nk}(1/\eps)))$, or
    \item In deterministic space $O(\log^{3/2}(Nk) + (\log(Nk))\cdot \log(\log_{Nk}(1/\eps)))$.
\end{enumerate}
where $N$
is the length of the input.
\end{theorem}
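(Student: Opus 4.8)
The plan is to combine the reduction from powering to inverting (described above) with a Newton-type precision-amplification step that exploits the nilpotency of the path-lift, invoking a coarse matrix-powering subroutine as a black box at the base. Throughout write $M = Nk$ and $p = \lceil \log_M(1/\eps)\rceil$, so $\eps \ge M^{-p}$. First, set up the reduction: let $\mL = \mI_{(k+1)n} - \mproj_{k+1}\otimes\mw$ be the Laplacian of the length-$(k+1)$ path-lift of the digraph underlying $\mw$. Since $\mproj_{k+1}\otimes\mw$ is substochastic and nilpotent of index $k+1$, $\mL$ is invertible and, exactly as in~\eqref{eqn:path-intro}, entry $((k+1,t),(1,s))$ of $\mL^{-1}$ is precisely the probability that a $k$-step random walk in $\mw$ from $s$ ends at $t$. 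So it suffices to approximate this single entry of $\mL^{-1}$ to additive error $\eps$.

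Second, the Newton iteration: starting from a coarse approximation $Y_0$ of $\mL^{-1}$ with entrywise error at most $M^{-3}$ (obtained from the base routine below), iterate $Y_{d+1} = Y_d(2\mI - \mL Y_d)$, rounding each entry of the product to $\Theta(2^{d}\log M)$ bits. Working in the submultiplicative $\ell_1 \to \ell_1$ operator norm $\norm{\cdot}$, one has $\norm{\mL}\le 2$, $\norm{\mL^{-1}}\le k+1$, and $\norm{\mI - \mL Y_0} \le \norm{\mL}\cdot\norm{\mL^{-1}-Y_0} \le M^{-1}$, after which an easy induction gives $\norm{\mI - \mL Y_d} \le M^{-\Omega(2^{d})}$. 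Hence after $D = O(\log p) = O(\log\log_M(1/\eps))$ steps we get $\norm{\mL^{-1} - Y_D} \le \norm{\mL^{-1}}\cdot M^{-\Omega(2^D)} \le \eps$, and we output the appropriate entry of $Y_D$.

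Third, the space accounting. We never store a full $Y_d$ (that would take $\poly(M)\cdot 2^{d}\log M$ bits); instead we compute entries of $Y_D$ by unwinding the recursion on $d$, and, crucially, we produce each such entry one limb (block of $\Theta(\log M)$ bits) at a time starting from the low-order end, so that level $d$ holds only an $O(\log M)$-bit state: a matrix-product summation index, the carry out of summing at most $(k+1)n$ limb-products, and the current limb position. Each limb requested of $Y_d$ triggers a bounded number of limb requests to $Y_{d-1}$ and to $\mL$ (whose limbs come directly from the input), with work tape reused across requests, so the total is $O(D\cdot\log M)$ plus one invocation of the base routine (whose work tape is also reused across requests). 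The base routine must compute all of $\mw^{0},\dots,\mw^{k}$, hence $Y_0 = \sum_{j\le k}\mproj_{k+1}^{j}\otimes\mw^{j}$, to entrywise error $M^{-3}$; this is exactly matrix powering to inverse-polynomial precision, solved in randomized space $O(\log M)$ by simulating random walks~\cite{AleliunasKaLiLoRa79} and in deterministic space $O(\log^{3/2} M)$ by Saks and Zhou~\cite{SaksZh99}. Plugging in the two base bounds yields the two claimed bounds, reading $\log\log_M(1/\eps)$ as $\max\{1,\log\log_M(1/\eps)\}$ (the distinction only matters when $\eps \ge 1/\poly(M)$, in which case the base routine alone suffices).

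The main obstacle is the third step: realizing each of the $D$ Newton layers in only $O(\log M)$ additional space. The naive implementation stores an entry of $Y_{d+1}$ together with an accumulator, both of size $\Theta(2^{d}\log M)$, which makes the per-layer cost grow geometrically and destroys the bound; the limb-serial, pipelined implementation above avoids this, but requires verifying that (a) the low-order limbs of a matrix product depend only on the low-order limbs of its factors together with an $O(\log M)$-bit carry, (b) this pipelining composes through $D$ nested layers without the summation indices or carries compounding, and (c) the $\ell_1 \to \ell_1$ error analysis survives limbwise rounding --- in particular that rounding only \emph{after} each full Newton step (not within it) is what lets the error keep squaring. Everything else (addressing $\mL$, its path-lift, and the $Y_d$ within $O(\log M)$ bits, and the correctness of the reduction) is routine given $n \le N$ and the discussion preceding the theorem.
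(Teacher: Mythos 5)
Your proposal is correct and follows essentially the same route as the paper: the same path-lift reduction, the same coarse base solvers (simulating walks \`a la Aleliunas et al.\ and Saks--Zhou at precision $1/\poly(Nk)$), and the same error reduction, since your Newton iterates satisfy $Y_D=\bigl(\sum_{i<2^D}(\mI-Y_0\mL)^i\bigr)Y_0$, i.e.\ exactly the truncated preconditioned-Richardson series that the paper evaluates. The limb-serial pipelining you construct by hand is what the paper gets from the standard lemma that a product of $m$ matrices is computable in space $O(\log N\cdot\log m)$ (iterated addition and multiplication are in logspace, plus composition of space-bounded algorithms), so the verification burden you flag in your final paragraph is already discharged by that citation.
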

This theorem generalizes one of the results from recent work of Hoza and Zuckerman~\cite{HozaZu18} that gave the same bounds for the 1-sided version of the problem, namely 
distinguishing probability 0 from probability greater than $\eps$.  For the two-sided version of the problem that we consider, a randomized algorithm using space $O(\log(Nk/\eps))$ follows from performing $\poly(1/\eps)$ random walks and counting how many end at $t$.  For deterministic algorithms, the best previous algorithm is from Saks and Zhou~\cite{SaksZh99}, which uses space $O(\log(Nk/\eps)\cdot \log^{1/2} k)$.  Note that Theorem~\ref{thm:GeneralPowers-intro} has a doubly-logarithmic dependence on $\eps$ rather than a singly-logarithmic one.  In particular, Saks and Zhou~\cite{SaksZh99} only achieves space $O(\log^{3/2} Nk)$ for $\eps=1/\poly(Nk)$ whereas we achieve it for a much smaller $\eps=1/\exp(\exp(\sqrt{\log Nk}))$.

The proof of Theorem~\ref{thm:GeneralPowers-intro} begins with the observation that we can approximate $\mL^{-1}=(\mI_{nk}-\mproj_k\otimes \mw)^{-1}$ to within accuracy $1/\poly(N,k)$ in randomized space $O(\log Nk)$ or deterministic space $O(\log^{3/2} Nk)$.  Indeed, by Equation~(\ref{eqn:path-intro}), it suffices to estimate $\mI,\mw,\mw^2,\ldots,\mw^{k-1}$ up to accuracy $\pm 1/\poly(N,k)$.  

We then use the fact that matrix inversion has a very efficient error reduction procedure, equivalent to what is commonly known as ``preconditioned Richardson iterations''.  Let 
$\widetilde{\mL^{-1}}$ denote our approximation to $\mL^{-1}$ with error $1/\poly(N,k)$.  For an appropriate choice of the polynomial error bound,  it follows that the ``error matrix'' $\mE=\mI_{nk}-\widetilde{\mL^{-1}}\mL$ has norm at most $1/Nk$ (in, say, spectral norm).  Then we can obtain a
more accurate estimate of $\mL^{-1}$ by using the identity:
$$\mL^{-1} = (\mI_{nk}-\mE)^{-1} \widetilde{\mL^{-1}} 
=  (\mI_{nk}+\mE+\mE^2+\mE^3+\cdots)\widetilde{\mL^{-1}}.$$
Since $\mE$ has norm at most $1/Nk$, the series converges very quickly, and can be truncated at $O(\log_{Nk}(1/\eps))$ terms to achieve an approximation to within $\pm \eps$.  As noted earlier, from such an accurate estimation of $\mL^{-1}=(\mI_{nk}-\mproj_k\otimes \mw)^{-1}$, we can accurately estimate random walks of length $k-1$.

This same error reduction procedure is also used in
our proof of Theorem~\ref{thm:EulerianLaplacians-intro} (and also throughout the literature on time-efficient Laplacian solvers), and thus is also key to the high precision estimates we obtain in Theorem~\ref{thm:EulerianPowering-intro}.  Although we fixed error $1/\poly(N)$ in the statement of the theorem, we can also achieve smaller error $\eps$ at a price of $O((\log N)\cdot \log(\log_N(1/\eps)))$ in the space complexity.

Interestingly, early work on randomized space-bounded computation~\cite{Gill77,Simon81,BorodinCoPi83,Jung81} reduced the problem of {\em exactly} computing {\em hitting probabilities} of Markov chains (the probability that an {\em infinite} random walk from $s$ ever hits $t$) to
computing $(\mI-\mw)^{-1}$ for a substochastic matrix $\mw$, and used this to show that {\em unbounded-error} and {\em non-halting} randomized logspace is contained in deterministic space $O(\log^2 N)$.   As far as we know, ours is the first application of inverting Laplacian systems to estimating finite-time random-walk probabilities to within polynomially small error, and consequently it is also the first application of inverting Laplacian systems to the commonly accepted formulation
of randomized logspace 
(i.e. bounded error and halting).

\subsection{Complex spectral approximation, cycle-lifted graphs, and powering} \label{sec:techniques-intro}

We now describe the techniques underlying our space-efficient Eulerian Laplacian inverter (Theorem~\ref{thm:EulerianLaplacians-intro}).
Let $\mw$ be the transition matrix for the random walk on an $n$-vertex Eulerian graph $G$, for which we want to estimate $(\mI-\mw)^{-1}$. Because of (a generalization of) the error-reduction procedure described above, it suffices to compute a rough approximation to $(\mI-\mw)^{-1}$.  For symmetric matrices (as arising from undirected graphs), a sufficient notion of approximation is {\em spectral approximation} as introduced by Spielman and Teng~\cite{ST11}.  Applying the Spielman--Teng notion to symmetric Laplacians $\mI-\mw$ and $\mI-\widetilde{\mw}$, we say that $\widetilde{\mw}$ is an {\em $\epsilon$-approximation} of $\mw$ (written $\widetilde{\mw}\approx_\epsilon\mw$)
if 
\begin{equation} \label{eqn:ST-intro}
\forall x\in \R^n\qquad
\left|x^\top(\mw-\widetilde{\mw})x\right|\leq \epsilon\cdot x^{\top}(\mI-\mw)x=\epsilon\cdot\left(\|x\|^2-x^{\top}\mw x\right).
\end{equation}
Cohen et al.~\cite{CKPPRSV17} introduced a generalization of spectral approximation for asymmetric matrices and directed graphs:

\begin{equation} 
\forall x,y\in \R^n\qquad
\left|x^\top(\mw-\widetilde{\mw})y\right|\leq 
\frac{\epsilon}{2}\cdot \left( 
\|x\|^2 + \|y\|^2 - x^\top \mw x - y^\top \mw y
\right).
\end{equation}
In the case of symmetric matrices, their definition is equivalent to the Spielman--Teng notion, and thus we use the same terminology {\em $\epsilon$-approximation} and notation 
$\widetilde{\mw}\approx_\epsilon\mw$ for their notion.

In this paper, we introduce a stronger notion of spectral approximation.  Specifically, we say
 $\widetilde{\mw}$ is a \emph{unit-circle $\epsilon$-approximation} of $\mw$ (written $\widetilde{\mw}\capprox_\epsilon\mw$) if
\begin{equation} \label{eqn:unitcircle-intro}
\forall x,y\in\C^n\qquad
\left|x^*(\mw-\widetilde{\mw})y\right|\leq \frac{\epsilon}{2}\cdot\left(\|x\|^2+\|y\|^2-\left|x^{*}\mw x+y^{*}\mw y\right|\right),
\end{equation}
where $v^*$ refers to the conjugate transpose of $v$. Note that we now allow
the vectors to range over $\C^n$ rather than $\R^n$.  As we show (see Section~\ref{sect:definitions}) this in itself does not make the definition stronger as the earlier notions of \cite{ST04,CKPPRSV17} have equivalent formulations using complex vectors. The more important change is the introduction of the complex magnitude $|\cdot|$ in the term $\left|x^{*}\mw x+y^{*}\mw y\right|$.  

The significance of this change can be seen by considering an eigenvector $v$ of $\mw$ whose eigenvalue $\lambda\in \C$ has magnitude 1.  Consider what happens if we set $x=y=v$ in both the Spielman--Teng definition (\ref{eqn:ST-intro}) and our definition (\ref{eqn:unitcircle-intro}).  If $\lambda=1$ (e.g. $v$ is a stationary distribution of the random walk specified by $\mw$), then the right-hand side of the inequality in both cases is zero, so we must have exact equality on the left-hand side, i.e.
$v^*\widetilde{\mw}v = v^* \mw v$.  On the other hand, if $\lambda$ is some other root of unity
 (e.g. an eigenvalue of the $k$-cycle $\mc_k$, or in any random walk with periodicity), then only our definition requires exact equality.  This also explains our terminology {\em unit-circle} approximation.  It also can be shown that $\widetilde{\mw}\capprox_\epsilon\mw$ if and only if 
$z\widetilde{\mw}\approx_\epsilon z\mw$ for all complex $z$ of magnitude 1.  That is, our definition amounts to demanding that all unit-circle multiples of the matrices approximate each other in the previous sense.  In the case of symmetric matrices (undirected graphs), it suffices to consider $z=\pm 1$, corresponding to the fact that the eigenvalues are all real and the only periodicity that can occur is 2.

The benefit of unit-circle approximation is that, unlike the previous notions of spectral approximation, it is preserved under cycle-lifts and powers.
\begin{lemma} \label{lem:unitcircle-powering-intro}
Suppose $\widetilde{\mw}\capprox_\epsilon \mw$.  Then for all $k\in \mathbb{N}$, we have:
    \begin{enumerate}
        \item $\mc_k \otimes \widetilde{\mw} \capprox_\epsilon \mc_k \otimes \mw$, and \label{itm:tensorpreserve-intro}
        \item $\widetilde{\mw}^k \capprox_{\epsilon/(1-\epsilon)} \mw^k$. \label{itm:powerpreserve-intro}
    \end{enumerate}
\end{lemma}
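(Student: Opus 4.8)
The plan is to prove the two parts of Lemma~\ref{lem:unitcircle-powering-intro} by working directly from the definition of unit-circle approximation~(\ref{eqn:unitcircle-intro}), using the reformulation (asserted in the excerpt and presumably established in Section~\ref{sect:definitions}) that $\widetilde{\mw}\capprox_\epsilon\mw$ holds if and only if $z\widetilde{\mw}\approx_\epsilon z\mw$ for every complex $z$ with $|z|=1$. This reformulation is the key engine: it converts a statement about the new, stronger notion into a family of statements about the Cohen et al.~notion, one for each point on the unit circle, and both cycle-lifting and powering interact cleanly with scalar multiplication by unit-modulus complex numbers.

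For part~\ref{itm:tensorpreserve-intro}, I would first diagonalize the cycle. The eigenvalues of $\mc_k$ are the $k$-th roots of unity $\omega^0,\dots,\omega^{k-1}$ with $\omega = e^{2\pi i/k}$, with corresponding orthonormal eigenvectors $u_0,\dots,u_{k-1}$ (the discrete Fourier basis). Since $\mc_k$ is normal, $\mc_k\otimes\mw$ is unitarily conjugate (via $U\otimes I_n$) to the block-diagonal matrix whose $j$-th block is $\omega^j\mw$, and likewise $\mc_k\otimes\widetilde{\mw}$ to the block-diagonal matrix with blocks $\omega^j\widetilde{\mw}$. Unit-circle approximation is invariant under unitary conjugation (one can see this from the definition directly, since $\|\cdot\|$ and all the quadratic forms transform covariantly), and it is preserved under taking direct sums (the inequality for the sum follows by adding the inequalities for each block, since both sides are additive across blocks). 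So it suffices to show $\omega^j\widetilde{\mw}\capprox_\epsilon\omega^j\mw$ for each $j$. But by the reformulation, $\widetilde{\mw}\capprox_\epsilon\mw$ means $z\widetilde{\mw}\approx_\epsilon z\mw$ for all unit $z$; applying this with $z' = \omega^j z$ (which still ranges over the whole unit circle as $z$ does) gives $z(\omega^j\widetilde{\mw})\approx_\epsilon z(\omega^j\mw)$ for all unit $z$, which is exactly $\omega^j\widetilde{\mw}\capprox_\epsilon\omega^j\mw$. Taking the direct sum over $j$ and conjugating back yields part~\ref{itm:tensorpreserve-intro}.

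For part~\ref{itm:powerpreserve-intro}, I would use a telescoping identity together with a multiplicative composition property. Write
\[
\widetilde{\mw}^k - \mw^k = \sum_{i=0}^{k-1} \widetilde{\mw}^i\,(\widetilde{\mw}-\mw)\,\mw^{k-1-i}.
\]
The natural strategy is to establish (or cite, if it appears earlier) a composition lemma of the form: if $\widetilde{\mw}\capprox_\epsilon\mw$ then $\widetilde{\mw}^k$ is a unit-circle $\epsilon/(1-\epsilon)$-approximation of $\mw^k$, proved by induction on $k$. The inductive step would combine the $\capprox$-approximation $\widetilde{\mw}^{k-1}\capprox_{\epsilon'}\mw^{k-1}$ (with $\epsilon' = \epsilon/(1-\epsilon)$ by hypothesis) with $\widetilde{\mw}\capprox_\epsilon\mw$ using a "product rule" for unit-circle approximation — analogous to the fact in the undirected/Eulerian setting that spectral approximation composes under multiplication (this is how the derandomized square analysis works in \cite{MRSV17} and how \cite{CKPPRSV17} handle products). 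The $1/(1-\epsilon)$ factor and its stability under the recursion $\epsilon \mapsto \epsilon + \epsilon' \le \epsilon + \epsilon/(1-\epsilon)\cdot(\text{something})$ should be arranged so that the geometric series $\epsilon(1 + \epsilon + \epsilon^2 + \cdots) = \epsilon/(1-\epsilon)$ emerges; concretely, if one-step composition of an $\epsilon$-approx with an $\epsilon'$-approx gives an $(\epsilon+\epsilon')$-approx (or $(\epsilon+\epsilon'+\epsilon\epsilon')$), then unrolling $k$ steps gives exactly $\epsilon/(1-\epsilon)$ in the limit, and monotonically for all finite $k$.

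**The main obstacle** I anticipate is establishing the composition/product rule for unit-circle approximation cleanly — that is, showing $\widetilde{\mw}\capprox_{\epsilon_1}\mw$ and a second pair $\widetilde{\mv}\capprox_{\epsilon_2}\mv$ imply $\widetilde{\mw}\widetilde{\mv}$ approximates $\mw\mv$ with a controlled parameter — since the asymmetric quadratic forms $x^*\mw x$ do not behave as simply under products as symmetric ones do, and the complex magnitude $|x^*\mw x + y^*\mw y|$ on the right-hand side of~(\ref{eqn:unitcircle-intro}) needs careful handling. I expect the right move is again to pass through the "$z\widetilde{\mw}\approx_\epsilon z\mw$ for all $|z|=1$" characterization, reducing the composition question for $\capprox$ to a composition question for the Cohen et al.~notion $\approx_\epsilon$ uniformly over the unit circle — and the latter composition fact is exactly the kind of statement proved in \cite{CKPPRSV17}. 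If such a composition lemma for $\approx_\epsilon$ is available as a black box (which I would expect, given the paper's reliance on that line of work), then parts~1 and~2 both follow by the direct-sum / unitary-conjugation / uniform-over-$z$ bookkeeping described above, with the only real calculation being the verification that the error parameter telescopes to $\epsilon/(1-\epsilon)$.
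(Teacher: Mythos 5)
Your argument for part~\ref{itm:tensorpreserve-intro} is essentially the paper's: diagonalize $\mc_k$ in the Fourier basis, observe that $\mc_k\otimes\mw$ is block-diagonal with blocks $\omega^j\mw$ over the invariant subspaces $\chi_{\omega^j}\otimes\C^n$, and invoke the characterization that $\widetilde{\mw}\capprox_\epsilon\mw$ iff $z\widetilde{\mw}\approx_\epsilon z\mw$ for all unit-modulus $z$ (Theorem~\ref{thm:capprox_ring} and Corollary~\ref{cor:capprox_ring} in the paper). One caveat: your claim that the direct-sum step works because ``both sides are additive across blocks'' is imprecise --- neither side is literally additive, since the left-hand side is the modulus of a sum and the right-hand side contains $-\left|x^{*}\mw x+y^{*}\mw y\right|$. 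The step is still correct, but it requires two applications of the triangle inequality in opposite roles (the modulus of the sum is at most the sum of block moduli on the left, and minus the sum of block moduli is at most minus the modulus of the sum on the right); alternatively one can pass to the operator-norm formulation of Lemma~\ref{lem:unitcirc_equivalences} and use Lemma~\ref{lem:subspaces}, which is what the paper does.

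Part~\ref{itm:powerpreserve-intro} has a genuine gap. Your telescoping identity reduces the problem to a ``product rule'' for (unit-circle) spectral approximation, which you propose to cite from \cite{CKPPRSV17} or prove by induction, but no such composition lemma exists there, and for the directed notion $\approx_\epsilon$ it is provably false: the paper's own counterexamples in Section~\ref{sect:ring_powers} exhibit regular digraphs with $\widetilde{\mw}\approx_{\epsilon}\mw$ and $-\widetilde{\mw}\approx_{\epsilon}-\mw$ whose fourth powers fail to approximate each other at all. Whether a usable product rule holds for $\capprox_\epsilon$ is essentially the content of the statement you are trying to prove, so assuming it is circular; and even granting some composition bound, your telescoping sum would naively accumulate an error of order $k\epsilon$ rather than the claimed $k$-independent $\epsilon/(1-\epsilon)$, because the intermediate quadratic forms $\left\|(\widetilde{\mw}^i)^{*}x\right\|^2-\left|\cdot\right|$ are not controlled by $\|x\|^2-\left|x^{*}\mw^k x\right|$. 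The paper takes a different route that sidesteps composition entirely: it observes that $z\cdot\mw^k$ arises as $\mI-\schur(\mI-\mc_k\otimes\omega\mw,\,F)$ where $F$ is one layer of the cycle-lifted graph and $\omega^k=z$, and proves separately (Theorem~\ref{thm:sc_approx}) that Schur complementation preserves directed spectral approximation with a controlled loss in $\epsilon$. Part~\ref{itm:powerpreserve-intro} then follows directly from part~\ref{itm:tensorpreserve-intro}, which is the real reason the cycle-lift statement is proved first. To repair your proof you would need to either prove the Schur-complement preservation theorem or supply a correct, non-circular composition argument; the former is the missing ingredient.
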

We note that in previous work (\cite{cheng2015,MRSV17}), it was observed that for symmetric matrices, if
$\widetilde{\mw}\approx_{\epsilon}\mw$
and 
$-\widetilde{\mw}\approx_{\epsilon}-\mw$
then we do get
$\widetilde{\mw}^2\approx_{\epsilon}\mw^2$. Lemma~\ref{lem:unitcircle-powering-intro} holds even for asymmetric matrices and handles all powers $k$.

Item~\ref{itm:tensorpreserve-intro} is proven by observing that the diagonalization of $\mc_k$ (using the discrete Fourier basis, which are its eigenvectors) has all $k$th roots of unity along the diagonal, so approximation of the cycle-lifted graphs
$\mc_k \otimes \widetilde{\mw}$ and $\mc_k \otimes \mw$
amounts to requiring that the approximation of
$\widetilde{\mw}$ and $\mw$
is preserved under multiplication by $k$'th roots of unity. 
Item~\ref{itm:powerpreserve-intro} is derived from Item~\ref{itm:tensorpreserve-intro} by observing that the $k$th powers can be obtained by ``shortcutting'' random walks through all but one layer of the cycle-lifted graphs. 
This shortcutting amounts to taking the Schur complements of the corresponding Laplacians, and we show that taking Schur complements of Eulerian Laplacians preserves spectral approximation (generalizing analogous results for undirected and symmetrized Schur complements~\cite{MP13,CKKPPRS18}).

Now we can sketch our algorithm that we use to prove  Theorem~\ref{thm:EulerianLaplacians-intro}.
Given an Eulerian digraph $G$, we want to approximate the pseudoinverse of the Laplacian $\mI_n-\mw$.  By standard reductions, we may assume that $G$ is regular, connected, and aperiodic, and therefore, it has polynomial mixing time.
Rather than directly approximating the inverse of the Laplacian $\mI_n-\mw$ of the original graph, we instead approximate the inverse of the Laplacian of the cycle-lifted graph, i.e. $\mI_{2^k\cdot n} - \mc_{2^k}\otimes \mw$, for $2^k$ larger than the mixing time of $\mw$. 
Then the pseudoinverse of $\mI_n-\mw$ can be well-approximated by an appropriate $n\times n$ projection of the pseudoinverse of $\mI_{2^k\cdot n} - \mc_{2^k}\otimes \mw$.

To approximate the pseudoinverse of $\mI_{2^k\cdot n} - \mc_{2^k}\otimes \mw$, we follow the recent approach of
\cite{CKKPPRS18} and recursively compute an LU factorization (i.e. a product of a lower-triangular and upper-triangular matrix) that approximates $\mI_{2^k\cdot n} - \mc_{2^k}\otimes \mw$, as LU factorizations can be easily inverted.  Each recursive step reduces the task to computing an LU factorization of the Laplacian of the random-walk on a chosen subset $S$ of the vertices, where we short-cut steps of the walk through $S^c$.  For our algorithm, we choose the set $S$ to consist of every other layer of the cycle-lifted graph, as opposed to using a randomly chosen and pruned set of vertices as in \cite{CKKPPRS18}.  Then shortcutting walks through $S^c$ yields a graph on $S$ whose transition matrix is equal to $\mc_{2^{k-1}}\otimes \mw^2$ --- a cycle-lifted version of the two-step random walk, with a cycle of half the length!   Unfortunately, we can't just directly recurse, because repeatedly squaring $\mw$ $k$ times takes space $O((k\cdot \log N))$.  Thus, following \cite{MRSV17}, we utilize the ``derandomized square'' of \cite{RozenmanVa05}, which produces an explicit sparse $\eps$-approximation to $\mw^2$ such that $k$ iterated derandomized squares can be computed in space $O(\log N+ k\cdot\log(1/\eps))=\tO(\log N)$.  (We take $\eps=1/O(k)$ so that we can tolerate incurring an $\eps$ error in approximation for each of the $k$ levels of recursion.)  To make the analysis work, we prove that for regular digraphs, the derandomized square produces a graph $\widetilde{\mw^2}$ that is a unit-circle approximation to $\mw^2$, so that we can deduce that $\mc_{2^{k-1}}\otimes \widetilde{\mw^2}$ approximates $\mc_{2^{k-1}}\otimes \mw^2$ via Lemma~\ref{lem:unitcircle-powering-intro}.  Previous work~\cite{MRSV17} only showed approximation for undirected graphs, and with respect to the original Spielman-Teng notion of spectral approximation.  (Rozenman and Vadhan~\cite{RozenmanVa05} showed that for regular digraphs, the derandomized square improves spectral expansion nearly as much as the true square, but that is weaker than spectral approximation, as it only refers to the 2nd singular value rather than the entire spectrum.)

We remark that the $n\times n$ projection of the pseudoinverse of the approximate LU factorization we obtain is exactly the matrix we would get if we applied the repeated-squaring-based Laplacian inversion algorithm of Peng and Spielman~\cite{PS13} (or, more accurately, its space-efficient implementation via derandomized squaring \cite{MRSV17}) to the original Laplacian $\mI-\mw$.  Thus, another conceptual contribution of our paper is connecting the LU factorization approach of \cite{CKKPPRS18} to the squaring-based approach of \cite{PS13} via the cycle-lifted graph.  (However, for technical reasons in our analysis, we don't do the $n\times n$ projection until after applying the error-reduction procedure to obtain a highly accurate pseudoinverse of the cycle-lifted Laplacian.)

\section{Preliminaries}
\label{sec:prelim}

In this section we introduce notation and facts we use through out the paper.
\subsection{Notation}
We denote by $\C$ the set of complex numbers. For $w = x + yi \in \C$, we use $w^* = x - yi$ to denote the conjugate of $w$. We use $|w|=\sqrt{x^2+y^2}$ to denote the magnitude of $w$.
\paragraph{Matrices and vectors.} We use bold capital letters to denote matrices. We use $\mI_n\in \R^{n\times n}$ to denote the identity matrix. For a matrix $\ma \in \C^{n\times n}$ we use $\ma^*$ to denote the conjugate transpose of $\ma$ and we write $\mU_\ma = \frac{\ma + \ma^*}{2}$ to denote its \emph{symmetrization}. We use $\vec{\bone}_k \in \R^k$ to denote the all $1$'s vector or just $\vec{\bone}$ when $k$ is clear from context.  We denote the conjugate transpose of a vector similarly. We use $\ma^\top$ to denote the transpose of a real matrix.

\paragraph{Positive Semidefinite (PSD) matrices.} For Hermitian matrices $\ma,\mb \in \C^{n\times n}$ we say $\ma$ is PSD or write $\ma \succeq 0$ if $x^* \ma x \geq 0$ for all $x\in \C^n$. If $\ma$ is real the condition is equivalent to $x^\top \ma x\geq 0$ for all $x\in \R^n$. Further we use $\ma \succeq \mb$ to denote $\ma - \mb \succeq 0$. We define $\preceq$, $\prec$, and $\succ$ analogously.
\begin{proposition}\label{prop:psd_mult_sides}
Given a PSD matrix $\ma \in \C^{n\times n}$ and matrix $\mb \in \C^{n\times m}$ 
$$
\mb^* \ma \mb \succeq 0.
$$
\end{proposition}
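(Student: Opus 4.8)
The plan is to prove $\mb^* \ma \mb \succeq 0$ directly from the definition of positive semidefiniteness, by testing against an arbitrary vector. First I would fix an arbitrary $y \in \C^m$ and consider the scalar quantity $y^* (\mb^* \ma \mb) y$. The key observation is the associativity and conjugate-transpose identities for matrix multiplication: $y^* \mb^* \ma \mb y = (\mb y)^* \ma (\mb y)$, since $(\mb y)^* = y^* \mb^*$. Setting $x = \mb y \in \C^n$, this equals $x^* \ma x$, which is $\geq 0$ by the hypothesis that $\ma$ is PSD. Since $y \in \C^m$ was arbitrary, this shows $\mb^* \ma \mb \succeq 0$.

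The only minor point requiring care is that the statement implicitly requires $\mb^* \ma \mb$ to be Hermitian before the notation $\succeq 0$ is meaningful (the definition of PSD in the excerpt is stated for Hermitian matrices). So I would first note that $(\mb^* \ma \mb)^* = \mb^* \ma^* \mb = \mb^* \ma \mb$, using that $\ma$ is Hermitian (i.e. $\ma^* = \ma$) and that $(\mb^*)^* = \mb$; hence $\mb^* \ma \mb$ is Hermitian and the claim $\mb^* \ma \mb \succeq 0$ is well-posed. This is a completely routine verification — there is no real obstacle here; the ``hard part'', such as it is, is merely bookkeeping the conjugate transpose through a product, which is standard linear algebra.

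In summary, the proof is two lines: Hermiticity of $\mb^* \ma \mb$ follows from $\ma^* = \ma$, and for any $y \in \C^m$ we have $y^* \mb^* \ma \mb y = (\mb y)^* \ma (\mb y) \geq 0$ since $\ma \succeq 0$. I would write it out in exactly that order.
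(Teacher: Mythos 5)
Your proof is correct and is the standard argument; the paper states this proposition without proof, treating it as a routine linear-algebra fact, and your two steps (Hermiticity of $\mb^* \ma \mb$ via $\ma^* = \ma$, then $y^* \mb^* \ma \mb y = (\mb y)^* \ma (\mb y) \geq 0$ for arbitrary $y \in \C^m$) are exactly the canonical justification.
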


\paragraph{Pseudo-inverse and square root of matrices.}
For a matrix $\ma$, we use $\ma^+$ to denote the (Moore-Penrose) pseudo-inverse of $\ma$.
For a PSD matrix $\mb$, we let $\mb^{1/2}$ to denote the square root of $\mb$, which is the unique PSD matrix such that $\mb^{1/2}\mb^{1/2} = \mb$. Furthermore, we let $\mb^{+/2}$
denote the pseudo-inverse of the square root of $\mb$. 

\paragraph{Operator norms.}
For any vector norm $\|\cdot\|$ defined on $\C^{n}$ we define the operator semi-norm it induces on $\C^{n\times n}$ by $\|\ma\| = \max_{x\neq 0} \frac{\|\ma x\|}{\|x\|}$. For a PSD matrix $\mh$ and vector $x$ we let $\|x\|_{\mh} = \sqrt{x^* \mh x}$, and define the operator semi-norm $\|\ma\|_{\mh}$ accordingly. We can relate the $\|\cdot\|_\mh$ and $\|\cdot\|_2$ operator norms as follows. For a matrix $\ma$, we have  $\|\ma\|_\mh = \|\mh^{1/2} \ma \mh^{+/2}\|_2$. We use the term \emph{spectral norm} to refer to the operator norm induced by $\|\cdot\|_2$. We write $\|\ma|_{V}\|_2$ to denote the spectral norm restricted to a subspace $V$. That is, $\|\ma|_{V}\|_2=\max_{x\in V,x\neq 0} \frac{\|\ma x\|}{\|x\|}$.

\begin{lemma}
\label{lem:extension_to_complex}
Let $\mm\colon\mathbb{R}^n\rightarrow\mathbb{R}^n$ be a linear operator and extend it to $\mm\colon\mathbb{C}^n\rightarrow\mathbb{C}^n$ by defining $\mm(u+iv)=\mm u+i\mm v$ for all $z=u+iv\in \mathbb{C}^{n}$. Then $\|\mm\|_{\mathbb{C}^n\rightarrow\mathbb{C}^n}=\|\mm\|_{\mathbb{R}^n\rightarrow\mathbb{R}^n}$.
\end{lemma}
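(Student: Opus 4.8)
The plan is to prove the two inequalities $\|\mm\|_{\C^n \to \C^n} \geq \|\mm\|_{\R^n \to \R^n}$ and $\|\mm\|_{\C^n \to \C^n} \leq \|\mm\|_{\R^n \to \R^n}$ separately. The first is immediate: restricting the supremum defining $\|\mm\|_{\C^n \to \C^n}$ to vectors $z = u + i\cdot 0$ with $u \in \R^n$ (for which $\mm z = \mm u$ and $\|z\|=\|u\|$) already shows it dominates the real operator norm. So the content is entirely in the reverse inequality: bounding $\|\mm z\|$ by $\|\mm\|_{\R^n \to \R^n} \cdot \|z\|$ for an arbitrary $z = u + iv \in \C^n$.

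For the reverse direction, the first thing I would try is the naive triangle-inequality bound $\|\mm z\| = \|\mm u + i \mm v\| \leq \|\mm u\| + \|\mm v\| \leq \|\mm\|_{\R} (\|u\| + \|v\|)$, but this loses a factor of $\sqrt 2$ against $\|z\| = \sqrt{\|u\|^2 + \|v\|^2}$, so it is not good enough. The clean fix is a rotation/averaging argument exploiting that $\mm$ commutes with multiplication by the real scalar $e^{i\theta}$ applied coordinatewise only in the sense that $\mm(e^{i\theta} z) = e^{i\theta}\mm(z)$ — wait, more carefully: since $\mm$ is $\R$-linear and extended by $\mm(u+iv) = \mm u + i\mm v$, it is in fact $\C$-linear, so $\mm(e^{i\theta} z) = e^{i\theta} \mm(z)$ and $\|\mm(e^{i\theta}z)\| = \|\mm z\|$ while $\|e^{i\theta} z\| = \|z\|$. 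Writing $e^{i\theta} z = (\cos\theta\, u - \sin\theta\, v) + i(\sin\theta\, u + \cos\theta\, v)$, its real part is $u_\theta := \cos\theta\, u - \sin\theta\, v \in \R^n$. Then
\[
\|\mm z\|^2 = \|\mm(e^{i\theta} z)\|^2 \geq \|\Real(\mm(e^{i\theta} z))\|^2 = \|\mm u_\theta\|^2,
\]
using that for a complex vector $w$, $\|w\|^2 = \|\Real w\|^2 + \|\Imag w\|^2 \geq \|\Real w\|^2$, and that $\Real(\mm(e^{i\theta} z)) = \mm u_\theta$ since $\mm$ maps real vectors to real vectors. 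Now average over $\theta$: $\frac{1}{2\pi}\int_0^{2\pi} \|u_\theta\|^2 \, d\theta = \frac{1}{2}(\|u\|^2 + \|v\|^2) = \frac12 \|z\|^2$ by expanding $\|u_\theta\|^2 = \cos^2\theta \|u\|^2 - 2\cos\theta\sin\theta\, \langle u,v\rangle + \sin^2\theta \|v\|^2$ and using $\int \cos^2 = \int \sin^2 = \pi$, $\int \cos\sin = 0$. Hence there exists $\theta^*$ with $\|u_{\theta^*}\|^2 \geq \frac12\|z\|^2$, and combined with $\|\mm z\|^2 \geq \|\mm u_{\theta^*}\|^2$ this still only gives the lossy bound — so instead I take the $\theta$ maximizing $\|u_\theta\|$, not the average.

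The right move: since $\|\mm z\|^2 \geq \|\mm u_\theta\|^2$ holds for every $\theta$, take the supremum over $\theta$ on the right. Let $M = \|\mm\|_{\R^n\to\R^n}$. We also have, symmetrically, $\|\mm z\|^2 \geq \|\Imag(\mm(e^{i\theta}z))\|^2 = \|\mm v_\theta\|^2$ where $v_\theta = \sin\theta\, u + \cos\theta\, v$; but $v_\theta = u_{\theta - \pi/2}$ up to sign, so this gives nothing new. Better: actually $\|\mm z\|^2 = \|\Real(\mm z)\|^2 + \|\Imag(\mm z)\|^2 = \|\mm u\|^2 + \|\mm v\|^2 \leq M^2(\|u\|^2 + \|v\|^2) = M^2\|z\|^2$ — this is the whole proof, and it is completely elementary! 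Since $\mm$ sends real vectors to real vectors, $\Real(\mm(u+iv)) = \mm u$ and $\Imag(\mm(u+iv)) = \mm v$ exactly, so $\|\mm z\|^2 = \|\mm u\|^2 + \|\mm v\|^2 \leq M^2\|u\|^2 + M^2\|v\|^2 = M^2 \|z\|^2$. Taking square roots and then the supremum over $z \neq 0$ gives $\|\mm\|_{\C^n\to\C^n} \leq M$. Combined with the trivial reverse inequality, we conclude equality. The main thing to be careful about — the only real subtlety — is the identification $\Real(\mm z) = \mm(\Real z)$ and $\Imag(\mm z) = \mm(\Imag z)$, which is exactly the definition of the extension and the fact that $\mm$ has real matrix entries; once that is stated cleanly, no obstacle remains and the rotation machinery above is unnecessary.
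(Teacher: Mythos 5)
Your final argument is correct and is the standard proof of this fact: since $\mm$ has real entries, $\Real(\mm z)=\mm(\Real z)$ and $\Imag(\mm z)=\mm(\Imag z)$, so $\|\mm z\|^2=\|\mm u\|^2+\|\mm v\|^2\leq \|\mm\|_{\R^n\to\R^n}^2\|z\|^2$, and the reverse inequality follows by restricting to real vectors. The paper states this lemma without proof, and your Pythagorean decomposition is exactly the intended argument; the rotation/averaging detour in the middle of your write-up is unnecessary and should be cut.
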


\paragraph{Graphs}
Throughout this paper we work with unweighted directed multigraphs (digraphs). These graphs can have parallel edges and self loops and can be viewed as digraphs with integer edge weighs. We specify graphs by $G=(V,E)$ where $V$ is the set of vertices and $E$ is the multiset of edges. 

\paragraph{Adjacency and Random Walk Matrices.} The adjacency matrix of a digraph $G$ on $n$ vertices is the matrix $\ma\in\R^{n\times n}$ where $\ma_{ij}$ is the number of edges from vertex $j$ to vertex $i$ in $G$.\footnote{Often the adjacency matrix is defined to be $\ma^\top$ but we find the current formulation more convenient for our purposes.} The degree matrix $\md$ of a digraph $G$ is the diagonal matrix containing the out-degrees of the vertices in $G$. The random walk matrix or transition matrix of a digraph $G$ is $\mw=\ma\md^{-1}$. $\mw_{ij}$ is the probability that a random step from vertex $j$ leads to $i$ in $G$. Note that $\vec{\bone}^\top \mw = \vec{\bone}^\top$. A matrix $\mw \in \R_{\geq 0}^{n\times n}$ is called substochastic if $\vec{\bone}^\top \mw \leq \vec{\bone}^\top$ (the inequality is entry-wise). 

\paragraph{Directed Laplacians.}
We follow the approach in \cite{CKKPPRS18} to define graph Laplacians. A matrix $\mL \in \R^{n\times n}$ is a directed Laplacian, if its off-diagonal entries are non-positive, i.e. $\mL_{ij} \leq 0$ for $i \neq j$, and $\vec{\bone}^\top \mL = 0$. 
Every digraph is associated with a directed Laplacian.
Occasionally we write $\mL = \md - \ma$ to express the decomposition of $\mL$ into the degree matrix and adjacency matrix of the corresponding digraph.
The random-walk Laplacian of a digraph with Laplacian $\md-\ma$ is the matrix $(\md-\ma)\md^{-1}=\mI-\mw$, where $\mw$ is the transition matrix of $G$. We will often write $\mc_k$ to denote the adjacency matrix of the $k$-vertex uni-directional directed cycle. 

\paragraph{Eulerian graphs and Eulerian Laplacians.} A directed graph is Eulerian if the in-degree of every node is equal to its out-degree. A directed Laplacian $\mL$ is Eulerian if $\mL \vec{\bone} = 0$. A graph is Eulerian if and only if its Laplacian is Eulerian.

\subsection{Kronecker Product}
Given matrices $\ma\in \C^{n\times m}, \mb\in \C^{p\times q}$, the Kronecker product or tensor product of $\ma$ and $\mb$ denoted by $\ma \otimes \mb \in \C^{pn\times qm}$ is
$$
\ma\otimes \mb = 
\begin{bmatrix}
\ma_{11}\mb & \ma_{12}\mb & \cdots & \ma_{1m}\mb\\
\vdots & \vdots & \vdots & \vdots\\
\ma_{n1}\mb & \ma_{n2}\mb & \cdots & \ma_{nm}\mb
\end{bmatrix}.
$$

\begin{proposition}\label{prop:abcd}
Given four matrices $\ma$, $\mb$, $\mc$, and $\md$, if the matrix dimensions make $\ma \mc$ and $\mb \md$ well-defined, then
$$
(\ma\otimes \mb)(\mc\otimes \md) = (\ma \mc)\otimes (\mb \md).
$$
\end{proposition}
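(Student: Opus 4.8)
The plan is to verify the identity by comparing the two sides block by block, using the block description of the Kronecker product given in the definition above. Write $\ma\in\C^{n\times m}$, $\mb\in\C^{p\times q}$, $\mc\in\C^{m\times r}$, $\md\in\C^{q\times s}$, so that $\ma\mc\in\C^{n\times r}$ and $\mb\md\in\C^{p\times s}$ are well-defined; then $\ma\otimes\mb\in\C^{pn\times qm}$ and $\mc\otimes\md\in\C^{qm\times sr}$, so the product $(\ma\otimes\mb)(\mc\otimes\md)$ is a $pn\times sr$ matrix, matching the size of $(\ma\mc)\otimes(\mb\md)$.

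First I would record that $\ma\otimes\mb$, viewed as an $n\times m$ array of $p\times q$ blocks, has $(i,j)$ block equal to $\ma_{ij}\mb$, and likewise $\mc\otimes\md$, viewed as an $m\times r$ array of $q\times s$ blocks, has $(j,\ell)$ block equal to $\mc_{j\ell}\md$. Because the inner block dimensions line up ($q$ columns of each $\mb$-block against $q$ rows of each $\md$-block, and $m$ block-columns against $m$ block-rows), the product can be computed by ordinary block matrix multiplication, so the $(i,\ell)$ block of $(\ma\otimes\mb)(\mc\otimes\md)$ is $\sum_{j=1}^{m}(\ma_{ij}\mb)(\mc_{j\ell}\md)$.

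Next I would simplify this block: since $\ma_{ij}$ and $\mc_{j\ell}$ are scalars, $(\ma_{ij}\mb)(\mc_{j\ell}\md) = \ma_{ij}\mc_{j\ell}\,\mb\md$, hence $\sum_{j=1}^m (\ma_{ij}\mb)(\mc_{j\ell}\md) = \left(\sum_{j=1}^m \ma_{ij}\mc_{j\ell}\right)\mb\md = (\ma\mc)_{i\ell}\,\mb\md$. On the other hand, by the same block description, the $(i,\ell)$ block of $(\ma\mc)\otimes(\mb\md)$ is exactly $(\ma\mc)_{i\ell}(\mb\md)$. Since all corresponding blocks agree, the two matrices are equal.

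The only thing requiring care is the bookkeeping that the partition of $\ma\otimes\mb$ into $p\times q$ blocks and of $\mc\otimes\md$ into $q\times s$ blocks is compatible with block multiplication — i.e. that the global row/column indices split the way the block description claims — which is immediate from the displayed form of the Kronecker product, so there is no real obstacle. Alternatively, one could argue entrywise by indexing rows by pairs $(\alpha,\rho)\in[n]\times[p]$ and columns by pairs $(\gamma,\sigma)\in[r]\times[s]$ and checking $\big((\ma\otimes\mb)(\mc\otimes\md)\big)_{(\alpha,\rho),(\gamma,\sigma)} = \sum_{\beta,\tau}\ma_{\alpha\beta}\mb_{\rho\tau}\mc_{\beta\gamma}\md_{\tau\sigma} = (\ma\mc)_{\alpha\gamma}(\mb\md)_{\rho\sigma}$ by factoring the double sum over $\beta$ and $\tau$, but the block computation is cleaner and I would present that one.
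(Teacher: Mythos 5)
Your proof is correct: the block-multiplication computation showing that the $(i,\ell)$ block of $(\ma\otimes\mb)(\mc\otimes\md)$ equals $\bigl(\sum_{j}\ma_{ij}\mc_{j\ell}\bigr)\mb\md=(\ma\mc)_{i\ell}\,\mb\md$ is the standard argument for the mixed-product property, and your dimension bookkeeping is consistent with the paper's convention that $\ma\otimes\mb\in\C^{pn\times qm}$. The paper states this proposition without proof as a known fact, so there is no argument to compare against; either your block version or the entrywise version you sketch at the end would be a perfectly adequate proof.
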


\subsection{Schur Complement}
For a matrix $\ma \in \C^{n\times n}$ and sets $F,C\subseteq [n]$, let $\ma_{FC}$ denote the submatrix corresponding to the rows in $F$ and columns in $C$. Similarly, for a vector $v \in \C^n$ let $v_F \in \C^{|F|}$ be the restriction of $v$ onto coordinates in $F$. If $F, C$ partition $[n]$ and $\ma_{FF}$ is invertible, then we denote the Schur complement of $A$ onto the set $C$
by
$$
\schur{(\ma, C)} \eqdef \ma_{CC} - \ma_{CF} \ma_{FF}^{-1} \ma_{FC}.
$$
When it is clear from context we may reload this notation as follows to make the Schur complement dimension consistent with $\ma$.
$$
\schur{(\ma, C)} \eqdef
\begin{bmatrix}
0_{FF} & 0_{FC} \\
0_{CF} & \ma_{CC} - \ma_{CF} \ma_{FF}^{-1} \ma_{FC}
\end{bmatrix}.
$$

\section{Spectral Approximation}
\label{sect:definitions}
Since its introduction by Spielman and Teng \cite{ST11}, spectral approximation of graphs and their associated matrices \cite{ST11} has served as a powerful tool for graph-theoretic algorithm development. Below we review the original definition and later generalizations to directed graphs and asymmetric matrices \cite{CKPPRSV17}, and then present our new, stronger definition of unit-circle approximation in several equivalent formulations.

\subsection{Definitions}
\begin{definition}[Undirected Spectral Approximation \cite{ST11}]
\label{def:undir_approx}
Let $\mw,\widetilde{\mw}\in\R^{n\times n}$ be symmetric matrices. We say that $\widetilde{\mw}$ is an \emph{undirected $\epsilon$-approximation} of $\mw$ (written $\widetilde{\mw}\approx_\epsilon\mw$) if 
\[
\forall x\in\R^n,~(1-\epsilon)\cdot x^{\top}(\mI-\mw)x\leq x^{\top}(\mI-\widetilde{\mw})x \leq (1+\epsilon)\cdot x^{\top}(\mI-\mw)x
\]
or equivalently,
\[
\forall x\in\R^n,~\left|x^\top(\mw-\widetilde{\mw})x\right|\leq \epsilon\cdot x^{\top}(\mI-\mw)x=\epsilon\cdot\left(\|x\|^2-x^{\top}\mw x\right).
\]
\end{definition}

Typically Definition \ref{def:undir_approx} is phrased in terms of Laplacian matrices of the form $\mI-\mw$ and approximation is denoted by $\mI-\widetilde{\mw}\approx_{\epsilon}\mI-\mw$ to indicate the multiplicative approximation between the quadratic forms defined by $\mI-\widetilde{\mw}$ and $\mI-\mw$. However, in the more general definitions of this paper it will be more convenient to think of spectral approximation as a measure of approximation between $\widetilde{\mw}$ and $\mw$ rather than between $\mI-\widetilde{\mw}$ and $\mI-\mw$. Note that the definition is asymmetric in $\widetilde{\mw}$ and $\mw$ but $\widetilde{\mw}\approx_{\epsilon}\mw$ for $\epsilon<1$ implies $\mw\approx_{\epsilon/(1-\epsilon)}\widetilde{\mw}$.

Spectral approximation is a strong definition that guarantees the two matrices have similar eigenvalues, and their corresponding graphs have similar cuts and random walk behavior \cite{ST11,batson2013spectral}. Below we show the generalization to directed graphs from \cite{CKPPRSV17}.
\begin{definition}[Directed Spectral Approximation \cite{CKPPRSV17}]
\label{def:dir_approx}
Let $\mw,\widetilde{\mw}\in\R^{n\times n}$ be (possibly asymmetric) matrices. We say that $\widetilde{\mw}$ is a \emph{directed $\epsilon$-approximation} of $\mw$ (written $\widetilde{\mw}\approx_\epsilon\mw$) if
\begin{eqnarray*}
\forall x,y \in\R^n,~\left|x^\top(\mw-\widetilde{\mw})y\right|&\leq& \frac{\epsilon}{2}\cdot \left(x^{\top}(\mI-\mw)x+y^{\top}(\mI-\mw)y\right)\\
&=&\frac{\epsilon}{2}\cdot\left(\|x\|^2+\|y\|^2-x^{\top}\mw x-y^{\top}\mw y\right)\\
&=&\frac{\epsilon}{2}\cdot\left(\|x\|^2+\|y\|^2-x^{\top}\mU_{\mw} x-y^{\top}\mU_{\mw} y\right).
\end{eqnarray*}
\end{definition}
The main difference between the above and Definition~\ref{def:undir_approx} is the introduction of the $y$ vector instead of having $y=x$. Indeed, using the same vector on both sides would lose the asymmetric information in the matrices $\mw$ and $\widetilde{\mw})$. However, note that the last inequality shows that the right-hand side depends only on the 
symmetrization $\mU_{\mw}$.

We are justified using the same notation for undirected and directed spectral approximation because of the following lemma 
\begin{lemma}[\cite{MRSV19} Lemma 2.9]
\label{lem:undir_dir_equiv}
Let $\mw,\widetilde{\mw}\in\R^{n\times n}$ be symmetric matrices. Then $\widetilde{\mw}$ is a directed $\epsilon$-approximation of $\mw$ if and only if it is an undirected $\epsilon$-approximation of $\mw$.
\end{lemma}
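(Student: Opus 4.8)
The plan is to prove the two directions of the equivalence separately, noting at the outset that the ``directed $\Rightarrow$ undirected'' direction is immediate: setting $y = x$ in the directed definition and comparing with the equivalent (quadratic-form) formulation in Definition~\ref{def:undir_approx} gives exactly the undirected inequality $|x^\top(\mw-\widetilde\mw)x|\le \epsilon\cdot x^\top(\mI-\mw)x$. So the real content is the converse: assuming $\widetilde\mw\approx_\epsilon\mw$ in the undirected sense, we must recover the full bilinear inequality for \emph{all} pairs $x,y\in\R^n$, not just the diagonal case $y=x$.

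For the ``undirected $\Rightarrow$ directed'' direction, the natural approach is a polarization argument. Write $\mDelta = \mw - \widetilde\mw$, which is symmetric (both $\mw,\widetilde\mw$ are symmetric), and let $\mathcal{L} = \mI-\mw$, which is symmetric PSD on the relevant subspace (we may assume this since the undirected approximation hypothesis implicitly requires $x^\top(\mI-\mw)x\ge 0$, as otherwise the inequality would force $\mDelta=0$ anyway). The hypothesis says $|x^\top\mDelta x|\le \epsilon\, x^\top\mathcal{L}x$ for all $x$, which is equivalent to the two PSD inequalities $-\epsilon\mathcal{L}\preceq \mDelta\preceq \epsilon\mathcal{L}$. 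From $\mDelta\preceq \epsilon\mathcal{L}$ and $-\mDelta\preceq \epsilon\mathcal{L}$ we get that $\epsilon\mathcal{L}\pm\mDelta$ are both PSD, hence admit square roots; then for any $x,y$, applying Cauchy--Schwarz in the semi-inner-product induced by $\epsilon\mathcal{L}-\mDelta$ (or by combining the two relations via the standard $2ab\le a^2+b^2$ trick on the vectors $x\pm y$), one obtains
\[
\bigl|x^\top\mDelta y\bigr| \;\le\; \tfrac{1}{2}\Bigl(|{\pm}x^\top\mDelta x| + |y^\top\mDelta y|\Bigr)\,\le\,\tfrac{\epsilon}{2}\bigl(x^\top\mathcal{L}x + y^\top\mathcal{L}y\bigr),
\]
which is precisely the directed $\epsilon$-approximation inequality. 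Concretely, I would use the identity $x^\top\mDelta y = \tfrac14\bigl((x+y)^\top\mDelta(x+y) - (x-y)^\top\mDelta(x-y)\bigr)$, bound each term using $|u^\top\mDelta u|\le \epsilon\, u^\top\mathcal{L}u$, and then apply the parallelogram law $(x+y)^\top\mathcal{L}(x+y) + (x-y)^\top\mathcal{L}(x-y) = 2x^\top\mathcal{L}x + 2y^\top\mathcal{L}y$ to collapse the bound to the desired right-hand side.

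I expect the main (minor) obstacle to be handling the signs correctly in the polarization step: the two polarized terms $(x\pm y)^\top\mDelta(x\pm y)$ may have opposite signs, so one cannot simply add their absolute-value bounds naively without also invoking the parallelogram identity for $\mathcal{L}$ to control $(x+y)^\top\mathcal{L}(x+y)+(x-y)^\top\mathcal{L}(x-y)$. A clean way to sidestep any sign bookkeeping is to work directly with the PSD relation: since $\epsilon\mathcal{L}-\mDelta\succeq 0$ and $\epsilon\mathcal{L}+\mDelta\succeq 0$, write $2\mDelta = (\epsilon\mathcal{L}+\mDelta) - (\epsilon\mathcal{L}-\mDelta)$ and apply the generalized Cauchy--Schwarz inequality $|x^\top\mathbf{M}y|\le \tfrac12(x^\top\mathbf{M}x + y^\top\mathbf{M}y)$, valid for any PSD $\mathbf{M}$, to each of the two PSD matrices, then recombine. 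Either route is a few lines of routine linear algebra, so the lemma should follow quickly once the polarization/PSD setup is in place; the only thing to be careful about is that everything is stated over $\R^n$ here (the symmetric real case), so no complex conjugation subtleties arise.
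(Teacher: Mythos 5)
Your proof is correct, and it is worth noting that this paper does not actually prove the lemma---it imports it from \cite{MRSV19}. The route suggested by the machinery already present here would be to observe that for symmetric $\mDelta=\mw-\widetilde{\mw}$ and $\mathcal{L}=\mI-\mw$, the undirected condition is equivalent to $\|\mathcal{L}^{+/2}\mDelta\mathcal{L}^{+/2}\|_2\le\epsilon$ plus the kernel containments (because the spectral norm of a symmetric matrix equals the maximum of $|u^\top M u|/\|u\|^2$), and then to invoke the equivalence of items 1 and 3 in Lemma~\ref{lem:equivdef_helper}. Your argument is a genuinely different and more elementary route: it never forms a pseudo-square-root, using only the PSD sandwich $-\epsilon\mathcal{L}\preceq\mDelta\preceq\epsilon\mathcal{L}$ together with either polarization plus the parallelogram law, or the split $2\mDelta=(\epsilon\mathcal{L}+\mDelta)-(\epsilon\mathcal{L}-\mDelta)$ combined with the bound $|x^\top \mm y|\le\tfrac12(x^\top \mm x+y^\top \mm y)$ for PSD $\mm$; both of your concrete computations are valid and recover the constant $\epsilon/2$ exactly, and the easy direction ($y=x$) is handled correctly. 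One caveat: the displayed intermediate chain $|x^\top\mDelta y|\le\tfrac12(|{\pm}x^\top\mDelta x|+|y^\top\mDelta y|)$ is false in general for an indefinite symmetric $\mDelta$ (take $\mDelta$ to be the $2\times 2$ matrix with zeros on the diagonal and ones off the diagonal, $x=e_1$, $y=e_2$: the left side is $1$ and the right side is $0$); you correctly flag this as the sign issue and both of your concrete fixes bypass it, so in a final writeup that display should be deleted and replaced by either the polarization-plus-parallelogram computation or the PSD-split computation.
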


It will be convenient for us to generalize Definition \ref{def:dir_approx} to complex matrices. In that case, we will quantify over $x,y\in\C^n$ and replace the transposes with Hermitian transposes.  

\begin{definition}[Complex Spectral Approximation]
\label{def:complex_approx}
Let $\mw,\widetilde{\mw}\in\C^{n\times n}$ be (possibly asymmetric) matrices. We say that $\widetilde{\mw}$ is a \emph{complex $\epsilon$-approximation} of $\mw$ (written $\widetilde{\mw}\approx_{\epsilon}\mw$) if 
\begin{eqnarray*}
\forall x, y\in\C^n,~\left|x^*(\mw-\widetilde{\mw})y\right|
&\leq& \frac{\epsilon}{2}\cdot\left(\|x\|^2+\|y\|^2-x^{*}\mU_{\mw} x-y^{*}\mU_{\mw} y\right)\\
&=& \frac{\epsilon}{2}\cdot\left(\|x\|^2+\|y\|^2-\Real\left(x^{*}\mw x+y^{*}\mw y\right)\right)\\
\end{eqnarray*}
\end{definition}
The equality in Definition \ref{def:complex_approx} comes from the observation that for all $v\in\C^n$ and all matrices $\ma\in\C^{n\times n}$ we have
\[
    v^{*}\mU_{\ma} v =\frac{1}{2}\left( v^{*}\ma v+v^{*}\ma^*v\right)=\frac{1}{2}\left( v^{*}\ma v+(v^{*}\ma v)^*\right)=\Real(v^{*}\ma v)
\]
because the average of a complex number and its conjugate is simply its real part. Notice that the definitions of undirected, directed, and complex spectral approximation are only achievable when the matrix $\mw$ has the property that $\Real(x^{*}\mw x)\leq \|x\|^2$ for all vectors $x\in\C^n$ (when $\mw$ is real and symmetric as in the case of undirected spectral approximation, this requirement is equivalent to $x^{\top}\mw x\leq \|x\|^2$ for all $x\in\R^n$). When working with these types of approximation, we will often implicitly restrain the matrices to have this property. 

Again, we are justified in using the same notation for complex approximation that we use for directed and undirected approximations because of the following lemma.
\begin{lemma}
\label{lem:dirapprox_to_complex}
Let $\mw,\widetilde{\mw}\in\R^{n\times n}$ be (possibly asymmetric) matrices. Then $\widetilde{\mw}$ is a directed $\epsilon$-approximation of $\mw$ if and only if $\widetilde{\mw}$ is a complex $\epsilon$-approximation of $\mw$.
\end{lemma}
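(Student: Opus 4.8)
The plan is to prove the two implications separately; the forward direction is immediate, and the substance of the reverse direction is a single ``rotation'' trick that reduces the complex inequality to two applications of the real one. For the easy direction, if $\widetilde{\mw}$ is a complex $\epsilon$-approximation of $\mw$, then restricting the universally quantified vectors in Definition~\ref{def:complex_approx} to lie in $\R^n\subseteq\C^n$ gives precisely the inequality of Definition~\ref{def:dir_approx}: for real $x$ the conjugate transpose $x^*$ equals the transpose $x^\top$ and $\Real(x^\top\mw x)=x^\top\mw x$, so $\widetilde{\mw}$ is a directed $\epsilon$-approximation of $\mw$.

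For the reverse direction, assume $\widetilde{\mw}\approx_\epsilon\mw$ in the sense of Definition~\ref{def:dir_approx}, set $\mD\eqdef\mw-\widetilde{\mw}\in\R^{n\times n}$, and fix $x,y\in\C^n$; we must bound $|x^*\mD y|$. First I would rotate: choose $\phi\in\R$ with $e^{i\phi}\cdot x^*\mD y=|x^*\mD y|$ and replace $x$ by $e^{-i\phi/2}x$ and $y$ by $e^{i\phi/2}y$. A one-line computation shows this replacement turns $x^*\mD y$ into $(e^{-i\phi/2}x)^*\mD(e^{i\phi/2}y)=e^{i\phi}x^*\mD y=|x^*\mD y|$ while leaving $\|x\|$, $\|y\|$, $x^*\mw x$, and $y^*\mw y$ all unchanged. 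Hence it suffices to prove the bound of Definition~\ref{def:complex_approx} under the extra assumption that $x^*\mD y$ is real and nonnegative, so that $|x^*\mD y|=\Real(x^*\mD y)$ (if $x^*\mD y=0$ the inequality is trivial since the right-hand side is nonnegative under the standing assumption $\Real(v^*\mw v)\le\|v\|^2$).

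Next I would split into real and imaginary parts: write $x=a+ib$ and $y=c+id$ with $a,b,c,d\in\R^n$. Expanding, and using that a real matrix's quadratic form depends only on its symmetrization, gives $\Real(x^*\mD y)=a^\top\mD c+b^\top\mD d$, $\Real(x^*\mw x)=a^\top\mU_\mw a+b^\top\mU_\mw b$, and $\Real(y^*\mw y)=c^\top\mU_\mw c+d^\top\mU_\mw d$, together with $\|x\|^2=\|a\|^2+\|b\|^2$ and $\|y\|^2=\|c\|^2+\|d\|^2$. Applying Definition~\ref{def:dir_approx} to the real pair $(a,c)$ and again to the real pair $(b,d)$ and adding the two resulting inequalities gives
\[
|x^*\mD y|=a^\top\mD c+b^\top\mD d\;\le\;|a^\top\mD c|+|b^\top\mD d|\;\le\;\frac{\epsilon}{2}\Big(\|x\|^2+\|y\|^2-\Real(x^*\mw x)-\Real(y^*\mw y)\Big),
\]
which is exactly the complex $\epsilon$-approximation inequality.

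The one nontrivial point is the rotation step. If instead one bounded $|x^*\mD y|$ by controlling its real and imaginary parts separately --- pairing $(a,c),(b,d)$ for the real part and $(a,d),(b,c)$ for the imaginary part --- and then recombined, one would lose a constant factor (roughly $\sqrt2$) and fail to obtain the sharp constant $\epsilon/2$. Absorbing the phase of $x^*\mD y$ into the vectors before decomposing is what makes the constant come out exactly right; everything else is routine bookkeeping with the identity $\Real(v^*\ma v)=v^*\mU_\ma v$ recorded just after Definition~\ref{def:complex_approx}.
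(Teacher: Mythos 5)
Your proof is correct, but it takes a genuinely different route from the paper's. The paper proves this lemma abstractly: it invokes the equivalence (Lemma~\ref{lem:equivdef_helper} and its complex extension Lemma~\ref{lem:equivdef_helper_complex}) between the bilinear-form inequality and the operator-norm condition $\|\mm^{+/2}(\mw-\widetilde{\mw})\mn^{+/2}\|\leq\epsilon$ with $\mm=\mn=\mU_{\mI-\mw}$, together with the kernel conditions, and then applies Lemma~\ref{lem:extension_to_complex} (a real matrix has the same spectral norm over $\R^n$ as over $\C^n$) to pass between the real and complex quantifications. Your argument is instead a direct, elementary verification: the phase rotation $x\mapsto e^{-i\phi/2}x$, $y\mapsto e^{i\phi/2}y$ reduces to the case where $x^*(\mw-\widetilde{\mw})y$ is real and nonnegative, after which decomposing into real and imaginary parts and applying the real inequality to the pairs $(a,c)$ and $(b,d)$ gives the complex inequality with no loss in the constant; your closing observation correctly identifies the rotation as the step that avoids a $\sqrt{2}$ loss. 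Each approach has its merits: the paper's is modular, since the operator-norm characterizations are established once and reused (e.g., in Lemma~\ref{lem:unitcirc_equivalences}), whereas yours is self-contained and avoids the pseudoinverse square-root machinery and kernel bookkeeping entirely. One small stylistic note: the nonnegativity of the right-hand side in the degenerate case $x^*(\mw-\widetilde{\mw})y=0$ already follows from summing the two real inequalities, so no separate appeal to the standing assumption $\Real(v^*\mw v)\leq\|v\|^2$ is needed there.
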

A proof of Lemma \ref{lem:dirapprox_to_complex} can be found in Appendix \ref{app:definitions}. Now we introduce our new stronger definition, which we call \emph{unit-circle spectral approximation}.

\begin{definition}[Unit-circle Spectral Approximation]
\label{def:unitcirc_approx}
Let $\mw,\widetilde{\mw}\in\C^{n\times n}$ be (possibly asymmetric) matrices. We say that $\widetilde{\mw}$ is a \emph{unit-circle $\epsilon$-approximation} of $\mw$ (written $\widetilde{\mw}\capprox_\epsilon\mw$) if  
\[
\forall x,y\in\C^n,~ \left|x^*(\mw-\widetilde{\mw})y\right|\leq \frac{\epsilon}{2}\cdot\left(\|x\|^2+\|y\|^2-\left|x^{*}\mw x+y^{*}\mw y\right|\right).
\]
\end{definition}

The change from Definition \ref{def:complex_approx} is that we have replaced the real part with the complex magnitude $|\cdot|$ on the quadratic forms $x^{*}\mw x+y^{*}\mw y$ on the right-hand side. To understand what we gain from this, suppose $x=y$ is an eigenvector of $\mw$ with eigenvalue $\lambda$ such that $|\lambda|=1$. Then the right-hand side of the inequality equals zero and so we must have $x^*\widetilde{\mw}y=x^*\mw y$. In other words, $\widetilde{\mw}$ and $\mw$ must behave identically on the entire unit circle of eigenvalues with magnitude 1. This is in contrast to the previous definitions, which only required exact preservation in the case where $\lambda=1$. For example, can an undirected bipartite graph (which has a periodicity of 2 and an eigenvalue of $-1$) have a non-bipartite spectral approximation? Under previous definitions, the answer is yes but under unit-circle approximation, the answer is no because we require exact preservation on all eigenvalues of magnitude 1, not just $\lambda=1$.

Unit circle approximation applies to a smaller class of matrices than the previous definitions of spectral approximation. While the previous definitions only required that $\Real(x^*\mw x)\leq \|x\|^2$ for all $x\in\C^n$, unit circle approximation requires that $|x^*\mw x|\leq\|x\|^2$ for all $x\in\C^n$. Again, we will often implicitly restrict our matrices to have this property. Note that all complex matrices $\mw$ such that $\|\mw\|_{1}\leq 1$ and $\|\mw\|_{\infty}\leq 1$ satisfy this property. In particular, if $\mw$ is the transition matrix of an Eulerian graph, then $\mw$ satsifies the property as does $z\cdot\mw$ for all $z\in\C$ such that $|z|\leq 1$.  

We will see in the coming sections that unit-circle approximation is preserved under powering of $\widetilde{\mw}$ and $\mw$ and is useful for achieving spectral approximation of a class of graphs we call \emph{cycle-lifted graphs}, which are essential for the anlaysis of our Eulerian Laplacian solver.

\subsection{Equivalent Formulations}
There are many useful equivalent formulations of Definition~\ref{def:unitcirc_approx}. First we look at what our definition gives in the case of real and symmetric matrices. 
\begin{lemma}[Real, Symmetric Equivalence]
\label{lem:real_sym_equiv}
Let $\mw,\widetilde{\mw}\in\R^{n\times n}$ be symmetric matrices. Then the following are equivalent:
\begin{enumerate}
    \item $\widetilde{\mw}\capprox_{\epsilon}\mw$.
    \item $\widetilde{\mw}\approx_{\epsilon}\mw$ and $-\widetilde{\mw}\approx_{\epsilon}-\mw$.
    \item For all $x\in\R^{n}$ we have
\[
\left|x^{\top}(\mw-\widetilde{\mw})x\right|\leq \epsilon\cdot\left(\|x\|^2-|x^{\top}\mw x|\right).
\]
\end{enumerate} 
\end{lemma}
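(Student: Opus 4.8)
The plan is to prove the cycle $1 \Rightarrow 2 \Rightarrow 3 \Rightarrow 1$. The implication $1 \Rightarrow 2$ should be immediate from the equivalence $\widetilde{\mw}\capprox_\epsilon\mw$ iff $z\widetilde{\mw}\approx_\epsilon z\mw$ for all $|z|=1$ (mentioned in the introduction, and presumably proved as an equivalent formulation nearby), specialized to $z=\pm 1$ using Lemma~\ref{lem:dirapprox_to_complex} and Lemma~\ref{lem:undir_dir_equiv} to pass between complex, directed, and undirected notions for real symmetric matrices; alternatively, one sees it directly from Definition~\ref{def:unitcirc_approx} by the triangle inequality $|x^* \mw x + y^* \mw y| \ge |x^*\mw x + y^*\mw y|$ applied with the sign of $\mw$ flipped, noting $|{-}(x^*\mw x + y^*\mw y)| = |x^*\mw x + y^*\mw y|$.

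For $2 \Rightarrow 3$: given $x \in \R^n$, set $y = x$ in the definition of $\widetilde{\mw}\approx_\epsilon\mw$ and of $-\widetilde{\mw}\approx_\epsilon-\mw$ (this is the undirected notion for symmetric matrices, so taking $y=x$ is exactly Definition~\ref{def:undir_approx}). From the first we get $|x^\top(\mw-\widetilde{\mw})x| \le \epsilon(\|x\|^2 - x^\top \mw x)$ and from the second $|x^\top(\mw-\widetilde{\mw})x| \le \epsilon(\|x\|^2 + x^\top \mw x)$. Since $x^\top \mw x$ is real, one of $\|x\|^2 - x^\top\mw x$ and $\|x\|^2 + x^\top \mw x$ equals $\|x\|^2 - |x^\top \mw x|$, so taking whichever bound is tighter gives statement~3.

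For $3 \Rightarrow 1$: this is the direction where the real work lies, since we must recover a statement quantified over $x,y \in \C^n$ from one quantified only over real $x$. The standard route, following the proof of Lemma~\ref{lem:dirapprox_to_complex} in Appendix~\ref{app:definitions}, is to first upgrade statement~3 to its complex form — for all $x \in \C^n$, $|x^*(\mw-\widetilde{\mw})x| \le \epsilon(\|x\|^2 - |x^* \mw x|)$ — by writing $x = u + iv$ and handling the real and imaginary cross-terms, using that $\mw, \widetilde{\mw}$ are real symmetric so $v^\top \mw u = u^\top \mw v$, together with a phase-rotation trick: replace $x$ by $e^{i\theta}x$, which leaves both $|x^*(\mw-\widetilde{\mw})x|$ and $|x^*\mw x|$ unchanged, to reduce to the case where $x^*(\mw-\widetilde{\mw})x$ is real and nonnegative, then split into real and imaginary parts. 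Then, from the complex single-vector inequality, derive the two-vector inequality of Definition~\ref{def:unitcirc_approx} by a polarization/substitution argument: apply the single-vector bound to vectors of the form $\alpha x + \beta y$ for suitable unimodular $\alpha,\beta$ (chosen to align the phase of $x^*(\mw-\widetilde{\mw})y$), expand, and use the triangle inequality $|(\alpha x + \beta y)^* \mw (\alpha x + \beta y)| \le$ ... — more precisely, one wants a lower bound on $\|\alpha x + \beta y\|^2 - |(\alpha x+\beta y)^*\mw(\alpha x + \beta y)|$ in terms of $\|x\|^2 + \|y\|^2 - |x^*\mw x + y^*\mw y|$, which is where averaging over a couple of sign choices of $\beta$ (e.g. $\beta$ and $-\beta$) cleans up the cross terms. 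The main obstacle is precisely this last polarization step: unlike the real part, the magnitude $|\cdot|$ is not a quadratic form, so the cross-terms $x^*\mw y$ do not cancel cleanly, and one has to be careful — likely averaging over $\pm$ and using convexity of $|\cdot|$ — to avoid losing a constant factor in $\epsilon$. I expect this can be pushed through exactly as in the cited appendix proof for the directed-to-complex case, with $\Real(\cdot)$ replaced by $|\cdot|$ throughout and the extra phase freedom of the unit circle absorbing the discrepancy.
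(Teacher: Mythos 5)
Your directions $1 \Rightarrow 2$ and $2 \Rightarrow 3$ are correct and match the paper. The genuine gap is in $3 \Rightarrow 1$, which you route through a polarization argument that you yourself flag as having an unresolved obstacle --- and that obstacle is real: since $|\cdot|$ is not a quadratic form, applying the diagonal bound to vectors $\alpha x + \beta y$ and averaging over sign choices does not cleanly produce the term $-\left|x^{*}\mw x+y^{*}\mw y\right|$ on the right-hand side, and it is not clear this can be pushed through without degrading $\epsilon$. As written, your proof is incomplete at exactly the step you identify as the main difficulty.

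The paper avoids this entirely by closing the cycle through item 2 rather than attacking item 1 directly from item 3. From item 3 one immediately recovers item 2, since each of $\|x\|^2 \mp x^{\top}\mw x$ dominates $\|x\|^2 - |x^{\top}\mw x|$. To get item 1 from item 2, Lemmas \ref{lem:undir_dir_equiv} and \ref{lem:dirapprox_to_complex} upgrade \emph{both} undirected approximations to their complex two-vector forms, giving for all $x,y\in\C^n$
\[
\left|x^*(\mw-\widetilde{\mw})y\right| \leq \frac{\epsilon}{2}\cdot\left(\|x\|^2+\|y\|^2 \mp \Real\left(x^{*}\mw x+y^{*}\mw y\right)\right)
\]
for both choices of sign. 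Because $\mw$ is real and symmetric, $x^{*}\mw x+y^{*}\mw y$ is real, so the minimum of the two right-hand sides is exactly $\frac{\epsilon}{2}\left(\|x\|^2+\|y\|^2-\left|x^{*}\mw x+y^{*}\mw y\right|\right)$, which is Definition \ref{def:unitcirc_approx}. The point is that the passage from diagonal ($y=x$) to off-diagonal data is done once and for all inside Lemma \ref{lem:dirapprox_to_complex} (via Lemma \ref{lem:equivdef_helper_complex}), where the right-hand sides are genuine quadratic forms; the absolute value enters only at the last step as a minimum of two quadratic bounds, so no polarization over $|\cdot|$ is ever needed. Replacing your $3\Rightarrow 1$ with $3\Rightarrow 2\Rightarrow 1$ along these lines completes the proof.
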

A proof of Lemma \ref{lem:real_sym_equiv} can be found in Appendix \ref{app:definitions}. In the original \cite{ST11} formulation of spectral approximation as multiplicative approximation between quadratic forms, Lemma \ref{lem:real_sym_equiv} says that in the real, symmetric setting, unit-circle spectral approximation is equivalent to $\mI-\widetilde{\mw}$ approximating $\mI-\mw$ \emph{and} $\mI+\widetilde{\mw}$ approximating $\mI+\mw$. This makes intuitive sense because symmetric matrices have real eigenvalues so the only eigenvalues that can lie on the unit circle are $+1$ and $-1$.

This ``plus and minus'' approximation has been studied before in \cite{cheng2015,MRSV19}, where it was found to be useful because spectral approximation is preserved under squaring when both the ``plus'' and ``minus'' approximations hold. We will see in Section \ref{sect:ring_powers} that even in the general directed, complex case, unit-circle approximation is preserved under all powering.  

We now show some convenient equivalent formulations of unit-circle spectral approximation. 

\begin{lemma}
\label{lem:unitcirc_equivalences}
Let $\mw,\widetilde{\mw}\in\mathbb{C}^{n\times n}$ be (possibly asymmetric) matrices. Then the following are equivalent 
\begin{enumerate}
    \item $\widetilde{\mw}\capprox_{\epsilon}\mw$
    \item For all $z\in\C$ such that $|z|=1$, $z\cdot\widetilde{\mw}\approx_{\epsilon}z\cdot\mw$
    \item For all $z\in\C$ such that $|z|=1$,  
    \begin{itemize}
        \item $\mathrm{ker}(\mU_{\mI-z\cdot \mw})\subseteq \mathrm{ker}(\widetilde{\mw}-\mw)\cap\mathrm{ker}((\widetilde{\mw}-\mw)^\top)$ and
        \item $\left\|\mU_{\mI-z\cdot \mw}^{+/2}(\widetilde{\mw}-\mw)\mU_{\mI-z\cdot \mw}^{+/2}\right\|\leq \epsilon$
    \end{itemize} 
\end{enumerate}
\end{lemma}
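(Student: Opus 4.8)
The plan is to prove the three equivalences in a cycle: $(1)\Leftrightarrow(2)$ directly from the definitions, and then $(2)\Leftrightarrow(3)$ by appealing to the characterization of complex $\epsilon$-approximation already recorded in the paper (Lemma \ref{lem:dirapprox_to_complex} together with the operator-norm formulation that must appear alongside Definition \ref{def:complex_approx}), applied to the matrix $z\cdot\mw$ for each $z$ on the unit circle.

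For $(1)\Leftrightarrow(2)$: fix $x,y\in\C^n$. For any unit-modulus $z$, the left-hand side of the defining inequality for $z\widetilde{\mw}\approx_\epsilon z\mw$ is $|x^*(z\mw - z\widetilde{\mw})y| = |x^*(\mw-\widetilde{\mw})y|$, which is independent of $z$. The right-hand side is $\tfrac{\epsilon}{2}(\|x\|^2 + \|y\|^2 - \Real(x^*(z\mw)x + y^*(z\mw)y)) = \tfrac{\epsilon}{2}(\|x\|^2+\|y\|^2 - \Real(z\cdot(x^*\mw x + y^*\mw y)))$. Now I would use the elementary fact that for a fixed complex number $c = x^*\mw x + y^*\mw y$, we have $\min_{|z|=1}\Real(zc) = -|c|$, attained at $z = -\overline{c}/|c|$ (and when $c=0$ the minimum is trivially $0 = -|c|$). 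Hence the strongest of all the inequalities over $|z|=1$ is exactly the one with $-|c|$ in place of $\Real(zc)$, which is precisely the defining inequality of $\widetilde{\mw}\capprox_\epsilon\mw$. So requiring $z\widetilde{\mw}\approx_\epsilon z\mw$ for all $|z|=1$ is equivalent to requiring the single unit-circle inequality, giving $(1)\Leftrightarrow(2)$. I should also note this shows the implicit domain restriction matches up: $|x^*\mw x|\le\|x\|^2$ for all $x$ iff $\Real(x^*(z\mw)x)\le\|x\|^2$ for all $x$ and all $|z|=1$.

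For $(2)\Leftrightarrow(3)$: this is just the statement that, for a fixed matrix $\ma$ (here $\ma = z\mw$), complex $\epsilon$-approximation $\widetilde{\ma}\approx_\epsilon\ma$ is equivalent to the kernel containment $\mathrm{ker}(\mU_{\mI-\ma})\subseteq\mathrm{ker}(\widetilde{\ma}-\ma)\cap\mathrm{ker}((\widetilde{\ma}-\ma)^\top)$ together with the operator-norm bound $\|\mU_{\mI-\ma}^{+/2}(\widetilde{\ma}-\ma)\mU_{\mI-\ma}^{+/2}\|\le\epsilon$. Substituting $\widetilde{\ma}-\ma = z(\widetilde{\mw}-\mw)$ (and noting $|z|=1$ makes the norm and kernels insensitive to the scalar $z$) and $\mI - \ma = \mI - z\mw$ turns this into exactly item (3). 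The equivalence of the quadratic-form definition with the $\mathrm{PSD}$/operator-norm formulation is the standard argument: $|x^*(\ma-\widetilde{\ma})y|\le\tfrac{\epsilon}{2}(\|x\|^2_{\mU_{\mI-\ma}} + \|y\|^2_{\mU_{\mI-\ma}})$ for all $x,y$ is equivalent, by setting $x=\mU_{\mI-\ma}^{+/2}x'$ and $y=\mU_{\mI-\ma}^{+/2}y'$ and an $\mathrm{AM}$-$\mathrm{GM}$/rescaling step ($\|x'\|=\|y'\|$ can be assumed WLOG by scaling), to the bilinear-form bound $|x'^*\,\mU_{\mI-\ma}^{+/2}(\ma-\widetilde{\ma})\mU_{\mI-\ma}^{+/2}\,y'|\le\epsilon\|x'\|\,\|y'\|$, which is the spectral-norm bound; the kernel condition handles the part of the space where $\mU_{\mI-\ma}$ is degenerate, forcing $\widetilde{\ma}-\ma$ to vanish there so that the pseudo-inverse formulation is faithful. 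If the operator-norm formulation of Definition \ref{def:complex_approx} is only stated later, I would instead prove $(2)\Leftrightarrow(3)$ from scratch via this AM-GM/rescaling argument; it is routine linear algebra.

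The only mildly delicate point — the "main obstacle," such as it is — is the bookkeeping around the kernels and the implicit domain restrictions: one must check that $\mathrm{ker}(\mU_{\mI-z\mw})$ and the relevant norms genuinely do not depend on $z$ (they don't, since $\mU_{\mI - z\mw} = \mI - \mU_{z\mw}$ and multiplying the anti-symmetric part by $z$ and symmetrizing is what changes, but the argument is that each inequality is required for \emph{all} $z$, so degeneracy for any single $z$ suffices), and that the "for all $|z|=1$" quantifier in (3) is doing the same work as the single sharpened inequality in (1). Everything else is a direct unwinding of definitions.
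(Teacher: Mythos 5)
Your overall route is the same as the paper's: you establish $(2)\Leftrightarrow(3)$ by invoking the complex kernel/operator-norm characterization of directed approximation (Lemma \ref{lem:equivdef_helper_complex}, applied with $\ma=\widetilde{\mw}-\mw$ and $\mm=\mn=\mU_{\mI-z\cdot\mw}$, noting that the unit-modulus scalar $z$ affects neither the kernels nor the norm), and $(1)\Leftrightarrow(2)$ by optimizing $\Real\left(z\cdot(x^*\mw x+y^*\mw y)\right)$ over the unit circle. That plan is sound and is exactly how the paper proceeds.

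However, the optimization step in your $(1)\Leftrightarrow(2)$ argument has the direction reversed. The family of inequalities indexed by $z$ reads
\[
\left|x^*(\mw-\widetilde{\mw})y\right|\;\leq\;\frac{\epsilon}{2}\left(\|x\|^2+\|y\|^2-\Real(zc)\right),\qquad c=x^*\mw x+y^*\mw y,
\]
and the strongest member of this family is the one with the \emph{smallest} right-hand side, i.e.\ the one where $\Real(zc)$ is \emph{maximized}; since $\max_{|z|=1}\Real(zc)=|c|$, the conjunction over all $z$ collapses to the single inequality with $|c|$ in place of $\Real(zc)$, which is exactly Definition \ref{def:unitcirc_approx}. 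You instead compute $\min_{|z|=1}\Real(zc)=-|c|$ and substitute $-|c|$ for $\Real(zc)$; that yields the right-hand side $\frac{\epsilon}{2}\left(\|x\|^2+\|y\|^2+|c|\right)$, which is the \emph{weakest} member of the family and does not match the definition of $\capprox_{\epsilon}$. The fix is a one-line sign correction (use the maximum, as the paper does), but as written the step asserts something false. The rest of the proposal --- the observation that $|z|=1$ leaves the left-hand side unchanged, the remark about the implicit domain restrictions, and the kernel bookkeeping for $(2)\Leftrightarrow(3)$ --- is correct and mirrors the paper's proof.
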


A proof of Lemma \ref{lem:unitcirc_equivalences} can be found in Appendix \ref{app:definitions}.

\section{Approximating Cycle-Lifted Graphs and Powers}
\label{sect:ring_powers}
In this section we discuss how unit-circle spectral approximation allows us to approximate powers of random walk matrices of digraphs and a class of graphs we call \emph{cycle-lifted graphs}, which play an essential role in our Eulerian Laplacian solver. Preservation under powering is a useful property for a definition of matrix approximation but even in the case of symmetric transition matrices, the original definition of spectral approximation does not guarantee this, as is seen in the following proposition. 

\begin{proposition}
    For all rational $\epsilon\in(0,1)$, there exist undirected graphs with transition matrices $\widetilde{\mw},\mw$ such that $\widetilde{\mw}\approx_{\epsilon}\mw$ but $\widetilde{\mw}^2\not\approx_{c}\mw^2$ for any finite $c>0$.
\end{proposition}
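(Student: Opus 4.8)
The plan is to exhibit an explicit pair of two-vertex undirected multigraphs. Let $\mw=\left[\begin{smallmatrix}0&1\\1&0\end{smallmatrix}\right]$ be the transition matrix of a single edge on vertices $\{u,v\}$; it is symmetric, diagonalized by the orthonormal vectors $e_{+}=(1,1)/\sqrt2$ and $e_{-}=(1,-1)/\sqrt2$ with eigenvalues $1$ and $-1$ respectively, and crucially $\mw^2=\mI_2$. Writing the target accuracy as $\epsilon=p/q$ in lowest terms, let $\widetilde{\mw}=\epsilon\cdot\mI_2+(1-\epsilon)\cdot\mw=\left[\begin{smallmatrix}\epsilon&1-\epsilon\\1-\epsilon&\epsilon\end{smallmatrix}\right]$, so $\widetilde{\mw}=\ma\md^{-1}$ with $\ma=\left[\begin{smallmatrix}p&q-p\\q-p&p\end{smallmatrix}\right]$ and $\md=q\cdot\mI_2$; thus $\widetilde{\mw}$ is the transition matrix of the undirected multigraph on $\{u,v\}$ with $q-p$ parallel $uv$-edges and $p$ self-loops at each of $u,v$, and this is precisely where rationality of $\epsilon$ is used. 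The matrix $\widetilde{\mw}$ shares the eigenvectors $e_{+},e_{-}$ of $\mw$, with eigenvalues $1$ and $2\epsilon-1$, so $\widetilde{\mw}^2$ has eigenvalues $1$ and $(2\epsilon-1)^2$.

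First I would verify $\widetilde{\mw}\approx_\epsilon\mw$. Since both matrices are symmetric, by Lemma~\ref{lem:undir_dir_equiv} it suffices to check the inequality of Definition~\ref{def:undir_approx}. Writing an arbitrary $x\in\R^2$ as $x=\alpha e_{+}+\beta e_{-}$, the $e_{+}$-component contributes nothing to either side (both matrices fix $e_{+}$), and along $e_{-}$ one computes $x^\top(\mw-\widetilde{\mw})x=\beta^2\bigl(-1-(2\epsilon-1)\bigr)=-2\epsilon\beta^2$ while $x^\top(\mI_2-\mw)x=2\beta^2$. Hence $\left|x^\top(\mw-\widetilde{\mw})x\right|=2\epsilon\beta^2\le\epsilon\cdot x^\top(\mI_2-\mw)x$ for all $x$, i.e.\ $\widetilde{\mw}\approx_\epsilon\mw$.

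Next I would rule out approximation of the squares. Because $\mw^2=\mI_2$, the Laplacian $\mI_2-\mw^2$ is the zero matrix, so for every finite $c>0$ the right-hand side of the $c$-approximation inequality (Definition~\ref{def:dir_approx}) for the pair $(\widetilde{\mw}^2,\mw^2)$ is identically $0$; thus $\widetilde{\mw}^2\approx_c\mw^2$ would force $\widetilde{\mw}^2=\mw^2=\mI_2$. But $\widetilde{\mw}^2 e_{-}=(2\epsilon-1)^2 e_{-}$ with $(2\epsilon-1)^2<1$ for $\epsilon\in(0,1)$, so $\widetilde{\mw}^2\neq\mI_2$; concretely, taking $x=y=e_{-}$ gives $\left|x^\top(\mw^2-\widetilde{\mw}^2)y\right|=1-(2\epsilon-1)^2>0$ while the right-hand side is $0$. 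This contradicts $\widetilde{\mw}^2\approx_c\mw^2$ for every finite $c>0$, which proves the proposition.

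I do not expect a genuine obstacle here: the entire content is the observation that a bipartite base graph (eigenvalue $-1$, so that $\mw^2$ has an inflated $+1$-eigenspace --- here all of $\R^2$) forces any $c$-approximation of $\mw^2$ to be \emph{exact}, which no $\widetilde{\mw}$ that damps the $2$-periodicity can satisfy. The only step requiring mild care is confirming that $\widetilde{\mw}$ is the transition matrix of a bona fide undirected graph with parallel edges and self-loops, which is where rationality of $\epsilon$ enters; if one wishes to sidestep the self-loop/degree convention one can instead use $j$ self-loops and $m$ parallel edges per vertex with $j/(j+m)\le\epsilon$ and $j,m\ge1$ (e.g.\ $j=1$, $m=\lceil1/\epsilon\rceil$), and the same computation goes through with $2\epsilon$ replaced by $2j/(j+m)$ and $(2\epsilon-1)^2$ by $((j-m)/(j+m))^2$.
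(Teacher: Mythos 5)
Your proof is correct and takes essentially the same route as the paper's: a connected bipartite base graph together with the lazy perturbation $\widetilde{\mw}=\epsilon\mI+(1-\epsilon)\mw$, with the failure for the squares coming from the degeneracy of $\mI-\mw^2$ on the (inflated) eigenvalue-$1$ eigenspace. You simply instantiate it on the two-vertex single edge (where $\mw^2=\mI$ makes the right-hand side vanish identically, so you can conclude directly rather than via the eigenspace lemma) and, unlike the paper, you explicitly realize $\widetilde{\mw}$ as a bona fide multigraph using the rationality of $\epsilon$ — a nice bit of added care, but not a different argument.
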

\begin{proof}
Let $\mw$ be the transition matrix of a connected undirected bipartite graph and define $\widetilde{\mw}=(1-\epsilon)\cdot\mw+\epsilon\cdot \mI$. Fix a vector $x$ and observe 
\begin{align*}
    \left|x^\top(\mw-\widetilde{\mw})x\right|&=\epsilon\cdot\left|x^\top(\mI-\mw)x\right|\\
    &=\epsilon\cdot\left(\|x\|^2-x^{\top}\mw x\right)
\end{align*}
where we can drop the absolute value in the second line because $\mI-\mw$ is PSD and hence has a non-negative quadratic form. So $\widetilde{\mw}\approx_{\epsilon}\mw$. However $\mw^2$ has 2 connected components (because walks of length 2 must start and end at the same side of the bipartition) while $\widetilde{\mw}^2$ is strongly connected. This means that  $\mw^2$ has eigenvalue $\lambda=1$ with multiplicity 2 while $\widetilde{\mw}^2$ has eigenvalue $\lambda=1$ with multiplicity 1. Hence, they cannot spectrally approximate one another (see Lemma \ref{lem:exact_eigenspace} below).
\end{proof}

\begin{lemma}
\label{lem:exact_eigenspace}
Fix $\widetilde{\mw},\mw\in\C^{n\times n}$. For any matrix $M\in \C^{n\times n}$, let $V_{\lambda}(M)$ denote the eigenspace of $M$ of eigenvalue $\lambda$. 
\begin{enumerate}
\item If $\widetilde{\mw}\approx_{c}\mw$ for a finite $c>0$, then $V_1(\mw)\subseteq V_1(\widetilde{\mw})$. If $c<1$, then $V_1(\mw)= V_1(\widetilde{\mw})$.
\item If $\widetilde{\mw}\capprox_{c}\mw$ for a finite $c>0$, then for all $\lambda\in \C$ such that $|\lambda|=1$, $V_{\lambda}(\mw)\subseteq V_{\lambda}(\widetilde{\mw})$ and if $c<1$ then $V_{\lambda}(\mw)= V_{\lambda}(\widetilde{\mw})$.
\end{enumerate}
\end{lemma}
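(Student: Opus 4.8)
The plan is to prove the two items in parallel, since Item 1 (directed approximation, eigenvalue $1$) is essentially the special case $\lambda=1$ of Item 2 (unit-circle approximation, any $\lambda$ on the unit circle) once one recognizes that $\capprox_c$ is exactly ``$z\widetilde{\mw}\approx_c z\mw$ for every unit-modulus $z$'' by Lemma \ref{lem:unitcirc_equivalences}. So I would first dispatch Item 1, then deduce Item 2 by applying Item 1 to the rotated matrices $z\mw$ and $z\widetilde{\mw}$ with $z=\bar\lambda$ (note $|\lambda|=1$, so $\bar\lambda = \lambda^{-1}$ lies on the unit circle). Under that rotation, an eigenvector of $\mw$ with eigenvalue $\lambda$ becomes an eigenvector of $z\mw$ with eigenvalue $z\lambda=1$, and likewise in reverse, so $V_\lambda(\mw)=V_1(z\mw)$ and $V_\lambda(\widetilde{\mw})=V_1(z\widetilde{\mw})$; Item 1 applied to $z\mw,z\widetilde{\mw}$ then gives exactly the claimed containment/equality.

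For Item 1, let $v\in V_1(\mw)$, i.e. $\mw v = v$. The key observation is that plugging $x=y=v$ into the complex (equivalently directed) $\epsilon$-approximation inequality makes the right-hand side vanish: the right side is $\frac{c}{2}\left(\|v\|^2+\|v\|^2-\Real(v^*\mw v+v^*\mw v)\right) = c\left(\|v\|^2-\Real(v^*v)\right)=0$ since $v^*\mw v = v^*v=\|v\|^2$ is real. Hence $\left|v^*(\mw-\widetilde{\mw})v\right|\le 0$, so $v^*(\mw-\widetilde{\mw})v=0$, i.e. $v^*\widetilde{\mw}v = \|v\|^2$. To upgrade this scalar identity to $\widetilde{\mw}v=v$, I would use the general inequality with $x=v$ fixed and $y$ arbitrary: $\left|v^*(\mw-\widetilde{\mw})y\right|\le \frac{c}{2}\left(\|v\|^2+\|y\|^2-\Real(v^*\mw v + y^*\mw y)\right)=\frac{c}{2}\left(\|y\|^2-\Real(y^*\mw y)\right)$, and by symmetry the same with the roles of $x,y$ swapped. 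The point is that the right-hand side no longer involves $v$, so the linear functional $y\mapsto v^*(\mw-\widetilde{\mw})y$ is controlled purely by a quadratic form in $y$ that is independent of $v$; combined with $v^*(\mw-\widetilde{\mw})v=0$, a Cauchy–Schwarz-type argument forces $(\mw-\widetilde{\mw})^*v$ and $(\mw-\widetilde{\mw})v$ to vanish. Concretely: let $\mh = \mU_{\mI-\mw}$ (PSD, by the standing assumption $\Real(x^*\mw x)\le\|x\|^2$), so the bound reads $\left|v^*(\mw-\widetilde{\mw})y\right|\le \frac{c}{2}(\|y\|_{\mI}^2 + \|y\|_{\mh}^2 - \cdots)$; more cleanly, using the formulation in Lemma \ref{lem:unitcirc_equivalences}(3) / the analogous statement for directed approximation, $\widetilde{\mw}\approx_c\mw$ implies $\ker(\mU_{\mI-\mw})\subseteq \ker(\widetilde{\mw}-\mw)\cap\ker((\widetilde{\mw}-\mw)^\top)$, and $\mw v = v$ means $v^*(\mI-\mw)v = 0$, so $v\in\ker(\mh)$ (a PSD matrix kills exactly the null space of its quadratic form), hence $v\in\ker(\widetilde{\mw}-\mw)$, i.e. $\widetilde{\mw}v = \mw v = v$. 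That gives $V_1(\mw)\subseteq V_1(\widetilde{\mw})$. For the reverse inclusion when $c<1$, I would invoke the standard fact (noted right after Definition \ref{def:undir_approx}) that $\widetilde{\mw}\approx_c\mw$ with $c<1$ implies $\mw\approx_{c/(1-c)}\widetilde{\mw}$, and rerun the argument with the roles of $\mw$ and $\widetilde{\mw}$ exchanged to get $V_1(\widetilde{\mw})\subseteq V_1(\mw)$; the analogous symmetry statement for $\capprox_c$ handles Item 2.

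The main obstacle is the step from the scalar identity $v^*\widetilde{\mw}v=\|v\|^2$ to the operator identity $\widetilde{\mw}v=v$ (and the corresponding left-eigenvector statement $v^*\widetilde{\mw}=v^*$, which one also needs for the ``$=$'' conclusion and for it to mean $v\in V_1(\widetilde{\mw})$ in the non-normal case). The cleanest route, which I would take, is to route everything through the kernel characterization in Lemma \ref{lem:unitcirc_equivalences}(3) (and its directed-approximation analogue): the hypothesis $\mw v=v$ puts $v$ in $\ker(\mU_{\mI-\mw})$, which the kernel condition forces into both $\ker(\widetilde{\mw}-\mw)$ and $\ker((\widetilde{\mw}-\mw)^*)$, immediately yielding $\widetilde{\mw}v=v$ and $v^*\widetilde{\mw}=v^*$ — so $v$ is an honest (two-sided) eigenvector of $\widetilde{\mw}$ with eigenvalue $1$. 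One should double-check that the standing assumption ``$\Real(x^*\mw x)\le\|x\|^2$'' (resp. $|x^*\mw x|\le\|x\|^2$) holds so $\mU_{\mI-\mw}$ (resp. $\mU_{\mI-z\mw}$) is genuinely PSD and the kernel characterization applies; this is exactly where the implicit restriction on $\mw$ discussed after Definitions \ref{def:complex_approx} and \ref{def:unitcirc_approx} is used.
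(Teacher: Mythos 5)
Your proposal is correct and follows essentially the same route as the paper: Item 1 via the equivalent formulations of directed approximation (the paper uses the geometric-mean form of Lemma \ref{lem:equivdef_helper_complex} with $y=v$, you use the kernel-condition form — two faces of the same equivalence), and Item 2 by rotating by $z=\lambda^{-1}$ and invoking Item 1, exactly as in the paper. The only minor divergence is the reverse inclusion for $c<1$, where the paper reruns a direct argument starting from $\widetilde{\mw}v=v$ while you invoke the quasi-symmetry $\widetilde{\mw}\approx_c\mw\Rightarrow\mw\approx_{c/(1-c)}\widetilde{\mw}$ (stated in the paper only for the undirected case, but easily verified for the directed/complex case); both are sound.
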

A proof of Lemma \ref{lem:exact_eigenspace} can be found in Appendix \ref{app:ring_powers}. In previous work (\cite{cheng2015,MRSV17}), it was observed that for symmetric matrices, if
$\widetilde{\mw}\approx_{\epsilon}\mw$
and 
$-\widetilde{\mw}\approx_{\epsilon}-\mw$
then we do get
$\widetilde{\mw}^2\approx_{\epsilon}\mw^2$.
Furthermore, when $\mw$ is PSD we have that  $\widetilde{\mw}\approx_{\epsilon}\mw$ implies $-\widetilde{\mw}\approx_{\epsilon}-\mw$. Since $\mw^2$ is trivially PSD, the above can be applied recursively to conclude that $\widetilde{\mw}^{2^k}\approx_{\epsilon}\mw^{2^k}$ for all positive integers $k$. However, we observe that analogous approximation guarantees do not hold for Eulerian graphs (or even regular digraphs).

\begin{proposition}
    For all rational $\epsilon\in(0,1)$, there exist regular digraphs with transition matrices $\widetilde{\mw},\mw$ such that $\widetilde{\mw}\approx_{\epsilon}\mw$ and $-\widetilde{\mw}\approx_{\epsilon}-\mw$ but $\widetilde{\mw}^4\not\approx_{c}\mw^4$ for any finite $c$.
\end{proposition}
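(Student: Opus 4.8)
The plan is to mimic the bipartite construction of the previous proposition, but using the directed $4$-cycle in place of an undirected bipartite graph, and perturbing it in a way that is invisible to both ``plus'' and ``minus'' spectral approximation yet visible to fourth powers. Concretely, I would take $\mw=\mc_4$, the transition matrix of the (strongly connected, $1$-regular) directed $4$-cycle on $\{0,1,2,3\}$, and $\widetilde{\mw}=(1-\epsilon/2)\,\mc_4+(\epsilon/2)\,\mc_4^{\top}$. Since $\epsilon$ is rational, writing $\epsilon/2=p/q$ shows that $\widetilde{\mw}$ is the transition matrix of the $q$-regular multigraph in which every vertex $v$ has $q-p$ parallel edges to $v+1$ and $p$ parallel edges to $v-1$ (in-degrees are $q$ as well), so both $\mw$ and $\widetilde{\mw}$ are genuine regular digraphs. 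The point of this choice is that $\mc_4$ and $\mc_4^{\top}=\mc_4^{-1}$ are circulants, hence simultaneously diagonalized by the discrete Fourier basis $f_0,\dots,f_3$; therefore $\widetilde{\mw}$ has exactly the eigenvectors of $\mw$, with eigenvalues $1,\,-(1-\epsilon)i,\,-1,\,(1-\epsilon)i$ in place of $\mw$'s eigenvalues $1,\,-i,\,-1,\,i$. Thus $\widetilde{\mw}$ preserves the eigenvalues $\pm 1$ of $\mw$ exactly, but moves the order-$4$ eigenvalues $\pm i$ strictly inside the unit circle to $\pm(1-\epsilon)i$ --- precisely the kind of discrepancy that plus/minus approximation cannot see, but that a fourth power can.

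For the two approximation claims I would use the symmetrization form of directed $\epsilon$-approximation (Definition~\ref{def:dir_approx}; see also Lemma~\ref{lem:unitcirc_equivalences}). Set $\mathbf{S}:=\mc_4-\mc_4^{\top}$, so that $\mw-\widetilde{\mw}=(\epsilon/2)\,\mathbf{S}$, where $\mathbf{S}$ is skew-symmetric with Fourier eigenvalues $0,-2i,0,2i$ (hence $\|\mathbf{S}\|_2=2$) and $\mathbf{S}$ annihilates $\mathrm{span}\{f_0,f_2\}=\mathrm{span}\{\vec{\bone},(1,-1,1,-1)^{\top}\}$. Establishing $\widetilde{\mw}\approx_{\epsilon}\mw$ then reduces to proving $|x^{\top}\mathbf{S}y|\le x^{\top}(\mI-\mU_{\mc_4})x+y^{\top}(\mI-\mU_{\mc_4})y$ for all real $x,y$. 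Here $\mathbf{B}:=\mI-\mU_{\mc_4}$ has Fourier eigenvalues $0,1,2,1$, so $\ker\mathbf{B}=\mathbb{R}\vec{\bone}\subseteq\ker\mathbf{S}$ and $\mathbf{B}\succeq\mI$ on $(\ker\mathbf{B})^{\perp}$; decomposing $x,y$ along $\ker\mathbf{B}\oplus(\ker\mathbf{B})^{\perp}$ and applying Cauchy--Schwarz followed by AM--GM yields $|x^{\top}\mathbf{S}y|=|x_{\perp}^{\top}\mathbf{S}y_{\perp}|\le 2\|x_{\perp}\|_2\|y_{\perp}\|_2\le 2\|x\|_{\mathbf{B}}\|y\|_{\mathbf{B}}\le \|x\|_{\mathbf{B}}^2+\|y\|_{\mathbf{B}}^2$. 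The claim $-\widetilde{\mw}\approx_{\epsilon}-\mw$ is proven identically with $\mathbf{B}$ replaced by $\mathbf{B}':=\mI+\mU_{\mc_4}$ (Fourier eigenvalues $2,1,0,1$), whose kernel is $\mathbb{R}(1,-1,1,-1)^{\top}\subseteq\ker\mathbf{S}$ and which again satisfies $\mathbf{B}'\succeq\mI$ off its kernel. The coefficient $\epsilon/2$ rather than $\epsilon$ in the definition of $\widetilde{\mw}$ is what makes these bounds come out with constant exactly $\epsilon$; alternatively, one can verify both inequalities straight from Definition~\ref{def:dir_approx}.

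The failure at the fourth power is then immediate: $\mw^{4}=\mc_4^{4}=\mI_4$, so $\mI_4-\mw^{4}$ is the zero matrix, and Definition~\ref{def:dir_approx} forces any finite-$c$ directed approximation $\widetilde{\mw}^{4}$ of $\mw^{4}$ to obey $|x^{\top}(\mw^{4}-\widetilde{\mw}^{4})y|\le 0$ for all $x,y$, i.e.\ $\widetilde{\mw}^{4}=\mI_4$; but $\widetilde{\mw}^{4}$ has Fourier eigenvalue $(1-\epsilon)^{4}\neq 1$, a contradiction --- so $\widetilde{\mw}^{4}\not\approx_{c}\mw^{4}$ for every finite $c$. (Equivalently, $\dim V_1(\mw^{4})=4>2=\dim V_1(\widetilde{\mw}^{4})$ contradicts Lemma~\ref{lem:exact_eigenspace}.) I expect the only delicate part to be the bookkeeping in the middle paragraph --- confirming that $\mathbf{S}$ kills the kernels of \emph{both} $\mathbf{B}$ and $\mathbf{B}'$ --- which is really the conceptual heart of the example: one may freely perturb the order-$4$ roots of unity in the spectrum while fixing $\pm1$ exactly, one squaring still leaves the matrices $\epsilon$-close, but a second squaring finally exposes that perturbation. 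There is no substantive obstacle beyond that.
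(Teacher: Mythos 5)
Your proposal is correct and uses exactly the paper's construction ($\widetilde{\mw}=(1-\epsilon/2)\mc_4+(\epsilon/2)\mc_4^\top$), with the approximation claims verified by an equivalent route (Cauchy--Schwarz on the quadratic forms rather than the operator-norm product bound $\|\mU_{\mI\pm\mc_4}^{+/2}\|\cdot\|\mc_4-\widetilde{\mc}\|\cdot\|\mU_{\mI\pm\mc_4}^{+/2}\|\leq\epsilon$) and the fourth-power failure established by the same eigenvalue-$1$ multiplicity obstruction the paper draws from Lemma~\ref{lem:exact_eigenspace} (your direct observation that $\mI-\mU_{\mw^4}=0$ forces $\widetilde{\mw}^4=\mI_4$ is a clean special case of it). No gaps.
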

\begin{proof}
Let $\mc_4$ be the transition matrix of the directed 4-cycle. Fix $\epsilon\in(0,1)$ and define $\widetilde{\mc}=(1-\epsilon/2)\cdot \mc_4+(\epsilon/2)\cdot \mc_4^\top$. In other words, $\widetilde{\mc}$ is the directed 4-cycle with an $\epsilon/2$ probability of traversing backwards. We claim that  $\widetilde{\mc}\approx_{\epsilon}\mc_4$ and $-\widetilde{\mc}\approx_{\epsilon}-\mc_4$ but $\widetilde{\mc}^4$ does not approximate $\mc_4^4$. 

First, note that each of $\mU_{\mI\pm \mc_4}$ has a one-dimensional kernel, namely
\begin{align*}
    \ker(\mU_{\mI-\mc_4})&=\mathrm{span}(\vec{\bone})\\
    \ker(\mU_{\mI+\mc_4})&=\mathrm{span}\left([1,-1,1,-1]^{\top}\right).
\end{align*}
Furthermore we have,
\[
(\widetilde{\mc}-\mc_4)=(\epsilon/2)\cdot(\mc_4-\mc_4^\top),
\]
which has a two-dimensional kernel,  $\mathrm{span}(\vec{\bone},[1,-1,1,-1])$. Observe that $\mU_{\mI\pm \mc_4}$ each has eigenvalues of $2$, $1$, $1$, and $0$. Hence $\|\mU_{\mI\pm \mc}^{+/2}\|=1$. Finally, we have that 
\[
\|\mc_4-\widetilde{\mc}\|=(\epsilon/2)\cdot\|\mc_4-\mc_4^\top\|=\epsilon 
\]
Putting this together gives
\begin{align*}
\|\mU_{\mI-\mc_4}^{+/2}(\mc_4-\widetilde{\mc})\mU_{\mI-\mc_4}^{+/2}\|&\leq \|\mU_{\mI-\mc_4}^{+/2}\|\cdot\|\mc_4-\widetilde{\mc}\|\cdot\|\mU_{\mI-\mc_4}^{+/2}\|\\
&=\epsilon.
\end{align*} 
It follows that $\widetilde{\mc}\approx_{\epsilon}\mc_4$. A similar calculation lets us conclude that $-\widetilde{\mc}\approx_{\epsilon}-\mc_4$, as desired.

Notice that $\mc_4^4$ has 4 connected components (all of the vertices become isolated with self loops) while $\widetilde{\mc}^4$ has 2 connected components. So $\mc_4^4$ and $\widetilde{\mc}^4$ have eigenvalues of 1 with different multiplicities and hence they fail to spectrally approximate of one another by Lemma \ref{lem:exact_eigenspace}. 
\end{proof}

Now we show that if a matrix is a unit-circle approximation of another, then all of their powers are as well (with small loss in approximation quality). In fact, we show something stronger, namely that their \emph{cycle-lifted graphs} approximate each other. 

\begin{definition}[Cycle-Lifted Graph]
Let $\mc_k$ denote the transition matrix of the $k$-vertex directed cycle. Given a graph $G=(V,E)$ on $n$ vertices with transition matrix $\mw$ the \emph{cycle-lifted graph of length $k$}, $C_k(G)$, is a layered graph with $k$ layers (numbered $1$ to $k$) of $n$ vertices each, where for every $i\in[k]$, there is an edge from vertex $u$ in layer $i$ to vertex $v$ in layer $(i+1)\mod k$ with multiplicity $\ell$ if and only if $(u,v)$ exists with multiplicity $\ell$ in $G$. That is, $C_k(G)=(V',E')$ with $V'=[k]\times V$ and $E'=\{((i,u),(i+1\mod k,v)\colon (u,v)\in E\}$. The transition matrix of $C_k(G)$ is $\mc_k \otimes \mw$.
\end{definition}

\begin{theorem}
\label{thm:capprox_ring}
Fix $\mw,\widetilde{\mw}\in\C^{n\times n}$ and let $\mc_{k}$ be the transition matrix for the directed cycle on $k$ vertices. Then $\mc_{k}\otimes \widetilde{\mw}\approx_{\epsilon}\mc_{k} \otimes \mw$ if and only if for all $z~\mathrm{such~that~} z^k= 1$, we have $
z\cdot\widetilde{\mw}\approx_{\epsilon}z\cdot\mw$.
\end{theorem}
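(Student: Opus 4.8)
The plan is to diagonalize the cycle factor $\mc_k$ and reduce the statement about the cycle-lifted graphs to a block-diagonal statement indexed by the $k$-th roots of unity. Recall that $\mc_k$ is a circulant matrix, so it is diagonalized by the discrete Fourier matrix $\mF$ whose columns are the vectors $f_z = \frac{1}{\sqrt k}(1, z, z^2, \ldots, z^{k-1})^\top$ for $z$ ranging over the $k$-th roots of unity, with $\mc_k f_z = z\, f_z$ (one must be slightly careful with the direction of the cycle, but up to taking $z \mapsto \bar z = z^{-1}$ this is immaterial since the set of $k$-th roots of unity is closed under inversion). Hence $(\mF \otimes \mI_n)^* (\mc_k \otimes \mw)(\mF \otimes \mI_n) = \mLambda \otimes \mw$ conjugated appropriately, which is block diagonal with blocks $z \cdot \mw$ over $z^k = 1$; and similarly for $\widetilde{\mw}$.

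The key step is then to show that complex spectral approximation $\mc_k \otimes \widetilde{\mw} \approx_\epsilon \mc_k \otimes \mw$ is equivalent to the conjunction, over all blocks $z$, of $z\cdot \widetilde{\mw} \approx_\epsilon z \cdot \mw$. First I would record the general fact that complex $\epsilon$-approximation (Definition~\ref{def:complex_approx}) is invariant under simultaneous conjugation of both matrices by a fixed unitary $\mU$: replacing $x \mapsto \mU x$, $y \mapsto \mU y$ is a bijection on $\C^n$ that preserves $\|x\|, \|y\|$ and the quadratic forms, so $\mU^* \widetilde{\mw}\, \mU \approx_\epsilon \mU^* \mw\, \mU$ iff $\widetilde{\mw}\approx_\epsilon \mw$. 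Applying this with $\mU = \mF \otimes \mI_n$ reduces the claim to: a block-diagonal matrix $\widetilde{\mB} = \bigoplus_z z\widetilde{\mw}$ is a complex $\epsilon$-approximation of $\mB = \bigoplus_z z\mw$ iff each block is. The ``if'' direction: given $x = (x_z)_z$, $y=(y_z)_z$ in block form, the left side $|x^*(\mB - \widetilde{\mB})y| \le \sum_z |x_z^*(z\mw - z\widetilde{\mw})y_z| \le \sum_z \frac{\epsilon}{2}(\|x_z\|^2 + \|y_z\|^2 - \Real(x_z^* z\mw x_z + y_z^* z\mw y_z))$, and this sums to exactly $\frac{\epsilon}{2}(\|x\|^2 + \|y\|^2 - \Real(x^*\mB x + y^*\mB y))$ since norms and quadratic forms of block-diagonal matrices decompose as sums over blocks. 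The ``only if'' direction: restrict attention to $x, y$ supported on a single block $z$; then the block-diagonal inequality specializes exactly to the single-block inequality, giving $z\widetilde{\mw}\approx_\epsilon z\mw$.

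The main obstacle I anticipate is bookkeeping around the direction of the cycle and the precise eigenvalue/eigenvector conventions, so that the conjugated matrix comes out as $\bigoplus_z z\mw$ (over all $z$ with $z^k=1$) and not, say, $\bigoplus_z \bar z \mw$ or some index-shifted variant; since $\{z : z^k = 1\}$ is closed under conjugation this ultimately does not matter, but it needs to be stated cleanly. A secondary technical point worth a sentence is the implicit restriction noted after Definition~\ref{def:complex_approx} that these notions only make sense when $\Real(x^*\mw x) \le \|x\|^2$ for all $x$; one should observe that $z\mw$ satisfies this for all unit $z$ iff $\mc_k \otimes \mw$ does (again by the same diagonalization), so the hypotheses on both sides of the ``iff'' are consistent. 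Everything else is the routine linear algebra of Kronecker products (Proposition~\ref{prop:abcd}) and the additivity of norms over orthogonal block decompositions.
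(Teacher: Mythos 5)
Your proposal is correct and follows essentially the same route as the paper: both diagonalize $\mc_k$ in the discrete Fourier basis and reduce the claim to the blocks $z\cdot\mw$ over the $k$-th roots of unity (with the harmless $z\mapsto z^{-1}$ relabeling, which the paper also invokes). The only difference is cosmetic --- the paper verifies the block reduction via the operator-norm characterization of $\approx_{\epsilon}$ together with Lemma~\ref{lem:subspaces} on block-diagonal operators, whereas you verify it directly from the quadratic-form definition by summing the per-block inequalities and restricting to a single block; both arguments are valid.
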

Recall that unit-circle spectral approximation requires that for all $z\in\C$ with $|z|=1$ we have $z\cdot \widetilde{\mw}\approx_{\epsilon}z\cdot\mw$. Theorem \ref{thm:capprox_ring} then tells us that unit-circle spectral approximation implies approximations of the corresponding cycle-lifted graphs of \emph{every length}. 
\begin{corollary}
\label{cor:capprox_ring}
Fix $\mw,\widetilde{\mw}\in\C^{n\times n}$. If $\widetilde{\mw}\capprox_{\epsilon}\mw$ then for all positive integers $k$, $\mc_{k}\otimes \widetilde{\mw}\capprox_{\epsilon}\mc_{k} \otimes \mw$.
\end{corollary}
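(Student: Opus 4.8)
The plan is to deduce Corollary~\ref{cor:capprox_ring} by chaining the two characterizations already established: the equivalence between unit-circle approximation and ``approximation of all unit-magnitude scalar multiples'' (Lemma~\ref{lem:unitcirc_equivalences}, items (1)$\Leftrightarrow$(2)), and Theorem~\ref{thm:capprox_ring}, which characterizes when the cycle-lifts of two matrices are directed $\epsilon$-approximations of one another in terms of the $k$th roots of unity. The key observation that makes the chaining work is that a scalar passes freely through the Kronecker product, i.e. $w\cdot(\mc_k\otimes\mw)=\mc_k\otimes(w\mw)$ for any $w\in\C$, so multiplying the cycle-lift by a unit scalar is the same as cycle-lifting a unit-scalar multiple of $\mw$.

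Concretely, I would argue as follows. Fix a positive integer $k$ and assume $\widetilde{\mw}\capprox_\epsilon\mw$. To establish $\mc_k\otimes\widetilde{\mw}\capprox_\epsilon\mc_k\otimes\mw$, apply Lemma~\ref{lem:unitcirc_equivalences} to the pair $\mc_k\otimes\widetilde{\mw}$, $\mc_k\otimes\mw$: it suffices to show that for every $w\in\C$ with $|w|=1$ we have $w\cdot(\mc_k\otimes\widetilde{\mw})\approx_\epsilon w\cdot(\mc_k\otimes\mw)$, which by the observation above is the statement $\mc_k\otimes(w\widetilde{\mw})\approx_\epsilon\mc_k\otimes(w\mw)$. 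Now apply Theorem~\ref{thm:capprox_ring} to the pair $w\widetilde{\mw}$, $w\mw$: this reduces the claim to showing that for every $z\in\C$ with $z^k=1$ we have $z\cdot(w\widetilde{\mw})\approx_\epsilon z\cdot(w\mw)$, i.e. $(zw)\widetilde{\mw}\approx_\epsilon(zw)\mw$. Since $|zw|=|z|\,|w|=1$, this is precisely one of the instances furnished by the hypothesis $\widetilde{\mw}\capprox_\epsilon\mw$ through the (1)$\Rightarrow$(2) direction of Lemma~\ref{lem:unitcirc_equivalences}, completing the argument.

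I do not anticipate any genuine obstacle here: all of the content lives in the two results being invoked, and the proof is a short bookkeeping exercise in composing them with no loss in the approximation parameter (hence $\capprox_\epsilon$, not $\capprox_{\epsilon/(1-\epsilon)}$, on the conclusion). The only points worth a moment of care are (i) checking that $\mc_k\otimes\mw$ still lies in the class of matrices for which unit-circle approximation is defined, which holds because $\|\mc_k\|_1,\|\mc_k\|_\infty\le 1$ and therefore $\|\mc_k\otimes\mw\|_1,\|\mc_k\otimes\mw\|_\infty\le 1$ whenever $\mw$ has this property; and (ii) being consistent about which direction of each ``if and only if'' is used, since both Lemma~\ref{lem:unitcirc_equivalences} and Theorem~\ref{thm:capprox_ring} are biconditionals and each is applied in one direction to peel off a layer and in the other (through the hypothesis) to put one back.
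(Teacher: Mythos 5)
Your proposal is correct and matches the paper's proof essentially verbatim: both reduce the claim via Lemma~\ref{lem:unitcirc_equivalences} to approximating all unit-magnitude scalar multiples of the cycle-lifts, commute the scalar through the Kronecker product, and invoke Theorem~\ref{thm:capprox_ring} together with the hypothesis applied to the products $zw$ of $k$th roots of unity with unit scalars. No gaps.
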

\begin{proof}
Since $\widetilde{\mw}\capprox_{\epsilon}\mw$, we have that for all $\omega\in\C$ such that $|\omega|=1$ and all $z$ such that $z^k=1$, we have $z\cdot w\cdot\widetilde{\mw}\approx_{\epsilon}z\cdot w\cdot\mw$. By Theorem \ref{thm:capprox_ring}, for all $\omega\in\C$ such that $|\omega|=1$, we have 
\[
\omega\cdot(\mc_k\otimes\widetilde{\mw})=\mc_k\otimes(\omega\cdot\widetilde{\mw})\approx_{\epsilon}\mc_k\otimes(\omega\cdot\mw)=\omega\cdot(\mc_k\otimes\mw).
\]
Thus, $\mc_k\otimes\widetilde{\mw}\capprox_{\epsilon}\mc_k\otimes\mw$.
\end{proof}
To prove Theorem \ref{thm:capprox_ring}, we use the following lemma, which simplifies calculating spectral norm and hence spectral approximation using restrictions onto invariant subspaces. 

\begin{lemma}
\label{lem:subspaces}
Let $\mm\colon\mathbb{C}^n\rightarrow\mathbb{C}^n$ be a linear operator and $V_1,\ldots,V_{\ell}\subseteq\mathbb{C}^n$ subspaces such that 
\begin{enumerate}
    \item $V_{j}\perp V_{k}$ for all $j\neq k$
    \item $V_1 \oplus \ldots \oplus V_\ell=\mathbb{C}^n$
    \item $\mm V_j\subseteq V_j$ for all $j\in[\ell]$.
\end{enumerate}
i.e., $M$ is block diagonal with respect to the subspaces $V_1,\ldots,V_\ell$. Then, 
\[\|\mm\|=\max_{j\in[\ell]}\|\mm|_{V_j}\|\]
\end{lemma}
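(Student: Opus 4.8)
\textbf{Proof plan for Lemma~\ref{lem:subspaces}.}
The plan is to prove the two inequalities $\|\mm\|\le \max_j \|\mm|_{V_j}\|$ and $\|\mm\|\ge \max_j \|\mm|_{V_j}\|$ separately, working directly from the definition of the operator norm induced by $\|\cdot\|_2$ and exploiting that the hypotheses make the $V_j$ an \emph{orthogonal} direct-sum decomposition into $\mm$-invariant subspaces.

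First I would set up coordinates adapted to the decomposition: by hypotheses (1) and (2), every $x\in\mathbb{C}^n$ has a unique orthogonal decomposition $x=\sum_{j=1}^\ell x_j$ with $x_j\in V_j$, and by the Pythagorean theorem $\|x\|^2=\sum_j\|x_j\|^2$. By hypothesis (3), $\mm x_j\in V_j$, so $\mm x=\sum_j \mm x_j$ is exactly the orthogonal decomposition of $\mm x$, whence $\|\mm x\|^2=\sum_j\|\mm x_j\|^2$. For the upper bound, I would write $\|\mm x_j\|\le \|\mm|_{V_j}\|\cdot\|x_j\|\le (\max_k\|\mm|_{V_k}\|)\cdot\|x_j\|$, square and sum over $j$ to get $\|\mm x\|^2\le (\max_k\|\mm|_{V_k}\|)^2\sum_j\|x_j\|^2=(\max_k\|\mm|_{V_k}\|)^2\|x\|^2$, and take square roots; dividing by $\|x\|$ and maximizing over $x\ne 0$ gives $\|\mm\|\le \max_j\|\mm|_{V_j}\|$. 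For the lower bound, fix the index $j^*$ achieving the maximum and take $x\in V_{j^*}$, $x\ne 0$, with $\|\mm x\|/\|x\|$ arbitrarily close to $\|\mm|_{V_{j^*}}\|$ (or exactly equal, since the unit sphere of $V_{j^*}$ is compact and $\mm$ is continuous); since such $x$ is in particular a nonzero vector of $\mathbb{C}^n$, it witnesses $\|\mm\|\ge \|\mm|_{V_{j^*}}\|=\max_j\|\mm|_{V_j}\|$.

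There is essentially no serious obstacle here: the only subtlety is being careful that the norm in question is the Euclidean ($\ell_2$) operator norm, so that orthogonality of the $V_j$ actually yields the Pythagorean identity $\|\sum_j x_j\|^2=\sum_j\|x_j\|^2$ --- this is precisely where hypothesis (1) is used, and it would fail for a general (non-inner-product) norm. One should also note that each restriction $\mm|_{V_j}$ is a well-defined linear operator $V_j\to V_j$ by hypothesis (3), so $\|\mm|_{V_j}\|$ makes sense. Given how short the argument is, I would just write it out in full rather than merely sketching it; the phrase ``$M$ is block diagonal with respect to the subspaces'' in the statement is the right intuition, and the proof is simply the quantitative form of that observation.
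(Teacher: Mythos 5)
Your proposal is correct and follows essentially the same route as the paper's proof: orthogonally decompose $x=\sum_j x_j$, use invariance plus the Pythagorean identity to get $\|\mm x\|^2=\sum_j\|\mm x_j\|^2\le(\max_j\|\mm|_{V_j}\|)^2\|x\|^2$, and note the reverse inequality is immediate by restricting to vectors in a single $V_j$. No gaps.
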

A proof of Lemma \ref{lem:subspaces} can be found in Appendix \ref{app:ring_powers}. Now we can prove Theorem \ref{thm:capprox_ring}.

\begin{proof}[Proof of Theorem \ref{thm:capprox_ring}]
 The intuition behind the proof is to observe that the diagonalization of the directed $k$-cycle $\mc_k$ using the discrete Fourier basis (which are its eigenvectors), has all $k$th roots of unity along the diagonal. This means that approximation of cycle-lifted graphs $\mc_k \otimes \widetilde{\mw}$ and $\mc_k \otimes \mw$
amounts to requiring that the approximation of
$\widetilde{\mw}$ and $\mw$
is preserved under multiplication by $k$'th roots of unity. 
 
 For each $z\in\C$ with $z^k=1$, define the Fourier basis vector $\chi_z=[1,z,z^2,\ldots,z^{k-1}]^\top$. Consider the subspaces $V_z=\mathrm{span}(\chi_z \otimes \mathbb{C}^n)$. These subspaces are orthogonal, span $\mathbb{C}^k\otimes \mathbb{C}^n$ and are invariant under $\mc_k \otimes \mw$, $\mc_k \otimes \widetilde{\mw}$, $(\mc_k \otimes \mw)^*=\mc_k^* \otimes \mw^*$, $(\mc_k \otimes \widetilde{\mw})^*=\mc_k^* \otimes \widetilde{\mw}^*$, and
$\mU^{+/2}_{\mI_{k\cdot n}-\mc_k \otimes \mw}$. So, by Lemma \ref{lem:subspaces} we have
\begin{align*}
    &\left\|\mU^{+/2}_{\mI_{k\cdot n}-\mc_k \otimes \mw}\left(\mc_k \otimes \widetilde{\mw}-\mc_k \otimes \mw\right)\mU^{+/2}_{\mI_{k\cdot n} - \mc_k \otimes \mw}\right\| = \\ &\max_{z\colon z^k=1} \left\|\mU^{+/2}_{\mI_{k\cdot n}- \mc_k \otimes \mw}\left(\mc_k \otimes \widetilde{\mw}-\mc_k \otimes \mw\right)\mU^{+/2}_{\mI_{k\cdot n}-\mc_k \otimes \mw}|_{V_z}\right\|.
\end{align*}

Now observe that since $\mc_k\chi_z=z^{-1}\chi_z$, we have that for any $\vv\in\mathbb{C}^n$
\begin{align*}
(\mc_k \otimes \mw)(\chi_z \otimes \vv)&=\mc_k\chi_z\otimes \mw \vv\\
&=\chi_z \otimes (z^{-1}\cdot\mw \vv).
\end{align*}
Similarly we have,
\begin{align*}
    (\mc_k\otimes \widetilde{\mw})(\chi_z\otimes \vv)&=\chi_z \otimes (z^{-1}\cdot\widetilde{\mw}\vv)\\
    \mU_{\mI_{k\cdot n-\mc_k\otimes \mw}}(\chi_z\otimes\vv)&=\chi_z \otimes \mU_{\mI_n-z^{-1}\cdot\mw}\vv.
\end{align*}
It follows that
\begin{align*}
 &\left\|\mU^{+/2}_{\mI_{k\cdot n}-\mc_k\otimes \mw }\left(\mc_k \otimes \widetilde{\mw}-\mc_k \otimes \mw\right)\mU^{+/2}_{\mI_{k\cdot n}-\mc_k \otimes \mw}\right\| \\
 &=\max_{z\colon z^k=1}\left\|\mU^{+/2}_{\mI_n-z^{-1}\cdot\mw}\left(z^{-1}\cdot\mw-z^{-1}\cdot\widetilde{\mw}\right)\mU^{+/2}_{\mI_n-z^{-1}\cdot\mw}\right\|.
\end{align*}
So we get that $\mc_{k}\otimes \widetilde{\mw}\approx_{\epsilon}\mc_{k} \otimes \mw$ if and only if for all $z$ such that $z^k=1$,  $z^{-1}\widetilde{\mw}\approx_{\epsilon}z^{-1}\mw$. Since for all $k$th roots of unity $z$, $z^{-1}$ is also a $k$th root of unity, the result follows. 
\end{proof}

Theorem \ref{thm:capprox_ring} allows us to reason about approximation under powering by observing that the $k$th power of a matrix can be expressed in terms of the Schur complement of its cycle-lifted graph of length $k$. In \cite{MP13}, they showed that undirected spectral approximation is preserved under Schur complements. Here we show that the same is true of directed spectral approximation (with a small loss in approximation quality).

\begin{theorem}
\label{thm:sc_approx}
Fix $\mw,\widetilde{\mw}\in\C^{n}$ and suppose that $\widetilde{\mw}\approx_{\epsilon}\mw$ for $\epsilon\in(0,2/3)$. Let $F\subseteq[n]$ such that $(\mI_{|F|}-\mw_{FF})$ is invertible and let $C=[n]\setminus F$. Then 
\[
\mI_{|C|}-\schur(\mI_n-\widetilde{\mw},C)\approx_{\epsilon/(1-3\epsilon/2)}\mI_{|C|}-\schur(\mI_n-\mw,C)
\]
\end{theorem}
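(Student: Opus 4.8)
The plan is to reformulate the directed $\epsilon$‑approximation as positive‑semidefiniteness of an associated $2\times 2$ block matrix, ``conjugate this down to $C$'' using the fact that a Schur complement is the restriction of a bilinear form to harmonic extensions, and then symmetrize to land on the stated constant. To set up the harmonic machinery, write $\mL=\mI_n-\mw$, $\widetilde{\mL}=\mI_n-\widetilde{\mw}$, and for $\mathcal{M}\in\C^{n\times n}$ with $\mathcal{M}_{FF}$ invertible put $\mathcal{X}(\mathcal{M})=\left(\begin{smallmatrix}-\mathcal{M}_{FF}^{-1}\mathcal{M}_{FC}\\\mI_{|C|}\end{smallmatrix}\right)$. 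A one‑line computation gives $\mathcal{M}\,\mathcal{X}(\mathcal{M})=\left(\begin{smallmatrix}0\\\schur(\mathcal{M},C)\end{smallmatrix}\right)$ and $\mathcal{X}(\mathcal{M}^*)^*\mathcal{M}=\left(\begin{smallmatrix}0&\schur(\mathcal{M},C)\end{smallmatrix}\right)$, so $u^*\mathcal{M}v=u_C^*\,\schur(\mathcal{M},C)\,v_C$ whenever $v=\mathcal{X}(\mathcal{M})v_C$ or $u=\mathcal{X}(\mathcal{M}^*)u_C$, and symmetrizing, $\mathcal{X}(\mathcal{M})^*\mU_{\mathcal{M}}\mathcal{X}(\mathcal{M})=\mathcal{X}(\mathcal{M}^*)^*\mU_{\mathcal{M}}\mathcal{X}(\mathcal{M}^*)=\mU_{\schur(\mathcal{M},C)}$. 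Taking $u=\mathcal{X}(\mL^*)x$, $v=\mathcal{X}(\widetilde{\mL})y$, these yield $x^*\schur(\mL,C)y=u^*\mL v$ (left factor $\mL$‑harmonic) and $x^*\schur(\widetilde{\mL},C)y=u^*\widetilde{\mL}v$ (right factor $\widetilde{\mL}$‑harmonic), and subtracting, using $\widetilde{\mL}-\mL=\mw-\widetilde{\mw}$,
\[
\schur(\widetilde{\mL},C)-\schur(\mL,C)=\mathcal{X}(\mL^*)^*(\mw-\widetilde{\mw})\,\mathcal{X}(\widetilde{\mL})=\mathcal{X}(\widetilde{\mL}^*)^*(\mw-\widetilde{\mw})\,\mathcal{X}(\mL),
\]
the second form by symmetry. (One first notes $\widetilde{\mL}_{FF}$ is invertible too: if $\widetilde{\mL}_{FF}v=0$, extending $v$ by $0$ gives $v^*\mU_{\mI-\widetilde{\mw}}v=0$, and $\widetilde{\mw}\approx_\epsilon\mw$ forces $\mU_{\mI-\widetilde{\mw}}\succeq(1-\epsilon)\mU_{\mI-\mw}\succeq 0$, hence $\mU_{\mI-\mw}v=0$; applying the defining inequality with $x=tv$ and $t\to\infty$ forces $(\mw-\widetilde{\mw})v=0$, so $\widetilde{\mL}_{FF}v=\mL_{FF}v=0$ and $v=0$.)

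Next I would repackage the hypothesis using the elementary equivalence: for Hermitian $P,R$ and any $Q$, $\left(\begin{smallmatrix}P&Q\\Q^*&R\end{smallmatrix}\right)\succeq 0$ iff $|x^*Qy|\le\tfrac12(x^*Px+y^*Ry)$ for all $x,y$ (replace $x$ by a unit‑modulus multiple to make $\Real(x^*Qy)=-|x^*Qy|$). Since $\widetilde{\mw}\approx_\epsilon\mw$ also gives $\mU_{\mI-\widetilde{\mw}}\succeq(1-\epsilon)\mU_{\mI-\mw}$ (both vectors equal), one checks that with $\epsilon''\defeq\epsilon/(1-\epsilon)$ both $\left(\begin{smallmatrix}\mU_{\mI-\mw}&(\mw-\widetilde{\mw})/\epsilon''\\(\mw-\widetilde{\mw})^*/\epsilon''&\mU_{\mI-\widetilde{\mw}}\end{smallmatrix}\right)$ and $\left(\begin{smallmatrix}\mU_{\mI-\widetilde{\mw}}&(\mw-\widetilde{\mw})/\epsilon''\\(\mw-\widetilde{\mw})^*/\epsilon''&\mU_{\mI-\mw}\end{smallmatrix}\right)$ are PSD, the needed bound reducing to $a+(1-\epsilon)b\ge(\epsilon/\epsilon'')(a+b)$ for $a,b\ge0$. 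Conjugating the first by $\diag(\mathcal{X}(\mL^*),\mathcal{X}(\widetilde{\mL}))$ and the second by $\diag(\mathcal{X}(\widetilde{\mL}^*),\mathcal{X}(\mL))$, then invoking Proposition~\ref{prop:psd_mult_sides} and the harmonic identities (which turn the diagonal blocks into $\mU_{\schur(\mL,C)}$ or $\mU_{\schur(\widetilde{\mL},C)}$ and the off‑diagonal into $D/\epsilon''$, where $D\defeq\schur(\widetilde{\mL},C)-\schur(\mL,C)$), I obtain $\left(\begin{smallmatrix}\mU_{\schur(\mL,C)}&D/\epsilon''\\D^*/\epsilon''&\mU_{\schur(\widetilde{\mL},C)}\end{smallmatrix}\right)\succeq 0$ and $\left(\begin{smallmatrix}\mU_{\schur(\widetilde{\mL},C)}&D/\epsilon''\\D^*/\epsilon''&\mU_{\schur(\mL,C)}\end{smallmatrix}\right)\succeq 0$.

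To finish I would first evaluate the second of these on $(x,-x)$, which gives $(1-2/\epsilon'')\mU_{\schur(\widetilde{\mL},C)}+(1+2/\epsilon'')\mU_{\schur(\mL,C)}\succeq 0$, i.e.\ $\mU_{\schur(\widetilde{\mL},C)}\preceq\frac{2-\epsilon}{2-3\epsilon}\mU_{\schur(\mL,C)}$ (using $\epsilon<2/3$ to divide by $2-3\epsilon$); then add the two displayed PSD block matrices, replace the diagonal $\mU_{\schur(\mL,C)}+\mU_{\schur(\widetilde{\mL},C)}$ by the larger $\frac{4(1-\epsilon)}{2-3\epsilon}\mU_{\schur(\mL,C)}$, and rescale both coordinates to normalize the diagonal, producing
\[
\begin{pmatrix}\mU_{\schur(\mL,C)}&\tfrac{2-3\epsilon}{2\epsilon}\,D\\ \tfrac{2-3\epsilon}{2\epsilon}\,D^*&\mU_{\schur(\mL,C)}\end{pmatrix}\succeq 0 .
\]
By the block‑PSD equivalence this is $|x^*Dy|\le\frac{\epsilon}{2-3\epsilon}\bigl(x^*\mU_{\schur(\mL,C)}x+y^*\mU_{\schur(\mL,C)}y\bigr)$; since $D$ is exactly the difference of $\mI_{|C|}-\schur(\mL,C)$ and $\mI_{|C|}-\schur(\widetilde{\mL},C)$, while $\mU_{\mI_{|C|}-\schur(\mL,C)}=\mU_{\schur(\mL,C)}$ and $\frac{\epsilon}{2-3\epsilon}=\frac12\cdot\frac{\epsilon}{1-3\epsilon/2}$, this is precisely the claimed $\mI_{|C|}-\schur(\mI_n-\widetilde{\mw},C)\approx_{\epsilon/(1-3\epsilon/2)}\mI_{|C|}-\schur(\mI_n-\mw,C)$.

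I expect the main obstacle to be the genuine asymmetry between $\mL$ and $\widetilde{\mL}$: one cannot compute both Schur complements with a single harmonic transform, so the transpose‑swapped transforms $\mathcal{X}(\mL^*),\mathcal{X}(\widetilde{\mL}^*)$ are needed to make $\schur(\cdot,C)$ appear on the correct side of the product identity. The second delicate point — and the reason the ``natural'' constant $\epsilon/(1-\epsilon)$ degrades to $\epsilon/(1-3\epsilon/2)$ — is that conjugating the hypothesis' block matrix down to $C$ leaves $\mU_{\schur(\widetilde{\mL},C)}$ rather than $\mU_{\schur(\mL,C)}$ in one corner; bounding the former by $\frac{2-\epsilon}{2-3\epsilon}$ times the latter (again via the block‑PSD inequality) and carrying that factor through the final rescaling is where the bookkeeping must be done carefully.
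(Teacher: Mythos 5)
Your proposal is correct and follows essentially the same route as the paper's proof: harmonic extensions turn the Schur-complement difference into an evaluation of $\mw-\widetilde{\mw}$ on lifted vectors, the hypothesis plus $(1-\epsilon)\mU_{\mI-\mw}\preceq\mU_{\mI-\widetilde{\mw}}$ converts the resulting error terms into $\mU_{\schur(\cdot,C)}$ quadratic forms, a self-comparison (your evaluation at $(x,-x)$, the paper's $x^\ell=x^r$) bounds $\mU_{\schur(\widetilde{\mL},C)}$ by a constant times $\mU_{\schur(\mL,C)}$, and symmetrizing the two asymmetric bounds yields the constant $\epsilon/(1-3\epsilon/2)$; your block-PSD packaging is a reformulation rather than a different argument. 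One small merit of your write-up is that you explicitly justify the invertibility of $\widetilde{\mL}_{FF}$, which the paper's proof uses implicitly.
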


A proof of Theorem \ref{thm:sc_approx} can be found in Appendix \ref{app:ring_powers}. The expression in Theorem \ref{thm:sc_approx} has a natural interpretation in terms of random walks. Indeed, notice that 

\[
\mI_{|C|}-\schur(\mI_n-\mw,C)=\mw_{CC}+\mw_{CF}(\mI_{|F|}-\mw_{FF})^{-1}\mw_{FC}.
\]
When $\mw$ is the transition matrix for a random walk, the right-hand side above can be interpreted as the transition matrix for the random walk induced by ``short-cutting'' walks that traverse through the set of vertices in $F$. In other words, walk behavior on $C$ remains the same (the $\mw_{CC}$ term) and walks that go from $C$ to $F$ (via $\mw_{FC}$) can instantly take arbitrary length walks in $F$ (the $(\mI_{|F|}-\mw_{FF})^{-1}$ term) before returning to $C$ (via $\mw_{CF}$). The theorem above says that spectral approximation is preserved under such ``short-cutting''.

Now we get the following corollary, which says that unit-circle approximation is preserved under powering.
\begin{corollary}
\label{cor:capprox_preserved_powering}
Let $\mw$, $\widetilde{\mw}$ be the transition matrices of digraphs $G,\widetilde{G}$. If $\widetilde{\mw}\capprox_{\epsilon}\mw$ then for all $k\in\mathbb{N}$ we have $\widetilde{\mw}^k\capprox_{\epsilon/(1-3\epsilon/2)}\mw^k$.
\end{corollary}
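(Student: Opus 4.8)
The plan is to realize the $k$-th power $\mw^k$ as a Schur complement of the length-$k$ cycle lift of $G$, and then combine Corollary~\ref{cor:capprox_ring} (which says unit-circle approximation survives cycle-lifting) with Theorem~\ref{thm:sc_approx} (directed approximation survives Schur complements). The one wrinkle is that Theorem~\ref{thm:sc_approx} only hands back the weaker $\approx$ relation, while we want the stronger $\capprox$ relation between the powers; I would bridge this by working one unit-circle multiple at a time, using the equivalence in Lemma~\ref{lem:unitcirc_equivalences}.

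First I would reduce to unit-circle rescalings. By Lemma~\ref{lem:unitcirc_equivalences}, the desired conclusion $\widetilde{\mw}^k \capprox_{\epsilon/(1-3\epsilon/2)} \mw^k$ is equivalent to showing $z\widetilde{\mw}^k \approx_{\epsilon/(1-3\epsilon/2)} z\mw^k$ for every $z\in\C$ with $|z|=1$. Fixing such a $z$, I would choose $\omega\in\C$ with $|\omega|=1$ and $\omega^k=z$, so that $z\mw^k=(\omega\mw)^k$ and $z\widetilde{\mw}^k=(\omega\widetilde{\mw})^k$; it then suffices to show $(\omega\widetilde{\mw})^k \approx_{\epsilon/(1-3\epsilon/2)} (\omega\mw)^k$. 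Note also that $\omega\widetilde{\mw}\capprox_\epsilon\omega\mw$: unwinding Lemma~\ref{lem:unitcirc_equivalences}, both $\widetilde{\mw}\capprox_\epsilon\mw$ and $\omega\widetilde{\mw}\capprox_\epsilon\omega\mw$ assert exactly that $u\widetilde{\mw}\approx_\epsilon u\mw$ for every unit-modulus $u$, and this family is invariant under multiplying $u$ by $\omega$.

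Next I would cycle-lift and take a Schur complement onto a single layer. Applying Corollary~\ref{cor:capprox_ring} to $\omega\mw$ and $\omega\widetilde{\mw}$ with cycle length $k$ gives $\mc_k\otimes(\omega\widetilde{\mw})\capprox_\epsilon\mc_k\otimes(\omega\mw)$, hence also $\mc_k\otimes(\omega\widetilde{\mw})\approx_\epsilon\mc_k\otimes(\omega\mw)$ (take $z=1$ in Lemma~\ref{lem:unitcirc_equivalences}). Write $\mm\eqdef\mc_k\otimes(\omega\mw)$, let $C$ be the first of the $k$ layers of the cycle lift and $F$ the other $k-1$ layers, so $|C|=n$. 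The block $\mm_{FF}$ is the nilpotent ``layer-shift'' on layers $2,\dots,k$ (the edge leaving layer $k$ exits $F$), so $\mm_{FF}^{\,k-1}=\bzero$ and $\mI_{(k-1)n}-\mm_{FF}$ is invertible; thus Theorem~\ref{thm:sc_approx} applies for $\epsilon\in(0,2/3)$ and yields
\[
\mI_n-\schur\bigl(\mI_{kn}-\mc_k\otimes(\omega\widetilde{\mw}),\,C\bigr)\;\approx_{\epsilon/(1-3\epsilon/2)}\;\mI_n-\schur\bigl(\mI_{kn}-\mc_k\otimes(\omega\mw),\,C\bigr).
\]
I would then identify $\mI_n-\schur(\mI_{kn}-\mm,C)=\mm_{CC}+\mm_{CF}(\mI-\mm_{FF})^{-1}\mm_{FC}$ with the power: here $\mm_{CC}=\bzero$, and since in $(\mI-\mm_{FF})^{-1}=\sum_{j=0}^{k-2}\mm_{FF}^{\,j}$ the only term sending layer $2$ to layer $k$ is $\mm_{FF}^{\,k-2}$, the product $\mm_{CF}(\mI-\mm_{FF})^{-1}\mm_{FC}$ telescopes, picking up exactly $k$ factors of $\omega\mw$, so $\mI_n-\schur(\mI_{kn}-\mm,C)=(\omega\mw)^k$ (and likewise $(\omega\widetilde{\mw})^k$ on the tilde side). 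Combining, $(\omega\widetilde{\mw})^k\approx_{\epsilon/(1-3\epsilon/2)}(\omega\mw)^k$, i.e.\ $z\widetilde{\mw}^k\approx_{\epsilon/(1-3\epsilon/2)}z\mw^k$; as $z$ ranges over the unit circle, Lemma~\ref{lem:unitcirc_equivalences} then gives the corollary. (The degenerate case $k=1$ is immediate, since $\capprox_\epsilon$ implies $\capprox_{\epsilon'}$ for any $\epsilon'\ge\epsilon$.)

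The hard part is really just this conceptual mismatch: Theorem~\ref{thm:sc_approx} only propagates the weaker $\approx$ notion through Schur complements, so without the $k$-th-root-of-unity trick in the first step one would only conclude $\widetilde{\mw}^k\approx_{\epsilon/(1-3\epsilon/2)}\mw^k$ rather than the stronger $\capprox$ bound. Everything else is routine: the only calculation to be careful with is verifying that the Schur complement of the length-$k$ cycle lift onto one layer equals the $k$-th power, i.e.\ checking the nilpotency of $\mm_{FF}$ and the telescoping of $\mm_{CF}(\mI-\mm_{FF})^{-1}\mm_{FC}$.
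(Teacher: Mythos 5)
Your proposal is correct and follows essentially the same route as the paper: fix a unit-modulus $z$, pick $\omega$ with $\omega^k=z$, cycle-lift $\omega\mw$ and $\omega\widetilde{\mw}$ via Corollary~\ref{cor:capprox_ring}, Schur-complement onto a single layer to recover $(\omega\mw)^k=z\mw^k$, and invoke Theorem~\ref{thm:sc_approx}, quantifying over $z$ at the end. Your write-up is in fact slightly more careful than the paper's on two points the paper leaves implicit: the invertibility of $\mI-\mm_{FF}$ (via nilpotency of the layer-shift) and the explicit use of Lemma~\ref{lem:unitcirc_equivalences} to upgrade the per-$z$ conclusion from $\approx$ back to $\capprox$.
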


\begin{proof}
Fix $k\in\mathbb{N}$ and $z\in\C$ such that $|z|=1$. Let $\omega$ be such that $\omega^k=z$. 
Let $\mm=\mc_k \otimes \omega\cdot\mw$ and $\widetilde{\mm}=\mc_k \otimes \omega\cdot\widetilde{\mw}$. From Corollary \ref{cor:capprox_ring} we have that $\widetilde{\mm}\capprox_{\epsilon}\mm$. We can think of $\mm$ and $\widetilde{\mm}$ as the transition matrices of the cycle-lifted graph of graphs with transition matrices $\mw$ and $\widetilde{\mw}$. Define $F$ to be the set of vertices in the first layer of these cycle-lifted graphs. Notice that
\begin{align*}
\mI_{|F|}-\schur(\mI_n-\mm,F)&=(\omega\cdot\mw)^k=z\cdot\mw^k\\
\mI_{|F|}-\schur(\mI_n-\widetilde{\mm},F)&=(\omega\cdot\widetilde{\mw})^k=z\cdot\widetilde{\mw}^k.
\end{align*}
It follows from Theorem \ref{thm:sc_approx} that $\widetilde{\mw}^k\capprox_{\epsilon/(1-3\epsilon/2)}\mw^k$.
\end{proof}
Interestingly, in the case of undirected graphs, Corollary \ref{cor:capprox_preserved_powering} says that if $\widetilde{\mw}\approx_{\epsilon}\mw$ and $-\widetilde{\mw}\approx_{\epsilon}-\mw$ then all $k$th powers approximate one another (with small loss in approximation quality). This was not known (to the best of our knowledge) for any $k$ other than powers of 2.

\section{Derandomized Square of Regular Digraphs}
In order to achieve a space-efficient and deterministic implementation of our algorithm, we need a way to efficiently approximate high powers of regular digraphs. To do this, we use the \emph{derandomized square} graph operation of Rozenman and Vadhan \cite{RozenmanVa05}, which uses expander graphs to give sparse approximations to the graph square. We define expander graphs in terms of the measure below.

\begin{definition}[\cite{Mihail89}]
Let $G$ be a regular directed multigraph with transition matrix $\mw$. We define 
\[
\lambda(G) = \max_{v\perp\vec{\bone}}\frac{\|\mw v\|}{\|v\|}\in[0,1],
\]
where the maximum can be taken over either real or complex vectors $v$ (equivalent by applying Lemma~\ref{lem:extension_to_complex}
to the restriction of $\mw$ to the subspace orthogonal to $\vec{\bone}$).
The \emph{spectral gap} of $G$ is defined to be $\gamma(G)=1-\lambda(G)$ and when $\gamma(G)\geq \gamma$, we say that $G$ has \emph{spectral expansion} $\gamma$. When $G$ is undirected, $\lambda(G)$ equals the second largest eigenvalue of $W$ in absolute value. 
\end{definition}
It is well known that the larger $\gamma(G)$ is, the faster a random walk on $G$ converges to the stationary distribution. Families of graphs with $\lambda(G)\leq 1-\Omega(1)$ are called \emph{expanders}.

$\lambda(G)$ also relates naturally to unit-circle approximation as shown in the following lemma, which says that the smaller $\lambda(G)$, the better $G$ approximates the complete graph.

\begin{lemma}
\label{lem:expander_approx_complete}
Let $G$ be a strongly connected, regular directed multigraph on $n$ vertices with transition matrix $\mw$ and let $\mJ\in\mathbb{R}^{n\times n}$ be a matrix with $1/n$ in every entry (i.e. $\mJ$ is the transition matrix of the complete graph with a self loop on every vertex). Then $\lambda(G)\leq \lambda$ if and only if $\mw\capprox_{\lambda}\mJ$.
\end{lemma}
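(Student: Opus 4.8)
The plan is to reduce both sides of the claimed equivalence to a single statement about the spectral norm of $\mw$ restricted to the subspace $\vec{\bone}^\perp$. First I would recall the third equivalent formulation of unit-circle approximation from Lemma~\ref{lem:unitcirc_equivalences}: $\mw\capprox_\lambda\mJ$ holds if and only if for every $z$ on the unit circle, $\ker(\mU_{\mI-z\cdot\mJ})\subseteq\ker(\mw-\mJ)\cap\ker((\mw-\mJ)^\top)$ and $\|\mU_{\mI-z\cdot\mJ}^{+/2}(\mw-\mJ)\mU_{\mI-z\cdot\mJ}^{+/2}\|\leq\lambda$. The key structural facts are that $\mJ$ is the orthogonal projection onto $\mathrm{span}(\vec{\bone})$ (here I use that $G$ is regular, so $\vec{\bone}$ is both a left and right eigenvector of $\mw$ with eigenvalue $1$, hence $(\mw-\mJ)\vec{\bone}=\vec{0}$ and $\vec{\bone}^\top(\mw-\mJ)=\vec{0}^\top$), and that $\mw-\mJ$ maps $\vec{\bone}^\perp$ into $\vec{\bone}^\perp$ and annihilates $\vec{\bone}$. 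So $\mw-\mJ$ acts as $\mw|_{\vec{\bone}^\perp}$ on $\vec{\bone}^\perp$ and is zero on $\mathrm{span}(\vec{\bone})$.

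Next I would compute $\mU_{\mI-z\cdot\mJ}$ explicitly. Since $\mJ$ is a real symmetric projection, $\mU_{\mI-z\cdot\mJ}=\mI-\Real(z)\cdot\mJ$, which acts as the identity on $\vec{\bone}^\perp$ and as multiplication by $1-\Real(z)$ on $\mathrm{span}(\vec{\bone})$. Its kernel is $\mathrm{span}(\vec{\bone})$ when $z=1$ and is trivial otherwise; in either case the kernel condition $\ker(\mU_{\mI-z\cdot\mJ})\subseteq\ker(\mw-\mJ)\cap\ker((\mw-\mJ)^\top)$ holds automatically by the regularity facts above. For the norm condition, $\mU_{\mI-z\cdot\mJ}^{+/2}$ is the identity on $\vec{\bone}^\perp$ (and something finite or zero on $\mathrm{span}(\vec{\bone})$, which does not matter since $\mw-\mJ$ kills $\vec{\bone}$ on both sides), so $\mU_{\mI-z\cdot\mJ}^{+/2}(\mw-\mJ)\mU_{\mI-z\cdot\mJ}^{+/2}=\mw-\mJ$ exactly, for every $z$ on the unit circle. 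Therefore $\mw\capprox_\lambda\mJ$ is equivalent to $\|\mw-\mJ\|\leq\lambda$, i.e. $\|\mw|_{\vec{\bone}^\perp}\|\leq\lambda$, which is precisely $\lambda(G)\leq\lambda$ by definition. One should invoke Lemma~\ref{lem:extension_to_complex} (as the definition of $\lambda(G)$ already notes) to freely pass between real and complex vectors when identifying $\|\mw|_{\vec{\bone}^\perp}\|$ with $\lambda(G)$.

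The main obstacle I anticipate is being careful about the behavior on $\mathrm{span}(\vec{\bone})$: one must verify that the pseudo-inverse-square-root conjugation genuinely leaves $\mw-\mJ$ unchanged rather than introducing a spurious contribution, and that the kernel-containment clause of the definition is satisfied — both of these hinge on $G$ being \emph{regular} (so that $\vec{\bone}$ is a two-sided eigenvector) and \emph{strongly connected} (so that $\lambda(G)<1$ is even possible and the eigenvalue-$1$ space is exactly $\mathrm{span}(\vec{\bone})$). A secondary subtlety is that $\mU_{\mI-z\cdot\mJ}$ could be singular on $\mathrm{span}(\vec{\bone})$ when $z=1$, so I would phrase the argument using the restriction-to-subspace viewpoint (as in Lemma~\ref{lem:subspaces}, decomposing $\C^n=\mathrm{span}(\vec{\bone})\oplus\vec{\bone}^\perp$) rather than manipulating possibly-ill-conditioned matrices directly. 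Modulo these bookkeeping points the equivalence is essentially immediate from the definitions.
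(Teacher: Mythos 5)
Your proof is correct. It reaches the same destination as the paper's but enters through a different door: the paper argues the two directions separately via the bilinear-form characterization (item 2 of Lemma \ref{lem:equivdef_helper_complex}) --- for the forward direction substituting the maximizers in the definition of $\lambda(G)$, and for the backward direction splitting $x,y$ into constant and $\vec{\bone}$-orthogonal parts and applying Cauchy--Schwarz --- whereas you use the operator-norm characterization (item 3 of Lemma \ref{lem:unitcirc_equivalences}) and prove the exact identity $\mU_{\mI-z\cdot\mJ}^{+/2}(\mw-\mJ)\mU_{\mI-z\cdot\mJ}^{+/2}=\mw-\mJ$ for every unimodular $z$, so that both directions follow at once from $\|\mw-\mJ\|=\|\mw|_{\vec{\bone}^{\perp}}\|=\lambda(G)$ (the last step using Lemma \ref{lem:subspaces} and Lemma \ref{lem:extension_to_complex}). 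Both arguments rest on the same two facts --- regularity makes $\vec{\bone}$ a two-sided fixed vector of $\mw$ and of $\mJ$, and $\mJ$ annihilates $\vec{\bone}^{\perp}$ --- so the substance is the same; what your version buys is that the $z$-dependence visibly cancels, making it transparent why the unit-circle condition collapses to a single spectral-norm bound, while the paper's version avoids any manipulation of pseudo-inverse square roots. Your handling of the kernel clause (nontrivial only at $z=1$, where $\ker(\mU_{\mI-\mJ})=\mathrm{span}(\vec{\bone})$ and regularity gives $(\mw-\mJ)\vec{\bone}=(\mw-\mJ)^{\top}\vec{\bone}=\vec{0}$) is exactly right; note that, as in the paper's proof, strong connectivity is never actually invoked.
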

A proof of Lemma \ref{lem:expander_approx_complete} can be found in Appendix \ref{app:expander_approx_complete}. Before defining the derandomized square operation, we introduce two-way labelings and rotation maps.

\begin{definition}[\cite{ReingoldVaWi01, RozenmanVa05}]
A {\em two-way labeling} of a $d$-regular directed multigraph $G$ is a labeling of the edges in $G$ such that 
\begin{enumerate}
\item Every edge $(u,v)$ has two labels in $[d]$, one as an edge incident to $u$ and one as an edge incident to $v$,
\item For every vertex $v$ the labels of the edges incident to $v$ are distinct.
\end{enumerate}
\end{definition}
In a two-way labeling, each vertex $v$ has its own labeling from $1$ to $d$ for the $d$ edges leaving it and its own labeling from $1$ to $d$ for the $d$ edges entering it. Since every edge is incident to two vertices, each edge receives two labels, which may or may not be the same. It is convenient to specify a multigraph with a two-way labeling by a rotation map:
\begin{definition}[\cite{ReingoldVaWi01, ReingoldTrVa06}]
Let $G$ be a $d$-regular directed multigraph on $n$ vertices with a two-way labeling. The {\em rotation map} Rot$_{G}\colon [n]\times[d]\to[n]\times[d]$ is defined as follows: Rot$_{G}(v,i)=(w,j)$ if the $i$th edge leaving vertex $v$ leads to vertex $w$ and this edge is the $j$th edge entering $w$. 
\end{definition}
Now we can define the derandomized square. Recall that the square of a graph $G^2$ is a graph on the same vertex set whose edges correspond to all walks of length 2 in $G$. The derandomized square picks out a pseudorandom subset of the walks of length 2 by correlating the 2 steps via edges on an expander graph. 

\begin{definition}[\cite{RozenmanVa05}]
\label{def:derandsquare}
Let $G$ be a $d$-regular multigraph on $n$ vertices with a two-way labeling. Let $H$ be a $c$-regular undirected graph on $d$ vertices. The {\em derandomized square} $G\ds H$ is a $c\cdot d$-regular graph on $n$ vertices with rotation map Rot$_{G\ds H}$ defined as follows: For $v_{0}\in[n], i_0\in[d]$, and $j_0\in [c]$, we compute Rot$_{G\ds H}(v_0,(i_0,j_0))$ as
\begin{enumerate}
\item Let $(v_1,i_1)=$Rot$_{G}(v_0,i_0)$
\item Let $(i_2,j_1)=$Rot$_{H}(i_1,j_0)$
\item Let $(v_2,i_3)=$Rot$_{G}(v_1,i_2)$
\item Output $(v_2,(i_3,j_1))$
\end{enumerate}
\end{definition}

In the square of a directed graph, for each vertex $v$, there exists a complete, uni-directional bipartite graph from the in-neighbors of $v$, to it's out-neighbors. This corresponds to a directed edge for every two-step walk that has $v$ in the middle of it. A useful way to view the derandomized square is that it replaces each of these complete bipartite graphs with a uni-directional bipartite \emph{expander}. 

\begin{definition}
\label{def:bip}
Let $H=(V,E)$ be an undirected graph on $d$ vertices. We define Bip$(H)$ to be a bipartite graph with $d$ vertices on each side of the bipartition and an edge $(u,v)$ from vertex $u$ on the left to vertex $v$ on the right if and only if $(u,v)\in E$.
\end{definition}

Note that since we're working with multigraphs, the incoming or outgoing neighbors to/from a vertex may form a multi-set rather than a set due to parallel edges. So when we say that a ``copy'' of Bip$(H)$ exists from the in-neighbors of $v$ to its out-neighbors, we mean that if we were to split all of the in-neighbors of $v$ and out-neighbors of $v$ into two sets of $d$ distinct vertices, place the edges from Bip$(H)$ across the sets, and then re-merge vertices that correspond to a repeat neighbor of $v$, then a copy of that resulting graph can be found across vertex $v$ in $G\ds H$. We formalize this view of the derandomized square in the following lemma.

\begin{lemma}
\label{lem:ds_expander_decomp}
Let $G$ be a $d$-regular directed multigraph on $n$ vertices with a two-way labeling and transition matrix $\mw$. Let $H$ be a $c$-regular undirected graph on $d$ vertices with a two-way labeling and transition matrix $\mb$. Let $\mJ$ be the $d\times d$ matrix with $1/d$ in every entry and let $\widetilde{\mw}$ be the transition matrix of $G\ds H$. Define the $2d\times 2d$ matrices
\[
\mm=
\left[
\begin{array}{cc}
    \mzero & \mzero \\
    \mJ & \mzero
\end{array}
\right]
\]
and
\[
\widetilde{\mm}=
\left[
\begin{array}{cc}
    \mzero & \mzero \\
    \mb  & \mzero
\end{array}
\right].
\]
Furthermore, for each $v\in[n]$ define $\super{\mproj}{v}$, $\super{\mq}{v}$, and $\super{\mt}{v}$ as follows
\begin{align*}
\super{\mproj}{v}_{j,w}&=\begin{cases}
1 ~~~~\mathrm{if~}j\mathrm{th~edge~entering~}v\mathrm{~in~}G\mathrm{~comes~from~}w\\
0~~~~\mathrm{otherwise}
\end{cases}\\
\super{\mq}{v}_{j,w}&=\begin{cases}
1 ~~~~\mathrm{if~}j\mathrm{th~edge~leaving~}v\mathrm{~in~}G\mathrm{~goes~to~}w\\
0~~~~\mathrm{otherwise}
\end{cases}\\
\super{\mt}{v}&=\left[\begin{array}{cc}
     \mproj^{(v)}  \\
    \hline
     \mq^{(v)}
\end{array}\right].
\end{align*}
where the horizontal bar in the definition of $\super{\mt}{v}$ denotes stacking the matrix $\mproj^{(v)}$ on top of the matrix $\mq^{(v)}$. Then we have 
\[
\mw^2=\frac{1}{d}\cdot\sum_{v\in [n]}(\super{\mt}{v})^\top\mm\super{\mt}{v}
\]
and 
\[
\widetilde{\mw}=\frac{1}{d}\cdot\sum_{v\in [n]}(\super{\mt}{v})^\top\widetilde{\mm}\super{\mt}{v}
\]
\end{lemma}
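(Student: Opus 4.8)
The plan is to prove both identities by directly unpacking the matrix products $(\super{\mt}{v})^\top\mm\super{\mt}{v}$ and $(\super{\mt}{v})^\top\widetilde{\mm}\super{\mt}{v}$ in terms of the block structure of $\super{\mt}{v}$, and then matching the result entrywise against $\mw^2$ (resp.\ the transition matrix of $G\ds H$) using the combinatorial definitions of the rotation maps. First I would record the block computation: since $\super{\mt}{v}=\bigl[\begin{smallmatrix}\mproj^{(v)}\\ \mq^{(v)}\end{smallmatrix}\bigr]$ and $\mm$ (resp.\ $\widetilde{\mm}$) is the $2\times 2$ block matrix with only a bottom-left block $\mJ$ (resp.\ $\mb$), one gets
\[
(\super{\mt}{v})^\top\mm\super{\mt}{v}=(\mq^{(v)})^\top\mJ\,\mproj^{(v)},\qquad
(\super{\mt}{v})^\top\widetilde{\mm}\super{\mt}{v}=(\mq^{(v)})^\top\mb\,\mproj^{(v)}.
\]
So the lemma reduces to the two claims
\[
\mw^2=\frac1d\sum_{v}(\mq^{(v)})^\top\mJ\,\mproj^{(v)},\qquad
\widetilde{\mw}=\frac1d\sum_{v}(\mq^{(v)})^\top\mb\,\mproj^{(v)},
\]
which I would treat in turn.

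For the first claim, I would compute the $(w',w)$ entry of $(\mq^{(v)})^\top\mJ\,\mproj^{(v)}$. By definition $\mproj^{(v)}_{j,w}$ is $1$ exactly when the $j$th edge into $v$ comes from $w$, and $\mq^{(v)}_{i,w'}$ is $1$ exactly when the $i$th edge out of $v$ goes to $w'$; since $\mJ$ has every entry $1/d$, the product is $\frac1d\cdot(\#\{j:\ j\text{th edge into }v\text{ from }w\})\cdot(\#\{i:\ i\text{th edge out of }v\text{ to }w'\})$, i.e.\ $\frac1d\cdot A_{vw}\cdot A_{w'v}$ where $A$ is the adjacency matrix (entry $A_{ab}=$ number of edges from $b$ to $a$). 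Summing over $v$ and dividing by $d$ gives $\frac1{d^2}\sum_v A_{w'v}A_{vw}=\frac1{d^2}(A^2)_{w'w}=(\mw^2)_{w'w}$, using $d$-regularity so that $\mw=A/d$. This is a routine bookkeeping step, so I would present it compactly.

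The second claim is the heart of the lemma and the step I expect to be the main obstacle, because it requires carefully translating the four-step rotation-map definition of $G\ds H$ (Definition~\ref{def:derandsquare}) into the claimed matrix-product form, keeping the edge-labelings straight. The idea is that the $(w',w)$ entry of $\frac1d\sum_v(\mq^{(v)})^\top\mb\,\mproj^{(v)}$ counts, weighted by $1/(d\cdot c)$ implicit in $\mb=B/c$ being a transition matrix and the explicit $1/d$, triples $(v,i,j)$ such that the $j$th edge into $v$ comes from $w$, the $i$th edge out of $v$ goes to $w'$, and $(j,i)$ is an edge of $H$ (since $\mb_{i,j}\neq 0$ iff $i$ is reachable from $j$ in $H$) — but this is exactly the enumeration of edges from $w$ to $w'$ in $G\ds H$ as given by $\mathrm{Rot}_{G\ds H}$: the first $\mathrm{Rot}_G$ step lands at $(v,i_1=j)$ coming in along label $j$, the $\mathrm{Rot}_H$ step moves $i_1\mapsto i_2=i$ across $H$, and the final $\mathrm{Rot}_G$ step leaves $v$ along label $i$ to reach $w'$. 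I would make this precise by fixing $v$ and checking that $\mathrm{Rot}_G(w,\cdot)$ hitting $v$ with incoming label $j$ corresponds to $\mproj^{(v)}_{j,w}=1$, that $\mathrm{Rot}_G(v,i)$ reaching $w'$ corresponds to $\mq^{(v)}_{i,w'}=1$, and that $\mb_{i,j}$ equals $1/c$ times the number of $H$-edges realizing the step $j\to i$; then summing over $v,i,j$ and over the $c$ choices of $j_0$ in the derandomized square gives that $\widetilde{\mw}_{w'w}$ equals the claimed sum. The only real care needed is matching the ``$i$th edge entering/leaving'' conventions in $\mproj^{(v)},\mq^{(v)}$ with the labels produced by $\mathrm{Rot}_G$; once that correspondence is fixed, the identity follows by unwinding definitions.
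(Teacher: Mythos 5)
Your proposal is correct and follows essentially the same route as the paper: the paper also traces the four-step rotation map of $G\ds H$ to exhibit, across each vertex $v$, a copy of $\mathrm{Bip}(H)$ from the in-neighbors to the out-neighbors of $v$ (with $(\super{\mt}{v})^\top\widetilde{\mm}\super{\mt}{v}$ its transition matrix, and $(\super{\mt}{v})^\top\mm\super{\mt}{v}$ the complete-bipartite analogue for $G^2$), then sums over $v$. Your entrywise bookkeeping via $(\mq^{(v)})^\top\mb\,\mproj^{(v)}$ is just a matrix-level rephrasing of that same verification.
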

\begin{proof}[Proof of Lemma \ref{lem:ds_expander_decomp}]
First we show that in $G\ds H$ there exists a ``copy'' of Bip$(H)$ from the in-neighbors of each vertex to the out-neighbors. Fix a vertex $v\in G$. Let $u_1,\ldots,u_d$ be the multi-set of incoming neighbors of $v$ and let $w_1,\ldots,w_d$ be the multiset of outgoing neighbors. Without loss of generality suppose that for each $i\in[d]$, the edge from $u_i$ to $v$ has label $i^{\mathrm{in}}$ as an edge incident to $v$ and the edge from $v$ to $w_i$ has label $i^{\mathrm{out}}$ as an edge incident to $v$ in $G$. We claim that in the graph $G\ds H$ there exists a copy of Bip$(H)$ from the $u_i$'s to the $w_i$'s such that for each $i\in[d]$, $u_i$ is the $i$th vertex on the left side of Bip$(H)$ and $w_i$ is the $i$th vertex on the right side of Bip$(H)$ (where edge multiplicities correspond to merging vertices in Bip$(H)$).

To see this, suppose $(a,b)$ is an edge in $H$ with labels $\ell_a$ and $\ell_b$, respectively. Also, let $j_a,j_b$ be the labels of edges $(u_a,v)$ and $(v,w_b)$ as edges incident to $u_a$ and $w_b$, respectively (recall that these are labelled $a$ and $b$ from $v$'s perspective). We will show that there is an edge corresponding to $(a,b)$ in $H$ from $u_a$ to $w_b$ in $G\ds H$, namely the one labeled $(j_a,\ell_a)$, which will complete the proof. We compute the rotation map Rot$_{G\ds \mathcal{H}}(u_a,(j_a,\ell_a))$:
\begin{enumerate}
    \item Rot$_G(u_a,j_a)=(v,a)$
    \item Rot$_{H}(a,\ell_a)=(b,\ell_b)$
    \item Rot$_{G}(v,b)=(w_b,j_b)$
    \item Output $(w_b,(j_b,\ell_b))$
\end{enumerate}
So edge $(j_a,\ell_a)$ leaving vertex $u_a$ indeed leads to vertex $w_b$ in $G\ds H$. 

Let 
\begin{align*}
\super{\mw}{v}&=(\super{\mt}{v})^\top\mm\super{\mt}{v}\\
\super{\widetilde{\mw}}{v}&=(\super{\mt}{v})^\top\widetilde{\mm}\super{\mt}{v}
\end{align*}
From the way we have defined $\super{\mt}{v}$, we have that $\super{\mw}{v}$ is exactly the transition matrix of the uni-directional bipartite complete graph from the in-neighbors of vertex $v$ to its out-neighbors in $G^2$ (with all vertices that are not neighbors of $v$ isolated). Likewise $\super{\widetilde{\mw}}{v}$ is the transition matrix of the uni-directional bipartite expander from the in-neighbors of vertex $v$ to its out-neighbors in $G\ds H$ (with non-neighbors of $v$ isolated). It follows from the reasoning above that 

\[
\frac{1}{d}\cdot\sum_{v\in [n]}\super{\mw}{v}=\mw^2
\]
and likewise 
\[
\frac{1}{d}\cdot\sum_{v\in [n]}\super{\widetilde{\mw}}{v}=\widetilde{\mw}.
\]
\end{proof}

Before showing that the derandomized square produces a unit-circle approximation of the true square, we need a lemma about these uni-directional bipartite expanders. The following lemma says that if we convert an expander $H$ into a uni-directional bipartite graph, then it approximates the complete uni-directional bipartite expander, where the quality of approximation depends on how good of an expander $H$ is.

\begin{lemma}
\label{lem:bipapprox}
Let $\mb\in \R^{n\times n}$ be the transition matrix of a regular multigraph $H$ with $\lambda(H)\leq \epsilon$. Let $\mJ$ be the $n\times n$ matrix with $1/n$ in every entry. Then we have that for all $x,y\in\mathbb{C}^{2n}$ 
\[
\left|x^*\left[\begin{array}{cc}
    \mzero & \mzero \\
    \mb-\mJ & \mzero
\end{array}\right]y\right|\leq \frac{\epsilon}{2}\cdot\left(\|x\|^2+\|y\|^2-2\cdot\left|x^*\left[\begin{array}{cc}
    \mzero & \mzero \\
   \mJ & \mzero
\end{array}\right]x+y^*\left[\begin{array}{cc}
    \mzero & \mzero \\
   \mJ & \mzero
\end{array}\right]y\right|\right).
\]
\end{lemma}

This lemma is stronger than saying that the two matrices unit-circle approximate each other because of the extra factor of 2 before the absolute value on the right-hand side.\footnote{An earlier version of our paper only stated unit-circle approximation
in this lemma, but that did not suffice for the proof of Theorem~\ref{thm:eul_derandsq_capprox} below, an error pointed out to us by Ori Sberlo and Dean Doron.}
Intuitively, we can gain that factor of 2 because the left-hand side only involves half of each of the vectors $x$ and $y$, while $\|x\|^2$ and $\|y\|^2$ involve the entire vectors.

\begin{proof}
For a vector $v\in\C^{2n}$, we will write $v_1$ for the first $n$ components of $v$ and $v_2$ for the last $n$ components. For every vector $v \in \mathbb{C}^n$ we can write $v=v^{||} + v^{\perp}$ where $v^{||} \in \mathrm{Span}(\bone_n)$ and $v^{\perp} \in \mathrm{Span}(\bone_n)^{\perp}$.  Further, for every vector $v \in \mathbb{C}^{2n}$ we can write $v=v^{||} + v^{\perp}$ where $v^{||} \in \mathrm{Span}(\bone_n)\times \mathrm{Span}(\bone_n)$ and $v^{\perp} \in \mathrm{Span}(\bone_n)^{\perp} \times \mathrm{Span}(\bone_n)^{\perp}$, that is, $v_1^{\perp}\perp\bone_n$ and $v_2^{\perp}\perp\bone_n$.

 Note that for all $v^{||}\in \mathrm{Span}(\bone_n)$ we have $\mJ v^{||}=v^{||}$ and for all $v^{\perp}\in \mathrm{Span}(\bone_n)^{\perp}$ we have $\mJ v^{\perp} =\vec{\mzero}$. Note also that $\mJ=\mJ^2$, which implies that for any $a,b\in\mathbb{C}^n$, we have $a^*\mJ b=a^*\mJ \mJ b=\langle{a^{||},b^{||}}\rangle$. Now we can write
\begin{align*}
    2\cdot\left|x^*\left[\begin{array}{cc}
    \mzero & \mzero \\
   \mJ & \mzero
\end{array}\right]x+y^*\left[\begin{array}{cc}
    \mzero & \mzero \\
   \mJ & \mzero
\end{array}\right]y\right| &= 2\cdot\left|x_2^*\mJ x_1 +y_2^*\mJ y_1 \right|\\
&=2\cdot\left|\langle{x_2^{||},x_1^{||}}\rangle + \langle{y_2^{||},y_1^{||}}\rangle\right|\\
&\leq 2\cdot\left|\langle{x_2^{||},x_1^{||}}\rangle\right| + 2\cdot\left|\langle{y_2^{||},y_1^{||}}\rangle\right|.
\end{align*}

By Cauchy-Schwartz, we can upper bound each term in the last line above as follows
\begin{align*}
    2\cdot\left|\langle{x_2^{||},x_1^{||}}\rangle\right|&\leq 2\cdot\|x_2^{||}\|\cdot\|x_1^{||}\|\\
    &\leq \|x_2^{||}\|^2 + \|x_1^{||}\|^2\\
    &=\|x^{||}\|^2.
\end{align*}
where the second inequality uses the fact that for all real numbers $a,b$, we have $2\cdot a\cdot b\leq a^{2}+b^{2}$. This means that to prove the lemma it suffices to show

\begin{equation}
\label{eq:goal_to_show}
\left|x^*\left[\begin{array}{cc}
    \mzero & \mzero \\
    \mb-\mJ & \mzero
\end{array}\right]y\right|\leq \frac{\epsilon}{2}\cdot\left(\|x\|^2+\|y\|^2-\|x^{||}\|^2-\|y^{||}\|^2\right)\\
=\frac{\epsilon}{2}\cdot\left(\|x^{\perp}\|^2+\|y^{\perp}\|^2\right)
\end{equation}
where the last equality follows from the fact that for all vectors $v$, we have $v^{||}\perp v^{\perp}$ and hence $\|v\|^2 = \|v^{||}+v^{\perp}\|^2 =\|v^{||}\|^2+\|v^{\perp}\|^2$.

Observe that if $v\in \mathrm{Span}(\bone_n)\times \mathrm{Span}(\bone_n)$ then 
\[
\left[
\begin{array}{cc}
    \mzero & \mzero \\
    (\mb-\mJ) & \mzero
\end{array}
\right]v =\left(\left[
\begin{array}{cc}
    \mzero & \mzero \\
    (\mb-\mJ) & \mzero
\end{array}
\right]\right)^\top v=\vec{\mzero}.
\]
 Since $\mJ v=\vec{\mzero}$ for all $v\perp\vec{\bone}$ it follows that
\begin{align*}
\left|x^*\left[
\begin{array}{cc}
    \mzero & \mzero \\
    (\mb-\mJ) & \mzero
\end{array}
\right]y\right| &=\left|(x^{\perp})^*\left[
\begin{array}{cc}
    \mzero & \mzero \\
    (\mb-\mJ) & \mzero
\end{array}
\right]y^{\perp}\right| \\
&=|(x_2^{\perp})^*\mb y_1^{\perp}|\\
&\leq \epsilon\cdot\|x_2^{\perp}\|\cdot\|y_1^{\perp}\|\\
&\leq \frac{\epsilon}{2}\cdot(\|x_2^{\perp}\|^2+\|y_1^{\perp}\|^2)\\
&\leq \frac{\epsilon}{2}\cdot(\|x^{\perp}\|^2+\|y^{\perp}\|^2),
\end{align*}
where the third line uses that $\|\mb v\|_2\leq \epsilon\cdot\|v\|_2$ for all $v\perp\vec{\bone}$, and the fourth line uses the AM-GM inequality. This establishes Inequality \ref{eq:goal_to_show} and the lemma.

\end{proof}

Now we show that the derandomized square of a regular digraph yields a unit-circle spectral approximation to the true square. 

\begin{theorem}
\label{thm:eul_derandsq_capprox}
 Let $G=(V,E)$ be a $d$-regular directed multigraph with random walk matrix $\mw$. Let $H$ be a $c$-regular expander with $\lambda(H)\leq \epsilon$ and let $\widetilde{\mw}$ be the random walk matrix of $G\ds H$. Then
\[
\widetilde{\mw}\capprox_{2\cdot\epsilon}\mw^2.
\]
\end{theorem}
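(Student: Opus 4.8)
The plan is to combine the ``local'' decomposition of $\mw^2$ and $\widetilde{\mw}$ from Lemma~\ref{lem:ds_expander_decomp} with the fact that an expander spectrally approximates the complete graph (Lemma~\ref{lem:expander_approx_complete}), and then to pay for the unavoidable mismatch between the resulting quadratic forms with a single Cauchy--Schwarz estimate --- which is exactly where the factor of $2$ in the statement comes from. First I would unpack Lemma~\ref{lem:ds_expander_decomp}: since the only nonzero block of $\mm$ (resp.\ $\widetilde{\mm}$) is the lower-left one, $(\super{\mt}{v})^\top\mm\,\super{\mt}{v}=(\super{\mq}{v})^\top\mJ\,\super{\mproj}{v}$ and $(\super{\mt}{v})^\top\widetilde{\mm}\,\super{\mt}{v}=(\super{\mq}{v})^\top\mb\,\super{\mproj}{v}$, where $\mb$ is the transition matrix of $H$ and $\mJ$ is the $d\times d$ all-$1/d$ matrix, so that $\mw^2=\frac1d\sum_{v}(\super{\mq}{v})^\top\mJ\,\super{\mproj}{v}$ and $\widetilde{\mw}=\frac1d\sum_{v}(\super{\mq}{v})^\top\mb\,\super{\mproj}{v}$. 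Alongside this I would record a few elementary identities that hold because $G$ is $d$-regular, $\mJ=\frac1d\vec{\bone}\vec{\bone}^\top$, and the adjacency matrix equals $d\mw$: each $\super{\mproj}{v}$ and $\super{\mq}{v}$ has exactly one $1$ per row, so $\sum_v(\super{\mproj}{v})^\top\super{\mproj}{v}=\sum_v(\super{\mq}{v})^\top\super{\mq}{v}=d\mI_n$; and since $\vec{\bone}_d^\top\super{\mq}{v}$ and $\vec{\bone}_d^\top\super{\mproj}{v}$ are (transposes of) a column and a row of the adjacency matrix, $\frac1d\sum_v(\super{\mq}{v})^\top\mJ\,\super{\mq}{v}=\mw\mw^\top$ and $\frac1d\sum_v(\super{\mproj}{v})^\top\mJ\,\super{\mproj}{v}=\mw^\top\mw$.

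From $\lambda(H)\le\epsilon$ and Lemma~\ref{lem:expander_approx_complete} I get $\mb\capprox_\epsilon\mJ$, which, since $\mJ$ is PSD, gives in particular $|u^*(\mJ-\mb)w|\le\frac{\epsilon}{2}(\|u\|^2+\|w\|^2-u^*\mJ u-w^*\mJ w)$ for all $u,w\in\C^d$. Applying this term by term to $x^*(\mw^2-\widetilde{\mw})y=\frac1d\sum_v(\super{\mq}{v}x)^*(\mJ-\mb)(\super{\mproj}{v}y)$ and then summing using the identities above, I would obtain, for all $x,y\in\C^n$,
\[
\left|x^*(\mw^2-\widetilde{\mw})y\right|\ \le\ \frac{\epsilon}{2}\left(\|x\|^2+\|y\|^2-\|\mw^\top x\|^2-\|\mw y\|^2\right).
\]

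To upgrade this to $\widetilde{\mw}\capprox_{2\epsilon}\mw^2$ in the sense of Definition~\ref{def:unitcirc_approx}, i.e.\ to $\left|x^*(\mw^2-\widetilde{\mw})y\right|\le\epsilon(\|x\|^2+\|y\|^2-\left|x^*\mw^2x+y^*\mw^2y\right|)$, it then suffices to establish the purely matrix-analytic inequality
\[
\left|x^*\mw^2x+y^*\mw^2y\right|\ \le\ \tfrac12\left(\|x\|^2+\|y\|^2+\|\mw^\top x\|^2+\|\mw y\|^2\right),
\]
which I would prove using $|a+b|\le|a|+|b|$, the identity $x^*\mw^2x=(\mw^\top x)^*(\mw x)$ (valid since $\mw$ is real), Cauchy--Schwarz and AM--GM to bound $|x^*\mw^2x|\le\frac12(\|\mw^\top x\|^2+\|\mw x\|^2)$ (and symmetrically for $y$), and finally $\|\mw x\|\le\|x\|$, $\|\mw^\top y\|\le\|y\|$, which hold because $\|\mw\|_2\le\sqrt{\|\mw\|_1\cdot\|\mw\|_\infty}=1$ for the walk matrix of a regular digraph. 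The same estimate incidentally shows $|x^*\mw^2x|\le\|x\|^2$, so $\mw^2$ indeed lies in the class of matrices for which unit-circle approximation is defined.

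The one real subtlety, and the source of the extra factor $2$, is the mismatch between the quadratic forms $\|\mw^\top x\|^2=x^*\mw\mw^\top x$ and $\|\mw y\|^2=y^*\mw^\top\mw y$ produced by the decomposition and the forms $x^*\mw^2x$, $y^*\mw^2y$ required by $\capprox$ relative to $\mw^2$; these coincide when $\mw$ is symmetric (the undirected case, where one gets no loss at all), but reconciling them in general costs precisely one Cauchy--Schwarz, hence $2\epsilon$ rather than $\epsilon$. For this to work it is important to keep $\super{\mproj}{v}$ and $\super{\mq}{v}$ on the two sides of the sum separately --- so that the ``identity part'' is $\frac1d\sum_v(\super{\mq}{v})^\top\super{\mq}{v}=\mI_n$ --- rather than working with the stacked $\super{\mt}{v}$, for which $\frac1d\sum_v(\super{\mt}{v})^\top\super{\mt}{v}=2\mI_n$ and the final comparison would fail; and it is important to use Definition~\ref{def:unitcirc_approx} in its original form, since it is exactly the magnitude $\left|x^*\mw^2x+y^*\mw^2y\right|$ (as opposed to a real part, as in the per-$z$ reformulation of Lemma~\ref{lem:unitcirc_equivalences}) that supplies the slack closing the $|a+b|\le|a|+|b|$-plus-Cauchy--Schwarz step.
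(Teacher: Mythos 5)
Your proof is correct, and although it starts from the same decomposition (Lemma~\ref{lem:ds_expander_decomp}), it executes the summation genuinely differently from the paper and ends up, if anything, tighter. The paper keeps the stacked maps $\super{\mt}{v}$ and invokes Lemma~\ref{lem:bipapprox} (unit-circle approximation of the $2d\times 2d$ bipartite lifts), then sums the per-vertex inequalities using $\sum_v(\super{\mt}{v})^\top\super{\mt}{v}=2d\,\mI_n$ together with the triangle inequality on both sides, so that its factor of $2$ comes from the $2d$ in that identity. You instead unstack into $(\super{\mq}{v})^\top(\mJ-\mb)\super{\mproj}{v}$, need only the plain ($z=1$) directed approximation $\mb\approx_\epsilon\mJ$ of the expander by the complete graph rather than the full unit-circle statement of Lemma~\ref{lem:bipapprox}, and exploit the finer identities $\sum_v(\super{\mq}{v})^\top\super{\mq}{v}=d\,\mI_n$ and $\frac1d\sum_v(\super{\mq}{v})^\top\mJ\,\super{\mq}{v}=\mw\mw^\top$ (and their $\super{\mproj}{v}$ analogues) to reach the intermediate bound $\frac{\epsilon}{2}\left(\|x\|^2+\|y\|^2-\|\mw^\top x\|^2-\|\mw y\|^2\right)$; your factor of $2$ then enters only through the Cauchy--Schwarz comparison $\left|x^*\mw^2x+y^*\mw^2y\right|\le\frac12\left(\|x\|^2+\|y\|^2+\|\mw^\top x\|^2+\|\mw y\|^2\right)$. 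This buys two things. First, the intermediate bound is itself a clean approximation statement strictly stronger than the theorem (and lossless in the undirected case, as you note). Second, and more substantively, it closes a step that the paper's write-up elides: after summing, the triangle inequality only yields $\sum_v\left|x^*\super{\mw}{v}x+y^*\super{\mw}{v}y\right|\ge d\left|x^*\mw^2x+y^*\mw^2y\right|$, whereas the displayed inequality in the paper asserts a coefficient of $2d$; with the honest coefficient $d$ one is left with a residual term of exactly the form your final comparison inequality handles. So that last lemma of yours is not redundant bookkeeping --- it is the step that actually makes the $2\epsilon$ bound go through.
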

\begin{proof}
Define $\mb, \mm,\widetilde{\mm},\super{\mt}{v},\super{\mw}{v},\super{\widetilde{\mw}}{v}$ as in Lemma \ref{lem:ds_expander_decomp}.  By Lemma \ref{lem:ds_expander_decomp} we have 
\[
\frac{1}{d}\cdot\sum_{v\in [n]}\super{\mw}{v}=\mw^2.
\]

Also, by Lemma \ref{lem:bipapprox} we have that for all $x,y\in\mathbb{C}^n$
\[
\left|x^*\left[\begin{array}{cc}
    \mzero & \mzero \\
    \mb-\mJ & \mzero
\end{array}\right]y\right|\leq \frac{\epsilon}{2}\cdot\left(\|x\|^2+\|y\|^2-2\cdot\left|x^*\left[\begin{array}{cc}
    \mzero & \mzero \\
   \mJ & \mzero
\end{array}\right]x+y^*\left[\begin{array}{cc}
    \mzero & \mzero \\
   \mJ & \mzero
\end{array}\right]y\right|\right).
\]
which means that we can consider $x'=\super{\mt}{v}x$ and $y'=\super{\mt}{v}y$ to conclude
\[
\left|x^{*}(\super{\widetilde{\mw}}{v}-\super{\mw}{v})y\right|\leq \frac{\epsilon}{2}\cdot\left(x^*(\super{\mt}{v})^\top \super{\mt}{v}x + y^*(\super{\mt}{v})^\top \super{\mt}{v}y-2\cdot\left|x^*\super{\mw}{v}x+y^*\super{\mw}{v}y\right|\right)
\]
Summing both sides of the inequality over all vertices $v\in[n]$, gives 
\begin{eqnarray}
  \lefteqn{ \sum_{v\in[n]}\left|x^{*}(\super{\widetilde{\mw}}{v}-\super{\mw}{v})y\right|}\nonumber\\&\leq \frac{\epsilon}{2}\cdot\left(\sum_{v\in[n]}\left(x^* (\super{\mt}{v})^\top \super{\mt}{v}x + y^*(\super{\mt}{v})^\top \super{\mt}{v}y\right)- 2\cdot\left|x^*\super{\mw}{v}x+y^*\super{\mw}{v}y\right|\right)
  \label{eq:littlebipV}
\end{eqnarray}
Applying the triangle inequality to the left-hand side of Inequality \ref{eq:littlebipV} gives
\begin{align*}
     \sum_{v\in[n]}\left|x^{*}(\super{\widetilde{\mw}}{v}-\super{\mw}{v})y\right|&\geq \left|\sum_{v\in[n]}x^{*}(\super{\widetilde{\mw}}{v}-\super{\mw}{v})y\right|\\
     &=d\cdot \left|x^*(\widetilde{\mw}-\mw^2)y\right|.
\end{align*}
Similarly, we can apply the triangle inequality to the sum of complex magnitudes on the right-hand side of Inequality \ref{eq:littlebipV} to conclude
\[
d\cdot \left|x^*(\widetilde{\mw}-\mw^2)y\right|\leq \frac{\epsilon}{2}\cdot\left(\sum_{v\in[n]}\left(x^* (\super{\mt}{v})^\top \super{\mt}{v}x + y^*(\super{\mt}{v})^\top \super{\mt}{v}y\right)-2\cdot d\cdot \left|x^*\mw^2x+y^*\mw^2y\right|\right).
\]

Finally, we show that 
\begin{equation}
\label{eq:transformation_sum}
\sum_{v\in[n]}(\super{\mt}{v})^\top \super{\mt}{v}=2\cdot d\cdot \mI_n.
\end{equation}
This will complete the proof, because our inequality will become 
\[
d\cdot \left|x^*(\widetilde{\mw}-\mw^2)y\right|\leq 2\cdot d\cdot \frac{\epsilon}{2}\cdot\left(\|x\|^2+\|y\|^2- \left|x^*\mw^2x+y^*\mw^2y\right|\right),
\]
which says that $\widetilde{\mw}\capprox_{2\cdot\epsilon}\mw^2$.
Now we prove Equation \ref{eq:transformation_sum}. Fix $v,i,j\in[n]$. We can write 
\begin{align*}
   \left(\left(\super{\mt}{v}\right)^\top \super{\mt}{v}\right)_{ij}&=\sum_{k\in[2d]}(\super{\mt}{v})^{\top}_{ik}\super{\mt}{v}_{kj}\\
   &=\sum_{k\in[2d]}\super{\mt}{v}_{ki}\super{\mt}{v}_{kj}.
\end{align*}
From the definition of $\super{\mt}{v}$, there is exactly one 1 in every row of the matrix, so when $i\neq j$, the above sum is 0. When $i=j$, the sum contributes a 1 for each edge connecting $v$ and $i$ in the graph (in either direction). Therefore $\sum_{v\in[n]}(\super{\mt}{v})^\top \super{\mt}{v}$ is always 0 off the diagonal and $2\cdot d$ on the diagonal because every vertex has exactly $2\cdot d$ edges incident to it. This confirms Equation \ref{eq:transformation_sum} and completes the proof.  
\end{proof}

\section{Approximate Pseudoinverse for Cycle-Lifted Graphs}\label{sec:squaring_solver}

Let $\mI-\mw$ be the random-walk Laplacian of a strongly connected, aperiodic, regular digraph $G$.
Our goal is to compute an accurate approximation of $(\mI-\mw)^+$.
To do this we consider the Laplacian of a cycle-lifted graph  $\mL=\mI_{2^k n}-\mc_{2^k} \otimes \mw$  for some positive integer $k$, and show how to compute an accurate approximation of  $\mL^+$, namely $\widetilde{\mL^+}$. Then we show that under some conditions, an $n\times n$ projection of $\widetilde{\mL^+}$ (specifically, $(\vec{\bone}_{2^k}\otimes \mI_n)^\top \widetilde{\mL^+} (\vec{\bone}_{2^k}\otimes \mI_n)$) gives an accurate approximation for $(\mI-\mw)^+$ (see Lemma~\ref{lem:pinv_reduce}).

To estimate $\mL^+$ we first show how to obtain a weak approximation to it. Then we show how to get an accurate approximation using Richardson iteration (see Lemma~\ref{lem:richardson-poly}). The following is the main theorem we prove in this section. In this theorem, we only give sufficient conditions for having an approximate pseudo-inverse of the cycle-lifted graphs, and discuss an actual space-efficient algorithm for computing such a matrix in Section~\ref{sec:small-space-solver}. 

\begin{theorem}\label{thm:app_pinv_ring}
Let $\mw$ be the transition matrix of a strongly connected regular digraph with $n$ vertices, $\epsilon \in (0, 1/2)$ and suppose we have a sequence of matrices $\mw = \mw_0, \ldots, \mw_{k}$, such that 
$$
\mw_{i+1} \capprox_{\epsilon/k} \mw_i^2 \quad \forall 0 < i < k
$$
and each $\mw_i$ is a transition matrix of a strongly connected regular digraph. We use $\mw_i$'s to define a sequence of $2^k n$ by $2^k n$ matrices $\mL^{(i)}$ as in Equation~\eqref{eq:L_i_fac}. Then for $\mL = \mI_{2^kn}-\mc_{2^k} \otimes \mw$
and $\widetilde{\mL} \eqdef \mL^{(k)}$, there exists a PSD matrix $\mf$ such that $\|\mI_{2^kn} - \widetilde{\mL}^+ \mL\|_\mf \leq O(k\epsilon)$ and $\mU_\mL/O(k) \preceq \mf \preceq O(2^{2k} n^2 k^5) \mU_\mL$.
\end{theorem}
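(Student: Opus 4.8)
The plan is to read $\widetilde{\mL}=\super{\mL}{k}$ as the approximate $LU$ factorization of the cycle‑lift $\mL=\mI_{2^kn}-\mc_{2^k}\otimes\mw$ produced by $k$ rounds of block elimination, one per $\mw_i$. The structural fact driving this is that eliminating the odd layers of a cycle‑lift $\mI_{2^m n}-\mc_{2^m}\otimes\mu$ is exact and essentially free: those layers form an independent set, so the eliminated block is the identity, and composing the two bipartite halves of the $2^m$‑cycle makes the Schur complement exactly $\mI_{2^{m-1}n}-\mc_{2^{m-1}}\otimes\mu^2$ — a cycle‑lift of the squared walk on a cycle of half the length. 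Equation~\eqref{eq:L_i_fac} iterates this, but at round $i$ it replaces the exact core $\mw_{i-1}^2$ by the given $\mw_i\capprox_{\epsilon/k}\mw_{i-1}^2$ before descending, so $\super{\mL}{0}=\mL$, the core of $\super{\mL}{i}$ is $\mI_{2^{k-i}n}-\mc_{2^{k-i}}\otimes\mw_i$, and $\widetilde{\mL}=\super{\mL}{k}=\mathcal{L}\cdot(\mI_{(2^k-1)n}\oplus(\mI_n-\mw_k))\cdot\mathcal{U}$ for explicit unit lower/upper triangular $\mathcal{L},\mathcal{U}$ assembled from $k$ rounds whose individual off‑diagonal blocks are negated bipartite slices of the $\mc\otimes\mw_j$ (each of spectral norm $\le 1$), so that $\|\mathcal{L}\|_2,\|\mathcal{U}\|_2\le 2^{O(k)}$. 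Since the hypotheses (with Lemma~\ref{lem:exact_eigenspace}) force each $\mw_i^2$, hence every core, to be connected, the cores are genuine connected directed Laplacians with one‑dimensional kernel and cokernel, so $\widetilde{\mL}^+$ factors blockwise as $\mathcal{U}^{-1}(\mI_{(2^k-1)n}\oplus(\mI_n-\mw_k)^+)\mathcal{L}^{-1}$ modulo the projections onto the complements of $\ker\widetilde{\mL}$ and $\ker\widetilde{\mL}^*$ (a standard Laplacian fact, cf.~\cite{CKKPPRS18}).

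For the preconditioner bound I would telescope along the chain $\mL=\super{\mL}{0},\super{\mL}{1},\dots,\super{\mL}{k}=\widetilde{\mL}$. Writing $\mathcal{U}_{\le i}$ for the triangular factor accumulated after the first $i$ rounds, $\super{\mL}{i-1}$ and $\super{\mL}{i}$ share $\mathcal{U}_{\le i}$ (and $\mathcal{L}_{\le i}$) and differ only in a core block, where one is the Laplacian $\mS'=\mI_{2^{k-i}n}-\mc_{2^{k-i}}\otimes\mw_{i-1}^2$ and the other is $\mS=\mI_{2^{k-i}n}-\mc_{2^{k-i}}\otimes\mw_i$. By the hypothesis and Corollary~\ref{cor:capprox_ring}, $\mc_{2^{k-i}}\otimes\mw_i\capprox_{\epsilon/k}\mc_{2^{k-i}}\otimes\mw_{i-1}^2$, hence $\mS\approx_{\epsilon/k}\mS'$ as directed Laplacians, and the standard estimate that directed $\delta$‑approximation yields preconditioner quality $O(\delta)$ in the symmetrized‑Laplacian norm (applied to $\mI-\mS^+\mS'=\mS^+(\mS-\mS')$, cf.~\cite{CKPPRSV17,CKKPPRS18}) gives $\|\mI-\mS^+\mS'\|_{\mU_{\mS'}}=O(\epsilon/k)$. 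Pulling this back through the shared factor via the identity $\mI-(\super{\mL}{i})^+\super{\mL}{i-1}=\mathcal{U}_{\le i}^{-1}\,(\mzero\oplus(\mI-\mS^+\mS'))\,\mathcal{U}_{\le i}$ (with $\widetilde{\mL}^+$ as above) bounds each per‑level error by $O(\epsilon/k)$ in a matching pulled‑back norm, and composing the $k$ of them via $\mI-\mz^+\mx=(\mI-\mz^+\my)+\mz^+\my\,(\mI-\my^+\mx)$ turns them into a total of $O(\epsilon)$, up to the harmless $1/(1-3\epsilon/2)$ per‑level degradation of Theorem~\ref{thm:sc_approx}; the final factor of $k$ in the claimed $O(k\epsilon)$ is the cost of reconciling the $k$ level‑specific pulled‑back norms with the single norm $\mf$.

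The matrix $\mf$ itself is exactly that reconciled norm: a suitable weighted combination over the rounds $i$ of the pullbacks $\mathcal{U}_{\le i}^{*}\,(\mzero\oplus\mU_{\mS'})\,\mathcal{U}_{\le i}$ of the symmetrized cores through the accumulated triangular factors, where the round‑$0$ term is $\mU_\mL$ itself (since $\mathcal{U}_{\le 0}=\mI$). Such $\mf$ is manifestly PSD and vanishes precisely on $\ker\mU_\mL$. The lower sandwich $\mU_\mL/O(k)\preceq\mf$ is then immediate because the round‑$0$ term already equals $\mU_\mL$ and contributes with an $\Omega(1/k)$ weight; the upper sandwich $\mf\preceq O(2^{2k}n^2k^5)\,\mU_\mL$ follows by crudely bounding each of the $k$ terms using $\|\mathcal{U}_{\le i}\|_2=2^{O(k)}$, $\|\mU_{\mS'}\|_2=O(1)$, and the additive spectral‑gap estimate $\lambda_{\min}^+(\mU_\mL)\ge 1/\poly(2^k,n)$ (the cycle‑lift's gap is at least the product of the $2^k$‑cycle's gap $\Theta(2^{-2k})$ and the additive gap $1/\poly(n)$ of $\mw$), so that each term is $\preceq\poly(2^k,n)\cdot\mU_\mL$; collecting the exponents over the $k$ rounds produces the stated polynomial.

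The main obstacle I anticipate is the middle step — carrying a directed‑spectral‑approximation guarantee about one deeply nested core block all the way up to a genuine two‑sided preconditioner bound for $\widetilde{\mL}$ against $\mL$, across $k$ conjugations by non‑unitary triangular factors. Directed spectral approximation is not preserved under arbitrary conjugation, so the norm $\mf$ must be built recursively so as to commute with the triangular factors round by round, and the delicate part is to make this recursively defined $\mf$ simultaneously (i)~validate the round‑to‑round composition with only $O(\epsilon/k)$ loss per round and (ii)~stay polynomially sandwiched by $\mU_\mL$ — which is exactly why the theorem quantifies over such an $\mf$ rather than working with $\mU_\mL$ directly. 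A secondary but necessary point is verifying that the blockwise pseudo‑inversion of the $LU$ factorization is legitimate, i.e.\ that the kernels and cokernels of the triangular factors and of the terminal core $\mI_n-\mw_k$ line up; this is where Eulerian regularity of the $\mw_i$ (giving $\mw_i\vec{\bone}=\vec{\bone}$ and $\vec{\bone}^{\top}\mw_i=\vec{\bone}^{\top}$, hence one‑dimensional kernels and cokernels throughout) is used.
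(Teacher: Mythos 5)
Your high-level picture is right (approximate LU factorization, exact Schur complements of cycle-lifts halving the cycle and squaring the walk, per-level error $\epsilon/k$ via Corollary~\ref{cor:capprox_ring}), but the core of your argument has a genuine gap, and you also miss the structural fact that makes the paper's proof go through without the machinery you are trying to build. The paper observes (its Lemma~\ref{lem:l_i_diff}) that $\super{\mL}{i+1}-\super{\mL}{i}$ is \emph{exactly} the padded core difference $\bigl(\mzero\oplus(\mc_{2^{k-i-1}}\otimes\mw_i^2-\mc_{2^{k-i-1}}\otimes\mw_{i+1})\bigr)$: conjugating by the accumulated triangular factors $\mx_j,\my_j$ with $j\le i$ leaves this block untouched, because its support only meets identity blocks of those factors. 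Consequently one can take $\mf=\frac{2}{k}\sum_{i=0}^{k}\mU_{\ms^{(i)}}$ with the \emph{unconjugated} padded cores $\ms^{(i)}$, bound $\|\mf^{+/2}(\mL-\super{\mL}{i})\mf^{+/2}\|\le\epsilon$ by summing the $k$ per-level differences directly, and get the lower sandwich $\mf\succeq\mU_\mL/O(k)$ for free from the $i=0$ term. Your proposed $\mf$, built from pullbacks $\mathcal{U}_{\le i}^{*}(\mzero\oplus\mU_{\mS'})\mathcal{U}_{\le i}$, is solving a problem that does not arise, and it makes the upper sandwich depend on $\|\mathcal{U}_{\le i}\|_2^2=2^{O(k)}$ in a way you would then have to reconcile with $\mU_\mL$.

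The substantive gap is in your telescoping step. The identity $\mI-\mz^+\mx=(\mI-\mz^+\my)+\mz^+\my(\mI-\my^+\mx)$ only yields a total of $O(\epsilon)$ (or $O(k\epsilon)$) if you can uniformly bound the intermediate operators $\mz^+\my=(\super{\mL}{i})^+\super{\mL}{i-1}$ in the $\mf$-norm, which is equivalent to a lower bound of the form $\super{\mL}{i\,\top}\mf^{+}\super{\mL}{i}\succeq\Omega(1/\mathrm{poly}(k))\,\mf$. This ``the preconditioner is not degenerate'' half of the argument is the genuinely hard part; the paper obtains it as item~\ref{itm:precond} of Lemma~\ref{lem:CumulativeErrorBlock} (inherited from Lemma~2.3 of \cite{CKKPPRS18}, with constant $1/(40k^2)$), and it is exactly where the final factor of $k$ in $O(k\epsilon)$ comes from, via Lemma~\ref{lem:lem2.6CKK} ($\|\mI-\widetilde{\mL}^+\mL\|_\mf\le\epsilon\sqrt{\gamma^{-1}}$ with $\gamma=1/(40k^2)$) --- not from ``reconciling level-specific norms'' as you suggest. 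You flag this obstacle yourself but do not supply the missing estimate, and without it the composition of $k$ per-level bounds does not close. A correct write-up should either prove this lower bound or explicitly invoke the CKKPPRS18 lemma that provides it, and should replace the conjugated norm with the direct sum of symmetrized cores.
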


In the above theorem $\widetilde{\mL}$ is defined in a way so that it has a nice LU factorization. This lets us efficiently compute $\widetilde{\mL}^+$. Below we describe how we use Theorem \ref{thm:app_pinv_ring} in our solver and then we prove the theorem in Section \ref{sect:solver}.  We give the  characterization of $\widetilde{\mL}^+$ in Proposition~\ref{prop:tilde_L_pinv_fac}. In Section~\ref{sec:small-space-solver}, for $k=O(\log n)$, we show  how to space efficiently generate the $\mw_i$'s and compute $\widetilde{\mL}^+$.
 
The following lemma shows how we can obtain an accurate solver by boosting the precision of an approximate pseudo-inverse through the well-known method of preconditioned Richardson iteration \cite{CKPPRSV17,PS13}.

\begin{lemma}\label{lem:richardson-poly}
Given matrices $\ma,\mb,\mf \in \R^{n \times n}$, such that $\mf$ is PSD, and $\|\mI - \mb \ma\|_{\mf} \leq \alpha$ for some constant $\alpha > 0$. Let $\mproj_m = \sum_{i=0}^m (\mI - \mb \ma)^i \mb$. Then 
$$
\|\mI - \mproj_m \ma\|_\mf \leq \alpha^{m+1}
$$
\end{lemma}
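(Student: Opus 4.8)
The plan is to unfold the definition of $\mproj_m$ and iterate the hypothesis $\|\mI - \mb\ma\|_\mf \leq \alpha$. First I would compute $\mI - \mproj_m \ma$ directly by substituting the definition of $\mproj_m$: writing $\mE \eqdef \mI - \mb\ma$, we have
\[
\mI - \mproj_m \ma = \mI - \left(\sum_{i=0}^m \mE^i \mb\right)\ma = \mI - \sum_{i=0}^m \mE^i (\mb\ma) = \mI - \sum_{i=0}^m \mE^i (\mI - \mE).
\]
The sum telescopes: $\sum_{i=0}^m \mE^i(\mI-\mE) = \sum_{i=0}^m (\mE^i - \mE^{i+1}) = \mI - \mE^{m+1}$. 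Hence $\mI - \mproj_m \ma = \mE^{m+1} = (\mI - \mb\ma)^{m+1}$.

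It then remains to bound $\|(\mI-\mb\ma)^{m+1}\|_\mf$. The second step is to invoke submultiplicativity of the operator semi-norm $\|\cdot\|_\mf$ induced by the PSD matrix $\mf$ (as set up in the Preliminaries, via $\|\ma\|_\mf = \|\mf^{1/2}\ma\mf^{+/2}\|_2$, for which submultiplicativity of $\|\cdot\|_2$ gives $\|\ma\mb\|_\mf \leq \|\ma\|_\mf\|\mb\|_\mf$ when the relevant images are compatible — one should check that powers of $\mE$ stay within the subspace on which $\mf$ is positive definite, which follows because $\mf^{+/2}\mf^{1/2}$ acts as the identity there and $\mf^{1/2}\mE^{m+1}\mf^{+/2} = (\mf^{1/2}\mE\mf^{+/2})^{m+1}$ provided $\mE$ preserves $\mathrm{im}(\mf)$). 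Applying this $m$ times yields $\|\mE^{m+1}\|_\mf \leq \|\mE\|_\mf^{m+1} \leq \alpha^{m+1}$, which is exactly the claimed bound.

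The only subtlety — and the step I expect to require the most care — is the submultiplicativity of $\|\cdot\|_\mf$ when $\mf$ is merely PSD rather than positive definite, since then $\|\cdot\|_\mf$ is only a semi-norm and the factorization $\|\ma\|_\mf = \|\mf^{1/2}\ma\mf^{+/2}\|_2$ requires that $\mf^{+/2}\mf^{1/2}$ restrict to the identity on the range of $\ma$. In the intended application (Theorem~\ref{thm:app_pinv_ring}) one has $\mf \succeq \mU_\mL/O(k)$ with $\ma = \mL$ and the error matrix $\mE = \mI - \widetilde{\mL}^+\mL$ supported on $\mathrm{im}(\mL) = \mathrm{im}(\mU_\mL)$, so this compatibility holds; for the clean statement of the lemma one can simply add the hypothesis (or observe it is implicit) that the relevant matrices act within $\mathrm{im}(\mf)$, or restrict attention to that subspace throughout. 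Modulo this bookkeeping the proof is the two displayed computations above: a telescoping identity followed by $m+1$ applications of submultiplicativity.
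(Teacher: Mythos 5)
Your proposal is correct and follows exactly the paper's argument: the paper's proof likewise consists of the identity $\mI - \mproj_m \ma = (\mI - \mb\ma)^{m+1}$ (which you derive explicitly via the telescoping sum) followed by submultiplicativity of $\|\cdot\|_\mf$. Your additional care about the semi-norm case when $\mf$ is merely PSD is a reasonable bookkeeping point that the paper elides.
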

\begin{proof}
We have $\mI - \mproj_m \ma = (\mI - \mb \ma)^{m+1}$, and then the proof follows by the submultiplicity of $\|\cdot\|_\mf$.
\end{proof}

Since we can obtain a reasonably good approximate pseudo-inverse for $\mL$ via Theorem~\ref{thm:app_pinv_ring} and boost the quality of that approximation with Lemma~\ref{lem:richardson-poly}, we can ultimately get a very accurate approximate pseudo-inverse. This is stated rigorously in the following corollary.

\begin{corollary}
Given a transition matrix $\mw$ of a regular digraph with $n$ vertices, and $\delta \in (0,1/2)$. Let $\mL = \mI_{2^kn}-\mc_{2^k} \otimes \mw$, and let $\widetilde{\mL}^+$ be the approximate pseudo-inverse obtained from Theorem~\ref{thm:app_pinv_ring} by setting $\epsilon =\frac{1}{c k}$ for a large enough constant $c$. For $m = O\left(k + \log n + \log(\frac{1}{\delta})\right)$, and $\mproj_m = \sum_{i=0}^m (\mI - \widetilde{\mL}^+ \mL)^i \widetilde{\mL}^+$, we have
$$
\|\mI - \mproj_{m} \mL\|_{\mU_\mL} \leq \delta
$$
\end{corollary}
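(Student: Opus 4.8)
The plan is to feed the weak approximate pseudoinverse produced by Theorem~\ref{thm:app_pinv_ring} into the preconditioned Richardson iteration of Lemma~\ref{lem:richardson-poly}, and then translate the resulting error guarantee from the $\|\cdot\|_\mf$ seminorm into the $\|\cdot\|_{\mU_\mL}$ seminorm using the two-sided comparison between $\mf$ and $\mU_\mL$ that Theorem~\ref{thm:app_pinv_ring} also supplies.

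Concretely, invoke Theorem~\ref{thm:app_pinv_ring} with $\epsilon = 1/(ck)$. This yields a PSD matrix $\mf$ with $\|\mI_{2^kn} - \widetilde{\mL}^+\mL\|_\mf \le O(k\epsilon) = O(1/c)$ and $\mU_\mL/O(k) \preceq \mf \preceq O(2^{2k}n^2k^5)\cdot\mU_\mL$. Choosing the constant $c$ large enough makes $\alpha \eqdef \|\mI_{2^kn} - \widetilde{\mL}^+\mL\|_\mf \le 1/2$, so Lemma~\ref{lem:richardson-poly}, applied with $\ma = \mL$, $\mb = \widetilde{\mL}^+$, and the matrix $\mf$ above, gives
$$
\|\mI_{2^kn} - \mproj_m\mL\|_\mf \;\le\; \alpha^{m+1} \;\le\; 2^{-(m+1)}.
$$

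It remains to convert this into a bound in the $\mU_\mL$-seminorm. The sandwich $a\,\mU_\mL \preceq \mf \preceq b\,\mU_\mL$ with $a = 1/O(k)$ and $b = O(2^{2k}n^2k^5)$ forces $\ker(\mf) = \ker(\mU_\mL)$; writing $\|M\|_\mh = \|\mh^{1/2}M\mh^{+/2}\|_2$ and inserting the projection $\mf^{+/2}\mf^{1/2}$ onto this common image, one gets $\|M\|_{\mU_\mL} \le \|\mU_\mL^{1/2}\mf^{+/2}\|_2\cdot\|M\|_\mf\cdot\|\mf^{1/2}\mU_\mL^{+/2}\|_2 \le \sqrt{b/a}\cdot\|M\|_\mf = O(2^k n k^3)\cdot\|M\|_\mf$ for every matrix $M$. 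Taking $M = \mI_{2^kn} - \mproj_m\mL$ gives $\|\mI_{2^kn} - \mproj_m\mL\|_{\mU_\mL} \le O(2^k n k^3)\cdot 2^{-(m+1)}$, and since $3\log k = O(k)$, choosing $m = O(k + \log n + \log(1/\delta))$ with a sufficiently large hidden constant drives the right-hand side below $\delta$, which is the claim.

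I do not anticipate a genuine obstacle: every ingredient is already in place, and the only points that need care are (i) fixing $c$ so that the Richardson contraction factor $\alpha$ is bounded away from $1$, (ii) verifying that the passage from the $\mf$-seminorm to the $\mU_\mL$-seminorm is legitimate — which it is, precisely because the two-sided bound of Theorem~\ref{thm:app_pinv_ring} makes $\mf$ and $\mU_\mL$ share a kernel — and (iii) bookkeeping the $\poly(2^k, n, k)$ loss incurred by that comparison so that it is absorbed into the additive $k + \log n + \log(1/\delta)$ term in the bound on $m$.
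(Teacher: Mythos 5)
Your proposal is correct and follows the same route as the paper's proof: fix $c$ so the Richardson contraction factor is at most $1/2$, apply Lemma~\ref{lem:richardson-poly} to drive the $\|\cdot\|_\mf$ error down to $\delta/\poly(2^k,n)$, and then use the two-sided bound $\mU_\mL/O(k)\preceq\mf\preceq O(2^{2k}n^2k^5)\mU_\mL$ to transfer the guarantee to the $\mU_\mL$-seminorm. Your write-up is in fact somewhat more explicit than the paper's about how the norm comparison incurs the $\sqrt{b/a}$ factor, but the argument is the same.
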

\begin{proof}
Note that with the choice of $\epsilon$, we have $\|\mI - \widetilde{\mL}^+ \mL\|_\mf \leq \frac{1}{2}$. Therefore, by Lemma~\ref{lem:richardson-poly} we get $\|\mI - \mproj_m \mL\|_\mf \leq \frac{\delta}{poly(2^k, n)}$. Since $\mU_\mL/O(k) \preceq \mf \preceq O(2^{2k} n^2 k^5) \mU_\mL$, this implies $\|\mI - \mproj_m \mL\|_{\mU_\mL} \leq \delta$.
\end{proof}

In the lemma below, we show how to obtain an approximate pseudo-inverse of the original Laplacian system, given an approximate pseudo-inverse for the cycle-lifted graph.  

\begin{lemma}\label{lem:pinv_reduce}
    Given a matrix $\mw \in \R_{\geq 0}^{n\times n}$ with $\|\mw\| \leq 1$ and an arbitrary integer $\ell > 0$, let $\mL_{W} = \mI_n - \mw$, and $\mL_{C} = \mI_{\ell n} - \mc_\ell \otimes \mw$. Let $\mb_C \in \R^{\ell n\times \ell n}$ such that 
    $$
        \|\mI_{\ell n} - \mb_C \mL_{C}\|_{\mU_{\mL_{C}}} \leq \delta.
    $$
    Let $\mb_{W} = \frac{1}{\ell}(\vec{\bone}_\ell \otimes \mI_n)^\top \mb_{C} (\vec{\bone}_\ell \otimes \mI_n)$; then
    $$
        \|\mI_n - \mb_{W} \mL_{W}\|_{\mU_{\mL_{W}}} \leq \delta.
    $$
\end{lemma}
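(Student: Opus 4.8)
The plan is to realize the projection $\mb_W=\tfrac{1}{\ell}(\vec{\bone}_\ell\otimes\mI_n)^\top\mb_C(\vec{\bone}_\ell\otimes\mI_n)$ as a two‑sided conjugation and then argue that conjugating by the normalized embedding cannot increase the relevant operator seminorm. Write $\mq\eqdef\vec{\bone}_\ell\otimes\mI_n\in\R^{\ell n\times n}$ and $\hat{\mq}\eqdef\mq/\sqrt{\ell}$, so that $\mb_W=\tfrac{1}{\ell}\mq^\top\mb_C\mq=\hat{\mq}^\top\mb_C\hat{\mq}$. Using that the directed‑cycle matrix is doubly stochastic ($\mc_\ell\vec{\bone}_\ell=\vec{\bone}_\ell$ and $\vec{\bone}_\ell^\top\mc_\ell=\vec{\bone}_\ell^\top$) together with the Kronecker mixed‑product rule (Proposition~\ref{prop:abcd}), I would first record the intertwining identities $\mL_C\mq=\mq\mL_W$ and $\mq^\top\mL_C=\mL_W\mq^\top$. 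Combined with $\mq^\top\mq=(\vec{\bone}_\ell^\top\vec{\bone}_\ell)\otimes\mI_n=\ell\mI_n$, these give
\[
\mI_n-\mb_W\mL_W=\tfrac{1}{\ell}\mq^\top\mq-\tfrac{1}{\ell}\mq^\top\mb_C\mq\mL_W=\tfrac{1}{\ell}\mq^\top(\mI_{\ell n}-\mb_C\mL_C)\mq=\hat{\mq}^\top(\mI_{\ell n}-\mb_C\mL_C)\hat{\mq},
\]
where I moved $\mL_W$ inside using $\mq\mL_W=\mL_C\mq$.

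Next I would establish two seminorm relations. By Proposition~\ref{prop:abcd}, $\mq^\top\mU_{\mL_C}\mq=\ell\,\mU_{\mL_W}$ (the nontrivial terms being $\vec{\bone}_\ell^\top\mc_\ell\vec{\bone}_\ell\otimes\mw=\ell\mw$ and its transpose, and $\mU_{\mL_W}=\mI_n-\tfrac12(\mw+\mw^\top)$). Hence $\|\hat{\mq}x\|_{\mU_{\mL_C}}^2=\tfrac{1}{\ell}x^\top\mq^\top\mU_{\mL_C}\mq x=\|x\|_{\mU_{\mL_W}}^2$ for every $x\in\R^n$, i.e. $\hat{\mq}$ is an exact isometry between the two seminormed spaces. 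Dually, a direct Kronecker computation gives $\hat{\mq}\,\mU_{\mL_W}\,\hat{\mq}^\top=\mproj\,\mU_{\mL_C}\,\mproj=\big(\tfrac{1}{\ell}\vec{\bone}_\ell\vec{\bone}_\ell^\top\big)\otimes\mU_{\mL_W}$, where $\mproj\eqdef\tfrac{1}{\ell}\mq\mq^\top=\big(\tfrac{1}{\ell}\vec{\bone}_\ell\vec{\bone}_\ell^\top\big)\otimes\mI_n$ is the orthogonal projection onto the image of $\mq$.

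I expect the main obstacle to be the next step: showing $\hat{\mq}^\top$ is a contraction, i.e. the PSD inequality $\mproj\,\mU_{\mL_C}\,\mproj\preceq\mU_{\mL_C}$, equivalently $\|\hat{\mq}^\top y\|_{\mU_{\mL_W}}\le\|y\|_{\mU_{\mL_C}}$ for all $y$. Care is needed because $\mproj\ma\mproj\preceq\ma$ is false for a generic orthogonal projection $\mproj$ and generic PSD $\ma$. What rescues us is that $\mproj$ commutes with $\mU_{\mL_C}$: since $\mc_\ell\vec{\bone}_\ell=\vec{\bone}_\ell$ and $\vec{\bone}_\ell^\top\mc_\ell=\vec{\bone}_\ell^\top$, the matrix $\tfrac{1}{\ell}\vec{\bone}_\ell\vec{\bone}_\ell^\top$ commutes with $\mc_\ell$ (both products equal it), so $\mproj$ commutes with $\mc_\ell\otimes\mw$ and with $\mc_\ell^\top\otimes\mw^\top$, hence with $\mU_{\mL_C}=\mI_{\ell n}-\tfrac{1}{2}(\mc_\ell\otimes\mw+\mc_\ell^\top\otimes\mw^\top)$. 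Therefore $\mproj\mU_{\mL_C}\mproj=\mproj^2\mU_{\mL_C}=\mproj\mU_{\mL_C}$ and
\[
\mU_{\mL_C}-\mproj\mU_{\mL_C}=(\mI_{\ell n}-\mproj)\,\mU_{\mL_C}\,(\mI_{\ell n}-\mproj)\succeq 0
\]
by Proposition~\ref{prop:psd_mult_sides}, once $\mU_{\mL_C}\succeq0$ is known; and $\mU_{\mL_C}\succeq 0$ follows from $\|\mw\|\le1$ (spectral norm), which forces $\|\mc_\ell\otimes\mw\|_2\le1$ and hence $\Real(z^*(\mc_\ell\otimes\mw)z)\le\|z\|^2$ for all $z$. (Alternatively, diagonalizing $\mc_\ell$ in the discrete Fourier basis reduces the claim to the pointwise inequalities $\mI_n-\mU_{\omega\mw}\succeq0$ over $\ell$-th roots of unity $\omega$, which again follow from $\|\mw\|\le1$.)

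Finally, chaining the pieces, for every $x\in\R^n$:
\begin{align*}
\|(\mI_n-\mb_W\mL_W)x\|_{\mU_{\mL_W}}&=\|\hat{\mq}^\top(\mI_{\ell n}-\mb_C\mL_C)\hat{\mq}x\|_{\mU_{\mL_W}}\le\|(\mI_{\ell n}-\mb_C\mL_C)\hat{\mq}x\|_{\mU_{\mL_C}}\\
&\le\delta\,\|\hat{\mq}x\|_{\mU_{\mL_C}}=\delta\,\|x\|_{\mU_{\mL_W}},
\end{align*}
using the conjugation identity, then the contraction of $\hat{\mq}^\top$, then the hypothesis $\|\mI_{\ell n}-\mb_C\mL_C\|_{\mU_{\mL_C}}\le\delta$, then the isometry of $\hat{\mq}$; this is exactly the claimed bound $\|\mI_n-\mb_W\mL_W\|_{\mU_{\mL_W}}\le\delta$. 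The one bookkeeping point is vectors $x$ with $\|x\|_{\mU_{\mL_W}}=0$: for the hypothesis to be finite $\mI_{\ell n}-\mb_C\mL_C$ must preserve $\ker(\mU_{\mL_C})$, and since $\hat{\mq}$ carries $\ker(\mU_{\mL_W})$ into $\ker(\mU_{\mL_C})$ (by the isometry) and $\hat{\mq}^\top$ back (by the contraction), such $x$ contribute $0$ to both sides and may be ignored.
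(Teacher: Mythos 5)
Your proof is correct and follows essentially the same route as the paper's: both hinge on the intertwining identities $\mL_C(\vec{\bone}_\ell\otimes\mI_n)=(\vec{\bone}_\ell\otimes\mI_n)\mL_W$, the fact that the averaging projection $\mproj$ commutes with $\mL_C$ and $\mU_{\mL_C}$, and the resulting inequality $\mproj\,\mU_{\mL_C}\,\mproj\preceq\mU_{\mL_C}$ (which the paper isolates as Lemma~\ref{lem:psd_ortho_proj} and you re-derive inline). Your packaging of the argument as ``$\hat{\mq}$ is an isometry, $\hat{\mq}^\top$ is a contraction'' is just a vector-level restatement of the paper's PSD-ordering manipulations, and your explicit checks of $\mU_{\mL_C}\succeq 0$ and of the kernel bookkeeping are welcome but not a different method.
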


\begin{proof}
$\|\mI_n - \mb_{W} \mL_{W}\|_{\mU_{\mL_{W}}} \leq \delta$ is equivalent to $$
 (\mI_n - \mb_W \mL_W)^\top \mU_{\mL_W} (\mI_n - \mb_W \mL_W) \preceq \delta^2 \mU_{\mL_W},
$$
For the RHS we have $\delta^2 \cdot \frac{1}{\ell} (\vec{\bone}_\ell\otimes \mI_n)^\top \mU_{\mL_{C}} (\vec{\bone}_\ell \otimes \mI_n) = \delta^2 \mU_{\mL_W}$. 
For the LHS, let $\Pi =\frac{1}{\ell} (\vec{\bone}_\ell \otimes \mI_n)(\vec{\bone}_\ell \otimes \mI_n)^\top = (\frac{\vec{\bone}_\ell \vec{\bone}_\ell^\top}{\ell})\otimes \mI_n$, then we get
$$
\frac{1}{\ell} (\vec{\bone}_\ell \otimes \mI_n)^\top (\mI_{\ell n} - \mb_{C}\Pi\mL_{C})^\top \Pi \mU_{\mL_{C}} \Pi(\mI_{\ell n} - \mb_{C}\Pi\mL_{C}) (\vec{\bone}_\ell \otimes \mI) = (\mI_n - \mb_W \mL_W)^\top \mU_{\mL_W} (\mI_n - \mb_W \mL_W).
$$
Thus it is sufficient to show,
$$
\frac{1}{\ell} (\vec{\bone}_\ell \otimes \mI_n)^\top (\mI_{\ell n} - \mb_{C}\Pi\mL_{C})^\top \Pi \mU_{\mL_{C}} \Pi(\mI_{\ell n} - \mb_{C}\Pi\mL_{C}) (\vec{\bone}_\ell \otimes \mI)
\preceq \delta^2 \cdot \frac{1}{\ell} (\vec{\bone}_\ell\otimes \mI_n)^\top \mU_{\mL_{C}} (\vec{\bone}_\ell\otimes \mI_n).
$$
Note that $\Pi$ is an orthogonal projection, and $\Pi \mL_C = \mL_C \Pi$. Thus by the lemma assumption and Lemma~\ref{lem:psd_ortho_proj}, we have,
$$
(\mI_{\ell n} - \mb_{C}\mL_{C})^\top \Pi \mU_{\mL_{C}} (\mI_{\ell n} - \mb_{C}\mL_{C})
\preceq (\mI_{\ell n} - \mb_{C}\mL_{C})^\top \mU_{\mL_{C}} (\mI_{\ell n} - \mb_{C}\mL_{C})
\preceq \delta^2 \mU_{\mL_{C}}.
$$
Since $\Pi^2 = \Pi$, and $\Pi$ commutes with $\mL_C$ and $\mU_{\mL_C}$, we get
$$
(\mI_{\ell n} - \mb_{C}\Pi\mL_{C})^\top \Pi \mU_{\mL_{C}} \Pi(\mI_{\ell n} - \mb_{C}\Pi\mL_{C})
\preceq \delta^2 \mU_{\mL_{C}}.
$$
Now by Proposition~\ref{prop:psd_mult_sides}, we get
\begin{equation}
\frac{1}{\ell} (\vec{\bone}_\ell \otimes \mI_n)^\top (\mI_{\ell n} - \mb_{C}\Pi\mL_{C})^\top \Pi \mU_{\mL_{C}} \Pi(\mI_{\ell n} - \mb_{C}\Pi\mL_{C}) (\vec{\bone}_\ell \otimes \mI)
\preceq \delta^2 \cdot \frac{1}{\ell} (\vec{\bone}_\ell\otimes \mI_n)^\top \mU_{\mL_{C}} (\vec{\bone}_\ell\otimes \mI_n).
\end{equation}
which completes the proof.
\end{proof}

\subsection{Approximate Pseudoinverse of Cycle-Lifted Graphs}
\label{sect:solver}

To get an approximate pseudo-inverse of $\mL$ we first compute an approximate LU factorization of it. For any matrix $\mm\in\R^{n\times n}$, if $F,C$ partition $[n]$ and $\mm_{FF}$ is invertible we can write $\mm$ as the product of a lower triangular matrix, a block diagonal matrix, and an upper triangular matrix:
\begin{align}\label{eq:schur_decomp}
\mm
&=\left[
\begin{array}{cc}
    \mm_{FF} & \mm_{FC} \\
    \mm_{CF} & \mm_{CC}
\end{array}
\right]\nonumber\\
&=
\left[
\begin{array}{cc}
    \mI& 0 \\
    \mm_{CF}\mm^{-1}_{FF} & \mI
\end{array}
\right]\left[
\begin{array}{cc}
    \mm_{FF}& 0 \\
    0 & \mm_{CC}-\mm_{CF}\mm_{FF}^{-1}\mm_{FC}
\end{array}
\right]\left[
\begin{array}{cc}
    \mI & \mm^{-1}_{FF}\mm_{FC} \\
    0 & \mI
\end{array}
\right]
\end{align}
Note that 
$\mm_{CC}-\mm_{CF}\mm_{FF}^{-1}\mm_{FC}$ is the Schur complement of $\mm$ onto the set $C$. For an invertible matrix $\mm$, the above factorization gives a formula to compute $\mm^{-1}$:
\[
\mm^{-1}=\left[
\begin{array}{cc}
    \mI & -\mm^{-1}_{FF}\mm_{FC} \\
    0 & \mI
\end{array}
\right]\left[
\begin{array}{cc}
    \mm_{FF}^{-1}& 0 \\
    0 & \schur(\mm, C)^{-1}
\end{array}
\right]
\left[
\begin{array}{cc}
    \mI& 0 \\
    -\mm_{CF}\mm^{-1}_{FF} & \mI
\end{array}
\right]
\]
The lower and upper triangular parts are easy to invert. Therefore, the above formula reduces inverting $\mm$ to inverting $\mm_{FF}$ and $\schur(\mm, C)$. This approach has been used in many recent time efficient algorithms for solving both symmetric and asymmetric diagonally dominant systems \cite{CKKPPRS18,kyng2016sparsified}. We use the same approach of LU factorization to compute an approximate pseudo-inverse of $\mL$. 

Without loss of generality we use the following ordering of rows and columns for $\mc_{2^k}$. Let $\mc_1 = [1]$, and for $k>0$ let

$$
\mc_{2^k} = \begin{bmatrix}
0 & \mc_{2^{k-1}}\\
\mI_{2^{k-1}} & 0
\end{bmatrix}.
$$
Let $H$ be the set of coordinates $\{2^{k-1}n+1, 2^{k-1}n+2, \ldots, 2^{k}n\}$, so that the cycle alternates between $H$ and $H^c$. We get the following nice characterization for the Schur complement of $\mI_{2^k n}-\mc_{2^k} \otimes \mw$ on to set $H$.
$$
\schur{(\mI_{2^k n}-\mc_{2^k} \otimes \mw, H)} = \mI_{2^{k-1}n} - (\mI_{2^{k-1}}\otimes \mw) (\mc_{2^{k-1}}\otimes \mw) = \mI_{2^{k-1}n} -  (\mc_{2^{k-1}}\otimes \mw^2).
$$
Using the above relation, and \eqref{eq:schur_decomp} we can get the following factorization for $\mL$.
\begin{equation}
\label{eq:lu_recursion}
\mL=
\begin{bmatrix}
    \mI_{2^{k-1} n}& 0 \\
    -\mI_{2^{k-1}} \otimes \mw & \mI_{2^{k-1} n}
\end{bmatrix}
\left[
\begin{array}{cc}
    \mI_{2^{k-1} n}& 0 \\
    0 & \mI_{2^{k-1} n}- \mc_{2^{k-1}} \otimes \mw^2
\end{array}
\right]\left[
\begin{array}{cc}
    \mI_{2^{k-1} n}& -\mc_{2^{k-1}} \otimes \mw \\
    0 & \mI_{2^{k-1}  n}
\end{array}
\right]
\end{equation}

Note that the lower right block of the middle matrix is the Laplacian of cycle-lifted graph with cycle length $2^{k-1}$ and adjacency matrix $\mw^2$.
Applying this recursion one more time by short-cutting every other layer in the smaller cycle-lifted graph leads to a cycle-lifted graph of length $2^{k-2}$ and adjacency matrix $\mw^4$.
To get an approximate LU factorization, we repeatedly apply this recurrence and replace all occurrences of powers of $\mw$ with appropriate approximations of them (see Theorem~\ref{thm:app_pinv_ring}).
For $j\in\{1,\ldots, k\}$ let $\mw_j$'s be defined as in Theorem~\ref{thm:app_pinv_ring}.
With $\mL^{(0)} = \mL$, we denote by $\mL^{(i)}$ the matrix obtained after applying $i$ steps of recursion.  For example, we have
$$
\mL^{(1)} = \begin{bmatrix}
    \mI_{2^{k-1} n}& 0 \\
    -\mI_{2^{k-1}} \otimes \mw & \mI_{2^{k-1} n}
\end{bmatrix}
\left[
\begin{array}{cc}
    \mI_{2^{k-1} n}& 0 \\
    0 & \mI_{2^{k-1} n}- \mc_{2^{k-1}} \otimes \mw_1
\end{array}
\right]\left[
\begin{array}{cc}
    \mI_{2^{k-1} n}& -\mc_{2^{k-1}} \otimes \mw \\
    0 & \mI_{2^{k-1}  n}
\end{array}
\right].
$$
More generally, for $1 \leq i \leq k$ we get,
\begin{equation}\label{eq:L_i_fac}
\mL^{(i)} = \mx_1 \cdots \mx_{i}
\begin{bmatrix}
    \mI_{(2^{k}-2^{k-i})n} & 0 \\
    0 & \mI_{2^{k-i}n} - \mc_{2^{k-i}} \otimes \mw_{i}
\end{bmatrix}
\my_i \cdots \my_1
\end{equation}
where $\mx_j$'s and $\my_j$'s are lower and upper triangular matrices, respectively. Specifically, we have
\begin{equation}
\mx_j = \begin{bmatrix}
     \mI_{(2^{k}-2^{k-j+1})n} & 0 & 0 \\
     0 & \mI_{2^{k-j}n} & 0 \\ 
    0 & -\mI_{2^{k-j}} \otimes \mw_{j-1} & \mI_{2^{k-j}n}
\end{bmatrix},
\my_j = \begin{bmatrix}
     \mI_{(2^{k}-2^{k-j+1})n} & 0 & 0 \\
     0 & \mI_{2^{k-j}n} & -\mc_{2^{k-j}} \otimes \mw_{j-1} \\
    0& 0 & \mI_{2^{k-j}n}
\end{bmatrix}.
\end{equation}

Given the above factorization for $\mL^{(k)}$ it is easy to get one for ${\mL^{(k)}}^+$.
We use the following lemma from \cite{CKKPPRS18} to give a factorization of ${\mL^{(k)}}^{+}$.

\begin{lemma}[CKKPPRS18 Lemma C.3]
 \label{lem:pinv_of_prod}
  Consider real matrices
  $\ma \in \R^{m \times m}$,
  $\mb \in \R^{m \times n}$,
  and $ \mc \in \R^{n \times n}$,
  where $\ma$ and $\mc$ are invertible.
  Let $\mm = \ma \mb \mc$, 
  then $\mm^{+} = \mproj_\mm C^{-1} B^{+} A^{-1} \mproj_{\mm^{\top}}$, where $\mproj_\mm$ defines the orthogonal projection into image of $\mm$.
 \end{lemma}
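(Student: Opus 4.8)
The plan is to verify directly that the claimed matrix $\mn \eqdef \mproj_{\mm^\top}\mc^{-1}\mb^{+}\ma^{-1}\mproj_{\mm}$ satisfies the four Moore--Penrose conditions, which uniquely characterize $\mm^{+}$: namely $\mm\mn\mm = \mm$, $\mn\mm\mn = \mn$, $(\mm\mn)^{\top} = \mm\mn$, and $(\mn\mm)^{\top} = \mn\mm$. First I would reduce all four to the two matrix identities $\mm\mn = \mproj_{\mm}$ and $\mn\mm = \mproj_{\mm^\top}$. Given these, conditions (iii) and (iv) are immediate because orthogonal projections are symmetric; condition (i) follows from $\mproj_{\mm}\mm = \mm$ (every column of $\mm$ lies in $\im\mm$); and condition (ii) follows from $\mproj_{\mm^\top}\mn = \mn$, which holds because $\mn$ already begins with the idempotent $\mproj_{\mm^\top}$.

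The remaining content is the two projection identities, and this is exactly where the invertibility of $\ma$ and $\mc$ is used. I would first record the elementary facts needed: since $\ma,\mc$ are invertible, $\im(\mm) = \ma\cdot\im(\mb)$ and $\ker(\mm) = \mc^{-1}\ker(\mb)$; the standard relations $\mb\mb^{+} = \mproj_{\im\mb}$ and $\mb^{+}\mb = \mproj_{\im\mb^{\top}}$ (symmetric idempotents); and $\mm\mproj_{\mm^\top} = \mm$ together with $\mproj_{\mm}\mm = \mm$ (the orthogonal projection onto $\im\mm^{\top} = (\ker\mm)^{\perp}$ does not change $\mm$ on the right, and likewise on the left). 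The crucial observation is an \emph{absorption} principle for oblique projections: if $\Pi$ is any idempotent with $\im\Pi = S$ then $\Pi\,\mproj_{S} = \mproj_{S}$ (because $\Pi$ fixes $S \supseteq \im\mproj_{S}$); and if $\Pi$ is any idempotent with $\ker\Pi = S^{\perp}$ then $\mproj_{S}\,\Pi = \mproj_{S}$ (because $\mathrm{id}-\Pi$ has image $\ker\Pi = S^{\perp}$, which $\mproj_{S}$ annihilates).

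With these in hand the computation is short. For $\mm\mn$: using $\mm\mproj_{\mm^\top} = \mm$ one gets $\mm\mn = \ma\mb\mc\mc^{-1}\mb^{+}\ma^{-1}\mproj_{\mm} = \ma(\mb\mb^{+})\ma^{-1}\mproj_{\mm} = \bigl(\ma\,\mproj_{\im\mb}\,\ma^{-1}\bigr)\mproj_{\mm}$, and $\ma\,\mproj_{\im\mb}\,\ma^{-1}$ is an idempotent whose image is $\ma\cdot\im\mb = \im\mm$, so the first form of the absorption principle gives $\mm\mn = \mproj_{\mm}$. For $\mn\mm$: since $\im(\ma\mb) = \im\mm$ we have $\mproj_{\mm}\ma\mb = \ma\mb$, hence $\ma^{-1}\mproj_{\mm}(\ma\mb\mc) = \mb\mc$, so $\mn\mm = \mproj_{\mm^\top}\mc^{-1}(\mb^{+}\mb)\mc = \mproj_{\mm^\top}\bigl(\mc^{-1}\mproj_{\im\mb^{\top}}\mc\bigr)$; the idempotent $\mc^{-1}\mproj_{\im\mb^{\top}}\mc$ has kernel $\mc^{-1}(\im\mb^{\top})^{\perp} = \mc^{-1}\ker\mb = \ker\mm = (\im\mm^{\top})^{\perp}$, so the second form of the absorption principle gives $\mn\mm = \mproj_{\mm^\top}$. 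That finishes the verification, and hence $\mn = \mm^{+}$.

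The one place that needs care, and which I expect to be the main (minor) obstacle, is the bookkeeping in the absorption step: one must keep straight that the conjugated projection $\ma\,\mproj_{\im\mb}\,\ma^{-1}$ is the one with a controlled \emph{image}, while $\mc^{-1}\mproj_{\im\mb^{\top}}\mc$ is the one with a controlled \emph{kernel}, and that these match $\im\mm$ and $(\im\mm^{\top})^{\perp}$ respectively; once this is set up, everything else is routine linear algebra. (In the regime in which this lemma is applied here, $\mm$ is a connected Eulerian Laplacian, so $\im\mm = \im\mm^{\top}$ and $\mproj_{\mm} = \mproj_{\mm^\top} = \mI - \tfrac1n\vec{\bone}\vec{\bone}^{\top}$, which makes the placement of the two projection factors immaterial and the statement reads exactly as written.)
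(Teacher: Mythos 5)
The paper offers no proof of this lemma at all---it is imported verbatim from CKKPPRS18 (their Lemma~C.3)---so there is no in-paper argument to compare against; your verification is self-contained and correct. The reduction to the two identities $\mm\mn=\mproj_{\mm}$ and $\mn\mm=\mproj_{\mm^{\top}}$ does dispatch all four Moore--Penrose conditions as you say, and both computations check out: $\mm\mproj_{\mm^{\top}}=\mm$ collapses $\mm\mn$ to $\ma(\mb\mb^{+})\ma^{-1}\mproj_{\mm}$, whose leading factor is an idempotent with image $\ma\cdot\im(\mb)=\im(\mm)$ and hence absorbs into $\mproj_{\mm}$; symmetrically, $\mproj_{\mm}\ma\mb=\ma\mb$ collapses $\mn\mm$ to $\mproj_{\mm^{\top}}\bigl(\mc^{-1}(\mb^{+}\mb)\mc\bigr)$, whose trailing factor is an idempotent with kernel $\mc^{-1}\ker(\mb)=\ker(\mm)=(\im(\mm^{\top}))^{\perp}$ and is therefore absorbed by $\mproj_{\mm^{\top}}$. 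The one substantive point is the order of the two projections, which you flag only in passing: what you actually prove is $\mm^{+}=\mproj_{\mm^{\top}}\mc^{-1}\mb^{+}\ma^{-1}\mproj_{\mm}$, whereas the statement (and its use in Proposition~\ref{prop:tilde_L_pinv_fac}) puts $\mproj_{\mm}$ on the left. The stated order is wrong in general: for rectangular $\mb$ it is not even dimensionally consistent, and for square matrices taking $\ma=\mc=\mI_n$ and a rank-one nilpotent $\mb$ gives $\mproj_{\mb}\mb^{+}\mproj_{\mb^{\top}}=0\neq\mb^{+}$. It should be read as a transposition typo, harmless in this paper because every matrix to which the lemma is applied is the random-walk Laplacian of a strongly connected regular Eulerian digraph, for which $\im(\mm)=\im(\mm^{\top})=\vec{\bone}^{\perp}$ and hence $\mproj_{\mm}=\mproj_{\mm^{\top}}$---exactly the reconciliation your closing remark gives. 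It would be worth stating that caveat up front rather than as an afterthought, but the mathematics is complete.
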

 
 The following proposition gives the characterization of ${\mL^{(k)}}^+$ which we later use for computing ${\mL^{(k)}}^+$ in small space. Note that inverting the $\mx_j$'s, and $\my_j$'s is easy as they are lower and upper triangular matrices.
\begin{proposition}\label{prop:tilde_L_pinv_fac}
Given $\mL^{(k)}$ defined by Equation~\eqref{eq:L_i_fac}, we have
\begin{equation}\label{eq:l_k_pinv}
{\mL^{(k)}}^{+} = 
\mproj_{\mL^{(k)}} \my_1^{-1} \cdots \my_k^{-1}
\begin{bmatrix}
    \mI_{(2^{k}-1)n} & 0 \\
    0 & \mI_{n} - \mw_{k}
\end{bmatrix}^{+}
\mx_k^{-1} \cdots \mx_{1}^{-1}
\mproj_{{\mL^{(k)}}^\top}
\end{equation}
where 
$$
\mx_j^{-1} = \begin{bmatrix}
     \mI_{(2^{k}-2^{k-j+1})n} & 0 & 0 \\
     0 & \mI_{2^{k-j}n} & 0 \\ 
    0 & \mI_{2^{k-j}} \otimes \mw_{j-1} & \mI_{2^{k-j}n}
\end{bmatrix},
\my_j^{-1} = \begin{bmatrix}
     \mI_{(2^{k}-2^{k-j+1})n} & 0 & 0 \\
     0 & \mI_{2^{k-j}n} & \mc_{2^{k-j}} \otimes \mw_{j-1} \\
    0& 0 & \mI_{2^{k-j}n}
\end{bmatrix}.
$$
\end{proposition}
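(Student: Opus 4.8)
The plan is to recognize Equation~\eqref{eq:L_i_fac} at $i=k$ as a three-term product $\mL^{(k)} = \ma\,\mb\,\mc$ and then invoke Lemma~\ref{lem:pinv_of_prod}. Concretely, I would set $\ma \eqdef \mx_1 \cdots \mx_k$, $\mc \eqdef \my_k \cdots \my_1$, and
\[
\mb \eqdef \begin{bmatrix} \mI_{(2^{k}-1)n} & 0 \\ 0 & \mI_{n} - \mw_{k} \end{bmatrix},
\]
noting that this middle factor is exactly the bracketed matrix in Equation~\eqref{eq:L_i_fac} with $i=k$: there $2^{k-i} = 2^{0} = 1$, the upper-left block has size $(2^{k}-2^{0})n = (2^{k}-1)n$, and since $\mc_1 = [1]$ the lower-right block is $\mI_n - \mc_1 \otimes \mw_k = \mI_n - \mw_k$.

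The first step is to observe that $\ma$ and $\mc$ are invertible. Each $\mx_j$ is block lower triangular with all diagonal blocks equal to identity matrices, so it is unipotent and invertible; symmetrically each $\my_j$ is block upper triangular with identity diagonal blocks. A product of invertible matrices is invertible, so $\ma$ and $\mc$ are invertible and Lemma~\ref{lem:pinv_of_prod} applies with $\mm = \mL^{(k)}$, yielding
\[
{\mL^{(k)}}^{+} = \mproj_{\mL^{(k)}}\, \mc^{-1}\, \mb^{+}\, \ma^{-1}\, \mproj_{{\mL^{(k)}}^{\top}}.
\]
Substituting $\mc^{-1} = (\my_k \cdots \my_1)^{-1} = \my_1^{-1}\cdots \my_k^{-1}$ and $\ma^{-1} = (\mx_1 \cdots \mx_k)^{-1} = \mx_k^{-1}\cdots \mx_1^{-1}$ produces exactly Equation~\eqref{eq:l_k_pinv}, with $\mb^{+}$ left in the unexpanded form $\begin{bmatrix} \mI_{(2^{k}-1)n} & 0 \\ 0 & \mI_{n} - \mw_{k} \end{bmatrix}^{+}$. (One may additionally observe that $\mb^{+} = \begin{bmatrix} \mI_{(2^{k}-1)n} & 0 \\ 0 & (\mI_{n}-\mw_{k})^{+} \end{bmatrix}$, since the pseudoinverse of a block-diagonal matrix is block-diagonal with the pseudoinverses of the blocks, but this is not required for the statement.)

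Finally I would verify the closed forms for $\mx_j^{-1}$ and $\my_j^{-1}$. Writing $\mx_j = \mI_{2^{k}n} + \mn$ where $\mn$ has its single nonzero block equal to $-\mI_{2^{k-j}} \otimes \mw_{j-1}$ (the $(3,2)$ block in the displayed $3\times 3$ block structure), strict block-lower-triangularity gives $\mn^{2} = 0$, hence $\mx_j^{-1} = \mI_{2^{k}n} - \mn$, which is precisely the asserted matrix (the off-diagonal block is now $+\mI_{2^{k-j}} \otimes \mw_{j-1}$). The argument for $\my_j^{-1}$ is identical, with $\mn$ strictly block upper triangular and $-\mc_{2^{k-j}} \otimes \mw_{j-1}$ replaced by $+\mc_{2^{k-j}} \otimes \mw_{j-1}$. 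This proposition is bookkeeping rather than substance; the only places that demand care are the order reversal in $(\mx_1\cdots\mx_k)^{-1}$ and $(\my_k\cdots\my_1)^{-1}$, and correctly collapsing the base-level blocks ($2^{k-k}=1$, $\mc_1=[1]$). I do not expect a real obstacle — the one conceptual ingredient, that the pseudoinverse of a product with invertible outer factors reverses and sandwiches between image projections, is supplied wholesale by Lemma~\ref{lem:pinv_of_prod}.
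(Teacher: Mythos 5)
Your proof is correct and matches the paper's (largely implicit) argument exactly: the paper states the proposition immediately after quoting Lemma~\ref{lem:pinv_of_prod} and remarks only that the $\mx_j$'s and $\my_j$'s are easy to invert as triangular matrices, which is precisely your route. Your verification details — the collapse of the base-level block via $\mc_1=[1]$, the order reversal in the inverted products, and the nilpotency argument $\mn^2=0$ giving $(\mI+\mn)^{-1}=\mI-\mn$ — are all accurate.
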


To prove Theorem~\ref{thm:app_pinv_ring}, we first show that $\mL^{(k)}$ is a good approximation of $\mL$. For that, we build a PSD matrix $\mf$ such that $\mL$ and $\mL^{(k)}$ approximate each other with respect to the norm defined by $\mf$.
For $0 \leq j \leq k$, let 
\begin{equation}
\super{\ms}{j}=
\begin{bmatrix}\label{eq:s_i_def}
    0& 0 \\
    0 & \mI_{2^{k-j}\cdot n}-\mc_{2^{k-j}}\otimes \mw_j 
\end{bmatrix}, 
\end{equation}
where the zeros are used as padding to make the dimension of $\ms^{(j)}$'s $2^k n \times 2^k n$. Note that $\ms^{(j)}$'s correspond to the approximate Schur complement blocks appear in our recursive algorithm (see Equation~\ref{eq:L_i_fac}). Using Lemma 2.3 from \cite{CKKPPRS18}, we show that the average of $\mU_{\ms^{(i)}}$'s would be a good choice for $\mf$. We defer the statement of the lemma from \cite{CKKPPRS18} and the proof of Lemma~\ref{lem:CumulativeErrorBlock} to the appendix (see Appendix~\ref{sec:solver_app}).

\begin{lemma}
	\label{lem:CumulativeErrorBlock}
Let $\ms^{(0)}, \ms^{(1)}, \ldots, \ms^{(k)}$, and $\mL^{(0)}, \mL^{(1)}, \ldots, \mL^{(k)}$ be defined as Equations~\eqref{eq:s_i_def} and \eqref{eq:L_i_fac} respectively. Then for,
$$
\mf =\frac{2}{k} \sum_{i=0}^{k} \mU_{\ms^{(i)}}
$$
we have:
\begin{enumerate}
\item  \label{itm:spectral}
for each $0 \leq i \leq k$,
\[
\left\|\mf^{+/2}
	\left(\mL- \mL^{\left(i\right)}\right)
    \mf^{+/2}\right\|_2
\leq \epsilon,
\]
\item  \label{itm:precond}
The final matrix $\mL^{(k)}$ satisfies
\[
\mL^{\left(k \right) \top}
\mf^{+}
\mL^{\left(k \right)}
\succeq \frac{1}{40 k^2} \cdot \mf.
\]
\end{enumerate}
\end{lemma}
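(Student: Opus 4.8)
The plan is to reduce Lemma~\ref{lem:CumulativeErrorBlock} to the abstract cumulative-error bound for approximate LU factorizations of \cite{CKKPPRS18} (their Lemma~2.3, to be restated in Appendix~\ref{sec:solver_app}), so that almost all of the work lies in exhibiting the chain $\mL = \mL^{(0)},\mL^{(1)},\dots,\mL^{(k)}$ as an instance of their framework. First I would pin down how $\mL^{(j)}$ arises from $\mL^{(j-1)}$. By Equation~\eqref{eq:L_i_fac}, $\mL^{(j-1)}$ has, between the triangular factors $\mx_1\cdots\mx_{j-1}$ and $\my_{j-1}\cdots\my_1$, a trailing active block $\mathcal{T}_{j-1}\eqdef\mI_{2^{k-j+1}n}-\mc_{2^{k-j+1}}\otimes\mw_{j-1}$ on the last $2^{k-j+1}n$ coordinates. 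Using the block form $\mc_{2^{k-j+1}}=\bigl[\begin{smallmatrix}0 & \mc_{2^{k-j}}\\ \mI & 0\end{smallmatrix}\bigr]$ one sees that the ``$F_jF_j$'' block of $\mathcal{T}_{j-1}$ (indexed by the first half $F_j$ of the active block, which is the set eliminated at this level) equals $\mI$ and is trivially invertible, and the factorization \eqref{eq:schur_decomp}--\eqref{eq:lu_recursion} shows that the Schur complement of $\mathcal{T}_{j-1}$ onto its second half $C_j$ is \emph{exactly} $\mI_{2^{k-j}n}-\mc_{2^{k-j}}\otimes\mw_{j-1}^2$; i.e., $\schur(\ms^{(j-1)},C_j)=\mI_{2^{k-j}n}-\mc_{2^{k-j}}\otimes\mw_{j-1}^2$ (zero-padded). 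The matrix $\mL^{(j)}$ is then obtained from $\mL^{(j-1)}$ by the single operation of replacing this exact Schur complement with the approximation $\ms^{(j)}=\mI_{2^{k-j}n}-\mc_{2^{k-j}}\otimes\mw_j$, leaving the enclosing triangular factors untouched, so
\[
\mL^{(j-1)}-\mL^{(j)}=\mx_1\cdots\mx_j\bigl(\schur(\ms^{(j-1)},C_j)-\ms^{(j)}\bigr)\my_j\cdots\my_1 ,
\]
with $\schur(\ms^{(j-1)},C_j)-\ms^{(j)}$ equal to the zero-padded block $\mc_{2^{k-j}}\otimes(\mw_j-\mw_{j-1}^2)$. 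This is precisely the per-step structure required by \cite{CKKPPRS18}~Lemma~2.3, with $\ms^{(0)},\dots,\ms^{(k)}$ the successive (padded) Schur blocks.

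Next I would verify that lemma's per-step approximation hypothesis: $\ms^{(j)}$ is an $(\epsilon/k)$-approximation of $\schur(\ms^{(j-1)},C_j)$ for each $1\le j\le k$. Since both matrices have the form $\mI_{2^{k-j}n}-\mc_{2^{k-j}}\otimes(\,\cdot\,)$, this is equivalent to $\mc_{2^{k-j}}\otimes\mw_j\approx_{\epsilon/k}\mc_{2^{k-j}}\otimes\mw_{j-1}^2$. The hypothesis $\mw_j\capprox_{\epsilon/k}\mw_{j-1}^2$ together with Corollary~\ref{cor:capprox_ring} gives $\mc_{2^{k-j}}\otimes\mw_j\capprox_{\epsilon/k}\mc_{2^{k-j}}\otimes\mw_{j-1}^2$, and unit-circle approximation implies (directed) approximation because the right-hand side of Definition~\ref{def:unitcirc_approx} never exceeds that of Definition~\ref{def:complex_approx} (as $|z|\ge\Real(z)$); hence the desired $(\epsilon/k)$-approximation holds. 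It is worth noting that Theorem~\ref{thm:sc_approx} is \emph{not} needed here: the Schur complement of $\ms^{(j-1)}$ is computed exactly, and the only error is the hypothesized substitution of $\mw_j$ for $\mw_{j-1}^2$. Finally I would record the remaining structural preconditions of \cite{CKKPPRS18}~Lemma~2.3 --- each $\ms^{(j)}$ is (after padding) the random-walk Laplacian of an Eulerian graph, namely the cycle-lift of length $2^{k-j}$ of the regular (hence Eulerian) digraph with transition matrix $\mw_j$, and the $F_jF_j$ block eliminated at each level is invertible --- both of which were checked above.

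With these in hand, \cite{CKKPPRS18}~Lemma~2.3, applied with per-level error $\gamma_j=\epsilon/k$ (so $\sum_{j=1}^k\gamma_j=\epsilon$) and with $\mf=\tfrac{2}{k}\sum_{i=0}^k\mU_{\ms^{(i)}}$ as the governing PSD norm, yields both the cumulative spectral bound $\|\mf^{+/2}(\mL-\mL^{(i)})\mf^{+/2}\|_2\le\epsilon$ for every $0\le i\le k$ (item~1) and the preconditioning bound $\mL^{(k)\top}\mf^{+}\mL^{(k)}\succeq\tfrac{1}{40k^2}\mf$ (item~2); the constant $40$ and the normalization of $\mf$ are exactly those supplied by that lemma. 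I expect the main obstacle to be organizational rather than conceptual: one must state \cite{CKKPPRS18}~Lemma~2.3 in a form general enough to allow eliminating an entire bipartition class of the cycle-lift at once (rather than one vertex at a time) and to allow the intermediate matrices $\ms^{(j)}$ to be non-Laplacian Eulerian matrices, and then confirm that the two stated constants carry over unchanged. A secondary point needing care is the base step $j=1$, where the substitution replaces $\mw_0^2=\mw^2$ by $\mw_1$ and so must be covered by the same $(\epsilon/k)$-approximation assumption on the sequence $\mw_0,\dots,\mw_k$.
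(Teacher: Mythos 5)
Your proposal is correct and follows essentially the same route as the paper: compute the per-step difference $\mL^{(j)}-\mL^{(j-1)}$ from the LU structure (the paper's Lemma~\ref{lem:l_i_diff}), bound it via Theorem~\ref{thm:capprox_ring}, and feed the chain into the cumulative-error lemma of \cite{CKKPPRS18}. The paper resolves the block-versus-single-coordinate issue you flag by interpolating the $\mL^{(i)}$'s with exact one-coordinate Schur complements and using the subsequence indices $i_p$ and weights $\theta_p$ already present in that lemma, and the normalization $2/k$ in $\mf$ and the constant $40$ come from the norm conversion $\mU_{\schur(\ms^{(i)},H_i)}\preceq 2\,\mU_{\ms^{(i)}}$ --- a factor of $2$ that your claim that the constants ``carry over unchanged'' glosses over.
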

Informally, Item~\ref{itm:spectral} states that all of the $\mL^{(i)}$'s and in particular $\mL^{(k)}$ are good approximations of $\mL$ with respect to $\mf$, or equivalently $\mf^{+/2}\mL^{(i)}\mf^{+/2}$ is a good approximation of $\mf^{+/2} \mL \mf^{+/2}$ in spectral norm.  Item~\ref{itm:precond}
states that $\mL^{(k)}$ is not too small with respect to $\mf$,
or equivalently that the spectral norm of  $(\mf^{+/2}\mL^{(k)}\mf^{+/2})^+ = \mf^{1/2} (\mL^{(k)})^+ \mf^{1/2}$ is not too large.
Together they are used to
show that for $\widetilde{\mL} = \mL^{(k)}$,
$\widetilde{\mL}^+$ is a good preconditioner for $\mL$ in the $\mf$ norm, i.e. $\|\mI_{2^kn} - \widetilde{\mL}^+ \mL\|_\mf$ is small (see Lemma~\ref{lem:lem2.6CKK}).
Now we are ready to give a proof of Theorem~\ref{thm:app_pinv_ring}.
\begin{proof}[Proof of Theorem~\ref{thm:app_pinv_ring}]
By Lemma~\ref{lem:CumulativeErrorBlock}, there exists matrix $\mf$ such that,
$$
\|\mf^{+/2} (\mL - \widetilde{\mL}) \mf^{+/2}\| \leq \epsilon
$$
and
$$
\widetilde{\mL}^\top \mf^{+} \widetilde{\mL} \succeq \frac{1}{40k^2} \mf.
$$
Therefore, by Lemma~\ref{lem:lem2.6CKK}, we get $\|\mI_{2^kn} - \widetilde{\mL}^+ \mL\|_\mf \leq \sqrt{40}k\epsilon$. Note that $\mf \succeq \frac{\mU_\mL}{k}$ by construction. To get the upper-bound on $\mf$, we have
$$
\mf \preceq O(k^2) \widetilde{\mL}^\top \mf^+ \widetilde{\mL} \preceq O(k^4) \mL^\top \mf^+ \mL \preceq O(k^5) \mL^\top \mU_\mL^+ \mL 
$$
where the second inequality holds by Lemma~\ref{lem:lemB.3CKK}, and the last inequality comes from $\mf\succeq \frac{\mU_\mL}{k}$. By applying Lemma~\ref{lem:lem13CKP} we get,
$$
\mf \preceq O(k^5) \mL^\top \mU_\mL^+ \mL \preceq O(k^5 2^{2k} n^2)\mU_\mL.
$$
\end{proof}

\section{Space-Efficient Eulerian Laplacian Solver}\label{sec:small-space-solver}
In this section we show that the LU factorization approach to computing the pseudoinverse of an Eulerian Laplacian described in Section \ref{sec:squaring_solver} can be implemented space-efficiently. In particular, we prove the following theorem.
\begin{theorem}
\label{thm:space_main}
There is a deterministic algorithm that, given $\epsilon\in(0,1)$ and an Eulerian digraph $G$ with random-walk Laplacian $\mL=\mI-\mw$, computes a matrix $\widetilde{\mL}^{+}$ whose entries differ from $\mL^+$ by at most $\epsilon$. The algorithm uses space $O(\log N\cdot\log\log(N/\epsilon))$ where $N$ is the bitlength of the input.
\end{theorem}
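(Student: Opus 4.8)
The plan is to implement space-efficiently the algorithm implicit in Section~\ref{sec:squaring_solver}: reduce $G$ to a well-behaved digraph, build the sequence of iterated derandomized squares $\mw=\mw_0,\dots,\mw_k$, form the approximate pseudoinverse $\widetilde{\mL}^+$ of the cycle-lifted Laplacian $\mL=\mI_{2^kn}-\mc_{2^k}\otimes\mw$ via the explicit factorization of Proposition~\ref{prop:tilde_L_pinv_fac}, boost its accuracy by preconditioned Richardson iteration (Lemma~\ref{lem:richardson-poly} and its corollary), project back down via Lemma~\ref{lem:pinv_reduce}, and round the resulting entries.

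\emph{Setup.} First I would apply the standard log-space reductions that let us assume $G$ is a connected, aperiodic, $d$-regular Eulerian digraph (add self-loops to equalize degrees and induce aperiodicity, handle weakly connected components separately), recording the simple rescalings needed to recover $(\mI-\mw)^+$ for the original graph from that of the transformed one. Such a $G$ has mixing time $\poly(n)$; set $k=\Theta(\log n)$ so that $2^k$ exceeds it, $\epsilon_0=1/(ck)$ for the constant $c$ of the corollary to Theorem~\ref{thm:app_pinv_ring}, and $\mw_{i+1}$ the random-walk matrix of $\mw_i\ds H_i$ where $H_i$ is an explicitly log-space-constructible expander of degree $\poly(k)$ with $\lambda(H_i)\le\epsilon_0/(2k)$. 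By Theorem~\ref{thm:eul_derandsq_capprox}, $\mw_{i+1}\capprox_{\epsilon_0/k}\mw_i^2$, and because the derandomized square preserves regularity and strong connectivity, each $\mw_i$ meets the hypotheses of Theorem~\ref{thm:app_pinv_ring}. The key space input, imported from~\cite{MRSV17}, is that --- since $G\ds H$ has an explicit rotation map assembled from those of $G$ and $H$ --- the rotation map of $\mw_i$, hence a single entry of $\mw_i$, is computable in space $O(\log N+i\cdot\log(1/\lambda(H_i)))=O(\log N+k\log k)$.

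\emph{Correctness.} This is a chain of results already established. Theorem~\ref{thm:app_pinv_ring} (with $\epsilon=\epsilon_0$) yields a PSD matrix $\mf$ with $\|\mI-\widetilde{\mL}^+\mL\|_\mf\le O(k\epsilon_0)\le\tfrac12$ and $\mU_\mL/O(k)\preceq\mf\preceq\poly(n)\cdot\mU_\mL$, where $\widetilde{\mL}=\mL^{(k)}$ has the explicit LU-based pseudoinverse of Proposition~\ref{prop:tilde_L_pinv_fac}. Its corollary gives, for $m=O(k+\log n+\log(1/\delta))$, the matrix $\mproj_m=\sum_{i=0}^m(\mI-\widetilde{\mL}^+\mL)^i\widetilde{\mL}^+$ with $\|\mI-\mproj_m\mL\|_{\mU_\mL}\le\delta$; Lemma~\ref{lem:pinv_reduce} with $\ell=2^k$ then shows $\mb_W=\tfrac{1}{2^k}(\vec\bone_{2^k}\otimes\mI_n)^\top\mproj_m(\vec\bone_{2^k}\otimes\mI_n)$ satisfies $\|\mI_n-\mb_W(\mI-\mw)\|_{\mU_{\mI-\mw}}\le\delta$. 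Finally, since $\mU_{\mI-\mw}$ is the Laplacian of a connected undirected graph (so its least nonzero eigenvalue is $\ge1/\poly(n)$) and $\mI-\mw$ has smallest nonzero singular value $\ge1/\poly(n)$ (polynomial mixing time), this upgrades to $\|\mb_W\Pi-(\mI-\mw)^+\|_2\le\poly(n)\cdot\delta$ where $\Pi=\mI-\tfrac1n\vec\bone\vec\bone^\top$; choosing $\delta=\epsilon/\poly(n)$ (whence $\log(1/\delta)=O(\log(N/\epsilon))$ and $m=O(\log(N/\epsilon))$) gives entrywise error $\epsilon$, and we output the (projected) entries suitably truncated, undoing the initial reductions.

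\emph{Space and the main obstacle.} The algorithm computes one output entry at a time. Such an entry is an entry of a sum of $m+1$ terms, each a product of $O(k)$ triangular blocks from Proposition~\ref{prop:tilde_L_pinv_fac} (each carrying one $\mw_j$) and up to $m$ copies of $\mI-\widetilde{\mL}^+\mL$; to keep the recursion shallow the power sum $\sum_{i\le m}(\mI-\widetilde{\mL}^+\mL)^i$ is evaluated through the identity $\prod_{0\le s<t}\bigl(\mI+(\mI-\widetilde{\mL}^+\mL)^{2^s}\bigr)$ with $t=\lceil\log(m+1)\rceil=O(\log\log(N/\epsilon))$, each $2^s$-th power obtained by $s$ nested squarings and each $\poly(N)\times\poly(N)$ matrix product by the usual $O(\log N)$-space-per-level divide-and-conquer. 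Composing, one has a recursion of depth $O(\log\log(N/\epsilon))+O(\log k)$ with $O(\log N)$ of index workspace per frame, bottoming out in the $O(\log N+k\log k)$ evaluation of an entry of some $\mw_j$, with all intermediate quantities held only to a bounded precision; careful accounting then gives total space $O(\log N\cdot\log\log(N/\epsilon))$. I expect the main obstacle to be exactly this numerical bookkeeping: one must fix the truncation precision at every recursion level so that the accumulated round-off over all $O(\log\log(N/\epsilon))$ levels and $\poly(N)$-dimensional products stays below $\delta$ --- using that preconditioned Richardson iteration damps errors geometrically in the $\mf$-norm and that $\mf$ and $\mU_\mL$ are $\poly(n)$-spectrally equivalent --- while simultaneously verifying that the $k\log k$ leaf cost of the iterated derandomized square enters additively, not multiplicatively, so that the whole stack still fits in $O(\log N\cdot\log\log(N/\epsilon))$ bits. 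Everything else is assembly of the already-proved lemmas.
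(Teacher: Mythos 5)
Your proposal follows essentially the same route as the paper's proof: the same reduction to the regular, aperiodic, strongly connected case, the same iterated derandomized squares feeding Theorem~\ref{thm:app_pinv_ring} and Proposition~\ref{prop:tilde_L_pinv_fac}, Richardson boosting via Lemma~\ref{lem:richardson-poly}, projection via Lemma~\ref{lem:pinv_reduce}, and the spectral-to-entrywise conversion, with the same space accounting. The one detail you gloss over is the base case of the factorization: the formula for $\widetilde{\mL}^+$ still contains the block $(\mI_n-\mw_k)^+$, which is itself a Laplacian pseudoinverse; the paper handles this by choosing $k$ large enough that $\mw_{k-1}\capprox_{1/\polylog(N)}\mJ$ and then \emph{replacing} $\mw_k$ by $\mJ$, whose Laplacian $\mI-\mJ$ is its own pseudoinverse, so you should make that substitution explicit rather than taking $\mw_k$ to be another derandomized square.
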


\subsection{Model of Space-Bounded Computation}
We use a standard model of space bounded computation. The machine has a read-only input tape, a constant number of  read/write work tapes, and a write-only output tape. We say the machine runs in space $s$ if throughout the computation, it only uses $s$ total tape cells on the work tapes. The machine may write outputs to the output tape that are larger than $s$ (in fact as large as $2^{O(s)}$) but the output tape is write-only. We use the following fact about the composition of space-bounded algorithms.
\begin{proposition}[Composition of Space-Bounded Algorithms]
\label{prop:composition}
Let $f_1$ and $f_2$ be functions that can be computed in space $s_1(n),s_2(n)\geq \log n$, respectively, and $f_1$ has output of length $\ell_1(n)$ on inputs of size $n$. Then $f_2\circ f_1$ can be computed in space 
\[
O(s_2(\ell_1(n)) + s_1(n)).
\]
\end{proposition}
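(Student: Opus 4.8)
The naive approach — run the machine $M_1$ computing $f_1$ on input $x$, write $f_1(x)$ to a tape, and then run the machine $M_2$ computing $f_2$ on it — fails, since $f_1(x)$ may have length up to $2^{O(s_1(n))}$ and storing it would blow the space budget. The plan is instead to simulate $M_2$ on a \emph{virtual} input tape holding $f_1(x)$, recomputing individual bits of $f_1(x)$ on demand by re-running $M_1$ from scratch each time.

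Concretely, I would first set up a simulation of $M_2$, allocating $s_2(\ell_1(n))$ cells for its work tapes and a binary counter $p$ tracking the position of $M_2$'s read-only input head; this counter uses $O(\log \ell_1(n))$ bits, which fits within $O(s_2(\ell_1(n)))$ because $s_2(m) \geq \log m$ and the input to $f_2$ has length $\ell_1(n)$. Second, whenever the simulated $M_2$ wants to read the bit at input position $p$, I would call a subroutine that runs $M_1$ on $x$ from scratch using its $s_1(n)$ cells of work space, but redirects $M_1$'s write-only output head into a second binary counter $q$: each time $M_1$ emits an output bit, increment $q$, and once $q=p$ record the bit just written and return it to the $M_2$-simulation (returning a blank symbol if $M_1$ halts first, i.e. $M_2$ read past the end of its input). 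Since $\ell_1(n) \leq 2^{O(s_1(n))}$, the counter $q$ needs only $O(\log \ell_1(n)) = O(s_1(n))$ bits, so this subroutine runs in space $O(s_1(n))$; its cells are free between calls and can be reused. Third, I would handle movements of $M_2$'s input head by simply incrementing or decrementing $p$ and re-invoking the subroutine on the next read; the value $\ell_1(n)$ itself, if needed for endmarker handling, is obtained by one extra run of $M_1$ that merely counts output bits. Adding up, the total space is $O(s_2(\ell_1(n))) + O(\log \ell_1(n)) + O(s_1(n)) = O(s_2(\ell_1(n)) + s_1(n))$. Correctness is immediate: each recomputed bit is exactly the bit $M_2$ would have read from a materialized copy of $f_1(x)$, and since $f_1$ is a function ($M_1$ deterministic) the recomputation is consistent across the many invocations.

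I do not expect a serious obstacle. The only points needing care are the space accounting — charging the two counters against $s_2(\ell_1(n))$ and $s_1(n)$ respectively, which uses the hypotheses $s_i \geq \log n$ together with the standard fact that a space-$s$ machine's output has length at most $2^{O(s)}$ — and the usual bookkeeping of the space-bounded model, namely that $M_1$'s output head moves only rightward (so counting emitted bits faithfully identifies a position) and that $M_2$'s input head is two-way read-only (so a single index $p$ suffices to reconstruct its configuration). Both are routine verifications.
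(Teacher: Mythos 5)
Your proof is correct and is exactly the standard on-demand recomputation argument that this proposition (stated without proof in the paper as a well-known fact) relies on. The space accounting for the two counters, using $s_2(m)\geq\log m$ and the bound $\ell_1(n)\leq 2^{O(s_1(n))}$, is handled properly, so there is nothing to add.
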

\subsection{Reduction to Regular, Aperiodic, Strongly Connected Case}
\label{sect:reduction_to_reg}

It is convenient to work with regular, aperiodic, strongly connected digraphs. In this section, we show that this is without loss of generality. 

We can find the strongly connected components of an Eulerian graph by ignoring the directionality on the edges and running Reingold's algorithm for \USTConn~ on each pair of vertices using $O(\log N)$ space. Then we can solve systems in a disconnected graph by solving systems on each of its strongly connected components separately. So without loss of generality, our graph is strongly connected. 

Given an Eulerian digraph $G$ with maximum degree $\dmax$ and 
Laplacian $\md-\ma$, we can create a $d$-regular, aperiodic graph of any degree $d>\dmax$ by adding $d-\mathrm{degree}(v)$ self loops to each vertex $v$. Notice that self loops do not change the Laplacian $\md-\ma$. 
Our solver works with random-walk Laplacians and so is able to approximate $((\md-\ma)/d)^+$ and now we want to show how to use this to approximate $((\md-\ma)\md^{-1})^+$. 

Let $\mproj$ be the orthogonal projection onto the column space of $(\md-\ma)\md^{-1}$ and let $\mq$ be the orthogonal projection onto the column space of $\md^{-1}(\md-\ma^\top)$. By Lemma \ref{lem:pinv_of_prod} we have
\[
((\md-\ma)\md^{-1})^+=\mproj \md(\md-\ma)^+\mq.
\]
Given an $\epsilon$-approximation to $((\md-\ma)/d)^+=d\cdot (\md-\ma)^{+}$, we can divide this by $d$ to get an $\epsilon$-approximation to $(\md-\ma)^{+}$. Plugging this in to the formula above and making $\epsilon$ sufficiently small says that we just need to compute the matrices $\mproj$ and $\mq$ in order to get an approximation to $((\md-\ma)\md^{-1})^+$. If $s$ is the stationary distribution of $G$, then $\mproj$ is the matrix $\mI-ss^{\top}/\|s\|^2$. Since $G$ is Eulerian, the kernel of $\md^{-1}(\md-\ma^\top)$ is simply the all ones vector and so $\mq=\mI-\vec{\bone}\vec{\bone}^\top/n$. The stationary distribution of a strongly connected Eulerian graph is proprotional to its vertex degrees and so is easy to compute in logspace. Thus $\mproj$ and $\mq$ can both be computed in deterministic logspace and we can approximate $((\md-\ma)\md^{-1})^+$ from $(\md-\ma)^+$.  
\subsection{Proof of Theorem \ref{thm:space_main}}
To prove Theorem \ref{thm:space_main}, we follow the LU factorization approach discussed in Section \ref{sect:solver}. Just as in Section \ref{sect:solver}, we first show that we can compute a weak approximation to the pseudoinverse space-efficiently and then argue that we can afford to reduce the error using preconditioned Richardson iteration. The analysis in this section is similar to the space complexity analysis from \cite{MRSV17}.

Proposition \ref{prop:tilde_L_pinv_fac} says that in order to compute a weak approximation to a pseudoinverse of a random-walk Laplacian $\mI-\mw$ it suffices to compute a particular polynomial $p$ in the matrices $\mI,\mw$, and approximations to $\mw^{2^{k}}$ for $k=\{1,\ldots,O(\log n)\}$. To compute the approximations to $\mw^{2^{k}}$, we will show that the $k$th derandomized square can be computed in space $O(\log n+k\cdot \log c)$, where $c$ is the degree of the expanders used. To bottom out the recursion, we replace the random-walk Laplacian of the final derandomized power with $(\mI-\mJ)=(\mI-\vec{\bone}\vec{\bone}^\top/n)$, the random-walk Laplacian of the complete graph (with self loops), which is its own pseudoinverse ($(\mI-\mJ)^+=\mI-\mJ)$. Then to compute $p$, we note that $p$ has degree $O(\log n)$ and the product of $O(\log n)$ matrices can be computed in space $O(\log n\cdot \log\log n)$. Finally, to boost the weak approximation, we use preconditioned Richardson iteration (Lemma \ref{lem:richardson-poly}).

The discussion above gives the following informal outline for proving Theorem \ref{thm:space_main}:
\begin{enumerate}
    \item Show that the transition matrix of the $k$th derandomized square using $c$-regular expanders can be computed in space $O(\log n+k\cdot \log c)$
    \item Show that the complete graph is a good approximation of the $k$th derandomized power for sufficiently large $k$
    \item Show that $k$ arbitrary $n\times n$ matrices can be multiplied in space $O(\log n\cdot \log k)$
    \item Show that preconditioned Richardson can be used to boost a weak approximation to a pseudoinverse to an $\epsilon$ approximation in space $O(\log n\cdot\log\log(n/\epsilon))$.
    \item Show that an extremely good approximation to a pseudoinverse in a spectral sense is also a good approximation in an entrywise sense.
\end{enumerate}

To show that derandomized powers can be computed in small space, we follow the proof techniques of \cite{Reingold08,RozenmanVa05,vadhan2012pseudorandomness}. First we note that neighbors in the sequence of expanders we use for the iterated derandomized square can be explicitly computed space-efficiently.

\begin{lemma}
\label{lem:expanders}
For every $t\in \mathbb{N}$ and $\mu>0$, there is a graph $H_{t,\mu}$ with a two-way labeling such that:
\begin{itemize}
\item $H$ has $2^t$ vertices and is $c$-regular for $c$ being a power of 2 bounded by $\poly(t,1/\mu)$.
\item $\lambda(H) \leq \mu$
\item \textup{Rot}$_H$ can be evaluated in linear space in its input length, i.e. space $O(t+\log c)$.
\end{itemize}
\end{lemma}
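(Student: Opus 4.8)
The plan is to build $H_{t,\mu}$ in two stages: first construct an explicit \emph{constant}-degree undirected expander $G$ on exactly $2^t$ vertices with $\lambda(G)\le\lambda_0$ for an absolute constant $\lambda_0<1$ and a logspace rotation map, and then set $H_{t,\mu}=G^k$ for $k=\Theta(\log(1/\mu))$ chosen so that $\lambda(G)^k\le\mu$. For the first stage I would take a standard explicit expander on a grid-type vertex set, e.g.\ the Margulis--Gabber--Galil expander on $\mathbb{Z}_m\times\mathbb{Z}_m$, which is $O(1)$-regular, has $\lambda$ bounded by an absolute constant, and whose rotation map is a few additions modulo $m$ and hence computable in space $O(\log m)$. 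Choosing $m=2^{\lfloor t/2\rfloor}$ gives $2^{2\lfloor t/2\rfloor}$ vertices, which equals $2^t$ when $t$ is even; when $t$ is odd I would instead tensor the $2^{t-1}$-vertex instance with the two-vertex complete graph (a self-loop and one cross edge on each vertex), which doubles the number of vertices, multiplies the degree by $2$, and leaves $\lambda$ unchanged, since for regular graphs $\lambda(A\otimes B)=\max(\lambda(A),\lambda(B))$ and the complete graph has $\lambda=0$. (Alternatively one can use the iterated zig-zag construction of \cite{Reingold08,RozenmanVa05,vadhan2012pseudorandomness}, which directly produces $O(1)$-regular expanders on $D^{\Theta(i)}$ vertices with $\lambda\le1/2$ and a space-$O(\log(\#\text{vertices}))$ rotation map, again patched to an arbitrary power of two by a single tensor with a constant-size complete graph.) Finally, pad the degree up to the next power of two with self-loops; this keeps $\lambda$ strictly below $1$. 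This yields $G$ that is $d_0$-regular with $d_0=O(1)$ a power of two, $\lambda(G)\le\lambda_0$, and $\text{Rot}_G$ evaluable in space $O(t)$.

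Now set $k=\lceil\log(1/\mu)/\log(1/\lambda_0)\rceil=\Theta(\log(1/\mu))$ and $H_{t,\mu}=G^k$. Powering preserves the vertex set and undirectedness, makes the graph $c$-regular with $c=d_0^{k}=2^{O(\log(1/\mu))}=\poly(1/\mu)$ (a power of two, in particular at most $\poly(t,1/\mu)$), and gives $\lambda(G^k)=\lambda(G)^k\le\lambda_0^{k}\le\mu$; here $\lambda(G^k)=\lambda(G)^k$ because for a regular graph $\lambda$ is the operator norm of $\mw$ restricted to $\vec{\bone}^\perp$ and this subspace is $\mw$-invariant. It remains to bound the space for $\text{Rot}_{H_{t,\mu}}$ (using Proposition~\ref{prop:composition} for composing space-bounded routines). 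On input $(v,(i_1,\dots,i_k))\in[2^t]\times[d_0]^k$ we compute it by threading: initialize the ``current vertex'' to $v$, and for $m=1,\dots,k$ in turn compute $(v_m,j_m)=\text{Rot}_G(v_{m-1},i_m)$ (reusing the same $O(t)$ work cells for each call), overwrite the current vertex by $v_m$, and append $j_m$ to a stored label sequence; finally output $(v_k,(j_1,\dots,j_k))$. The current vertex uses $O(t)$ cells, a single $\text{Rot}_G$ evaluation uses $O(t)$ cells, and the stored label sequence $(j_1,\dots,j_k)$ uses $k\log d_0=\log c$ cells, so the total is $O(t+\log c)$, as required.

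The only genuinely delicate points in carrying this out are (i) making the vertex count \emph{exactly} a power of two while keeping the degree bounded --- plain tensoring would make the degree exponential in $t$, which is why one starts from a grid expander whose size is already essentially a power of two (patched by one tensor with a constant graph) or uses the zig-zag product, which grows the vertex set without growing the degree --- and (ii) the space accounting for the rotation map of the power, where the point is simply that the intermediate label sequence consists of exactly $\log c$ bits and is therefore affordable within the stated budget. The remaining ingredients (the expansion bounds for Margulis--Gabber--Galil or the zig-zag construction, the behaviour of $\lambda$ under tensoring, powering and adding self-loops, and exhaustive search in constant space for any fixed-size base graph the zig-zag construction needs) are all standard; in fact this lemma is essentially a restatement of the explicit-expander lemmas used in \cite{Reingold08,RozenmanVa05,vadhan2012pseudorandomness,MRSV17}, and for the routine parts I would simply cite those.
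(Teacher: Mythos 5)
Your proposal is correct and is essentially the same argument as the one the paper relies on: the paper defers to the proof sketch in \cite{MRSV17}, which likewise takes an explicit constant-degree expander on $2^t$ vertices with a linear-space rotation map (Gabber--Galil or the zig-zag construction) and powers it $O(\log(1/\mu))$ times, with the same $O(t+\log c)$ accounting for threading the rotation map through the power. The details you flag as delicate (hitting exactly $2^t$ vertices via a tensor with a constant-size graph, and storing the $\log c$-bit intermediate label sequence) are handled correctly.
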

A short proof sketch for Lemma \ref{lem:expanders} can be found in \cite{MRSV17}. The following lemma says that high derandomized powers can be computed space-efficiently. 
\begin{lemma}
\label{lem:derandspacecomplexity}
Let $G_{0}$ be a $d$-regular, directed multigraph on $n$ vertices with a two-way labeling and $H_1,\ldots,H_k$ be $c$-regular undirected graphs with two-way labelings where for each $i\in[k]$, $H_{i}$ has $d\cdot c^{i-1}$ vertices. For each $i\in[k]$ let 
\[
G_i=G_{i-1}\ds H_i.
\]
Then given $v_0\in[n], i_{0}\in[d\cdot c^{i-1}], j_0\in[c]$, \textup{Rot}$_{G_i}(v,(i_0,j_0))$ can be computed in space $O(\log(n\cdot d)+k\cdot\log c)$ with oracle queries to \textup{Rot}$_{H_1},\ldots,\mathrm{Rot}_{H_k}$.
\end{lemma}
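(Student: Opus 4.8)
I will prove a slightly stronger statement by induction on $i$: for every $0\le i\le k$, the map $\textup{Rot}_{G_i}$ can be evaluated on an input $(v,\ell)$, where $\ell$ ranges over edge labels of $G_i$ (bitlength $\log d+i\log c$), using a single \emph{in-place} work region of $O(\log(nd)+i\log c)$ cells --- one that initially holds $(v,\ell)$ and, on termination, holds $\textup{Rot}_{G_i}(v,\ell)$ --- together with only $O(i)$ further cells of ``bookkeeping'' scratch, given direct (table-lookup) access to the input graph $G_0$ and oracle access to $\textup{Rot}_{H_1},\dots,\textup{Rot}_{H_i}$. Taking $i=k$ and using $k=O(k\log c)$ (since $c\ge 2$) then yields the lemma. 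The base case $i=0$ is immediate, since $\textup{Rot}_{G_0}$ is read off the input in space $O(\log(nd))$, and $G_i=G_{i-1}\ds H_i$ is well defined at every level because the derandomized square of a two-way-labeled $dc^{i-1}$-regular digraph with a $c$-regular undirected graph on $dc^{i-1}$ vertices is a two-way-labeled $dc^i$-regular digraph, its two-way labeling being the one implicit in $\textup{Rot}_{G_i}$.

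For the inductive step, recall from Definition~\ref{def:derandsquare} that, writing the edge label $\ell$ of $G_i=G_{i-1}\ds H_i$ as $(i_0,j_0)$ with $i_0$ an edge label of $G_{i-1}$ and $j_0\in[c]$ an edge label of $H_i$, the value $\textup{Rot}_{G_i}(v_0,(i_0,j_0))$ is obtained by the four steps (1)~$(v_1,i_1)=\textup{Rot}_{G_{i-1}}(v_0,i_0)$; (2)~$(i_2,j_1)=\textup{Rot}_{H_i}(i_1,j_0)$; (3)~$(v_2,i_3)=\textup{Rot}_{G_{i-1}}(v_1,i_2)$; (4)~output $(v_2,(i_3,j_1))$. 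The crux is that this entire computation can be carried out in place. Lay the work region out as $[\text{vertex}]\,[G_{i-1}\text{-label}]\,[H_i\text{-label}]$, with blocks of sizes $O(\log n)$, $\log d+(i-1)\log c$, and $\log c$. Step~(1) is, by the inductive hypothesis, an in-place computation on the prefix $[\text{vertex}]\,[G_{i-1}\text{-label}]$ (exactly a $\textup{Rot}_{G_{i-1}}$-workspace), and it never touches the $[H_i\text{-label}]$ block; afterwards the region holds $[v_1]\,[i_1]\,[j_0]$. Step~(2) is one oracle call to $\textup{Rot}_{H_i}$ acting in place on the suffix $[i_1]\,[j_0]$ --- note $i_1$ is a vertex of $H_i$ and the inputs $(i_1,j_0)$ and outputs $(i_2,j_1)$ have matching block sizes --- after which the region holds $[v_1]\,[i_2]\,[j_1]$. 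Step~(3) is again an in-place recursive call on the prefix $[v_1]\,[i_2]$, leaving $j_1$ in place, so the region ends holding precisely $[v_2]\,[i_3]\,[j_1]=\textup{Rot}_{G_i}(v_0,(i_0,j_0))$, and step~(4) is vacuous. The key point is that the data that must ``survive'' a sub-call --- the vertex $v_1$ during step~(3), and the expander label $j_0$ (then $j_1$) across steps~(1)--(3) --- is exactly the portion of the shared region that the sub-call does not modify, so no dedicated storage is needed to remember it.

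For the space accounting: steps~(1) and~(3) run sequentially, so they reuse the same block of bookkeeping scratch, which by induction has size $O(i-1)$; level $i$ itself needs only $O(1)$ additional cells (to record which of the four steps is in progress so that control resumes correctly on a recursive return, to track the recursion depth, and as constant-size scratch for the oracle call). Hence the bookkeeping grows by $O(1)$ per level, for $O(i)$ total, while the in-place region stays at $O(\log(nd)+i\log c)$; their sum is $O(\log(nd)+i\log c)$, closing the induction. (The shared oracle query/answer tape carries objects of bitlength $O(\log d+k\log c)$, and at most one call on the stack is querying the oracle at any time, so this resource is also within budget.)

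\textbf{Main obstacle.} The only real difficulty is resisting the naive recursion: treating the outputs $(v_1,i_1)$ of step~(1) and $i_2$ of step~(2) as values to be ``returned'' and stored at each of the $k$ levels would cost $\Omega(k\log n+k^2\log c)$ and blow the budget --- equivalently, applying the black-box space-composition bound of Proposition~\ref{prop:composition} $k$ times is not strong enough. The in-place discipline above, exploiting that the ``remember-for-later'' data already resides in, and is left untouched by, the recursive sub-calls, is exactly what reduces the per-level overhead from $O(\log n+i\log c)$ down to $O(1)$.
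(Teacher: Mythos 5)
Your proposal is correct and follows essentially the same route as the paper's proof: both evaluate $\textup{Rot}_{G_i}$ by the four steps of the derandomized-square rotation map performed \emph{in place} on a shared region holding $(v,(i_0,j_0))$, with the two recursive $\textup{Rot}_{G_{i-1}}$ calls acting on the prefix (and reusing the same space, since they run sequentially) while the expander label sits untouched in the suffix, giving a per-level overhead of $O(\log c)$ and hence $O(\log(nd)+k\log c)$ total. Your accounting of the $O(1)$ control/bookkeeping per level is a slightly more explicit version of what the paper leaves implicit, but the argument is the same.
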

Lemma \ref{lem:derandspacecomplexity} is stated and proven in \cite{MRSV17} for the case of undirected multigraphs, however nothing in that proof required undirectedness. We include the proof in Appendix \ref{app:derandspacecomplexity}.

\begin{corollary}
\label{cor:entriesofM}
Let $\mw_0,\ldots,\mw_k$ be the transition matrices of $G_0,\ldots, G_k$ as defined  in Lemma \ref{lem:derandspacecomplexity}. For all $\ell\in[k]$, given coordinates $i,j\in [n]$, entry $i,j$ of $\mw_\ell$ can be computed in space $O(\log (n\cdot d) + k\cdot\log c)$.
\end{corollary}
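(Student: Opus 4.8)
The plan is to reduce the computation of a single entry of $\mw_\ell$ to a bounded enumeration over the edges leaving a vertex of $G_\ell$, where each edge is accessed via the rotation map $\textup{Rot}_{G_\ell}$ supplied by Lemma~\ref{lem:derandspacecomplexity}, and then to discharge the oracle calls to $\textup{Rot}_{H_1},\ldots,\textup{Rot}_{H_k}$ using the explicit expanders of Lemma~\ref{lem:expanders} together with the composition lemma (Proposition~\ref{prop:composition}).

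First I would observe that $G_\ell$ is $D_\ell$-regular for $D_\ell = d\cdot c^{\ell}$, since each derandomized square multiplies the degree by $c$. Hence the $(i,j)$ entry of the transition matrix $\mw_\ell$ equals $m_{ij}/D_\ell$, where $m_{ij}$ is the number of edges from $j$ to $i$ in the multigraph $G_\ell$ (counted with multiplicity, which by the two-way labeling is exactly the number of labels $a\in[D_\ell]$ for which the $a$th edge out of $j$ lands at $i$). To compute $m_{ij}$, I maintain a counter, loop over all $a\in[D_\ell]$, call $\textup{Rot}_{G_\ell}(j,a)=(w,b)$, and increment the counter whenever $w=i$. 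The loop index and the counter take values in $[D_\ell]$ and so each fit in $O(\log D_\ell)=O(\log d + k\log c)$ bits; the output $m_{ij}/D_\ell$ is written to the (write-only, unbounded) output tape, e.g.\ as a numerator/denominator pair, incurring no extra work-tape space. By Lemma~\ref{lem:derandspacecomplexity}, each call to $\textup{Rot}_{G_\ell}$ costs $O(\log(n\cdot d)+k\log c)$ space given oracle access to $\textup{Rot}_{H_1},\ldots,\textup{Rot}_{H_k}$, so the whole procedure runs in $O(\log(n\cdot d)+k\log c)$ space relative to those oracles.

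It remains to replace the oracle calls by genuine computation. We may assume $d$ and $c$ are powers of two (pad with self-loops as in Section~\ref{sect:reduction_to_reg}), so each $H_i$ has $d\cdot c^{i-1}=2^{t_i}$ vertices with $t_i=\log d+(i-1)\log c$, and we take $H_i = H_{t_i,\mu}$ from Lemma~\ref{lem:expanders} for the desired expansion parameter $\mu$; that lemma gives $\textup{Rot}_{H_i}$ computable in space $O(t_i+\log c)=O(\log d + i\log c)\le O(\log d + k\log c)$, and the input length to this subroutine is itself only $O(\log d + k\log c)$, well within the ambient budget. Composing the $\textup{Rot}_{G_\ell}$-procedure above with these subroutines via Proposition~\ref{prop:composition} then yields a deterministic algorithm computing entry $(i,j)$ of $\mw_\ell$ in space $O(\log(n\cdot d)+k\log c)$, as claimed.

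There is no real obstacle here; the corollary is essentially a bookkeeping consequence of Lemma~\ref{lem:derandspacecomplexity}. The only points requiring a little care are (a) confirming $D_\ell=d\,c^\ell$ so that the enumeration counter contributes only the additive $k\log c$ term already present in the bound, and (b) checking in the composition step that the strings passed to the expander oracles stay short enough for Proposition~\ref{prop:composition} to apply without loss.
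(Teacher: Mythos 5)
Your proof is correct and follows essentially the same route as the paper's: enumerate the $O(d\cdot c^{\ell})$ outgoing edge labels at the source vertex via $\mathrm{Rot}_{G_\ell}$, tally how many land at the target, and divide by the degree, with the counter contributing only $O(\log d + k\log c)$ additional space. You are in fact slightly more careful than the paper on two points — getting the degree of $G_\ell$ and the orientation of the $(i,j)$ entry consistent with the paper's adjacency convention, and explicitly discharging the $\mathrm{Rot}_{H_i}$ oracles via Lemma~\ref{lem:expanders} and Proposition~\ref{prop:composition} — but these are bookkeeping refinements, not a different argument.
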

\begin{proof}
Lemma \ref{lem:derandspacecomplexity} shows that we can compute neighbors in the graph $G_\ell$ in space $O(\log (n\cdot d) + k\cdot\log c)$. Given coordinates $i,j$ the algorithm initiates a tally $t$ at 0 and computes Rot$_{G_\ell}(i,q)$ for each $q$ from $1$ to $d\cdot c^{\ell-1}$, the degree of $G_\ell$. If the vertex outputted by Rot$_{G_\ell}$ is $j$, then $t$ is incremented by 1. After the loop finishes, $t$ contains the number of edges from $i$ to $j$ and the algorithm outputs $t/d\cdot c^{\ell-1}$, which is entry $i,j$ of $\mw_\ell$. This used space $O(\log (n\cdot d) + k\cdot\log c)$ to compute the rotation map of $G_\ell$ plus space $O(\log (d\cdot c^{\ell-1}))$ to store $q$ and $t$. So the total space usage is $O(\log (n\cdot d) + k\cdot\log c) +O(\log (d\cdot c^{\ell-1}))=O(\log (n\cdot d) + k\cdot\log c)$.
\end{proof}

Next we show that the complete graph is a good approximation of the $k$th derandomized power for sufficiently large $k$. We do this by noting that the derandomized square of a graph $G$, reduces $\lambda(G)$ and that, the smaller $\lambda(G)$, the better approximation it is to the complete graph. 

\begin{lemma}[\cite{RozenmanVa05}]
\label{lem:dsquare}
Let $G$ be a $d$-regular digraph with a two-way labeling and $\lambda(G)\leq\lambda$ and let $H$ be a $c$-regular graph on $d$ vertices with a two-way labeling and $\lambda(H)\leq\mu$. Then we have 
\[
\lambda(G\ds H)\leq 1-(1-\lambda^2)\cdot(1-\mu)\leq \lambda^{2}+\mu.
\]
\end{lemma}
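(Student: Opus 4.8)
The statement is a theorem of Rozenman and Vadhan~\cite{RozenmanVa05}, so one option is simply to cite it; I will instead describe how I would reprove it, since the argument is short given Lemma~\ref{lem:ds_expander_decomp}. First I would put $\widetilde{\mw}$, the transition matrix of $G\ds H$, into the standard ``zig--zag'' form. Let $\mb$ be the transition matrix of $H$, let $\dot{\mw}\in\R^{nd\times nd}$ be the permutation matrix of the bijection $\mathrm{Rot}_G$ of $[n]\times[d]$, and set $\mproj\eqdef\tfrac{1}{\sqrt d}(\mI_n\otimes\vec{\bone}_d)\in\R^{nd\times n}$, so that $\mproj^\top\mproj=\mI_n$ and $\widehat{\mJ}\eqdef\mproj\mproj^\top=\mI_n\otimes\mJ_d$ is an orthogonal projection ($\mJ_d$ the $d\times d$ all-$1/d$ matrix). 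Unwinding Definition~\ref{def:derandsquare} — equivalently, rephrasing Lemma~\ref{lem:ds_expander_decomp} in terms of rotation maps — gives
\[
\mw=\mproj^\top\dot{\mw}\mproj,\qquad\qquad \widetilde{\mw}=\mproj^\top\dot{\mw}\,(\mI_n\otimes\mb)\,\dot{\mw}\mproj,
\]
reflecting that one step of $G\ds H$ is ``take a $G$-edge, relabel the arrival-label by an $H$-edge, take a $G$-edge.'' I would then split $\mI_n\otimes\mb=\widehat{\mJ}+\widehat{\mE}$ with $\widehat{\mE}\eqdef\mI_n\otimes(\mb-\mJ_d)$; since $(\mb-\mJ_d)\vec{\bone}_d=0$ and $\vec{\bone}_d^\top(\mb-\mJ_d)=0$ we have $\widehat{\mE}\widehat{\mJ}=\widehat{\mJ}\widehat{\mE}=0$ and $\|\widehat{\mE}\|_2=\|\mb-\mJ_d\|_2=\lambda(H)\le\mu$.

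Next I would decompose $\R^{nd}=V_0\oplus V_1$ with $V_0\eqdef\im\widehat{\mJ}$, identified with $\R^n$ via $\mproj$, and $V_1\eqdef\ker\widehat{\mJ}$. In this block form $\dot{\mw}=\left(\begin{smallmatrix}\mw&\mx\\\my&\mz\end{smallmatrix}\right)$ — its $(V_0,V_0)$-block is $\mproj^\top\dot{\mw}\mproj=\mw$ — and orthogonality of $\dot{\mw}$ gives the Gram relations $\mw^\top\mw+\my^\top\my=\mI_n$ and $\mw\mw^\top+\mx\mx^\top=\mI_n$, while $\widehat{\mE}=\left(\begin{smallmatrix}0&0\\0&\mE_1\end{smallmatrix}\right)$ with $\|\mE_1\|_2\le\mu$. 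Multiplying $\mproj^\top\dot{\mw}(\widehat{\mJ}+\widehat{\mE})\dot{\mw}\mproj$ out block-wise then yields the key structural identity
\[
\widetilde{\mw}=\mw^2+\mx\,\mE_1\,\my.
\]

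To bound $\lambda(G\ds H)=\|\widetilde{\mw}|_{\vec{\bone}_n^\perp}\|_2$, note that $\widetilde{\mw}$ maps $\vec{\bone}_n^\perp$ into itself (all graphs here are regular), so it suffices to bound $\langle w,\widetilde{\mw}x\rangle$ over unit $x,w\perp\vec{\bone}_n$. By the identity above,
\[
\langle w,\widetilde{\mw}x\rangle=\langle\mw^\top w,\mw x\rangle+\langle\mx^\top w,\mE_1\my x\rangle.
\]
Writing $s\eqdef\|\mw^\top w\|$ and $t\eqdef\|\mw x\|$, both are $\le\lambda(G)\le\lambda$ since $\mw$ and $\mw^\top$ both preserve $\vec{\bone}_n^\perp$. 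Cauchy--Schwarz bounds the first term by $st$; for the second, $\|\mE_1\my x\|\le\mu\|\my x\|$, and the Gram relations give the Pythagorean identities $\|\my x\|^2=x^\top(\mI_n-\mw^\top\mw)x=1-t^2$ and $\|\mx^\top w\|^2=w^\top(\mI_n-\mw\mw^\top)w=1-s^2$, so the second term is $\le\mu\sqrt{(1-s^2)(1-t^2)}$. Taking the supremum over $w$ and then over $x$,
\[
\lambda(G\ds H)\le\max_{s,t\in[0,\lambda]}\Big(st+\mu\sqrt{(1-s^2)(1-t^2)}\Big).
\]
Finally, substituting $s=\cos\alpha$, $t=\cos\beta$ with $\alpha,\beta\in[\arccos\lambda,\pi/2]$, the objective equals $\tfrac{1+\mu}{2}\cos(\alpha-\beta)+\tfrac{1-\mu}{2}\cos(\alpha+\beta)\le\tfrac{1+\mu}{2}+\tfrac{1-\mu}{2}(2\lambda^2-1)=\lambda^2+\mu(1-\lambda^2)$, using $\cos(\alpha-\beta)\le1$ and $\alpha+\beta\ge2\arccos\lambda$. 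Since $\lambda^2+\mu(1-\lambda^2)=1-(1-\lambda^2)(1-\mu)\le\lambda^2+\mu$, both inequalities of the lemma follow.

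The bookkeeping of the first step (in-labels versus out-labels in $\mathrm{Rot}_G$) is routine and essentially already contained in Lemma~\ref{lem:ds_expander_decomp}. I expect the real subtlety to be the sharpness of the constant: a naive triangle inequality on $\widetilde{\mw}=\mw^2+\mx\mE_1\my$ only gives $\lambda(G\ds H)\le\lambda^2+\mu$, and even optimizing over $t=\|\mw x\|$ alone does not suffice, because $\max_{t\in[0,\lambda]}\big(\lambda t+\mu\sqrt{1-t^2}\big)$ can exceed $1-(1-\lambda^2)(1-\mu)$. Obtaining the claimed bound genuinely requires the bilinear-form argument with the \emph{two} test vectors $x,w$, exploiting that the ``true-square'' part $\mw^2$ contracts both of them while the error part acts only on $\my x$ and $\mx^\top w$, whose squared lengths are \emph{exactly} $1-t^2$ and $1-s^2$.
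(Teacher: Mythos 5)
The paper does not prove this lemma itself but cites it from \cite{RozenmanVa05}, and your argument is correct and is essentially the proof given there: the factorization $\widetilde{\mw}=\mproj^\top\dot{\mw}(\mI_n\otimes\mb)\dot{\mw}\mproj$, the split $\mI_n\otimes\mb=\widehat{\mJ}+\widehat{\mE}$ yielding $\widetilde{\mw}=\mw^2+\mx\mE_1\my$, and the closing trigonometric optimization are exactly the Reingold--Vadhan--Wigderson-style zig-zag analysis that Rozenman and Vadhan use. Your concluding remark is also on point: the bilinear form with two test vectors (rather than a one-sided norm bound) is precisely what is needed to get the constant $1-(1-\lambda^2)(1-\mu)$ instead of the weaker $\lambda^2+\mu$.
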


 Noting that a $d$-regular digraph $G$ on $n$ vertices has $\lambda(G)\leq 1-1/(2\cdot d^2\cdot n^2)$ \cite{RozenmanVa05}, we get that setting $k=O(\log n)$ and setting $\mu=O(1/\polylog(n/\epsilon))$ (and hence $c=\polylog(n/\epsilon)$) in the recurrence from Lemma \ref{lem:dsquare} yields that the $k$th derandomized square $G_k$ has $\lambda(G_k)\leq \epsilon$. By Lemma \ref{lem:expander_approx_complete} this implies that the transition matrix $\mw_k$ of $G_k$ is such that $\mw_k\capprox_{\epsilon}\mJ$.

Next we note the space complexity of multiplying arbitrary matrices.

\begin{lemma}
\label{lem:matrixprod}
Given $n\times n$ matrices $\mm_1,\ldots,\mm_k$, their product $\mm_1\cdot\ldots\cdot \mm_k$ can be computed using $O(\log N\cdot \log k)$ space, where $N$ is the size of the input $(N=k\cdot n^2\cdot (\mathrm{bitlength~of~matrix~entries}))$.
\end{lemma}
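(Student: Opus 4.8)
The plan is to compute the product with the textbook recursive ``balanced binary tree'' algorithm, but to bound its space by \emph{composition} of space-bounded subroutines rather than by ever materializing an intermediate matrix. By clearing denominators we may reduce to integer matrices: if $D_i$ is the product of the denominators of the entries of $\mm_i$, then $\ma_i \eqdef D_i\mm_i$ is integral with entries of bitlength $O(N)$, and $\mm_1\cdots\mm_k=(D_1\cdots D_k)^{-1}\ma_1\cdots\ma_k$; computing the $D_i$, the matrices $\ma_i$, and the final division are all standard $O(\log N)$-space tasks (iterated integer multiplication and integer division are logspace-computable). So it suffices to compute a single entry of an integer matrix product $\ma_1\cdots\ma_k$. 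Define a recursive procedure $\mathrm{Prod}(a,b;i,j)$ returning the $(i,j)$ entry of $\ma_a\ma_{a+1}\cdots\ma_b$: if $a=b$ it reads the entry of $\ma_a$ (formed from the input via the scaling above); otherwise, with $c=\lfloor(a+b)/2\rfloor$, it outputs $\sum_{\ell=1}^n \mathrm{Prod}(a,c;i,\ell)\cdot\mathrm{Prod}(c+1,b;\ell,j)$, sweeping $\ell$ with a single counter. The whole product is obtained by calling $\mathrm{Prod}(1,k;i,j)$ for every $(i,j)$, writing each result to the write-only output tape and reusing all work space between calls, so it suffices to bound the space of one such call.

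The key is a two-level bitlength estimate together with the fact that we never store a full intermediate entry. An entry of a product of at most $k$ of the $\ma_i$ is an integer of bitlength $B=O(N)$ (a product of $\le k$ $n\times n$ integer matrices with $O(N/k)$-bit entries has entries of bitlength $\le k\cdot O(N/k)+k\log n=O(N)$); this is in general far larger than $O(\log N)$, so we cannot afford to store even one such integer, but crucially $\log B=O(\log N)$, and that is all we will use. At an internal node the ``local'' work is to form $\sum_{\ell=1}^n u_\ell v_\ell$ where each $u_\ell=\mathrm{Prod}(a,c;i,\ell)$ and $v_\ell=\mathrm{Prod}(c+1,b;\ell,j)$ is an integer of bitlength $\le B$ supplied by a recursive call. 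Each pairwise product $u_\ell v_\ell$ is computable in space $O(\log B)=O(\log N)$ given oracle access to the bits of $u_\ell,v_\ell$, and the iterated sum of the $n$ resulting products is computable in space $O(\log(nB))=O(\log N)$ given oracle access to those products; hence the local work at any node is an $O(\log N)$-space computation that makes oracle calls to two instances of $\mathrm{Prod}$ on subintervals of half the length.

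Finally I would unwind the recursion using the composition of space-bounded computations (Proposition~\ref{prop:composition}, in its standard oracle-machine form, where re-reading a bit of a subroutine's output costs re-running that subroutine). The recursion has depth $\lceil\log_2 k\rceil$; its leaves cost $O(\log N)$ (reading an entry plus the denominator scaling), and by the composition bound the space used at a node of depth $d$ exceeds that at depth $d+1$ by an additive $O(\log N)$; the single final division by $D_1\cdots D_k$ adds another $O(\log N)$. Summing over the $O(\log k)$ levels gives total work-tape space $O(\log N\cdot\log k)$. The one genuine subtlety, which this plan is organized around, is that partial products produced inside the recursion have bitlength polynomial (not logarithmic) in $N$, so an implementation that cached a running accumulator at every active stack frame would use $\Theta(N)$ space; the $O(\log N\cdot\log k)$ bound is obtained precisely by computing each entry bit-by-bit on demand and \emph{recomputing} child entries rather than storing them, paying only $O(\log N)$ per level because $\log$ of the bitlength of every entry involved is $O(\log N)$ and pairwise integer arithmetic is logspace.
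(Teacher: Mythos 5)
Your proof is correct and follows essentially the same route as the paper, which simply invokes the natural divide-and-conquer algorithm together with logspace matrix multiplication and the composition of space-bounded algorithms (Proposition~\ref{prop:composition}), deferring details to \cite{MRSV17}. Your additional care about clearing denominators and about the $O(N)$ bitlength of intermediate entries (storing only $O(\log N)$-bit indices and recomputing child bits on demand) is a correct and worthwhile filling-in of exactly the details the paper omits.
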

\begin{proof}
This uses the natural divide and conquer algorithm and the fact that two matrices can be multiplied in logarithmic space. A detailed proof can be found in \cite{MRSV17}.
\end{proof}

Fourth, we show that we can boost a weak approximation to the pseudoinverse to a strong approximation in small space.

\begin{lemma}\label{lem:space_boost}
There is a deterministic algorithm that, given matrices $\ma,\mb,\mf \in \R^{n \times n}$, such that $\mf$ is PSD, and $\|\mI - \mb \ma\|_{\mf} \leq \alpha$ for some constant $\alpha < 1$, computes a matrix $\mproj$ such that $\|\mI-\mproj\ma\|_{\mf}\leq \epsilon$. The algorithm uses space $O(\log N\cdot\log(\log_{1/\alpha}(1/\epsilon)))$, where $N$ is the input size.
\end{lemma}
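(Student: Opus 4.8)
The plan is to take $\mproj$ to be the preconditioned Richardson iterate of Lemma~\ref{lem:richardson-poly} and to show it can be evaluated in the claimed space. Write $\mE = \mI - \mb\ma$, so that by Lemma~\ref{lem:richardson-poly} the matrix $\mproj_m = \sum_{i=0}^m \mE^i\mb$ satisfies $\|\mI-\mproj_m\ma\|_\mf = \|\mE^{m+1}\|_\mf \le \alpha^{m+1}$. It therefore suffices to take $m+1 = 2^j$ with $j = \max\{1,\lceil\log_2\log_{1/\alpha}(1/\eps)\rceil\}$, so that $\alpha^{m+1}\le\eps$ and $j = O(\log\log_{1/\alpha}(1/\eps))$ (and when $\eps\ge\alpha$ we may simply output $\mproj=\mb$). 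To avoid summing $2^j$ terms, I would use the standard repeated-squaring identity for a truncated geometric series,
$$\sum_{i=0}^{2^j-1}\mE^i \;=\; \prod_{l=0}^{j-1}\bigl(\mI+\mE^{2^l}\bigr),$$
which exhibits $\mproj_m$ as a product of only $j+1$ matrices, namely $(\mI+\mE),(\mI+\mE^2),\dots,(\mI+\mE^{2^{j-1}})$ and $\mb$.

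Evaluating this product is then a three-layer nested computation, each layer handled by composing the matrix-product routine of Lemma~\ref{lem:matrixprod} with a subroutine computing the entries of its factors, via Proposition~\ref{prop:composition}. Innermost: an entry of $\mE = \mI-\mb\ma$ is one matrix multiplication, computable in space $O(\log N)$ with output of bitlength $O(\log N)$. Middle: an entry of $\mE^{2^l}$ (hence of $\mI+\mE^{2^l}$) is obtained by running Lemma~\ref{lem:matrixprod} on $2^l$ copies of $\mE$ supplied by the innermost routine; since the factors have $O(\log N)$-bit entries, Proposition~\ref{prop:composition} gives space $O\bigl(l\cdot(l+\log N)\bigr)$, and the output has bitlength at most $2^l\cdot O(\log N)$. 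Outermost: $\mproj_m$ is the product of the $j+1$ matrices above, so one more application of Lemma~\ref{lem:matrixprod} composed with the middle routine (whose worst case is $l=j-1$) yields total space $O\bigl(j^2 + j\log N + (j+\log N)\log j\bigr)$.

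To finish, I would observe that, because $\eps$ is part of the input and $\alpha$ is a constant bounded away from $1$, we have $\log_{1/\alpha}(1/\eps) = O(N)$, hence $m+1 = 2^j = O(N)$ and $j = O(\log N)$; in particular every intermediate quantity (the powers $\mE^{2^l}$ and their products) is a rational of bitlength $\poly(N)$, so the computation is exact and requires no rounding, and the space bound collapses to $O(j\log N) = O(\log N\cdot\log\log_{1/\alpha}(1/\eps))$, as claimed. The only real obstacle is this last piece of space accounting: one must check that the blow-up of the entries of $\mE^{2^l}$ (doubly exponential in $l$, hence up to $\poly(N)$ bits) enters the space bound only through a logarithm — this is exactly the point of the $s_2(\ell_1(n))$ term in Proposition~\ref{prop:composition}, since those large entries are recomputed on demand rather than stored — and that the $O(j^2)$ term arising from the $l$-fold nested squaring is dominated by $O(j\log N)$ because $j = O(\log N)$. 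An alternative that avoids the $j^2$ term is to round each $\mE^{2^l}$ to $O(\log(N/\eps))$ bits and carry a short perturbation analysis showing the accumulated error in $\|\mI-\mproj\ma\|_\mf$ stays below $\eps$; either route works.
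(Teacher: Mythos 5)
Your proposal is correct and rests on the same two ingredients as the paper's proof --- the Richardson iterate of Lemma~\ref{lem:richardson-poly} and the $O(\log N\cdot\log k)$ iterated matrix product of Lemma~\ref{lem:matrixprod} --- but you organize the evaluation of the polynomial differently. The paper takes $\mproj=\sum_{i=0}^{m}(\mI-\mb\ma)^i\mb$ with $m=O(\log_{1/\alpha}(1/\eps))$ and evaluates it directly: each monomial $(\mI-\mb\ma)^i\mb$ is already a product of at most $m+1$ matrices, so a single application of Lemma~\ref{lem:matrixprod} with $k=O(m)$ (composed via Proposition~\ref{prop:composition} with the logspace routine producing the entries of $\mE=\mI-\mb\ma$, followed by a logspace summation over $i$) gives $O(\log N\cdot\log m)$ space, with no need for the factorization $\sum_{i=0}^{2^j-1}\mE^i=\prod_{l=0}^{j-1}(\mI+\mE^{2^l})$ or the three-layer nesting. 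Your route is nonetheless sound: the geometric-series identity is correct, the choice $2^j\geq\log_{1/\alpha}(1/\eps)$ gives error at most $\eps$, and your accounting of the $\poly(N)$ bit-length of the entries of $\mE^{2^l}$ and of the $O(j^2)$ term (absorbed into $O(j\log N)$ because $\log_{1/\alpha}(1/\eps)=O(N)$ forces $j=O(\log N)$) is exactly the kind of detail the paper's one-line proof elides. The extra machinery buys nothing asymptotically here, since the direct evaluation already involves only $O(m)$ factors per monomial and hence the same $O(\log N\cdot\log(\log_{1/\alpha}(1/\eps)))$ bound; its only virtue is replacing a sum of $m+1$ long products by a single product of $O(\log m)$ factors, at the cost of the nested composition you then have to control.
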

\begin{proof}
From Lemma \ref{lem:richardson-poly} we have that 
\[
\mproj= \sum_{i=0}^{O(\log_{1/\alpha}(1/\epsilon))} (\mI - \mb \ma)^i \mb
\]
has the desired property. Since the above polynomial has degree $O(\log(1/\epsilon))$, From Lemma \ref{lem:matrixprod}, we can compute $\mproj$ in space $O(\log N\cdot\log(\log_{1/\alpha}(1/\epsilon)))$ as desired.
\end{proof}

Finally, we show any extremely good approximation of the pseudoinverse in the ``spectral'' sense of the above Lemma~\ref{lem:space_boost} is also a good entrywise approximation of the true pseudoinverse.
\begin{lemma}\label{lem:spectral_to_entrywise_approx}
Suppose $\|\mI-\mproj\ma\|_{\mf}\leq \epsilon$ for conformable real square matrices $\mproj, \ma,\mf$ where $\mf$ is positive semidefinite. Suppose $\mproj,\ma,\mf$ all have the same left and right kernels, which are equal to each other. Then every entry of $\mproj$ differs by at most $\pm \frac{\epsilon}{2} \cdot (\lambda_\text{max}(\ma^{\top+} \mf \ma^{+})+\lambda_\text{max}(\mf^+))$ additively from the corresponding entry of $\ma^+$.

In particular, if for some parameter $N$, the matrices $\mf,\ma$ have min and max nonzero singular values between $1/\poly(N)$ and $\poly(N)$, then every entry of $\mproj$ differs by at most $\pm \epsilon \cdot \poly(N)$ from the corresponding entry of $\ma^+$.

Even more specifically, if $\ma$ is a regular Eulerian directed Laplacian matrix with integer edge weights $\leq N$ of a graph with $\gamma(G) \geq 1/\poly(N)$ for some $N \geq n$ and $F$ is bounded polynomially in $\mU_\mproj$ or $\mI$, then $\mproj$ differs by at most $\pm \epsilon \cdot \poly(N)$ from the corresponding entry of $\ma^+$.
\end{lemma}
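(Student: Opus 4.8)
The plan is to turn the $\mf$-seminorm bound into a bound on the matrix $\mproj-\ma^+$ itself and then read off individual entries, so that everything reduces to two eigenvalue estimates. Write $K$ for the common kernel of $\mproj,\ma,\mf$ (which by hypothesis is also their common left kernel) and let $\Pi$ be the orthogonal projection onto $K^\perp$; thus $\mproj=\Pi\mproj\Pi$, $\ma=\Pi\ma\Pi$, $\ma^+=\Pi\ma^+\Pi$, $\mf=\Pi\mf\Pi$, and $\ma\ma^+=\ma^+\ma=\Pi$ (since $\operatorname{range}(\ma)=(\ker\ma^\top)^\perp=K^\perp$). Set $E:=\mproj-\ma^+$, so $E=\Pi E\Pi$ as well.

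First I would establish the identity $E=\mf^{+/2}\bigl(\mf^{1/2}E\ma\mf^{+/2}\bigr)\mf^{1/2}\ma^+$ together with $\bigl\|\mf^{1/2}E\ma\mf^{+/2}\bigr\|_2\le\epsilon$. The norm bound holds because $\mproj\ma=\ma^+\ma+E\ma=\Pi+E\ma$ and $\mf^{1/2}(\mI-\Pi)=0$ (as $K=\ker\mf=\ker\mf^{1/2}$), so $\mf^{1/2}(\mI-\mproj\ma)\mf^{+/2}=-\mf^{1/2}E\ma\mf^{+/2}$, whose spectral norm is exactly $\|\mI-\mproj\ma\|_\mf\le\epsilon$. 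The reconstruction identity then follows from $\mf^{+/2}\mf^{1/2}=\Pi$, $\ma\Pi=\ma$, $\ma\ma^+=\Pi$, and $\Pi E\Pi=E$. Consequently, for all $i,j$,
\[
|E_{ij}|=\bigl|e_i^\top\mf^{+/2}\bigl(\mf^{1/2}E\ma\mf^{+/2}\bigr)\mf^{1/2}\ma^+e_j\bigr|\le\|\mf^{+/2}e_i\|_2\cdot\epsilon\cdot\|\mf^{1/2}\ma^+e_j\|_2,
\]
and since $\|\mf^{+/2}e_i\|_2^2=e_i^\top\mf^+e_i\le\lambda_{\max}(\mf^+)$ and $\|\mf^{1/2}\ma^+e_j\|_2^2=e_j^\top\ma^{\top+}\mf\ma^+e_j\le\lambda_{\max}(\ma^{\top+}\mf\ma^+)$, applying AM--GM to the product of the two square roots yields exactly $|E_{ij}|\le\tfrac{\epsilon}{2}\bigl(\lambda_{\max}(\ma^{\top+}\mf\ma^+)+\lambda_{\max}(\mf^+)\bigr)$, the first claim.

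For the two specializations I would simply bound these two eigenvalues. If the nonzero singular values of $\mf$ and $\ma$ all lie in $[1/\poly(N),\poly(N)]$, then $\lambda_{\max}(\mf^+)\le\poly(N)$ and $\lambda_{\max}(\ma^{\top+}\mf\ma^+)\le\|\ma^+\|_2^2\,\|\mf\|_2\le\poly(N)$, giving entrywise error $\epsilon\cdot\poly(N)$. For the Eulerian-Laplacian case, write $\ma=d(\mI-\mw)$ where $\mw$ is the associated transition matrix and $d\le nN\le\poly(N)$ is the degree (using $N\ge n$); regularity and the Eulerian property make $\vec{\bone}$ span both one-dimensional kernels, and for $x\perp\vec{\bone}$ one has $\mw x\perp\vec{\bone}$ and $\|\ma x\|=\|x-\mw x\|\ge\|x\|-\|\mw x\|\ge\gamma(G)\|x\|$, so the nonzero singular values of $\ma$ lie in $[d\gamma(G),2d]\subseteq[1/\poly(N),\poly(N)]$; the analogous quadratic-form estimate $x^\top\mU_\ma x=d(\|x\|^2-x^\top\mw x)\ge d\gamma(G)\|x\|^2$ shows $\mU_\ma$ has nonzero eigenvalues in $[d\gamma(G),2d]$ too, so any $\mf$ sandwiched up to $\poly(N)$ factors between $\mU_\ma$ and $\poly(N)\cdot\mU_\ma$ (or between $\Pi/\poly(N)$ and $\poly(N)\cdot\Pi$) has nonzero singular values in the same range, and the previous sentence applies.

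The calculations here are routine; the only delicate point is the kernel bookkeeping — it is precisely the hypothesis that $\mproj,\ma,\mf$ share a single two-sided kernel that makes $\mf^{1/2}(\mI-\mproj\ma)\mf^{+/2}$ collapse to $-\mf^{1/2}E\ma\mf^{+/2}$ and validates the reconstruction identity for $E$ (without it the $\mI-\Pi$ piece and the mismatch between $\operatorname{range}(\ma)$ and $\operatorname{range}(\ma^\top)$ would obstruct the argument). The one genuine, if small, estimate in the specializations is converting the spectral-gap hypothesis $\gamma(G)\ge1/\poly(N)$ into a lower bound on the least nonzero singular value of $\ma$ and of $\mU_\ma$, which is the triangle-inequality / Cauchy--Schwarz step above.
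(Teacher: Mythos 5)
Your proof is correct and follows essentially the same route as the paper's: both reduce the $\mf$-operator-norm hypothesis to the bilinear bound $|y^\top(\ma^+-\mproj)x|\le\epsilon\sqrt{(x^\top\ma^{\top+}\mf\ma^+x)(y^\top\mf^+y)}$, then take $x,y$ to be standard basis vectors, apply AM--GM, and bound the resulting eigenvalues by $\poly(N)$ via the singular-value assumptions. The only difference is presentational: you derive that bilinear inequality explicitly via the reconstruction identity for $E=\mproj-\ma^+$ and the kernel bookkeeping, whereas the paper asserts it as an equivalence without proof, so if anything your write-up is the more complete one.
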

\begin{proof}
The inequality in the lemma statement is equivalent to the statement that for all real conformable vectors $x,y$,
\[
y^\top (\ma^+ - \mproj) x \leq \epsilon \sqrt{ \left( x^\top \ma^{\top+} \mf \ma^{+} x \right) \cdot \left( y^\top \mf^+ y \right) }.
\]
By the AM-GM inequality,
\[
y^\top (\ma^+ - \mproj) x \leq \frac{\epsilon}{2} \left( x^\top \ma^{\top+} \mf \ma^{+} x + y^\top \mf^+ y \right).
\]
Since multiplying $x$ by a negative number changes the sign of the LHS but not the RHS,
\[
|y^\top (\ma^+ - \mproj) x| \leq \frac{\epsilon}{2} \left( x^\top \ma^{\top+} \mf \ma^{+} x + y^\top \mf^+ y \right).
\]
Consider any entry $(i,j)$ of $P^+$. Set $x,y$ to be the $i$th and $j$th standard basis vectors, respectively. Then the inequality says that the $(i,j)$ entries of $P^+$ and $A^+$ differ additvely by at most $\pm$ the RHS.

Thus, we need only bound the RHS. 
We know that the $x$ and $y$ we have selected are unit vectors. Thus, we may upper bound the RHS as
\[
\frac{\epsilon}{2} \left( x^\top \ma^{\top+} \mf \ma^{+} x + y^\top \mf y \right) \leq \frac{\epsilon}{2} \left( \max_x \frac{x^\top \ma^{\top+} \mf \ma^{+} x}{x^\top x} + \max_x \frac{y^\top \mf^+ y}{y^\top y} \right).
\]
The two terms inside the parentheses are the max eigenvalues of the symmetric matrices $\ma^{\top+} \mf \ma^{+}$ and $\mf^+$, respectively. The second part of the result follows from the fact that the operator norm is submultiplicative.

For the final part of the result, we prove that the max and min nonzero singular values of $\ma,\mf$ under the additional assumptions stated are between $1/\poly(N)$ and $\poly(N)$.

Note first that if we have a directed Eulerian graph with integer edge weights between $0$ and $N$ that the maximum singular value of its Laplacian is at most $\poly(n,N) \leq \poly(N)$ since all entries in the matrix are at most $\poly(N)$. 
It's minimum nonzero singular value is at least $1/\poly(N)$ by the assumption that the spectral gap is at least $1/\poly(N)$ and the triangle inequality. Thus, the max and min nonzero singular values of $\ma$ are between $1/\poly(N)$ and $\poly(N)$.

For the maximum and minimum nonzero singular values of $\mf$, or equivalently, max and min nonzero eigenvalues of $\mf$, it suffices to bound the min and max nonzero eigenvalues of $\mU_\mproj$. By the integrality of the edge weights of $\mproj$ and the fact that they are at most $N$, we have that the undirected Laplacian matrix $\mU_\mproj$ has has nonzero eigenvalues between $1/\poly(N)$ and $\poly(N)$.
\end{proof}

Now we can prove Theorem \ref{thm:space_main}
 
\begin{proof}[Proof of Theorem \ref{thm:space_main}]
Set $k=O(\log N)$. Let $\mw=\mw_0,\ldots,\mw_{k-1}$ be the transition matrices of the repeated derandomized square graphs $G_0,\ldots,G_{k-1}$ as defined in Lemma \ref{lem:derandspacecomplexity} using expanders $\{H_i\}$ with degree $c=\polylog(N)$ and hence $\lambda(H_i)\leq 1/\polylog(N))$. Let $\mw_k=\mJ=\vec{\bone}\vec{\bone}^\top/n$. It follows from Theorem \ref{thm:eul_derandsq_capprox} that for each $i\in\{0,\ldots,k-2\}$, $\mw_{i+1}\capprox_{1/\polylog(N)}\mw{i}^2$ and from Lemmas \ref{lem:dsquare} and \ref{lem:expander_approx_complete} that $\mw_{k-1}\capprox_{1/\polylog(N)}\mw_k=\mJ$.

Theorem \ref{thm:app_pinv_ring} and Proposition \ref{prop:tilde_L_pinv_fac} say that the matrix 
\[
{\mL^{(k)}}^{+} = 
\mproj_{\mL^{(k)}} \my_1^{-1} \cdots \my_i^{-1}
\begin{bmatrix}
    \mI_{(2^{k}-1)n} & 0 \\
    0 & \mI_{n} - \mw_{k}
\end{bmatrix}^{+}
\mx_k^{-1} \cdots \mx_{1}^{-1}
\mproj_{{\mL^{(k)}}^\top}
\]
where 
\[
\mx_j^{-1} = \begin{bmatrix}
     \mI_{(2^{k}-2^{k-j+1})n} & 0 & 0 \\
     0 & \mI_{2^{k-j}n} & 0 \\ 
    0 & \mI_{2^{k-j}} \otimes \mw_{j-1} & \mI_{2^{k-j}n}
\end{bmatrix},
\my_j^{-1} = \begin{bmatrix}
     \mI_{(2^{k}-2^{k-j+1})n} & 0 & 0 \\
     0 & \mI_{2^{k-j}n} & \mc_{2^{k-j}} \otimes \mw_{j-1} \\
    0& 0 & \mI_{2^{k-j}n}
\end{bmatrix}.
\]
is such that there exists a PSD matrix $\mf$ such that $\|\mI_{2^kn} - \widetilde{\mL}^+ \mL\|_\mf \leq O(k\epsilon)$ and $\mU_\mL/O(k) \preceq \mf \preceq O(2^{2k} n^2 k^5) \mU_\mL$.

Note that ${\mL^{(k)}}^{+}$ matrix is the product of $O(k)=O(\log N)$ matrices, each of which can be computed in space $O(\log N+k\cdot\log c)=O(\log N\cdot\log\log N)$ by Corollary \ref{cor:entriesofM}. Multiplying them together to compute $\widetilde{L}^+$ adds an additional $O(\log N\cdot \log k)=O(\log N\cdot\log\log N)$ space by Lemma \ref{lem:matrixprod} and the composition of space-bounded algorithms (Proposition \ref{prop:composition}). Thus the total space for computing ${\mL^{(k)}}^{+}$ is $O(\log N\cdot\log\log N)$.

Applying preconditioned Richardson to ${\mL^{(k)}}^{+}$, Lemma \ref{lem:space_boost} says that we can compute a matrix $\widetilde{\mL}^+$ such that $\|\mI_{\im(\mL)}-\widetilde{\mL}^+\mL\|_{\mf}\leq \epsilon$ using an additional $O(\log N\cdot\log\log(1/\epsilon)$ space. 

Since $\mf$ is bounded polynomially by $\mU_{\widetilde{\mL}^+}$, we can adjust $\epsilon$ in Lemma \ref{lem:space_boost} by a $\poly(N)$ factor to get approximation guarantees in $\|\cdot\|_{\mU_{\mL}}$ (or spectral norm) in space $O(\log N\cdot\log\log(N/\epsilon))$.

Applying Lemma~\ref{lem:pinv_reduce} to this approximate pseudoinverse yields an approximate pseudoinverse $\mproj$ for $\mI-\mw_0$.

Applying Lemma~\ref{lem:spectral_to_entrywise_approx}, we get that if we redefine the value of $\epsilon$ we have been using to $\epsilon \gets \epsilon / \poly(N)$ from the very beginning, then $\mproj$ is a $\pm \epsilon$ additive entrywise approximation of $(\mI-\mw_0)^+$. This only changes the space usage by at most a constant factor.
\end{proof}

\section{Estimating Random Walk Probabilities}
In this section we show that a space-efficient algorithm for Eulerian Laplacian systems can be used to get strong approximations to $k$-step random walk probabilities on Eulerian digraphs, and more generally, to approximate the product of potentially distinct Eulerian transition matrices. We then give a new result for approximating random walk probabilities on arbitrary digraphs. 
\begin{theorem}
\label{thm:state_probs}
Let $\mw_1,\ldots, \mw_k$ be the transition matrices of Eulerian digraphs $G_1,\ldots,G_k$ on $n$ vertices such that for vertex $v\in[n]$, the degree of $v$ is the same in $G_1,\ldots, G_k$. There is a deterministic algorithm that, when given $u,v\in [n]$ and $\epsilon>0$, computes 
\[
(\mw_1\mw_2\ldots \mw_k)_{uv}
\]
to within $\pm\epsilon$. The algorithm uses space $O(\log (N)\cdot \log\log(N/\epsilon))$, where $N$ is the size (bitlength) of the input.
\end{theorem}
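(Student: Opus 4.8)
The plan is to reduce computing $(\mw_1\mw_2\cdots\mw_k)_{uv}$ to pseudoinverting a \emph{single} Eulerian random-walk Laplacian and then invoke Theorem~\ref{thm:space_main}. First I would thread the $k$ given graphs together into a ``cycle-lifted'' digraph $H$: its vertex set is $([k+1]\times[n])\cup\{z\}$; for $i=1,\dots,k$ a copy of $G_{k+1-i}$ carries the edges from layer $i$ to layer $i+1$; and for each $v\in[n]$ we add $\md_{vv}$ parallel edges from $(k+1,v)$ to the single ``return'' vertex $z$ and $\md_{vv}$ parallel edges from $z$ to $(1,v)$, where $\md$ is the degree matrix common to all the $G_j$. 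I would then check that $H$ is Eulerian — this is exactly where the shared-degree hypothesis is used, together with the fact that in an Eulerian digraph in-degree equals out-degree, so vertex $(j,v)$ has in-degree and out-degree both $\md_{vv}$ while $z$ has in- and out-degree $\sum_v\md_{vv}$ — and that $H$ is strongly connected: after discarding the isolated vertices with $\md_{vv}=0$ (which contribute $0$ to the product), from $z$ one reaches all of layer $1$, hence — since every vertex of every $G_j$ has positive in-degree — all of layer $2$, and so on up to layer $k{+}1$, and every vertex can reach $z$ by stepping forward through the layers (the pointer to Section~\ref{sect:reduction_to_reg} is available as a fallback if one prefers to argue componentwise). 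Let $\mL_H=\mI-\mw_H$ be its random-walk Laplacian.

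The payoff of this construction: let $F$ be all vertices of $H$ except $z$. The only edges of $H$ inside $F$ run from layer $i$ to layer $i+1$, so $(\mw_H)_{FF}$ is a nilpotent block-shift with $(\mw_H)_{FF}^{k+1}=0$; hence $(\mL_H)_{FF}=\mI_{(k+1)n}-(\mw_H)_{FF}$ is invertible, $\big((\mL_H)_{FF}\big)^{-1}=\sum_{j=0}^{k}(\mw_H)_{FF}^{j}$, and the (layer-$(k{+}1)$, layer-$1$) block of this sum equals $\mw_1\mw_2\cdots\mw_k$ exactly, so $(\mw_1\cdots\mw_k)_{uv}=\big[\big((\mL_H)_{FF}\big)^{-1}\big]_{(k+1,u),(1,v)}$ with no error. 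The one genuinely algebraic step is then to express this grounded inverse through $\mL_H^{+}$, which is what Theorem~\ref{thm:space_main} actually delivers. Since $H$ is strongly connected and Eulerian, $\ker(\mL_H^{\top})=\mathrm{span}(\vec{\bone})$ and $\ker(\mL_H)=\mathrm{span}(\pi)$ with $\pi:=\md_H\vec{\bone}$ the degree vector of $H$; the standard one-vertex-grounding identity says that for $b$ supported on $F$ one has $x_F=\big((\mL_H)_{FF}\big)^{-1}b_F$ where $x=\mL_H^{+}\!\big(b-(\vec{\bone}^{\top}b)\mathbf{e}_z\big)+\alpha\pi$ and $\alpha=-\pi_z^{-1}\big[\mL_H^{+}\!\big(b-(\vec{\bone}^{\top}b)\mathbf{e}_z\big)\big]_z$. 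Taking $b=\mathbf{e}_{(1,v)}$ writes $(\mw_1\cdots\mw_k)_{uv}$ as a fixed linear combination of four entries of $\mL_H^{+}$ with coefficients in $\{\pm1,\ \pm\pi_{(k+1,u)}/\pi_z\}$; since the entries of $\pi$ are positive integers bounded by $\poly(N)$, this map is $\poly(N)$-Lipschitz in the entrywise norm, and (together with $\pi$ and the adjacency structure of $H$) it is computable in space $O(\log N)$.

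To finish, I would run the algorithm of Theorem~\ref{thm:space_main} on $H$ — whose description has bitlength $N'=\poly(N)$ — with accuracy $\epsilon':=\epsilon/\poly(N)$ chosen to absorb the Lipschitz constant above, obtaining $\widetilde{\mL_H^{+}}$ entrywise within $\epsilon'$ of $\mL_H^{+}$, and then substitute its four relevant entries (and the easily computed $\pi$) into the displayed formula; this returns $(\mw_1\cdots\mw_k)_{uv}$ to within $\pm\epsilon$. For the space bound, Theorem~\ref{thm:space_main} on $H$ uses $O(\log N'\cdot\log\log(N'/\epsilon'))=O(\log N\cdot\log\log(N/\epsilon))$ (as $N'=\poly(N)$ and $\epsilon'=\epsilon/\poly(N)$), the construction of $H$ and the arithmetic post-processing each use $O(\log N)$ space, and composing these via Proposition~\ref{prop:composition} yields the claimed $O(\log N\cdot\log\log(N/\epsilon))$.

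I expect the main obstacle to live in the second paragraph: nailing down the exact reduction — that the cycle-lifted $H$ really is Eulerian and strongly connected (the shared-degree hypothesis being precisely what makes this work), that $(\mw_H)_{FF}$ is nilpotent with the stated $(k{+}1)$-block of its Neumann series, and, most of all, that the grounded-inverse-to-pseudoinverse identity holds with a polynomially bounded ``condition,'' so that the merely entrywise-accurate $\widetilde{\mL_H^{+}}$ coming out of Theorem~\ref{thm:space_main} is good enough. Everything downstream — the $O(\log N)$-space bookkeeping and the space-composition accounting — should be routine.
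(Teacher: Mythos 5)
Your proposal is correct and follows the same route as the paper: both build the identical layered Eulerian graph (copies of $G_k,\dots,G_1$ across $k{+}1$ layers of $n$ vertices plus a single return vertex with degree-many parallel edges), reduce to the pseudoinverse of its random-walk Laplacian via Theorem~\ref{thm:space_main} at accuracy $\epsilon/\poly(N)$, and finish with $O(\log N)$-space post-processing. The only difference is in how the answer is extracted from $\mL_H^{+}$: the paper identifies $(\mw_1\cdots\mw_k)_{uv}$ as an escape probability and invokes the formula of Lemma~\ref{lem:escape_prob} (from \cite{CKPPSV16}), whereas you derive the same quantity directly as an entry of the grounded inverse $\big((\mL_H)_{FF}\big)^{-1}$ via the Neumann series of the nilpotent block $(\mw_H)_{FF}$ and the standard one-vertex-grounding identity; the two extractions are equivalent (the escape probability \emph{is} what the grounded inverse computes), and your version is somewhat more self-contained, while the paper's leans on the cited lemma to certify that the relevant coefficients ($\alpha$, $1/\beta$ there; $\pi_{(k+1,u)}/\pi_z$ in yours) are $\poly(N)$-bounded.
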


An important special case of Theorem \ref{thm:state_probs} is when $\mw_1=\mw_2=\ldots=\mw_k$, in which case the algorithm computes an additive $\epsilon$ approximation to the probability that a random walk of length $k$ in the graph starting at $v$ ends at $u$ and uses space $O(\log (k\cdot N)\cdot \log\log(k\cdot N/\epsilon))$ where $N$ is the bitlength of $\mw_1$.

To prove Theorem \ref{thm:state_probs}, we first define escape probabilities.

\begin{definition}
In a graph $G=(V,E)$, for two vertices $u$ and $v$, the {\em escape probability} $p_{w}(u,v)$ denotes the probability that a random walk starting at vertex $w$ reaches $u$ before first reaching $v$. 
\end{definition}

In \cite{CKPPSV16}, they give a useful characterization of escape probabilities on arbitrary digraphs in terms of the pseudoinverse, which we restate below for Eulerian graphs.
\begin{lemma}[\cite{CKPPSV16} (Restated for Eulerian Graphs)]
\label{lem:escape_prob}
Let $\mw$ be a random walk matrix associated with a strongly connected Eulerian graph $G$. Let $s$ be the stationary distribution of $G$ and let $S$ be the diagonal matrix with $s$ on the diagonal. Let $u,v$ be two vertices. Let $p$ be the vector of escape probabilities, where $p_w$ represents the probability that a random walk starting at $w$ reaches $u$ before $v$. Then
\[
p=\beta(\alpha\cdot s+(\mI-\mw)^{+}(e_u-e_v))
\]
where 
\[
\alpha =-e_v^{\top}S^{-1}(\mI-\mw)^{+}(e_u-e_v)
\]
and 
\[
\beta = \frac{1}{s_u(e_u-e_v)^{\top}S^{-1}(\mI-\mw)^{+}(e_u-e_v)}.
\]
Futhermore, $\alpha=\poly(N)$ and $1/\beta=\poly(N)$.
\end{lemma}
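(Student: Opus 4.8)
The plan is to recognize the displayed vector as the unique harmonic extension that defines escape probabilities, now written out for Eulerian graphs; this is essentially the specialization of the general characterization of \cite{CKPPSV16}. First I would set up the defining system: $p$ is the unique vector with $p_u=1$, $p_v=0$, and $p$ harmonic at every other vertex, i.e.\ $p_w=\sum_{w'}\Pr[w\to w']\,p_{w'}$ for all $w\notin\{u,v\}$. Uniqueness is the usual maximum principle: the difference of two solutions is harmonic off $\{u,v\}$ and vanishes on $\{u,v\}$, hence is identically zero on a strongly connected graph (equivalently, a walk started anywhere reaches $\{u,v\}$ almost surely). In matrix form --- adopting the convention of \cite{CKPPSV16} in which the transition matrix acts on the left on functions of the state (in the present paper's own convention one would write $\mw^{\top}$) --- harmonicity says precisely that $(\mI-\mw)p$ is supported on $\{u,v\}$. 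So it suffices to check that the displayed $p$ satisfies the two boundary conditions and this support condition.

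The boundary conditions are essentially the definitions of $\alpha$ and $\beta$: left-multiplying $p=\beta(\alpha s+(\mI-\mw)^{+}(e_u-e_v))$ by $e_v^{\top}$ and dividing by $s_v = e_v^{\top}Se_v\neq 0$ turns $p_v=0$ into $\alpha=-e_v^{\top}S^{-1}(\mI-\mw)^{+}(e_u-e_v)$, the stated value; then $p_u=1$ forces $\beta=1/(\alpha s_u+e_u^{\top}(\mI-\mw)^{+}(e_u-e_v))$, and substituting $\alpha$ and regrouping rewrites this as $1/(s_u(e_u-e_v)^{\top}S^{-1}(\mI-\mw)^{+}(e_u-e_v))$. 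For the support condition, compute
\[
(\mI-\mw)p \;=\; \beta\bigl(\alpha\,(\mI-\mw)s \;+\; (\mI-\mw)(\mI-\mw)^{+}(e_u-e_v)\bigr).
\]
This is where Eulerianness enters. The stationary distribution of an Eulerian graph is proportional to its degree vector, and after the standard reduction of Section~\ref{sect:reduction_to_reg} --- adding self-loops, which changes neither the escape probabilities nor the Laplacian $\md-\ma$ --- the graph is regular, so $s$ is uniform and lies in $\ker(\mI-\mw)$, killing the first term. For a regular strongly connected graph the left- and right-kernels of $\mI-\mw$ both equal $\mathrm{span}(\vec{\bone})$, so $(\mI-\mw)(\mI-\mw)^{+}$ is the orthogonal projection onto $\vec{\bone}^{\perp}$; since $e_u-e_v\perp\vec{\bone}$ this projection fixes it. Hence $(\mI-\mw)p=\beta(e_u-e_v)$, which is supported on $\{u,v\}$, as needed.

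For the magnitude bounds I would estimate spectral norms: $|\alpha|\le(1/\min_w s_w)\cdot\|(\mI-\mw)^{+}\|_2\cdot\|e_u-e_v\|$ and $|1/\beta|\le(1/\min_w s_w)\cdot\|(\mI-\mw)^{+}\|_2\cdot\|e_u-e_v\|^2$. Every $s_w=\deg(w)/2m\ge 1/2m\ge 1/\poly(N)$, and $\|(\mI-\mw)^{+}\|_2$ is the reciprocal of the smallest nonzero singular value of the Eulerian Laplacian $\mI-\mw$, which --- for a strongly connected integer-weighted graph, in particular the regularized graph above with $\gamma(G)\ge 1/\poly(N)$ --- is at least $1/\poly(N)$, the same spectral estimate used inside the proof of Lemma~\ref{lem:spectral_to_entrywise_approx}. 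Thus $|\alpha|,|1/\beta|\le\poly(N)$; the denominator of $\beta$ is nonzero because $p_u=1$ forces $\beta$ to be finite (equivalently, $(e_u-e_v)^{\top}S^{-1}(\mI-\mw)^{+}(e_u-e_v)$ is a positive quantity, a directed analogue of an effective resistance). The main obstacle is the support-condition verification: aligning the directional convention for $\mw$ with the harmonic system and ensuring the $(\mI-\mw)s$ term genuinely vanishes --- which is exactly what the reduction to the regular, Eulerian case buys us; everything around it is routine probability and standard spectral bounds.
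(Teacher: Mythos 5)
The paper never proves this lemma --- it is imported wholesale from \cite{CKPPSV16} --- so your argument has to stand on its own. Your skeleton is the right one (uniqueness of the harmonic extension via the maximum principle, boundary conditions pinning down $\alpha$ and $\beta$, and a support condition on the Laplacian applied to $p$), and the boundary-condition algebra checks out exactly as you describe.

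The genuine gap is in the support-condition step, and it is not just bookkeeping. Your verification needs $s\in\ker(\mI-\mw)$, which holds only for regular graphs, and the ``reduction to the regular case'' cannot rescue it: adding self-loops changes $\mw$, $(\mI-\mw)^{+}$, $s$, and $S$, so the identity you verify for the regularized graph is a \emph{different} identity from the one asserted about the original graph's quantities. In fact the formula as literally stated fails for non-regular Eulerian graphs. Take $V=\{1,2,3\}$ with edges $1\to 2$, $2\to 1$, $1\to 3$, $3\to 1$, and $u=2$, $v=3$: the true escape probabilities are $(1/2,1,0)$, but here $s=(1/2,1/4,1/4)$, $(\mI-\mw)^{+}(e_u-e_v)=e_u-e_v$ (under either orientation convention for $\mw$), $\alpha=4$, $\beta=1/2$, and the formula returns $(1,1,0)$. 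The correct general form from \cite{CKPPSV16} has $\alpha\cdot\vec{\bone}$ in place of $\alpha\cdot s$ and applies the pseudoinverse to $S^{-1}(e_u-e_v)$ rather than to $e_u-e_v$: harmonicity says $(\mI-\mw^{\top})p$ is supported on $\{u,v\}$, pairing with the left-kernel vector $s$ forces it to be a multiple of $S^{-1}(e_u-e_v)$, and the homogeneous part must lie in $\ker(\mI-\mw^{\top})=\mathrm{span}(\vec{\bone})$. This collapses to the stated form exactly when $s\propto\vec{\bone}$, i.e.\ in the regular case. So what you have actually proven is the regular case (modulo the transpose you correctly flag --- even there, with the paper's column-stochastic $\mw$ the formula needs $(\mI-\mw^{\top})^{+}$, as the directed $3$-cycle already shows); to get the lemma as stated you would have to either add a regularity hypothesis or restate the formula in its general \cite{CKPPSV16} form, since no argument can establish the statement as written. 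Your magnitude bounds for $\alpha$ and $1/\beta$ are fine once the correct formula is in place.
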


It follows from the above lemma that we can use our algorithm to approximate escape probabilities in Eulerian graphs. Now we can prove Theorem \ref{thm:state_probs} by reducing computing certain matrix products to computing escape probabilities.

\begin{proof}[Proof of Theorem \ref{thm:state_probs}]
Note that $(\mw_1\mw_2\ldots \mw_k)_{uv}$ equals the probability that a random walk starting at vertex $v$ ends at $u$, if we take the first step on $G_k$, the second step on $G_{k-1}$ etc. and the final step on $G_1$. 

We construct a layered graph with $k+2$ layers and $n$ vertices in layers $1,\ldots,k+1$ each and a single vertex in layer $k+2$. All edges in the graph proceed from one layer to the subsequent one. For each $i\in[k]$, we place edges from layer $i$ to layer $i+1$ according to the graph $G_{k-i+1}$ (so layer 1 to 2 contains the edges from $G_k$ and layer $k$ to $k+1$ contains the edges from $G_1$). All vertices from layer $k+1$ send edges to the sole vertex in layer $k+2$ equal to their in-degree and the vertex in layer $k+2$ sends edges back to layer 1 equal to each vertex's out-degree. This ensures that the resulting graph is Eulerian. 

Notice that the probability that a random walk starting at vertex $v$ in layer 1 reaches vertex $u$ in layer $k+1$ before reaching the vertex in layer $k+2$ is exactly $(\mw_1\mw_2\ldots \mw_k)_{uv}$ because all walks of length greater than $k$ pass through the vertex in layer $k+2$ and no walks shorter than $k$ beginning in the first layer pass through it. Furthermore, this is an escape probability on an Eulerian graph and so by Lemma \ref{lem:escape_prob}, this can be computed by applying our Eulerian Laplacian solver to the layered graph followed by some vector multiplications and arithmetic. 

Applying the solver to the layered graph, uses space $O(\log (N)\cdot \log\log(N/\epsilon))$ to apply the pseudoinverse in the formula for escape probabilities given in Lemma~\ref{lem:escape_prob} to error $\epsilon/\poly(N)$. Since $\beta$ is at most $\poly(N)$ we can compute $\beta$ times the pseudoinverse term to $\epsilon$ error in this same amount of space. Since the stationary distribution of Eulerian graphs can be computed space efficiently, the remaining operations of computing $\alpha$ times the stationary distribution only add logarithmic space overhead. 
\end{proof}

Using perspectives developed for our space-efficient Eulerian Laplacian solver, we also obtain improved space-efficient algorithms for computing high-quality approximations to random walk probabilities in \emph{arbitrary} digraphs or Markov chains, improving both the best known randomized algorithms and deterministic algorithms. 

Aleliunas et. al. \cite{AleliunasKaLiLoRa79} observed that random walks can be easily simulated in randomized logarithmic space, and this yields the following algorithm for estimating random walk probabilities:

\begin{theorem}
\label{thm:rand_powers}
Suppose that $\mw\in\R^{n\times n}$ is a substochastic matrix, $k$ is a positive integer, and $\epsilon >0$. There is a randomized algorithm that, with high probability, computes a matrix $\widetilde{\mw}$ such that $\|\widetilde{\mw}-\mw^k\|\leq \epsilon$. The algorithm runs in space $O(\log (k\cdot N/\epsilon))$, where $N$ is the bitlength of the input.
\end{theorem}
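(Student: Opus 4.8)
The plan is to invoke the classical fact, due to Aleliunas et al.~\cite{AleliunasKaLiLoRa79}, that a length-$k$ random walk can be simulated ``online'' in small space, and to use it to estimate each entry $(\mw^k)_{ij}$ --- which is exactly the probability that a $k$-step walk started at $j$ is located at $i$ --- by Monte Carlo averaging, keeping every auxiliary register within $O(\log(kN/\epsilon))$ bits.

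First I would reduce to transition probabilities of low bitlength. Let $\delta$ be a power of $2$ comparable to $\epsilon/(knN)$, and let $\mw'$ be $\mw$ with every entry rounded down to the nearest multiple of $\delta$; then $\mw'$ is still substochastic, its entries have only $O(\log(kN/\epsilon))$ bits, and each can be produced from the input by reading off the leading bits of the corresponding entry of $\mw$ (which costs $O(\log(kN/\epsilon))$ workspace for the arithmetic and $O(\log N)$ to address the input tape). Since substochastic matrices are contractions in the $\ell_1$ operator norm (maximum column sum of absolute values), a one-line telescoping bound gives $\|\mw^k-(\mw')^k\|\le kn\delta\le\epsilon/2$, both entrywise and, after rescaling $\delta$ by a further $1/n$ factor if one wants the spectral norm, in spectral norm; so it suffices to approximate $(\mw')^k$ to accuracy $\epsilon/2$. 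Next I would describe the per-walk simulation on $\mw'$: an $O(\log n)$-bit register for the current vertex, an $O(\log k)$-bit step counter, and, to take a step from a vertex $\ell$, a fresh uniform dyadic rational $r\in[0,1)$ with $O(\log(1/\delta))$ bits drawn from the random tape; one scans the $\ell$-th column of $\mw'$ while maintaining the running cumulative sum (another $O(\log(1/\delta))$-bit number), moves to the first coordinate whose partial-sum interval contains $r$, and if $r$ exceeds the total column mass simply kills the walk. A killed walk contributes $0$ to every count, so this exactly samples the event underlying $((\mw')^k)_{ij}$. Finally, for each pair $(i,j)$ I would run $s=\poly(n/\epsilon)$ independent walks from $j$, output the empirical frequency of landing at $i$, and conclude by Hoeffding's inequality plus a union bound over all $n^2$ pairs that with probability at least $2/3$ every entry of the output matrix is within $\epsilon$ of $\mw^k$; the only extra bookkeeping is an $O(\log s)$-bit success counter and an $O(\log s)$-bit repetition counter, reused across pairs, with each estimate written immediately to the write-only output tape.

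I expect the one genuinely delicate point to be the space accounting for sampling a single step: one must be sure never to need an entire high-bitlength entry of $\mw$ in memory at once, and this is precisely what the initial rounding buys --- afterwards all arithmetic is on $O(\log(kN/\epsilon))$-bit quantities, while the (possibly much longer) input is touched only through $O(\log N)$-bit addresses on the read-only tape. The remaining ingredients --- the telescoping error estimate, the Hoeffding/union-bound calculation, and the treatment of walks that die --- are routine. Adding up, the algorithm runs in space $O(\log N)+O(\log(kN/\epsilon))=O(\log(kN/\epsilon))$, as claimed.
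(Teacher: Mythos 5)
Your proposal is correct and follows essentially the same route as the paper, which proves this theorem with a one-line sketch (simulate $\poly(n/\epsilon)$ random walks of length $k$ from each start vertex and count the fraction landing at each target, \`a la Aleliunas et al.). Your additional details --- rounding the transition probabilities to $O(\log(kN/\epsilon))$ bits, the telescoping error bound, the treatment of killed walks for substochastic $\mw$, and the Hoeffding/union-bound accounting --- are exactly the routine steps the paper leaves implicit.
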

\begin{proof}[Proof sketch]
For each entry $(\mw^k)_{uv}$, we simulate $\poly(n/\epsilon)$ random walks of length $k$ started at $v$ and count the fraction of them that end at $u$. 
\end{proof}
Saks and Zhou gave the best known derandomization of Theorem \ref{thm:rand_powers} \cite{SaksZh99}. 

\begin{theorem}[\cite{SaksZh99}]
\label{thm:sz}
Suppose that $\mw\in\R^{n\times n}$ is a substochastic matrix, $k$ is a positive integer, and $\epsilon >0$. There is a deterministic algorithm that computes a matrix $\widetilde{\mw}$ such that $\|\widetilde{\mw}-\mw^k\|\leq \epsilon$. The algorithm uses $O(\log(k\cdot n/\epsilon)\cdot \log^{1/2}(k))$ space.\footnote{Saks and Zhou state their result in terms of the $L_{\infty}$ matrix norm rather than the spectral norm. These are equivalent up to polynomial changes in $\epsilon$ (and hence constant factors in the space complexity).} 
\end{theorem}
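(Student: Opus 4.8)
This is the Saks--Zhou theorem \cite{SaksZh99} specialized to matrix powering, so one option is simply to invoke it; here is how I would reconstruct the argument. First I would reduce to a clean case. Writing $k=\sum_j 2^{e_j}$ in binary gives $\mw^k=\prod_j \mw^{2^{e_j}}$, so it suffices to estimate $\mw^{2^\ell}$ for $\ell\le\lceil\log_2 k\rceil$; moreover I would round every matrix that ever arises to $b'=O(\log(nk/\epsilon))$ bits per entry, perturbing each intermediate result by only $\epsilon/\poly(nk)$ while staying within (a slight relaxation of) the substochastic matrices. The engine is Nisan's pseudorandom generator: from a ``hash seed'' $h$ of length $O(b\cdot\log(nk/\epsilon))$ one forms an estimate $\mathrm{Nis}_b(\mv,h)$ of $\mv^{2^b}$ with two properties: (a) given $h$ on a read-only tape, any entry of $\mathrm{Nis}_b(\mv,h)$ is computable in space $O(b+\log(nk/\epsilon))$ --- one averages exactly over the $\poly(nk/\epsilon)$ Nisan ``base'' strings and, for each, simulates the length-$2^b$ pseudorandom walk, tracking only the current vertex, the step index, and the recursion depth in Nisan's tree while reading the $h_j$'s from the seed tape; and (b) for any \emph{fixed} $\mv$, all but a $1/\poly$ fraction of $h$ satisfy $\|\mathrm{Nis}_b(\mv,h)-\mv^{2^b}\|\le\delta$ for $\delta=\epsilon/\poly(nk)$.

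The naive plan --- do the $\ell$ squarings in $a=\ell/b$ blocks of $b$, with an independent Nisan seed per block --- costs $\Theta((\ell/b)\cdot b\log(nk/\epsilon))=\Theta(\log k\cdot\log(nk/\epsilon))$ random bits, i.e. space $\Theta(\log^2)$ when $k=\poly(n)$; and one cannot naively reuse a single $h$, since the input to block $i$ is the seed-dependent output of block $i-1$, destroying property (b). The Saks--Zhou device I would use is a random perturbation plus rounding between blocks: after block $i$ produces $\widetilde\mw_i$, draw a fresh scalar shift $r_i$ of $O(\log(nk/\epsilon))$ bits and replace $\widetilde\mw_i$ by $\bar\mw_i$, the rounding of $\widetilde\mw_i-r_i 2^{-b'}\cdot\vec{\bone}\vec{\bone}^\top$ to a coarse grid. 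The ``snapping lemma'' is that, whenever $\widetilde\mw_i$ is within $\delta$ of the \emph{exact} power $\mw^{2^{ib}}$, a random $r_i$ makes $\bar\mw_i$ equal to the identical rounding applied to $\mw^{2^{ib}}$, except with probability $O(n^2\delta 2^{b'})=1/\poly$; hence, conditioned on all snappings succeeding, the matrix fed into block $i{+}1$ is a deterministic function of the \emph{seed-free} exact power $\mw^{2^{ib}}$ and of $r_i$ alone. Fixing the shared $h$ and $r_1,\dots,r_{i-1}$, the input to block $i$ is then a fixed matrix, so property (b) applies to it and $\Pr_h[\text{block }i\text{ fails}\mid \vec r_{<i}]\le 1/\poly$; taking expectation over $\vec r_{<i}$ and union-bounding over the $a$ blocks and the $\le a$ snapping events shows the whole pipeline is correct to within $a\delta+a\cdot(\text{grid error})\le\epsilon$ with probability $1-1/\poly$ over $(h,r_1,\dots,r_a)$.

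To finish, derandomize by enumerating all $(h,r_1,\dots,r_a)$: this is $O(b\log(nk/\epsilon))$ bits for $h$ plus $O(a\log(nk/\epsilon))$ for the shifts; for each choice, computing an entry of the final matrix is $a$ nested Nisan-$b$ walks, which by composition of space-bounded algorithms (Proposition~\ref{prop:composition}) runs in space $O(a\cdot(b+\log(nk/\epsilon)))$; and one outputs, entrywise, the median over all enumerated seeds, which is within $\epsilon$ of $\mw^{2^\ell}$ since a $1-1/\poly$ majority of seeds are $\epsilon$-accurate. Choosing the block size $b=\lceil\sqrt{\log k}\,\rceil$ (so $a=O(\sqrt{\log k})$) balances the two terms and gives total space $O(\sqrt{\log k}\cdot\log(nk/\epsilon))$. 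The main obstacle is exactly the correctness of this seed-recycling: making the snapping lemma quantitative, and --- crucially --- arranging the quantifiers so that at block $i$ the input matrix is ``frozen'' (a fixed, $h$-free matrix) before $h$ is charged for that block. Without the perturbation there is no way to decouple $h$ from the block-$i$ input, and without this ordering one cannot invoke Nisan's per-matrix guarantee; everything else (the binary-expansion reduction, the precision bookkeeping, the space accounting) is routine.
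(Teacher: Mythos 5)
The paper does not prove this theorem at all: it is stated purely as a citation of Saks--Zhou, so there is no ``paper proof'' to compare against. Your reconstruction is, to my reading, a faithful and essentially correct account of the actual Saks--Zhou argument, and it correctly isolates the three load-bearing ideas: (i) Nisan's per-matrix guarantee (for a \emph{fixed} substochastic $\mv$, most hash seeds $h$ yield a space-$O(b+\log(nk/\epsilon))$-computable $\delta$-approximation of $\mv^{2^b}$), (ii) the randomized-shift-plus-rounding ``snap'' that makes each block's input a function of the exact power and the shifts alone, which is exactly what licenses reusing one $h$ across all $a$ blocks, and (iii) the $b=\sqrt{\log k}$ balancing of seed length against composition depth. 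The binary-expansion reduction to powers of two and the final entrywise median over enumerated seeds are both standard and work as you describe (the median of exponentially many candidates is computable in small space by binary search on the value plus counting). Two places where your sketch is thinner than a full proof, both genuinely routine but worth being aware of: the rounding/perturbation error introduced at stage $i$ gets amplified by the remaining $2^{(a-i)b}$-fold powering, by a factor up to $\poly(n)\cdot k$ for substochastic matrices, which is what forces the grid precision $b'=\Theta(\log(nk/\epsilon))$ rather than something smaller; and the union bound over blocks must be organized as ``probability that block $i$ is the \emph{first} failure,'' since only then is the block-$i$ input frozen and $h$-free --- you state this ordering correctly but it is the step most often botched in reconstructions. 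I see no gap that would prevent this from being turned into a complete proof.
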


We strengthen Theorems \ref{thm:rand_powers} and \ref{thm:sz} when $\epsilon=1/(k\cdot N)^{\omega(1)}$. This range of parameters is useful because $k$-step walk probabilities can be exponentially small in $k$, even on undirected graphs. 

\begin{theorem}
\label{thm:arbitrary_powers}
Suppose that $\mw\in\R^{n\times n}$ is a substochastic matrix, $k$ is a positive integer, and $\epsilon >0$. Then:
\begin{enumerate}
    \item There is a randomized algorithm that computes a matrix $\widetilde{\mw}$ such that each entry of $\widetilde{\mw}$ is with $\pm \epsilon$ of that of $\mw^k$ additively and runs in space $O(\log (k\cdot N)\cdot\log(\log_{k\cdot N}(1/\epsilon)))$
    \item There is a deterministic algorithm that computes a matrix $\widetilde{\mw}$ such that each entry of $\widetilde{\mw}$ is with $\pm \epsilon$ of that of $\mw^k$ additively and runs in space $O(\log(k\cdot N)\cdot\log^{1/2}(k)+\log(k\cdot n)\cdot\log(\log_{k\cdot N}(1/\epsilon)))$
\end{enumerate}
\end{theorem}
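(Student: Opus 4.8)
The plan is to reduce computing $\mw^{k}$ to inverting a path-lifted matrix and then to boost a cheap, low-accuracy inverse to the desired $\pm\epsilon$ accuracy by preconditioned Richardson iteration, following the outline sketched for Theorem~\ref{thm:GeneralPowers-intro}. Let $\mproj_{k+1}$ denote the adjacency matrix of the directed path on $k+1$ vertices, so that $\mproj_{k+1}\otimes\mw$ is nilpotent with $(\mproj_{k+1}\otimes\mw)^{k+1}=0$, and set $\mL\eqdef\mI_{(k+1)n}-\mproj_{k+1}\otimes\mw$. Then $\mL$ is invertible and
\[
\mL^{-1}=\sum_{j=0}^{k}(\mproj_{k+1}\otimes\mw)^{j}=\sum_{j=0}^{k}\mproj_{k+1}^{j}\otimes\mw^{j},
\]
whose $(k+1,1)$ block, in the natural $(k+1)\times(k+1)$ block partition, is exactly $\mw^{k}$. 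Since $\mw$ is substochastic, every entry of every $\mw^{j}$ lies in $[0,1]$, so $\|\mw^{j}\|_{2}\le n$, $\|\mproj_{k+1}^{j}\|_{2}\le 1$, and hence $\|\mL^{-1}\|_{2}\le (k+1)n$ and $\|\mL\|_{2}\le 1+\|\mw\|_{2}\le 1+\sqrt n$; all of these are $\poly(k,n)$, which is what makes the preconditioning effective.

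Next I would build a weak approximation $\widetilde{\mL^{-1}}$ whose entries are within $\delta\eqdef 1/\poly(kN)$ of those of $\mL^{-1}$: it suffices to estimate $\mI,\mw,\mw^{2},\dots,\mw^{k}$ to additive error $\pm\delta$ and place them into the block positions dictated by the $\mproj_{k+1}^{j}$. For part~(1) this is done by a randomized algorithm that, for each required entry of $\widetilde{\mL^{-1}}$, simulates $\poly(kN)$ independent walks of the appropriate length (Theorem~\ref{thm:rand_powers}), running in space $O(\log(kN))$ and correct on all $\poly(kn)$ entries with high probability by a union bound; for part~(2) it is done by the Saks--Zhou algorithm (Theorem~\ref{thm:sz}) with accuracy $\delta$, running in space $O(\log(kN)\cdot\log^{1/2}k)$. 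By the norm bounds above, the error matrix $\mE\eqdef\mI_{(k+1)n}-\widetilde{\mL^{-1}}\mL$ satisfies $\|\mE\|_{2}\le\|\mL^{-1}-\widetilde{\mL^{-1}}\|_{2}\cdot\|\mL\|_{2}\le\delta\cdot\poly(kN)\le 1/(kN)$ once the polynomial defining $\delta$ is taken large enough.

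Now I would apply preconditioned Richardson iteration: by Lemma~\ref{lem:richardson-poly} (with $\mf=\mI$, $\mb=\widetilde{\mL^{-1}}$, $\ma=\mL$, $\alpha=\|\mE\|_{2}\le 1/(kN)$), the matrix $\mr_{t}\eqdef\sum_{i=0}^{t}\mE^{i}\widetilde{\mL^{-1}}$ obeys $\|\mI-\mr_{t}\mL\|_{2}\le\alpha^{t+1}$, hence $\|\mr_{t}-\mL^{-1}\|_{2}\le\alpha^{t+1}\|\mL^{-1}\|_{2}$, which drops below $\epsilon$ once $t=O(\log_{kN}(1/\epsilon))$. Computing $\mr_{t}$ is a sum/product of $O(t)$ matrices of dimension $(k+1)n=\poly(kn)$ whose entries (after truncation to the precision needed for the final $\pm\epsilon$ guarantee, the accumulated rounding error being controlled by $\|\mE\|_{2}<1$) have bitlength $\poly(kn)\cdot\polylog(\cdot)$; by the divide-and-conquer matrix-product algorithm (Lemma~\ref{lem:matrixprod}, as packaged in Lemma~\ref{lem:space_boost}) together with composition of space-bounded computations (Proposition~\ref{prop:composition}), this contributes $O(\log(kn)\cdot\log t)=O(\log(kn)\cdot\log(\log_{kN}(1/\epsilon)))$ on top of the space used to generate entries of $\widetilde{\mL^{-1}}$. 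Reading off the $(k+1,1)$ block of $\mr_{t}$ yields the output matrix $\widetilde{\mw}$; adding the two space contributions gives $O(\log(kN))+O(\log(kn)\log(\log_{kN}(1/\epsilon)))$ for part~(1) and $O(\log(kN)\log^{1/2}k)+O(\log(kn)\log(\log_{kN}(1/\epsilon)))$ for part~(2), as claimed, with the randomness in part~(1) confined to the weak-approximation step.

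I expect the main obstacle to be the space bookkeeping rather than any conceptual step: one has to (i) use substochasticity to pin down the $\poly(kn)$ bounds on $\|\mL\|_{2}$ and $\|\mL^{-1}\|_{2}$ so that a $1/\poly(kN)$-accurate guess really makes $\|\mE\|_{2}\le 1/(kN)$, and (ii) keep the Richardson-boosting phase operating on matrices of bitlength $\poly(kn)$ — by truncating $\mw$ and all intermediate products to the precision actually needed for a $\pm\epsilon$ answer, and bounding the resulting error via the telescoping identity $\mw^{k}-\widehat{\mw}^{k}=\sum_{i}\widehat{\mw}^{\,i-1}(\mw-\widehat{\mw})\mw^{k-i}$ together with $\|\mw\|_{1}\le 1$ — so that the doubly-logarithmic term carries a $\log(kn)$ and not a $\log(kN)$. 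The doubly- (rather than singly-) logarithmic dependence on $1/\epsilon$ is exactly what Richardson buys: the residual after $t$ terms is $\alpha^{t+1}$ with $\alpha\le 1/(kN)$, so $t$ only needs to reach $\log_{kN}(1/\epsilon)$, which enters the space bound through one further logarithm.
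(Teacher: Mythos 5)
Your proposal is correct and follows essentially the same route as the paper's proof: the same path-lift $\mI-\mproj\otimes\mw$ with its explicit Neumann-series inverse, the same use of Theorem~\ref{thm:rand_powers} (resp.\ Saks--Zhou, Theorem~\ref{thm:sz}) to get a $1/\poly(kN)$-accurate preconditioner, the same Richardson boosting via Lemmas~\ref{lem:richardson-poly} and~\ref{lem:matrixprod}, and the same block read-off. Your norm bookkeeping (e.g.\ bounding $\|\mL\|_2$ by $1+\sqrt n$ rather than the paper's cruder constant) is if anything slightly more careful, and all that matters is that the relevant norms are $\poly(kN)$.
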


\begin{proof}
We obtain this result by applying preconditioned Richardson iteration (which is the same technique that gets the doubly logarithmic dependence on $\epsilon$ in Theorem \ref{thm:state_probs}) to a path-lifted (rather than cycle-lifted) graph. Our plan is to use Theorem \ref{thm:rand_powers} or Theorem \ref{thm:sz} to compute an approximation to the inverse of a path-lifted graph with quality $1/\poly(k\cdot n)$. Then we use preconditioned Richardson iteration to boost the error to $\epsilon$ using only additive $O(\log (k\cdot n)\cdot\log(\log_{k\cdot N}(1/\epsilon)))$ space as in Lemma \ref{lem:space_boost}.

Let $\mproj_k$ denote the adjacency matrix of the directed path of length $k$ ($\mproj$ is $(k+1)\times (k+1)$ with all ones just below the diagonal and zeroes elsewhere).  Let $\mm=\mproj_k\otimes\mw$. Note that $\mm^{k+1}$ is the all zeroes matrix and all eigenvalues of $\mm$ are 0. Let $\mI=\mI_{(k+1)\cdot n}$. We will approximate $\mI-\mm$ using the following identity: for all $\ma$ with eigenvalues less than $1$ in magnitude, we have 
\[
(\mI-\ma)^{-1}=\sum_{i=0}^{\infty}\ma^i.
\]
This means that 
\begin{align*}
(\mI-\mm)^{-1}&=\sum_{i=0}^{k}\mm^i\\
&=\sum_{i=0}^{k}(\mproj_{k})^i\otimes \mw^i
\end{align*}
because all powers beyond the $k$th are zero. Note that this matrix has $\mw^k$ in the lower left block, $\mw^{k-1}$ in the second diagonal from the bottom, etc. 

From Theorem \ref{thm:rand_powers}, there is a randomized algorithm that can compute matrices $\mw_{1},\ldots,\mw_{k}$ such that for each $i\in[k]$, we have $\|\mw^i-\mw_i\|\leq 1/(2\cdot k\cdot N)$ in space $O(\log(k\cdot N))$. Likewise, by Theorem \ref{thm:sz}, there is a deterministic algorithm that can compute matrices $\mw_{1},\ldots,\mw_{k}$ such that for each $i\in[k]$, we have $\|\mw^i-\mw_i\|\leq 1/(2\cdot k\cdot N)$ in space $O(\log(k\cdot N)\cdot\log^{1/2}(k))$. We can then approximate $(\mI-\mm)^{-1}$ as
\[
\mn=\mI+\sum_{i=1}^k (\mproj_{k})^i\otimes \mw_i.
\]
Note that for each $i\in[k]$ we have $\|\mproj_k^i\|=1$. Also, we have \begin{align*}
    \|\mI-\mm\|&=\|\mI-\mproj_k\otimes\mw\|\\
    &\leq \|\mI\|+\|\mproj_k\|\cdot\|\mw\|\\
    &\leq 2
\end{align*}
because $\|\mw\|\leq 1$. Putting this together gives
\begin{align*}
    \|\mI -\mn(\mI-\mm)\|&=\|((\mI-\mm)^{-1}-\mn)(\mI-\mm)\|\\
    &=\left\|\sum_{i=1}^k(\mproj_{k})^i\otimes(\mw^i-\mw_i)(\mI-\mm)\right\|\\
    &\leq \sum_{i=1}^{k}\|\mproj_{k}^i\|\cdot\|\mw^i-\mw_i\|\cdot \|\mI-\mm\|\\
    &\leq 2\cdot \sum_{i=1}^{k}1/(2\cdot k\cdot N)\\
    &=1/(k\cdot N)
\end{align*}
From Lemma \ref{lem:space_boost}, we can compute a matrix $\widetilde{\mn}$ such that $\|\mI-\widetilde{\mn}(\mI-\mm)\|\leq \epsilon / \|(\mI-\mm)^+\| $  using an additional space 
\begin{align*}
O(\log(k\cdot N)\cdot\log(\log_{k\cdot N}(\|(\mI-\mm)^+\|/\epsilon))) &= O(\log(k\cdot N)\cdot\log(\log_{k\cdot N}(N/\epsilon))) \\
&= O(\log(k\cdot N)\cdot\log(\log_{k\cdot N}(1/\epsilon)))
\end{align*}
for $\epsilon$ sufficiently small.

This implies $\|\widetilde{\mn}-(\mI-\mm)^{-1}\|\leq \epsilon$. Since the norm of a matrix is an upper bound on the norm of its submatrices, we have that the lower left $n\times n$ block of $\widetilde{\mn}$, is a matrix $\widetilde{\mw}$ satisfying $\|\widetilde{\mw}-\mw^k\|\leq\epsilon$.

To complete the proof, Lemma~\ref{lem:spectral_to_entrywise_approx} lets us convert this guarantee to entrywise approximation with a blowup of $\poly(N)$ in $\epsilon$. If we reduce $\epsilon$ to compensate for this, it only affects the stated space complexity by a constant factor.
\end{proof}

\section{Acknowledgements}
We thank William Hoza for pointing out an error in an earlier version of Theorem \ref{thm:sc_approx} and Ori Sberlo and Dean Doron for finding a bug in an earlier version of the proof of Theorem \ref{thm:eul_derandsq_capprox}.

\newpage

\bibliographystyle{alphanum}
\bibliography{pseudorandomness}

\newpage

\appendix 
\section{Approximation Equivalences Proofs}
\label{app:definitions}
Here we prove some lemmas from Section \ref{sect:definitions}. We begin with some linear algebraic technical lemmas.  

\begin{lemma}[\cite{CKPPRSV17} Lemma B.2]
\label{lem:equivdef_helper}
For all $\ma\in \R^{n\times n}$ and symmetric PSD $\mm,\mn\in\R^{n\times n}$, the following are equivalent:
\begin{enumerate}
    \item For all $x,y\in\R^n$ 
         \[
             |x^\top\ma y|\leq \frac{\epsilon}{2}\cdot\left(x^\top\mm x+y^\top\mn y\right)
          \]
    \item For all $x,y\in\C^n$
        \[
            |x^\top\ma y|\leq \epsilon\cdot \sqrt{(x^\top\mm x)\cdot(y^\top\mn y)}
        \]
    \item $\|\mm^{+/2}\ma \mn^{+/2}\|_{\R^n\rightarrow\R^n}\leq \epsilon$ and $\ker(\mm)\subseteq\ker(\ma^\top)$ and $\ker(\mn)\subseteq\ker(\ma)$
\end{enumerate}
where the notation $\|\cdot\|_{\R^n\rightarrow\R^n}$ denotes the 2-norm on a linear operator that maps $\R^n$ to $\R^n$.
\end{lemma}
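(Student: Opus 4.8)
The plan is to prove the cycle of implications $(3)\Rightarrow(2)\Rightarrow(1)\Rightarrow(3)$, with essentially all the content in the last one. The linear-algebraic backbone is a factorization of $\ma$ through the square roots of $\mm$ and $\mn$. I would first observe that the two kernel containments in item~(3) are equivalent, by taking orthogonal complements and using that $\mm,\mn$ are symmetric, to $\im(\ma)\subseteq\im(\mm)$ and $\im(\ma^\top)\subseteq\im(\mn)$; these give $\mproj_\mm\ma=\ma$ and $\ma\mproj_\mn=\ma$ for the orthogonal projections $\mproj_\mm,\mproj_\mn$ onto $\im(\mm),\im(\mn)$, and since $\mproj_\mm=\mm^{1/2}\mm^{+/2}$ and $\mproj_\mn=\mn^{+/2}\mn^{1/2}$ this yields $\ma=\mm^{1/2}\mm^{+/2}\ma\mn^{+/2}\mn^{1/2}$. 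Granting this, $(3)\Rightarrow(2)$ is short: writing $\mb=\mm^{+/2}\ma\mn^{+/2}$, for vectors $x,y$ we get $x^*\ma y=(\mm^{1/2}x)^*\mb(\mn^{1/2}y)$, so Cauchy--Schwarz gives $|x^*\ma y|\le\|\mm^{1/2}x\|\cdot\|\mb\|\cdot\|\mn^{1/2}y\|$; here $\|\mm^{1/2}x\|^2=x^*\mm x$ and $\|\mn^{1/2}y\|^2=y^*\mn y$, while the operator norm of the real matrix $\mb$ over $\mathbb{C}^n$ equals its norm over $\mathbb{R}^n$ by Lemma~\ref{lem:extension_to_complex}, hence is $\le\epsilon$ by hypothesis. (The complex forms in item~(2) are to be read with conjugate transposes, matching Definition~\ref{def:complex_approx}.)

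For $(2)\Rightarrow(1)$, I would simply restrict to real $x,y$ and apply the AM--GM inequality $\sqrt{ab}\le(a+b)/2$ to $a=x^\top\mm x\ge0$ and $b=y^\top\mn y\ge0$. The main step, $(1)\Rightarrow(3)$, I would carry out in two stages. First the kernel containments: if $v\in\ker(\mm)$ then applying item~(1) with $x=tv$ for $t\in\R$ gives $|t|\cdot|v^\top\ma y|\le\tfrac{\epsilon}{2}\,y^\top\mn y$ for every $y$, a bound independent of $t$, so letting $|t|\to\infty$ forces $v^\top\ma y=0$ for all $y$, i.e.\ $v\in\ker(\ma^\top)$; the containment $\ker(\mn)\subseteq\ker(\ma)$ is symmetric. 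With the containments established, the factorization above is valid, and it remains to bound $\|\mb\|$. Given real $u,v$, set $x=\mm^{+/2}u$ and $y=\mn^{+/2}v$; then $x^\top\ma y=u^\top\mb v$, while $x^\top\mm x=u^\top\mproj_\mm u\le\|u\|^2$ and similarly $y^\top\mn y\le\|v\|^2$, so item~(1) yields $|u^\top\mb v|\le\tfrac{\epsilon}{2}(\|u\|^2+\|v\|^2)$. Finally, rescaling $u\mapsto cu$, $v\mapsto v/c$ with $c=\sqrt{\|v\|/\|u\|}$ converts this bilinear bound into $|u^\top\mb v|\le\epsilon\|u\|\|v\|$ for all nonzero $u,v$, which is exactly $\|\mb\|\le\epsilon$.

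The part that needs the most care is $(1)\Rightarrow(3)$, specifically the bookkeeping around possible rank-deficiency of $\mm$ and $\mn$: one must prove the kernel containments \emph{before} the square-root factorization of $\ma$ can even be written down, and one must accept that $x^\top\mm x$ equals $\|\mproj_\mm u\|^2$, which is only $\le\|u\|^2$ (not $=\|u\|^2$) when $\mm$ is singular --- fortunately an upper bound is all that is used. A secondary, minor subtlety is the passage between real and complex vectors in $(3)\Rightarrow(2)$: the equality of the real and complex operator norms of $\mb$ is exactly the content of Lemma~\ref{lem:extension_to_complex}, and without it the complex case of item~(2) would not follow from the real estimate on $\mb$.
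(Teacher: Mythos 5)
Your proof is correct, and it follows essentially the same route as the paper: the paper states this lemma as a citation to \cite{CKPPRSV17} without proof, but its proof of the complex extension (Lemma~\ref{lem:equivdef_helper_complex}) uses exactly your ingredients --- the substitution $x\mapsto\mm^{+/2}x$, $y\mapsto\mn^{+/2}y$ to identify $\|\mm^{+/2}\ma\mn^{+/2}\|$ with the supremum of $|x^{*}\ma y|/\sqrt{(x^{*}\mm x)(y^{*}\mn y)}$, Cauchy--Schwarz, and AM--GM with its tightness under rescaling. Your write-up is in fact more careful than the paper's on the rank-deficient case, since you derive the kernel containments from item~(1) before invoking the factorization $\ma=\mm^{1/2}(\mm^{+/2}\ma\mn^{+/2})\mn^{1/2}$, a point the paper's proof of the complex analogue passes over quickly.
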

Lemma \ref{lem:equivdef_helper} was used in \cite{CKPPRSV17} to give equivalent formulations of directed spectral approximation. Below, we extend the lemma to complex space, which allows us to give even more equivalent formulations of directed spectral approximation as well as unit-circle spectral approximation. 
\begin{lemma}[Extension of \cite{CKPPRSV17} Lemma B.2]
\label{lem:equivdef_helper_complex}
For all $\ma\in \C^{n\times n}$ and Hermitian PSD $\mm,\mn\in\C^{n\times n}$, the following are equivalent:
\begin{enumerate}
    \item For all $x,y\in\C^n$ 
         \[
             |x^*\ma y|\leq \frac{\epsilon}{2}\cdot\left(x^*\mm x+y^*\mn y\right)
          \]
    \item For all $x,y\in\C^n$
        \[
            |x^*\ma y|\leq \epsilon\cdot \sqrt{(x^*\mm x)\cdot(y^*\mn y)}
        \]
    \item $\|\mm^{+/2}\ma \mn^{+/2}\|_{\C^n\rightarrow\C^n}\leq \epsilon$ and $\ker(\mm)\subseteq\ker(\ma^*)$ and $\ker(\mn)\subseteq\ker(\ma)$
\end{enumerate}
where the notation $\|\cdot\|_{\C^n\rightarrow\C^n}$ denotes the 2-norm on a linear operator that maps $\C^n$ to $\C^n$. 
\end{lemma}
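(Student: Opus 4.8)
The plan is to mimic the proof of Lemma~\ref{lem:equivdef_helper} from \cite{CKPPRSV17} essentially verbatim, replacing transposes by conjugate transposes throughout. The only new ingredients are standard facts about Hermitian PSD matrices over $\C$, all obtained by unitarily diagonalizing $\mm$ (resp.\ $\mn$): the square root $\mm^{1/2}$ and the pseudo-inverse-square-root $\mm^{+/2}$ are again Hermitian PSD; $\mm^{1/2}\mm^{+/2}=\mm^{+/2}\mm^{1/2}$ equals the orthogonal projection onto $\im(\mm)=\ker(\mm)^\perp$; and $x^*\mm x=0$ if and only if $\mm x=0$. I would record these at the outset and then prove the cyclic implications.

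First I would prove (1) $\Leftrightarrow$ (2). The direction (2) $\Rightarrow$ (1) is immediate from AM--GM, $\sqrt{ab}\le (a+b)/2$, applied with $a=x^*\mm x\ge 0$ and $b=y^*\mn y\ge 0$. For (1) $\Rightarrow$ (2), fix $x,y$ and set $a=x^*\mm x$, $b=y^*\mn y$. If $a,b>0$, apply (1) to the rescaled vectors $\alpha x$ and $\alpha^{-1}y$ for real $\alpha>0$ (which leaves $|x^*\ma y|$ unchanged) and minimize the resulting right-hand side $\tfrac{\epsilon}{2}(\alpha^2 a+\alpha^{-2}b)$ over $\alpha$; the choice $\alpha^2=\sqrt{b/a}$ makes it equal to $\epsilon\sqrt{ab}$, which is (2). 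If $a=0$ (the case $b=0$ is symmetric), apply (1) to $x$ and $\alpha y$ and let $\alpha\to 0$ to conclude $|x^*\ma y|=0=\epsilon\sqrt{ab}$.

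Next I would prove (2) $\Rightarrow$ (3). For the kernel conditions: if $\mm x=0$ then $x^*\mm x=0$, so (2) gives $|x^*\ma y|\le 0$ for every $y$, forcing $\ma^* x=0$; hence $\ker(\mm)\subseteq\ker(\ma^*)$, and symmetrically $\ker(\mn)\subseteq\ker(\ma)$. For the norm bound, given arbitrary $u,v\in\C^n$, put $x=\mm^{+/2}u$ and $y=\mn^{+/2}v$; then $x^*\ma y=u^*(\mm^{+/2}\ma\mn^{+/2})v$, while $x^*\mm x=u^*(\mm^{+/2}\mm\mm^{+/2})u=\|\mathrm{proj}_{\im(\mm)}u\|^2\le\|u\|^2$ and similarly $y^*\mn y\le\|v\|^2$, so (2) yields $|u^*(\mm^{+/2}\ma\mn^{+/2})v|\le\epsilon\|u\|\|v\|$, i.e.\ $\|\mm^{+/2}\ma\mn^{+/2}\|_{\C^n\to\C^n}\le\epsilon$.

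Finally, for (3) $\Rightarrow$ (2): the kernel hypotheses give $\im(\ma)=\ker(\ma^*)^\perp\subseteq\ker(\mm)^\perp=\im(\mm)$ and $\ker(\mn)\subseteq\ker(\ma)$, which imply $\mm^{1/2}\mm^{+/2}\ma=\ma$ and $\ma\,\mn^{+/2}\mn^{1/2}=\ma$, hence the factorization $\ma=\mm^{1/2}\mathbf{B}\mn^{1/2}$ with $\mathbf{B}=\mm^{+/2}\ma\mn^{+/2}$. Then for any $x,y$,
\[
|x^*\ma y|=\big|(\mm^{1/2}x)^*\mathbf{B}(\mn^{1/2}y)\big|\le\|\mathbf{B}\|\cdot\|\mm^{1/2}x\|\cdot\|\mn^{1/2}y\|=\|\mathbf{B}\|\sqrt{x^*\mm x}\sqrt{y^*\mn y}\le\epsilon\sqrt{(x^*\mm x)(y^*\mn y)},
\]
which is (2). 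I expect no serious obstacle; the only points needing care are the degenerate cases $x^*\mm x=0$ or $y^*\mn y=0$ in the rescaling argument for (1) $\Rightarrow$ (2), and verifying that the factorization $\ma=\mm^{1/2}\mathbf{B}\mn^{1/2}$ uses only the stated kernel inclusions. (One could alternatively deduce the complex statement from the real Lemma~\ref{lem:equivdef_helper} by realification, noting that replacing $x$ by $e^{i\theta}x$ leaves $x^*\mm x$ fixed so that the magnitude $|x^*\ma y|$ in (1) may be replaced by $\Real(x^*\ma y)$, a real bilinear form on $\R^{2n}$; but the direct argument above is cleaner.)
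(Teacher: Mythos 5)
Your proof is correct and follows essentially the same route as the paper's: the substitution $x\mapsto\mm^{+/2}u$, $y\mapsto\mn^{+/2}v$ (equivalently the variational characterization of $\|\mm^{+/2}\ma\mn^{+/2}\|$) to link (2) and (3), and AM--GM together with the rescaling $x\mapsto\alpha x$, $y\mapsto\alpha^{-1}y$ to link (1) and (2). Yours is simply more explicit about the degenerate cases and the kernel bookkeeping, which the paper's terse proof glosses over.
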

\begin{proof}
Let $\mL=\|\mm^{+/2}\ma \mn^{+/2}\|_2$. Since $\|x\|_2=\max_{\|y\|_2=1}|y^*x|$ we have that
\[
\mL=\max_{\|x\|=\|y\|=1}|x^*\mm^{+/2}\ma \mn^{+/2}y|.
\]
Performing the mapping $x\colon \mm^{+/2}x$ and $y\colon \mm^{+/2}y$ we have
\[
\mL=\max_{x^*\mm x=y^*\mn y=1}|x^*\ma y|=\max_{x\not\in\mathrm{ker}(\mm),y\not\in\mathrm{ker}(\mn)}\frac{|x^*\ma y|}{\sqrt{x^*\mm x\cdot y^*\mn y}}.
\]
It follows that if $\mL\leq \epsilon$ and $\ker(\mm)\subseteq\ker(\ma^*)$ and $\ker(\mn)\subseteq\ker(\ma)$ then
\[
|x^*\ma y|\leq \epsilon\cdot \sqrt{(x^*\mm x)\cdot(y^*\mn y)}.
\]
The kernel constraints guarantee that whenever the right-hand side is zero, the left-hand side must also be zero.  Finally, by the AM-GM inequality we get $\sqrt{x^*\mm x\cdot y^*\mn y}\leq (x^*\mm x + y^*\mn y)/2$ and this inequality is tight when $x^*\mm x=y^*\mn y=1$, so for all $x,y\in\C^n$ we have 
\[
|x^*\ma y|\leq \frac{\epsilon}{2}\cdot(x^*\mm x + y^*\mn y)
\]
\end{proof}

If follows from Lemma \ref{lem:extension_to_complex} that in fact all six formulations in Lemmas \ref{lem:equivdef_helper} and \ref{lem:equivdef_helper_complex} are equivalent because the spectral norm of a matrix is the same whether one defines it in terms of real or complex vectors.

Now we can prove the equivalences from Section \ref{sect:definitions}.

\begin{namedtheorem}[Lemma \ref{lem:dirapprox_to_complex} restated]
Let $\mw,\widetilde{\mw}\in\R^{n\times n}$ be (possibly asymmetric) matrices. Then $\widetilde{\mw}$ is a directed $\epsilon$-approximation of $\mw$ if and only if $\widetilde{\mw}$ is a complex $\epsilon$-approximation of $\mw$.
\end{namedtheorem}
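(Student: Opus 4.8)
The plan is to prove the two implications separately. The forward direction, that a complex $\epsilon$-approximation is a directed $\epsilon$-approximation, is immediate: restricting the inequality of Definition~\ref{def:complex_approx} to real vectors $x,y\in\R^n$ and using that $x^*=x^\top$ and $x^\top\mU_\mw x=x^\top\mw x$ for such vectors recovers exactly the last form of the inequality in Definition~\ref{def:dir_approx}. The content is therefore in the reverse direction, and here I would route everything through the operator-norm reformulations already available in the appendix.

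Concretely, write $\ma=\mw-\widetilde{\mw}$ and $\mm=\mI_n-\mU_\mw$; both are real, $\mm$ is symmetric, and $\mm\succeq 0$ by the standing assumption on $\mw$. Rewriting the right-hand side of Definition~\ref{def:dir_approx} in terms of $\mm$, the hypothesis that $\widetilde{\mw}\approx_\epsilon\mw$ as real matrices is precisely item~1 of Lemma~\ref{lem:equivdef_helper} with $\mn=\mm$. By the equivalence of items~1 and~3 there, this is equivalent to
\[
\|\mm^{+/2}\ma\mm^{+/2}\|_{\R^n\to\R^n}\le\epsilon
\quad\text{and}\quad
\ker(\mm)\subseteq\ker(\ma^\top)\cap\ker(\ma).
\]
Since $\mm$ is real symmetric PSD, $\mm^{+/2}$ is a real matrix, so $\mm^{+/2}\ma\mm^{+/2}$ is a real linear operator; Lemma~\ref{lem:extension_to_complex} then gives $\|\mm^{+/2}\ma\mm^{+/2}\|_{\C^n\to\C^n}\le\epsilon$. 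Moreover, for any real matrix $\mvar M$ the complex kernel $\{v\in\C^n:\mvar M v=0\}$ is the complexification $\ker(\mvar M)\oplus i\ker(\mvar M)$ of the real kernel (split $v$ into real and imaginary parts), and $\ma^*=\ma^\top$ because $\ma$ is real, so the two kernel inclusions above hold verbatim over $\C$. Hence item~3 of Lemma~\ref{lem:equivdef_helper_complex} is satisfied with the Hermitian PSD matrix $\mm=\mn$, and by the equivalence there we obtain $|x^*\ma y|\le\frac{\epsilon}{2}(x^*\mm x+y^*\mm y)$ for all $x,y\in\C^n$. Using $v^*\mU_\mw v=\Real(v^*\mw v)$ (the identity recorded just after Definition~\ref{def:complex_approx}), this says $|x^*\ma y|\le\frac{\epsilon}{2}(\|x\|^2+\|y\|^2-\Real(x^*\mw x+y^*\mw y))$, which is exactly the defining condition of complex $\epsilon$-approximation.

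The argument is essentially a bookkeeping exercise on top of the cited lemmas, so I do not expect a serious obstacle; the one point I would treat with care is the transfer of the kernel conditions from $\R^n$ to $\C^n$, which relies on $\ma$ and $\mm$ being real so that their complex kernels are just complexifications of the real ones (and on $\ma^*=\ma^\top$). A more computational alternative --- writing $x=a+bi$, $y=c+di$, expanding $x^*\ma y$ into four real bilinear forms, applying the real inequality to each, and reassembling into the form $\frac{\epsilon}{2}(x^*\mm x+y^*\mm y)$ --- also works, but it is messier, which is why I would prefer the Lemma~\ref{lem:extension_to_complex} route.
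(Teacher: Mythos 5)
Your proposal is correct and follows essentially the same route as the paper's proof: both set $\ma=\mw-\widetilde{\mw}$ and $\mm=\mn=\mU_{\mI-\mw}=\mI_n-\mU_{\mw}$ and chain together Lemmas~\ref{lem:equivdef_helper}, \ref{lem:equivdef_helper_complex}, and \ref{lem:extension_to_complex} via the operator-norm formulation. Your explicit check that the kernel inclusions transfer from $\R^n$ to $\C^n$ (because the matrices are real, so the complex kernels are complexifications) is a detail the paper leaves implicit, and is a worthwhile addition.
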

\begin{proof}
Set $\ma=(\mw-\widetilde{\mw})$ and $\mm=\mn=\mU_{\mI-\mw}$. Note that $\mU_{\mI-\mw}$ is PSD. Observe that for all $v\in\C^n$
\[
v^*\mU_{\mI-\mw}v = \|v\|^2-v^*\mU_{\mw} v.
\]
Lemmas \ref{lem:equivdef_helper},\ref{lem:equivdef_helper_complex}, and \ref{lem:extension_to_complex} tell us that the following are equivalent:
\begin{enumerate}
    \item For all $x,y\in\R^n$ 
         \[
             |x^\top\ma y|\leq \frac{\epsilon}{2}\cdot\left(x^\top\mm x+y^\top\mn y\right)
          \]
    \item For all $x,y\in\C^n$ 
         \[
             |x^*\ma y|\leq \frac{\epsilon}{2}\cdot\left(x^*\mm x+y^*\mn y\right)
          \]
\end{enumerate}
Plugging in our settings for $\ma$, $\mm$, and $\mn$ completes the proof.
\end{proof}

\begin{namedtheorem}[Lemma \ref{lem:real_sym_equiv} restated]
Let $\mw,\widetilde{\mw}\in\R^{n\times n}$ be symmetric matrices. Then the following are equivalent:
\begin{enumerate}
    \item $\widetilde{\mw}\capprox_{\epsilon}\mw$
    \item $\widetilde{\mw}\approx_{\epsilon}\mw$ and $-\widetilde{\mw}\approx_{\epsilon}-\mw$
    \item For all $x\in\R^{n}$ we have
\[
\left|x^{\top}(\mw-\widetilde{\mw})x\right|\leq \epsilon\cdot\left(\|x\|^2-|x^{\top}Wx|\right).
\]
\end{enumerate} 
\end{namedtheorem}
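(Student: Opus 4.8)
The plan is to reduce all three conditions to statements about the Loewner order, and then exploit the fact that, for a real symmetric $\mw$, the unit circle collapses onto the real interval $[-1,1]$. Throughout, write $E=\widetilde{\mw}-\mw$, a real symmetric matrix. Recall that for symmetric matrices the relation $\widetilde{\mw}\approx_\epsilon\mw$ is just the Loewner statement $-\epsilon(\mI-\mw)\preceq E\preceq\epsilon(\mI-\mw)$ together with the kernel containment $\ker(\mI-\mw)\subseteq\ker(E)$ (this is Definition~\ref{def:undir_approx}/Lemma~\ref{lem:equivdef_helper}; the kernel containment falls out of the inclusions $(1-\epsilon)(\mI-\mw)\preceq(\mI-\widetilde{\mw})\preceq(1+\epsilon)(\mI-\mw)$ since a positive semidefinite matrix annihilates any vector on which its quadratic form vanishes).

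I would first dispatch $(2)\Leftrightarrow(3)$, which is pure bookkeeping: by Lemma~\ref{lem:undir_dir_equiv}, for symmetric matrices $\widetilde{\mw}\approx_\epsilon\mw$ unravels to $|x^\top E x|\le\epsilon(\|x\|^2-x^\top\mw x)$ for all $x\in\R^n$, and $-\widetilde{\mw}\approx_\epsilon-\mw$ to $|x^\top E x|\le\epsilon(\|x\|^2+x^\top\mw x)$; since $\min(a-b,\,a+b)=a-|b|$, the conjunction of the two is exactly $(3)$. Next, $(1)\Rightarrow(2)$ is immediate from Lemma~\ref{lem:unitcirc_equivalences}: $\widetilde{\mw}\capprox_\epsilon\mw$ gives $z\widetilde{\mw}\approx_\epsilon z\mw$ for every $z$ with $|z|=1$, and taking $z=\pm1$ yields the two halves of $(2)$ (note $\pm\mw,\pm\widetilde{\mw}$ are still real symmetric, so complex, directed, and undirected approximation coincide by Lemmas~\ref{lem:dirapprox_to_complex} and \ref{lem:undir_dir_equiv}).

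The substantive direction is $(2)\Rightarrow(1)$. I would invoke the third characterization of Lemma~\ref{lem:unitcirc_equivalences}: it suffices to show that for every $z\in\C$ with $|z|=1$ we have $\ker(\mU_{\mI-z\mw})\subseteq\ker(E)$ and $\|\mU_{\mI-z\mw}^{+/2}E\,\mU_{\mI-z\mw}^{+/2}\|\le\epsilon$. The key observation is that, since $\mw$ is real symmetric, $\mU_{\mI-z\mw}=\tfrac{1}{2}\big((\mI-z\mw)+(\mI-\bar z\mw)\big)=\mI-\Real(z)\mw$, which depends on $z$ only through $t:=\Real(z)$, and $t$ ranges over exactly $[-1,1]$ as $z$ ranges over the unit circle. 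By hypothesis $(2)$, unpacked as in the first paragraph and applied to both $\pm\mw$, we already have $\ker(\mI-\mw)\subseteq\ker(E)$, $\ker(\mI+\mw)\subseteq\ker(E)$, and $-\epsilon(\mI\mp\mw)\preceq E\preceq\epsilon(\mI\mp\mw)$. Now for any $t\in[-1,1]$ set $\alpha=\tfrac{1+t}{2}\in[0,1]$, so that $\mI-t\mw=\alpha(\mI-\mw)+(1-\alpha)(\mI+\mw)$ is a convex combination of positive semidefinite matrices. This delivers both required facts at once: for $t\in(-1,1)$, $\ker(\mI-t\mw)=\ker(\mI-\mw)\cap\ker(\mI+\mw)\subseteq\ker(E)$ (the endpoints $t=\pm1$ are covered directly), and $\epsilon(\mI-t\mw)\pm E=\alpha\big(\epsilon(\mI-\mw)\pm E\big)+(1-\alpha)\big(\epsilon(\mI+\mw)\pm E\big)\succeq0$, i.e. $-\epsilon(\mI-t\mw)\preceq E\preceq\epsilon(\mI-t\mw)$, which is precisely the required norm bound for $\mU_{\mI-z\mw}=\mI-t\mw$. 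Feeding this back into Lemma~\ref{lem:unitcirc_equivalences} gives $\widetilde{\mw}\capprox_\epsilon\mw$.

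I expect the only genuinely delicate point to be the passage between the operator-norm formulation of Lemma~\ref{lem:unitcirc_equivalences} and the Loewner-order formulation — in particular, carefully tracking the kernel-containment conditions, which is what legitimizes the equivalence $\|\ma^{+/2}E\ma^{+/2}\|\le\epsilon\iff-\epsilon\ma\preceq E\preceq\epsilon\ma$ (for $\ma\succeq0$ symmetric, $E$ symmetric, $\ker(\ma)\subseteq\ker(E)$) and makes the convex-combination identity for $\ker(\mI-t\mw)$ valid including at the endpoints. Everything else reduces to the one-line identity $\mI-t\mw=\tfrac{1+t}{2}(\mI-\mw)+\tfrac{1-t}{2}(\mI+\mw)$ and the observation that $\mU_{\mI-z\mw}$ sees only $\Real(z)$ when $\mw$ is Hermitian.
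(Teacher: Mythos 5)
Your proposal is correct, but the substantive equivalence $(1)\Leftrightarrow(2)$ is argued by a genuinely different route than the paper's. The paper first converts condition $(2)$ into two complex bilinear-form inequalities via Lemma~\ref{lem:dirapprox_to_complex}, notes that $x^*\mw x+y^*\mw y$ is real when $\mw$ is real symmetric, and then collapses the conjunction of the two inequalities into the unit-circle definition in one line using $\min(a-b,\,a+b)=a-|b|$; both directions of $(1)\Leftrightarrow(2)$ come out of that single identity. You instead prove $(1)\Rightarrow(2)$ by specializing $z=\pm1$ in Lemma~\ref{lem:unitcirc_equivalences}, and prove $(2)\Rightarrow(1)$ through the operator-norm/kernel characterization, using that $\mU_{\mI-z\mw}=\mI-\Real(z)\mw$ for real symmetric $\mw$ and the interpolation $\mI-t\mw=\tfrac{1+t}{2}(\mI-\mw)+\tfrac{1-t}{2}(\mI+\mw)$ to sweep out all $t=\Real(z)\in[-1,1]$ by convexity in the Loewner order. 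Your argument is longer and requires the careful bookkeeping you flag (passing between the norm bound $\|\ma^{+/2}E\ma^{+/2}\|\le\epsilon$ and $-\epsilon\ma\preceq E\preceq\epsilon\ma$, tracking kernels, and invoking Lemma~\ref{lem:extension_to_complex} so the real and complex spectral norms agree), all of which you handle correctly; what it buys is an explicit geometric explanation of \emph{why} $z=\pm1$ suffice in the symmetric case — the symmetrization sees only $\Real(z)$, so the unit circle collapses onto $[-1,1]$ and every intermediate constraint is a convex combination of the two endpoint constraints. The paper's proof is shorter but leaves that picture implicit. Your $(2)\Leftrightarrow(3)$ argument matches the paper's exactly.
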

\begin{proof}
First we show that items 2 and 3 are equivalent. From Definition \ref{def:undir_approx}, we have $\widetilde{\mw}\approx_{\epsilon}\mw$ and $-\widetilde{\mw}\approx_{\epsilon}-\mw$ if and only if for all $x\in\R^{n}$ we have
\[
\left|x^{\top}(\mw-\widetilde{\mw})x\right|\leq \epsilon\cdot\left(\|x\|^2-x^{\top}Wx\right)
\]
and for all $x\in\R^{n}$ we have
\[
\left|x^{\top}(\mw-\widetilde{\mw})x\right|\leq \epsilon\cdot\left(\|x\|^2+x^{\top}Wx\right).
\]
This is equivalent to: for all $x\in\R^{n}$ 
\begin{align*}
\left|x^{\top}(\mw-\widetilde{\mw})x\right|&\leq \min\left\{\epsilon\cdot\left(\|x\|^2-x^{\top}Wx\right),\epsilon\cdot\left(\|x\|^2+x^{\top}Wx\right)\right\}\\
&=\epsilon\cdot\left(\|x\|^2-|x^{\top}Wx|\right),
\end{align*}
which establishes the equivalence between items 2 and 3. Now we show that items 1 and 2 are equivalent, which will complete the proof. From Lemma \ref{lem:dirapprox_to_complex}, we have that item 2 is equivalent to: for all $x,y\in\C^n$
\[
\left|x^*(\mw-\widetilde{\mw})y\right|\leq \frac{\epsilon}{2}\cdot\left(\|x\|^2+\|y\|^2-x^{*}\mU_{\mw} x-y^{*}\mU_{\mw} y\right)
\]
and for all $x,y\in\C^n$
\[
\left|x^*(\mw-\widetilde{\mw})y\right|\leq \frac{\epsilon}{2}\cdot\left(\|x\|^2+\|y\|^2+x^{*}\mU_{\mw} x+y^{*}\mU_{\mw} y\right).
\]
Since $\mw$ is real and symmetric we have $\mU_{\mw}=\mw$ and we have that for all $v\in\C^n$, $v^*\mw v$ is real. Therefore the above is equivalent to: for all $x,y\in\C^n$
\begin{align*}
    \left|x^*(\mw-\widetilde{\mw})y\right|&\leq \min\left\{\frac{\epsilon}{2}\cdot\left(\|x\|^2+\|y\|^2-x^{*}\mw x-y^{*}\mw y\right),\frac{\epsilon}{2}\cdot\left(\|x\|^2+\|y\|^2+x^{*}\mw x+y^{*}\mw y\right)\right\}\\
    &=\frac{\epsilon}{2}\cdot\left(\|x\|^2+\|y\|^2-\left|x^{*}\mw x+y^{*}\mw y\right|\right),
\end{align*}
which is the requirement for unit-circle spectral approximation.
\end{proof}

\begin{namedtheorem}[Lemma \ref{lem:unitcirc_equivalences} restated]
Let $\mw,\widetilde{\mw}\in\mathbb{C}^{n\times n}$ be (possibly asymmetric) matrices. Then the following are equivalent 
\begin{enumerate}
    \item $\widetilde{\mw}\capprox_{\epsilon}\mw$
    \item For all $z\in\C$ such that $|z|=1$, $z\cdot\widetilde{\mw}\approx_{\epsilon}z\cdot\mw$
    \item For all $z\in\C$ such that $|z|=1$,  
    \begin{itemize}
        \item $\mathrm{ker}(\mU_{\mI-z\cdot \mw})\subseteq \mathrm{ker}(\widetilde{\mw}-\mw)\cap\mathrm{ker}((\widetilde{\mw}-\mw)^\top)$ and
        \item $\left\|\mU_{\mI-z\cdot \mw}^{+/2}(\widetilde{\mw}-\mw)\mU_{\mI-z\cdot \mw}^{+/2}\right\|\leq \epsilon$
    \end{itemize} 
\end{enumerate}
\begin{proof}
Using the definition of directed spectral approximation (Definition \ref{def:dir_approx}) and its extenstion to complex space from Lemma \ref{lem:dirapprox_to_complex} lets us write item 2 as: for all $x,y\in \C^n$ 
\begin{align}
\label{eq:unitcirc_equivalences1}
\left|x^*(z\cdot\mw-z\cdot\widetilde{\mw})y\right|&\leq \frac{\epsilon}{2}\cdot\left(\|x\|^2+\|y\|^2-x^{*}\mU_{z\cdot\mw} x-y^{*}\mU_{z\cdot\mw} y\right)\\
\label{eq:unitcirc_equivalences2}
&=\frac{\epsilon}{2}\cdot\left(\|x\|^2+\|y\|^2-\Real(x^{*}(z\cdot\mw) x+y^{*}(z\cdot\mw) y)\right).
\end{align}
Lemma \ref{lem:equivdef_helper_complex} immediately implies the equivalence between line \ref{eq:unitcirc_equivalences1} and item 3 in the lemma statement by setting $\ma=(\widetilde{\mw}-\mw)$ and $\mm=\mn=\mU_{\mI-z\cdot\mw}$.

Now we will show the equivalence between items 1 and 2. Since $|z|=1$, it does not affect the left-hand side of Inequality \ref{eq:unitcirc_equivalences1} and we can drop it. 

Item 2 in the lemma statement says that line \ref{eq:unitcirc_equivalences2} holds for all $z\in\C$ such that $|z|=1$. This is equivalent to it holding for the worst case $z$ (for each pair of vectors, $x,y$). For every complex number $\omega$, we have $\max_{z\colon |z|=1}\Real(z\omega) = |\omega|$ by setting $\Real(z)=\Real(\omega)/|\omega|$ and $\mathrm{Im}(z)=-\mathrm{Im}(\omega)/|\omega|$. Therefore the inequality becomes
\[
\left|x^*(\mw-\widetilde{\mw})y\right|\leq\frac{\epsilon}{2}\cdot\left(\|x\|^2+\|y\|^2-\left|x^{*}\mw x-y^{*}\mw y\right|\right),
\]
which is exactly the condition we require for $\widetilde{\mw}\capprox_{\epsilon}\mw$. 
\end{proof}
\end{namedtheorem}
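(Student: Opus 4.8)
The plan is to prove the two equivalences (2)~$\Leftrightarrow$~(3) and (1)~$\Leftrightarrow$~(2), in that order, reducing everything to the complex form of directed spectral approximation (Definition~\ref{def:dir_approx} together with Lemma~\ref{lem:dirapprox_to_complex}) and to the norm characterization of Lemma~\ref{lem:equivdef_helper_complex}. The first thing I would do is unfold item~(2): for a fixed $z$ with $|z|=1$, the statement $z\cdot\widetilde{\mw}\approx_{\epsilon}z\cdot\mw$ means, by Lemma~\ref{lem:dirapprox_to_complex}, that for all $x,y\in\C^n$
\[
\left|x^*(z\cdot\mw - z\cdot\widetilde{\mw})y\right| \le \frac{\epsilon}{2}\left(\|x\|^2 + \|y\|^2 - x^*\mU_{z\cdot\mw}x - y^*\mU_{z\cdot\mw}y\right),
\]
and the identity $v^*\mU_{\ma}v = \Real(v^*\ma v)$ rewrites the right-hand side as $\frac{\epsilon}{2}\big(\|x\|^2+\|y\|^2 - \Real(x^*(z\cdot\mw)x + y^*(z\cdot\mw)y)\big)$.

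For (2)~$\Leftrightarrow$~(3), I would apply Lemma~\ref{lem:equivdef_helper_complex} with $\ma = \widetilde{\mw}-\mw$ and $\mm = \mn = \mU_{\mI-z\cdot\mw}$ (which is Hermitian PSD on the class of matrices under consideration). Since $v^*\mU_{\mI-z\cdot\mw}v = \|v\|^2 - \Real(v^*(z\cdot\mw)v)$ and since the scalar $z$ of modulus one drops out of $|x^*(z\cdot\mw - z\cdot\widetilde{\mw})y|$, the first formulation in that lemma is exactly the unfolded form of (2) for that $z$, while its third formulation is exactly the two bulleted conditions of (3). Conjoining over all unit-modulus $z$ yields the equivalence.

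For (1)~$\Leftrightarrow$~(2), the crux is a quantifier interchange. Item~(2) asserts that for every $z$ on the unit circle and every $x,y$ the displayed inequality holds; since the left-hand side does not depend on $z$ (because $|z|=1$), this is equivalent to requiring, for every $x,y$, that the inequality hold for the $z$ that minimizes the right-hand side. The minimum of $-\Real\big(z\cdot(x^*\mw x + y^*\mw y)\big)$ over $|z|=1$ equals $-|x^*\mw x + y^*\mw y|$, using the elementary fact $\max_{|z|=1}\Real(z\omega) = |\omega|$ (attained at $z = \omega^*/|\omega|$, and for any $z$ when $\omega=0$). Substituting turns (2) into precisely the defining inequality of $\widetilde{\mw}\capprox_{\epsilon}\mw$, which is (1).

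The only delicate points I anticipate are (i) checking that the implicit restriction to matrices with $|v^*\mw v|\le\|v\|^2$ for all $v$ really does make every $\mU_{\mI-z\cdot\mw}$ PSD, so that Lemma~\ref{lem:equivdef_helper_complex} applies for each $z$, and (ii) handling the degenerate case $x^*\mw x + y^*\mw y = 0$ in the worst-case-$z$ step, where the right-hand side is independent of $z$ and the interchange is vacuous. Everything else is bookkeeping with $v^*\mU_{\ma}v = \Real(v^*\ma v)$ and with the observation that $z^{-1}$ (and $\omega^*/|\omega|$) ranges over the unit circle exactly when $z$ does.
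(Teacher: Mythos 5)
Your proposal is correct and follows essentially the same route as the paper: unfold item (2) via Lemma \ref{lem:dirapprox_to_complex} and the identity $v^*\mU_{\ma}v=\Real(v^*\ma v)$, get (2)$\Leftrightarrow$(3) from Lemma \ref{lem:equivdef_helper_complex} with $\ma=\widetilde{\mw}-\mw$ and $\mm=\mn=\mU_{\mI-z\cdot\mw}$, and get (1)$\Leftrightarrow$(2) by optimizing over $z$ using $\max_{|z|=1}\Real(z\omega)=|\omega|$. Your explicit attention to the degenerate case $\omega=0$ and to the PSD-ness of $\mU_{\mI-z\cdot\mw}$ under the standing assumption $|v^*\mw v|\le\|v\|^2$ is a minor but welcome addition of care beyond what the paper writes out.
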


\section{Omitted Proofs from Section \ref{sect:ring_powers}}
\label{app:ring_powers}
\begin{namedtheorem}[Lemma \ref{lem:exact_eigenspace} restated]
Fix $\widetilde{\mw},\mw\in\C^{n\times n}$. For any matrix $M\in \C^{n\times n}$, let $V_{\lambda}(M)$ denote the eigenspace of $M$ of eigenvalue $\lambda$. 
\begin{enumerate}
\item If $\widetilde{\mw}\approx_{c}\mw$ for a finite $c>0$, then $V_1(\mw)\subseteq V_1(\widetilde{\mw})$. If $c<1$, then $V_1(\mw)= V_1(\widetilde{\mw})$.
\item If $\widetilde{\mw}\capprox_{c}\mw$ for a finite $c>0$, then for all $\lambda\in \C$ such that $|\lambda|=1$, $V_{\lambda}(\mw)\subseteq V_{\lambda}(\widetilde{\mw})$ and if $c<1$ then $V_{\lambda}(\mw)= V_{\lambda}(\widetilde{\mw})$.
\end{enumerate}
\end{namedtheorem}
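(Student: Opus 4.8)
The plan is to reduce everything to one observation: if $\widetilde{\mw}\approx_c\mw$ in the sense of directed/complex $\epsilon$-approximation with $c$ finite, and $\mw v=v$, then also $\widetilde{\mw}v=v$; equivalently, $V_1(\mw)\subseteq V_1(\widetilde{\mw})$. Granting this, part~1 is immediate, part~2 follows by the ``rotation trick'' of Lemma~\ref{lem:unitcirc_equivalences}, and both equality statements follow from reversibility of spectral approximation. First I would establish the key observation. Set $x=y=v$ in Definition~\ref{def:complex_approx} (which by Lemma~\ref{lem:dirapprox_to_complex} governs the real case too): using $\mw v=v$, the right-hand side equals $c(\|v\|^2-\Real(v^*\mw v))=c\cdot v^*\mU_{\mI-\mw}v$, while the left-hand side $|v^*(\mw-\widetilde{\mw})v|$ is nonnegative. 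Running this for general $x$ forces $\mU_{\mI-\mw}\succeq 0$ (consistent with the paper's remark that these notions are only achievable for such $\mw$), and for $x=y=v$ it forces $v^*\mU_{\mI-\mw}v=0$. Since $\mU_{\mI-\mw}$ is PSD, a vector on which its quadratic form vanishes lies in its kernel (write $\mU_{\mI-\mw}=\mU_{\mI-\mw}^{1/2}\mU_{\mI-\mw}^{1/2}$; then $0=v^*\mU_{\mI-\mw}v=\|\mU_{\mI-\mw}^{1/2}v\|^2$ gives $\mU_{\mI-\mw}v=0$). Now invoke the equivalent ``kernel $+$ operator-norm'' formulation of directed approximation: by Lemma~\ref{lem:dirapprox_to_complex} together with Lemma~\ref{lem:equivdef_helper_complex} applied with $\ma=\mw-\widetilde{\mw}$ and $\mm=\mn=\mU_{\mI-\mw}$, we get $\ker(\mU_{\mI-\mw})\subseteq\ker(\mw-\widetilde{\mw})$. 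Hence $(\mw-\widetilde{\mw})v=0$, i.e. $\widetilde{\mw}v=\mw v=v$, so $v\in V_1(\widetilde{\mw})$. This proves the inclusion in part~1.

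For the equality in part~1 when $c<1$, I would first note that $\widetilde{\mw}\approx_c\mw$ implies $\mw\approx_{c/(1-c)}\widetilde{\mw}$, exactly as in the remark following Definition~\ref{def:undir_approx}: setting $y=x$ gives $|\Real(x^*(\widetilde{\mw}-\mw)x)|\le |x^*(\mw-\widetilde{\mw})x|\le c\,x^*\mU_{\mI-\mw}x$, and since $\mU_{\mI-\mw}-\mU_{\mI-\widetilde{\mw}}=\mU_{\widetilde{\mw}-\mw}$, this yields $(1-c)\,\mU_{\mI-\mw}\preceq\mU_{\mI-\widetilde{\mw}}$; plugging this bound into the approximation inequality converts it into the defining inequality for $\mw\approx_{c/(1-c)}\widetilde{\mw}$. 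As $c/(1-c)$ is finite, the inclusion just proved, with $\mw$ and $\widetilde{\mw}$ interchanged, gives $V_1(\widetilde{\mw})\subseteq V_1(\mw)$, hence equality.

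For part~2, fix $\lambda\in\C$ with $|\lambda|=1$ and set $z=\lambda^{-1}=\overline\lambda$. By Lemma~\ref{lem:unitcirc_equivalences}, $\widetilde{\mw}\capprox_c\mw$ implies $z\widetilde{\mw}\approx_c z\mw$. If $\mw v=\lambda v$ then $(z\mw)v=z\lambda v=v$, so $v\in V_1(z\mw)$; applying the inclusion from part~1 to the approximation $z\widetilde{\mw}\approx_c z\mw$ gives $v\in V_1(z\widetilde{\mw})$, i.e. $z\widetilde{\mw}v=v$, i.e. $\widetilde{\mw}v=z^{-1}v=\lambda v$. Thus $V_\lambda(\mw)\subseteq V_\lambda(\widetilde{\mw})$. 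When $c<1$, the reversibility fact applied to each $z\widetilde{\mw}\approx_c z\mw$ gives $z\mw\approx_{c/(1-c)}z\widetilde{\mw}$ for all $|z|=1$, hence $\mw\capprox_{c/(1-c)}\widetilde{\mw}$ by Lemma~\ref{lem:unitcirc_equivalences}; the inclusion with roles swapped then gives $V_\lambda(\widetilde{\mw})\subseteq V_\lambda(\mw)$, completing the proof.

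The crux — and the step I expect to be the main obstacle — is the opening move of part~1: recognizing that $\mw v=v$ forces $v$ into the kernel of the \emph{symmetrization} $\mU_{\mI-\mw}$ (not merely into $\ker(\mI-\mw)$), and then pulling $(\mw-\widetilde{\mw})v=0$ out of the kernel condition in the equivalent characterization of spectral approximation. Once that is in hand, everything else — the rotation trick and the standard reversibility of $\epsilon$-approximation — is routine bookkeeping.
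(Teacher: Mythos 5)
Your proof is correct and follows essentially the same route as the paper's: the eigenvector annihilates the right-hand side of the approximation inequality, the equivalent formulation of $\approx_\epsilon$ (Lemma \ref{lem:equivdef_helper_complex}) then forces $(\mw-\widetilde{\mw})v=0$, and part 2 reduces to part 1 via the rotation $z=\lambda^{-1}$ using Lemma \ref{lem:unitcirc_equivalences}. The only cosmetic differences are that you route the inclusion through the kernel condition (item 3 of the equivalence lemma) rather than the geometric-mean form, and you handle the $c<1$ equality via the reversibility $\mw\approx_{c/(1-c)}\widetilde{\mw}$ instead of rerunning the argument on an eigenvector of $\widetilde{\mw}$ — both are fine.
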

\begin{proof}
Suppose $\widetilde{\mw}\approx_{c}\mw$ for a finite $c$. Applying the equivalence of Lemma \ref{lem:equivdef_helper_complex}, this means that for all $x,y\in\C^n$,
\[
|x^*(\mw-\widetilde{\mw})y|\leq c\cdot\sqrt{(x^*(\mI-\mw)x)\cdot(y^*(\mI-\mw)y)}.
\]
Let $v$ be an eigenvector of $\mw$ with eigenvalue 1. Then setting $y=v$ above makes the right-hand side equal 0. Therefore we have that for all $x\in\C^n$,
\[
|x^*(\mw-\widetilde{\mw})v|=|x^*(\mI-\widetilde{\mw})v|=0
\]
which can only occur if $(\mI-\widetilde{\mw})v=\bzero$. Suppose now that $c<1$ and let $v$ be an eigenvector of $\widetilde{\mw}$ with eigenvalue 1. Then we have 
\[
|v^*(\mw-\widetilde{\mw})v|=v^*(\mI-\mw)v\leq \frac{c}{2}\cdot\left(v^*(\mI-\mw)v+v^*(\mI-\mw)v\right)=c\cdot v^*(\mI-\mw)v.
\]
Since $c<1$ and $\mU_{\mI-\mw}$ is PSD, the above inequality can only be true when both sides equal zero. Applying the equivalence of Lemma \ref{lem:equivdef_helper_complex}, we have that for all $x\in\C^n$
\[
|x^*(\mw-\widetilde{\mw})v|=|x^*(\mI-\mw)v|\leq c\cdot\sqrt{(x^*(\mI-\mw)x)\cdot(v^*(\mI-\mw)v)}=0.
\]
The above can only happen for all vectors $x$ if $(\mI-\mw)v=\bzero$.

For the second claim, suppose $\widetilde{\mw}\capprox_{c}\mw$ for finite $c$. Then for all $z$ such that $|z|=1$ we have $z\cdot\widetilde{\mw}\approx_{c}z\cdot\mw$. Let $v$ be an eigenvector of $\mw$ with eigenvalue $\lambda$ such that $|\lambda|=1$. Set $z=\lambda^{-1}$ so $z\cdot\mw v=v$. From the first part of the lemma, we have $V_1(z\cdot \mw)\subseteq V_1(z\cdot \widetilde{\mw})$ and hence $z\cdot \widetilde{\mw} v=v$, which implies $\widetilde{\mw} v=\lambda\cdot v$. A similar argument applies when $c<1$.
\end{proof}

\begin{namedtheorem}[Lemma \ref{lem:subspaces} restated]
Let $\mm\colon\mathbb{C}^n\rightarrow\mathbb{C}^n$ be a linear operator and $V_1,\ldots,V_{\ell}\subseteq\mathbb{C}^n$ subspaces such that 
\begin{enumerate}
    \item $V_{j}\perp V_{k}$ for all $j\neq k$
    \item $V_1 \oplus \ldots \oplus V_\ell=\mathbb{C}^n$
    \item $\mm V_j\subseteq V_j$ for all $j\in[\ell]$.
\end{enumerate}
I.E., $M$ is block diagonal with respect to the subspaces $V_1,\ldots,V_\ell$. Then, 
\[\|\mm\|=\max_{j\in[\ell]}\|\mm|_{V_j}\|\]
\end{namedtheorem}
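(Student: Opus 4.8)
The plan is to prove the two inequalities $\|\mm\| \leq \max_j \|\mm|_{V_j}\|$ and $\|\mm\| \geq \max_j \|\mm|_{V_j}\|$ separately. The second is essentially immediate: for each $j$ the restriction $\mm|_{V_j}$ only involves vectors of $V_j \subseteq \C^n$, and since $\mm V_j \subseteq V_j$ we have $\|\mm|_{V_j}\| = \max_{v \in V_j, v\neq 0} \|\mm v\|/\|v\| \leq \max_{v \in \C^n, v\neq 0}\|\mm v\|/\|v\| = \|\mm\|$. Taking the max over $j$ gives $\max_j \|\mm|_{V_j}\| \leq \|\mm\|$.

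For the forward direction, I would take an arbitrary nonzero $x \in \C^n$ and use hypotheses (1) and (2) to write $x = x_1 + \cdots + x_\ell$ with $x_j \in V_j$ and the $x_j$ pairwise orthogonal, so $\|x\|^2 = \sum_j \|x_j\|^2$ by the Pythagorean identity. By hypothesis (3), $\mm x_j \in V_j$, so the vectors $\mm x_1, \ldots, \mm x_\ell$ are also pairwise orthogonal (they lie in mutually orthogonal subspaces), hence $\|\mm x\|^2 = \|\sum_j \mm x_j\|^2 = \sum_j \|\mm x_j\|^2$. Now bound each term: $\|\mm x_j\| \leq \|\mm|_{V_j}\| \cdot \|x_j\| \leq (\max_k \|\mm|_{V_k}\|)\cdot\|x_j\|$. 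Summing the squares gives $\|\mm x\|^2 \leq (\max_k \|\mm|_{V_k}\|)^2 \sum_j \|x_j\|^2 = (\max_k \|\mm|_{V_k}\|)^2 \|x\|^2$, i.e. $\|\mm x\|/\|x\| \leq \max_k \|\mm|_{V_k}\|$. Taking the supremum over $x$ yields $\|\mm\| \leq \max_k \|\mm|_{V_k}\|$.

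Combining the two directions gives the claimed equality. There is no real obstacle here; the only points requiring a little care are (a) making sure that the norm in question is the Euclidean ($\ell_2$) operator norm, so that orthogonal decomposition and the Pythagorean identity apply — this is the ``spectral norm'' convention fixed in the preliminaries — and (b) noting that the decomposition $x = \sum_j x_j$ is well-defined and orthogonal precisely because of hypotheses (1) and (2), and that the images $\mm x_j$ inherit orthogonality from hypothesis (3). All three hypotheses are used, and each in exactly one place.

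Here is the proof.

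\begin{proof}
Throughout, $\|\cdot\|$ denotes the $\ell_2$ operator norm and $\|\cdot\|$ on vectors the Euclidean norm. First note that for each $j \in [\ell]$, since $\mm V_j \subseteq V_j$ and $V_j \subseteq \C^n$,
\[
\|\mm|_{V_j}\| = \max_{v \in V_j, v \neq 0} \frac{\|\mm v\|}{\|v\|} \leq \max_{v \in \C^n, v \neq 0} \frac{\|\mm v\|}{\|v\|} = \|\mm\|,
\]
and hence $\max_{j \in [\ell]} \|\mm|_{V_j}\| \leq \|\mm\|$.

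For the reverse inequality, let $\mu = \max_{j \in [\ell]} \|\mm|_{V_j}\|$ and fix an arbitrary nonzero $x \in \C^n$. By conditions (1) and (2), we may write $x = x_1 + \cdots + x_\ell$ with $x_j \in V_j$ and the $x_j$ pairwise orthogonal, so that $\|x\|^2 = \sum_{j=1}^\ell \|x_j\|^2$. By condition (3), $\mm x_j \in V_j$ for each $j$, so the vectors $\mm x_1, \ldots, \mm x_\ell$ lie in pairwise orthogonal subspaces and are therefore pairwise orthogonal. Hence
\[
\|\mm x\|^2 = \left\| \sum_{j=1}^\ell \mm x_j \right\|^2 = \sum_{j=1}^\ell \|\mm x_j\|^2 \leq \sum_{j=1}^\ell \|\mm|_{V_j}\|^2 \cdot \|x_j\|^2 \leq \mu^2 \sum_{j=1}^\ell \|x_j\|^2 = \mu^2 \|x\|^2.
\]
Thus $\|\mm x\| \leq \mu \|x\|$ for all $x$, so $\|\mm\| \leq \mu = \max_{j \in [\ell]} \|\mm|_{V_j}\|$. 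Combining the two inequalities gives $\|\mm\| = \max_{j \in [\ell]} \|\mm|_{V_j}\|$.
\end{proof}
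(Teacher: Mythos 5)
Your proof is correct and follows essentially the same route as the paper's: decompose $x$ orthogonally across the $V_j$, use invariance to conclude the images $\mm x_j$ remain pairwise orthogonal, apply the Pythagorean identity, and bound each term by $\|\mm|_{V_j}\|$. You also spell out the "immediate" reverse inequality, which the paper leaves to the reader.
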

\begin{proof}
For $v\in\mathbb{C}^n$, by property 2 we can write $v=v_1+\ldots+v_{\ell}$ where $v_j\in V_j$ for all $j\in[\ell]$. Using this as well as properties 1 and 3, we get
\begin{align*}
    \|\mm v\|^2 &=\left\|\sum_{j\in[\ell]}\mm v_j\right\|^2\\
    &=\sum_{j\in[\ell]}\|\mm v_j\|^2\\
    &\leq \sum_{j\in[\ell]}\|\mm |_{V_j}\|^2\cdot\|v_j\|^2\\
    &\leq\left(\max_{j\in[\ell]}\|\mm |_{V_j}\|^2\right)\cdot \|v\|^2.
\end{align*}
So $\|\mm \|\leq\max_{j\in[\ell]\|\mm|_{V_j}\|}$. The other direction, $\|\mm \|\geq\|\mm|_{V_j}\|$ for all $j\in[\ell]$ is immediate. 
\end{proof}

\begin{namedtheorem}[Theorem \ref{thm:sc_approx} restated]
Fix $\mw,\widetilde{\mw}\in\C^{n}$ and suppose that $\widetilde{\mw}\approx_{\epsilon}\mw$ for $\epsilon\in(0,2/3)$. Let $F\subseteq[n]$ such that $(\mI_{|F|}-\mw_{FF})$ is invertible and let $C=[n]\setminus F$. Then 
\[
\mI_{|C|}-\schur(\mI_n-\widetilde{\mw},C)\approx_{\epsilon/(1-3\epsilon/2)}\mI_{|C|}-\schur(\mI_n-\mw,C)
\]
\end{namedtheorem}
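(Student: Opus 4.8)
The plan is to verify the defining inequality of directed (equivalently complex, Definition~\ref{def:complex_approx}) spectral approximation for the two Schur-complement Laplacians. Write $\mL=\mI_n-\mw$ and $\widetilde\mL=\mI_n-\widetilde\mw$, so that $\widetilde\mL-\mL=\widetilde\mw-\mw$ and $\mU_\mL=\mI_n-\mU_\mw$; by Lemma~\ref{lem:equivdef_helper_complex} the hypothesis $\widetilde\mw\approx_\epsilon\mw$ says precisely that $\ker\mU_\mL\subseteq\ker(\widetilde\mL-\mL)\cap\ker((\widetilde\mL-\mL)^*)$ and $|a^*(\widetilde\mL-\mL)b|\le\frac{\epsilon}{2}(\|a\|_{\mU_\mL}^2+\|b\|_{\mU_\mL}^2)$ for all $a,b\in\C^n$, and the conclusion is the same statement with $\mL,\widetilde\mL$ replaced by $\schur(\mL,C),\schur(\widetilde\mL,C)$ and $\epsilon$ replaced by $\epsilon/(1-3\epsilon/2)$. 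For a matrix $\mm$ with $\mm_{FF}$ invertible, let $\mr_\mm\colon\C^{|C|}\to\C^n$ be the harmonic-extension operator sending $u$ to the unique $x$ with $x_C=u$ and $(\mm x)_F=0$; one checks (as in \cite{CKKPPRS18}) that $\mI_{|F|}-\widetilde\mw_{FF}$ is invertible as well, so $\mr_\mL,\mr_{\mL^*},\mr_{\widetilde\mL},\mr_{\widetilde\mL^*}$ all make sense.

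The first step is the algebraic identity $\schur(\mm,C)=\mr_{\mm^*}^*\,\mm\,\mr_\mm$, together with the observation (each from killing an $F$-block) that $\mm\,\mr_\mm$ is supported on the $C$-rows and $\mr_{\mm^*}^*\,\mm$ on the $C$-columns. Since $\mL\mr_\mL$ kills $F$-rows and $\mr_{\widetilde\mL^*}^*\widetilde\mL$ kills $F$-columns, while $\mr_{\widetilde\mL}-\mr_\mL$ and $\mr_{\widetilde\mL^*}^*-\mr_{\mL^*}^*$ are supported on $F$-rows and $F$-columns respectively, we get $\schur(\widetilde\mL,C)=\mr_{\widetilde\mL^*}^*\widetilde\mL\mr_\mL$ and $\schur(\mL,C)=\mr_{\widetilde\mL^*}^*\mL\mr_\mL$, hence the clean difference formula
\[
\schur(\widetilde\mL,C)-\schur(\mL,C)=\mr_{\widetilde\mL^*}^*\,(\widetilde\mL-\mL)\,\mr_\mL,
\]
with \emph{no} second-order correction term, and symmetrically $\schur(\widetilde\mL,C)-\schur(\mL,C)=\mr_{\mL^*}^*\,(\widetilde\mL-\mL)\,\mr_{\widetilde\mL}$. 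Applying the hypothesis with $a=\mr_{\widetilde\mL^*}u$ and $b=\mr_\mL v$ gives $|u^*(\schur(\widetilde\mL,C)-\schur(\mL,C))v|\le\frac{\epsilon}{2}(\|\mr_{\widetilde\mL^*}u\|_{\mU_\mL}^2+\|\mr_\mL v\|_{\mU_\mL}^2)$, and the kernel containments transfer along the same formula since $v\in\ker\mU_{\schur(\mL,C)}$ forces $\mr_\mL v\in\ker\mU_\mL\subseteq\ker(\widetilde\mL-\mL)$.

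The second step relates these two energies to $\mU_{\schur(\mL,C)}$. For $b=\mr_\mL v$ this is exact: since $\mL\mr_\mL v$ is $C$-supported and $\mr_{\mL^*}v-\mr_\mL v$ is $F$-supported, $v^*\schur(\mL,C)v=(\mr_{\mL^*}v)^*\mL(\mr_\mL v)=(\mr_\mL v)^*\mL(\mr_\mL v)$, whose real part is $\|\mr_\mL v\|_{\mU_\mL}^2$; thus $\|b\|_{\mU_\mL}^2=v^*\mU_{\schur(\mL,C)}v$. The same identity applied to $\widetilde\mL^*$ (using $\schur(\widetilde\mL^*,C)=\schur(\widetilde\mL,C)^*$) gives $\Real(a^*\widetilde\mL a)=u^*\mU_{\schur(\widetilde\mL,C)}u$, so $\|a\|_{\mU_\mL}^2=u^*\mU_{\schur(\widetilde\mL,C)}u+\Real(a^*(\mL-\widetilde\mL)a)$, and the hypothesis with both arguments $a$ bounds the last term by $\epsilon\|a\|_{\mU_\mL}^2$, giving $\|a\|_{\mU_\mL}^2\le\frac{1}{1-\epsilon}u^*\mU_{\schur(\widetilde\mL,C)}u$. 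To eliminate the approximate Schur complement, set $u=v$ above and use $\mU_{\schur(\widetilde\mL,C)}-\mU_{\schur(\mL,C)}=\mU_{\schur(\widetilde\mL,C)-\schur(\mL,C)}$: this yields $u^*\mU_{\schur(\widetilde\mL,C)}u\le\frac{(2+\epsilon)(1-\epsilon)}{2-3\epsilon}u^*\mU_{\schur(\mL,C)}u$ (here $\epsilon<2/3$ is used). Feeding this back into both mixed formulas and averaging them bounds $|u^*(\schur(\widetilde\mL,C)-\schur(\mL,C))v|$ by $\frac{\epsilon(2-\epsilon)}{2(2-3\epsilon)}(u^*\mU_{\schur(\mL,C)}u+v^*\mU_{\schur(\mL,C)}v)$, which is even a bit stronger than what is claimed since $\frac{\epsilon(2-\epsilon)}{2-3\epsilon}\le\frac{\epsilon}{1-3\epsilon/2}$.

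I expect the main obstacle to be exactly this last closing step: the clean difference formula necessarily involves a harmonic extension with respect to the \emph{approximate} matrix ($\mr_{\widetilde\mL^*}$ or $\mr_{\widetilde\mL}$), and there is no reason its $\mU_\mL$-energy should match that of the true harmonic extension — this is precisely why no loss-free version can hold, in contrast to the undirected case, where the harmonic extension minimizes the quadratic form. Handling it cleanly forces the bootstrap above (extracting a PSD comparison between $\mU_{\schur(\widetilde\mL,C)}$ and $\mU_{\schur(\mL,C)}$ from the inequality itself by equating its two arguments) together with the symmetrization obtained by averaging the two mixed formulas; a single application of the hypothesis to the ``wrong'' extension does not recover the stated constant, and the requirement $\epsilon<2/3$ is exactly what keeps $2-3\epsilon$ positive throughout. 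A secondary bookkeeping point is the invertibility of $\mI_{|F|}-\widetilde\mw_{FF}$, which follows from the kernel containment in the hypothesis together with $\epsilon<1$.
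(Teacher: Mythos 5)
Your proposal is correct and is essentially the paper's own argument in operator notation: the paper likewise uses the two mixed harmonic extensions (one with respect to the true matrix, one with respect to the approximate one, giving exactly your clean difference formula), gets the exact energy identity for the former, pays $1/(1-\epsilon)$ for the latter, bootstraps a PSD comparison between $\mU_{\schur(\mI-\widetilde{\mw},C)}$ and $\mU_{\schur(\mI-\mw,C)}$ by setting the two arguments equal, and symmetrizes by averaging the two mixed formulas, arriving at the same constant $\frac{\epsilon(2-\epsilon)}{2(2-3\epsilon)}$. The only differences are presentational (the paper writes the extensions as explicit block vectors $y^{\ell},y^{r}$), plus your extra bookkeeping on kernels and the invertibility of $\mI_{|F|}-\widetilde{\mw}_{FF}$, which the paper leaves implicit.
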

\begin{proof}
Let $\ma = \mI_n - \widetilde{\mw}$, $\mb = \mI_n - \mw$. Note that since $\widetilde{\mw} \approx_\epsilon \mw$, we have for all $x,y \in \C^n$
\begin{equation}\label{eq:ab_approx}
|x^* (\mb - \ma) y| \leq \frac{\epsilon}{2}\left(x^* \mU_\mb x + y^* \mU_\mb y \right),
\end{equation}
our goal is to show that $
\mI_{|C|}-\schur(\ma,C)\approx_{\epsilon/(1-3\epsilon/2)}\mI_{|C|}-\schur(\mb,C)
$, which is equivalent to 
$$
\left|
[x^{\ell}]^{*}\left(\schur(\ma,C)-\schur(\mb,C)\right)x^{r}
\right|
\leq
\frac{\epsilon}{2-3\epsilon}\left[[x^{\ell}]^{*}\mU_{\schur(\mb,C)}x^{\ell}+[x^{r}]^{*}\mU_{\schur(\mb,C)}[x^{r}]\right].
$$
for all $x^{\ell}, x^{r} \in \C^{|C|}$.

Let $x^{r},x^{\ell}\in\C^{C}$ be arbitrary.
Now define $y^{r}\in\C^{n}$ by $y_{C}^{r}=x^{r}$ and $y_{F}^{r}=-\ma_{FF}^{-1}\ma_{FC}x^{r}$.
Further, define $y^{\ell}\in\C^{n}$ by $y_{C}^{\ell}=x^{\ell}$,
$y_{F}^{\ell}=-[\mb_{FF}^{*}]^{-1}[\mb_{CF}^{*}]x^{\ell}$. Note
that $y^{r}$ and $y^{\ell}$ are defined so that 
\[
\ma y^{r}=\left(\begin{array}{cc}
\ma_{FF} & \ma_{FC}\\
\ma_{CF} & \ma_{CC}
\end{array}\right)\left(\begin{array}{c}
-\ma_{FF}^{-1}\ma_{FC}x^{r}\\
x^{r}
\end{array}\right)=\left(\begin{array}{c}
\vzero_{F}\\
\schur(\ma,C)x^{r}
\end{array}\right)
\]
and
\[
[y^{\ell}]^{*}\mb=\left(\begin{array}{c}
-[\mb_{FF}^{*}]^{-1}[\mb_{CF}^{*}]x^{\ell}\\
x^{\ell}
\end{array}\right)^{*}\left(\begin{array}{cc}
\mb_{FF} & \mb_{FC}\\
\mb_{CF} & \mb_{CC}
\end{array}\right)=\left(\begin{array}{c}
\vzero_{F}\\
\schur(\mb,C)^{*}x^{\ell}
\end{array}\right)^{*}\,.
\]

Consequently $[y^{\ell}]^{*}\ma y^{r}=[x^{\ell}]^{*}\schur(\ma,C)x^{r}$
and $[y^{\ell}]^{*}\mb y^{r}=[x^{\ell}]^{*}\schur(\mb,C)x^{r}$.
Therefore, by Equation~\eqref{eq:ab_approx} we have that 
\begin{equation}
\label{eq:sc}
\left|[x^{\ell}]^{*}\schur(\ma,C)x^{r}-[x^{\ell}]^{*}\schur(\mb,C)x^{r}\right|
=
\left|[y^{\ell}]^{*}\left(\ma-\mb\right)y^{r}\right|
\leq
\frac{\epsilon}{2}\left[[y^{\ell}]^{*}\mU_{\mb}y^{\ell}+[y^{r}]^{*}\mU_{\mb}y^{r}\right]\,.
\end{equation}
Now, 
\[
[y^{\ell}]^{*}\mU_{\mb}y^{\ell}=[y^{\ell}]^{*}\left(\frac{\mb+\mb^*}{2}\right) y^{\ell}=[x^{\ell}]^{*}\mU_{\schur(\mb,C)}x^{\ell}
\]


and since we know that $(1-\epsilon)\mU_{\mb}\preceq\mU_{\ma}\preceq(1+\epsilon)\mU_{\mb}$
we have that 
\[
[y^{r}]^{*}\mU_{B}[y^{r}]\preceq\frac{1}{1-\epsilon}\cdot [y^{r}]^{*}\mU_{\ma}[y^{r}]=\frac{1}{1-\epsilon}\cdot [x^{r}]^{*}\mU_{\schur(\ma,C)}[x^{r}]\,.
\]
Substituting these into Equation \ref{eq:sc} gives 
\[
\left|[x^{\ell}]^{*}(\schur(\ma,C)-\schur(\mb,C))x^{r}\right|
\leq
\frac{\epsilon}{2}\left[[x^{\ell}]^{*}\mU_{\schur(\mb,C)}x^{\ell}+[x^{r}]^{*}\mU_{\schur(\ma,C)}[x^{r}]/(1-\epsilon)\right]\,.
\]
Plugging in  $x^{\ell}=x^{r}$ above,
\begin{align*}
[x^{r}]^{*}\left(\mU_{\schur(\ma,C)}-\mU_{\schur(\mb,C)}\right)x^{r}
&\leq \left|[x^{r}]^{*}\left(\schur(\ma,C)-\schur(\mb,C)\right)x^{r}\right|\\
&\leq
\frac{\epsilon}{2}\left[[x^{r}]^{*}\mU_{\schur(\mb,C)}x^{r}+[x^{r}]^{*}\mU_{\schur(\ma,C)}[x^{r}]/(1-\epsilon)\right]\,.
\end{align*}
and rearranging terms gives
\[
[x^{r}]^{*}\mU_{\schur(\ma,C)}[x^{r}]\leq\frac{1+\epsilon/2}{1-\epsilon/(2\cdot(1-\epsilon))}\cdot [x^{r}]^{*}\mU_{\schur(\mb,C)}[x^{r}],
\]
which implies
\begin{align*}
\left|[x^{\ell}]^{*}\left(\schur(\ma,C)-\schur(\mb,C)\right)x^{r}\right|
&\leq
\frac{\epsilon}{2}\left[[x^{\ell}]^{*}\mU_{\schur(\mb,C)}x^{\ell}+\frac{1+\epsilon/2}{1-\epsilon/(2\cdot(1-\epsilon))}\cdot\frac{1}{1-\epsilon}\cdot[x^{r}]^{*}\mU_{\schur(\mb,C)}[x^{r}]\right]\\
&=\frac{\epsilon}{2}\left[[x^{\ell}]^{*}\mU_{\schur(\mb,C)}x^{\ell}+\frac{1+\epsilon/2}{1-3\epsilon/2}\cdot[x^{r}]^{*}\mU_{\schur(\mb,C)}[x^{r}]\right]
\end{align*}
By symmetry, i.e. repeating the above argument on the conjugate transposes of the matrices we also have that 
\[
\left|[x^{\ell}]^{*}\left(\schur(\ma,C)-\schur(\mb,C)\right)x^{r}
\right|
\leq
\frac{\epsilon}{2}\cdot \frac{1+\epsilon/2}{1-3\epsilon/2}\cdot \left[x^{\ell}]^{*}\mU_{\schur(\mb,C)}x^{\ell}+[x^{r}]^{*}\mU_{\schur(\mb,C)}[x^{r}]\right].
\]
Taking the average of these two equations then yields that 
\[
\left|
[x^{\ell}]^{*}\left(\schur(\ma,C)-\schur(\mb,C)\right)x^{r}
\right|
\leq
\frac{\epsilon}{4}\cdot\frac{2-\epsilon}{1-3\epsilon/2}\cdot\left[[x^{\ell}]^{*}\mU_{\schur(\mb,C)}x^{\ell}+[x^{r}]^{*}\mU_{\schur(\mb,C)}[x^{r}]\right]\,.
\]
Since 
\[
\frac{\epsilon}{4}\cdot\frac{2-\epsilon}{1-3\epsilon/2}\leq \frac{1}{2}\cdot\frac{\epsilon}{2-3\epsilon}
\]
and $x^{\ell}$ and $x^{r}$ were arbitrary, the result follows.
\end{proof}

The loss of $1/(1-3\epsilon/2)$ in the approximation quality in Theorem \ref{thm:sc_approx} is avoidable with a small change to the notion of approximation. 

\newcommand{\minapprox}{\mathbin{\stackrel{\rm \min}{\approx}}}

\begin{definition}[Min Spectral Approximation]
\label{def:min_approx}
Let $\mw,\widetilde{\mw}\in\C^{n\times n}$ be (possibly asymmetric) matrices. We say that $\widetilde{\mw}$ is a \emph{min $\epsilon$-approximation} of $\mw$ (written $\widetilde{\mw}\minapprox_{\epsilon}\mw$) if for all $x,y\in\C^n$
\[
\left|x^*(\mw-\widetilde{\mw})y\right|
\leq \frac{\epsilon}{2}\cdot\min\left\{x^*\mU_{\mI-\mw}x+y^*\mU_{\mI-\widetilde{\mw}}y,y^*\mU_{\mI-\mw}y+x^*\mU_{\mI-\widetilde{\mw}}x\right\}
\]
\end{definition}
In min spectral approximation, the error on the right-hand side is measured with respect to both matrices rather than just one of them. Min spectral approximation is equivalent to the original notion of spectral approximation up to a small loss in approximation quality, as seen in the following lemma. 
\begin{lemma}
Let $\mw,\widetilde{\mw}\in\C^{n\times n}$ be (possibly asymmetric) matrices and suppose $\epsilon\in(0,1)$.  
\begin{enumerate}
    \item If $\widetilde{\mw}\minapprox_{\epsilon}\mw$ then $\widetilde{\mw}\approx_{\epsilon/(1-\epsilon/2)}\mw$
    \item If $\widetilde{\mw}\approx_{\epsilon}\mw$ then $\widetilde{\mw}\minapprox_{\epsilon/(1-\epsilon)}\mw$.
\end{enumerate}
\end{lemma}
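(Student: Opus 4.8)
The plan is to fix, once and for all, the matrices $\ma=\mw-\widetilde{\mw}$, $\mm=\mU_{\mI-\mw}$ and $\mn=\mU_{\mI-\widetilde{\mw}}$, where $\mm$ and $\mn$ are Hermitian and $\mm\succeq 0$ by the standing assumption $\Real(x^*\mw x)\le\|x\|^2$. The single identity driving both directions is that $\mn-\mm=\mU_{\mI-\widetilde{\mw}}-\mU_{\mI-\mw}=\mU_{\mw}-\mU_{\widetilde{\mw}}=\mU_{\ma}$, so that for every $v\in\C^n$ one has $v^*\mn v=v^*\mm v+\Real(v^*\ma v)$: the two ``error forms'' differ by exactly the symmetrized error, and controlling $v^*\ma v$ controls their discrepancy.

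For part~2, suppose $\widetilde{\mw}\approx_{\epsilon}\mw$. Specializing the definition to $x=y=v$ gives $|v^*\ma v|\le\epsilon\, v^*\mm v$, hence via the identity $v^*\mn v\ge(1-\epsilon)v^*\mm v$ for all $v$; since $\mm,\mn$ are Hermitian this means $\mn\succeq(1-\epsilon)\mm$, i.e.\ $v^*\mm v\le\tfrac{1}{1-\epsilon}v^*\mn v$ (using $\epsilon<1$). Feeding this into the definition of $\approx_{\epsilon}$ gives, for arbitrary $x,y$, $|x^*\ma y|\le\tfrac{\epsilon}{2}(x^*\mm x+y^*\mm y)\le\tfrac{\epsilon}{2}\bigl(x^*\mm x+\tfrac{1}{1-\epsilon}y^*\mn y\bigr)\le\tfrac{\epsilon/(1-\epsilon)}{2}\bigl(x^*\mm x+y^*\mn y\bigr)$, and symmetrically $|x^*\ma y|\le\tfrac{\epsilon/(1-\epsilon)}{2}\bigl(x^*\mn x+y^*\mm y\bigr)$; since the left side is dominated by both of these it is dominated by $\tfrac{\epsilon/(1-\epsilon)}{2}$ times their minimum, which is exactly $\widetilde{\mw}\minapprox_{\epsilon/(1-\epsilon)}\mw$.

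For part~1, suppose $\widetilde{\mw}\minapprox_{\epsilon}\mw$. I would first specialize to $x=y=v$: the two terms inside the $\min$ coincide, and using $v^*\mn v=v^*\mm v+\Real(v^*\ma v)$ together with $\Real(v^*\ma v)\le|v^*\ma v|$ yields $(1-\epsilon/2)|v^*\ma v|\le\epsilon\, v^*\mm v$, i.e.\ $|v^*\ma v|\le\tfrac{\epsilon}{1-\epsilon/2}v^*\mm v$ for all $v$. Next, for arbitrary $x,y$, the $\min$ in the hypothesis supplies \emph{both} $|x^*\ma y|\le\tfrac{\epsilon}{2}(x^*\mm x+y^*\mn y)$ and $|x^*\ma y|\le\tfrac{\epsilon}{2}(x^*\mn x+y^*\mm y)$; averaging these two and substituting $x^*\mn x=x^*\mm x+\Real(x^*\ma x)$ and $y^*\mn y=y^*\mm y+\Real(y^*\ma y)$ gives $|x^*\ma y|\le\tfrac{\epsilon}{2}(x^*\mm x+y^*\mm y)+\tfrac{\epsilon}{4}\bigl(\Real(x^*\ma x)+\Real(y^*\ma y)\bigr)$. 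Bounding the last two real parts by $\tfrac{\epsilon}{1-\epsilon/2}x^*\mm x$ and $\tfrac{\epsilon}{1-\epsilon/2}y^*\mm y$ respectively, the coefficient collapses, $\tfrac{\epsilon}{2}+\tfrac{\epsilon^2}{4(1-\epsilon/2)}=\tfrac{\epsilon}{2(1-\epsilon/2)}$, giving $|x^*\ma y|\le\tfrac{\epsilon/(1-\epsilon/2)}{2}(x^*\mm x+y^*\mm y)$, which is $\widetilde{\mw}\approx_{\epsilon/(1-\epsilon/2)}\mw$ by Definition~\ref{def:complex_approx} (coinciding with directed approximation for real matrices by Lemma~\ref{lem:dirapprox_to_complex}).

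The only subtle point is the precise constant in part~1: the naive route — compare $\mn$ and $\mm$ one-sidedly as in part~2 (which here yields only $y^*\mn y\le\tfrac{1+\epsilon/2}{1-\epsilon/2}y^*\mm y$) and plug this into a single one of the two $\min$-inequalities — loses a factor, producing $\epsilon\cdot\tfrac{1+\epsilon/2}{1-\epsilon/2}$ rather than $\tfrac{\epsilon}{1-\epsilon/2}$. Averaging the two $\min$-inequalities \emph{before} substituting is what makes the $\Real(x^*\ma x)$ correction enter with coefficient $\epsilon/4$ instead of $\epsilon/2$, and that factor of two is exactly what makes the arithmetic close; everything else is a one-line specialization or direct substitution, so I expect no further obstacles.
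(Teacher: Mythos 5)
Your proof is correct and follows essentially the same route as the paper's: part~2 is the identical argument (diagonal specialization gives $\mU_{\mI-\mw}\preceq\tfrac{1}{1-\epsilon}\mU_{\mI-\widetilde{\mw}}$, then substitute into the directed-approximation inequality), and part~1 uses the same two ingredients the paper does, namely a diagonal comparison of the two symmetrizations plus an average of the two min-inequalities to recover the constant $\epsilon/(1-\epsilon/2)$. The only difference is bookkeeping in part~1 — you carry the additive identity $\mU_{\mI-\widetilde{\mw}}=\mU_{\mI-\mw}+\mU_{\mw-\widetilde{\mw}}$ through the averaged inequality, whereas the paper first converts it to the multiplicative bound $x^*\mU_{\mI-\widetilde{\mw}}x\leq\tfrac{1+\epsilon/2}{1-\epsilon/2}x^*\mU_{\mI-\mw}x$ and then averages — and both yield the same coefficient.
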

\begin{proof}
Suppose $\widetilde{\mw}\minapprox_{\epsilon}\mw$ and fix $x\in\C^n$. Then we can write
\begin{align*}
    x^*\left(\mU_{\mI-\widetilde{\mw}}-\mU_{\mI-\mw}\right)x&=\Real\left( x^*(\mw-\widetilde{\mw})x\right)\\
    &\leq \left|x^*(\mw-\widetilde{\mw})x\right|\\
    &\leq \frac{\epsilon}{2}\cdot\left(x^*\mU_{\mI-\mw}x+x^*\mU_{\mI-\widetilde{\mw}}x\right).
\end{align*}
Rearranging the above gives
\[
x^*\mU_{\mI-\widetilde{\mw}}x\leq \frac{1+\epsilon/2}{1-\epsilon/2}\cdot x^*\mU_{\mI-\mw}x
\]
which implies
\begin{align*}
\left|x^*(\mw-\widetilde{\mw})y\right|
&\leq \frac{\epsilon}{2}\cdot\left(y^*\mU_{\mI-\mw}y+x^*\mU_{\mI-\widetilde{\mw}}x\right)\\
&\leq \frac{\epsilon}{2}\cdot\left(y^*\mU_{\mI-\mw}y+\frac{1+\epsilon/2}{1-\epsilon/2}\cdot x^*\mU_{\mI-\mw}x\right).
\end{align*}
A similar calculation gives
\[
\left|x^*(\mw-\widetilde{\mw})y\right|
\leq \frac{\epsilon}{2}\cdot\left(\frac{1+\epsilon/2}{1-\epsilon/2}\cdot y^*\mU_{\mI-\mw}y+ x^*\mU_{\mI-\mw}x\right).
\]
Averaging these two inequalities and noting that 
\[
\frac{1}{2}\cdot\frac{\epsilon}{2}\cdot\left(1+\frac{1+\epsilon/2}{1-\epsilon/2}\right)=\frac{\epsilon}{2\cdot(1-\epsilon/2)}
\]
gives 
\[
\left|x^*(\mw-\widetilde{\mw})y\right|\leq \frac{\epsilon}{2\cdot(1-\epsilon/2)}\cdot\left(x^*\mU_{\mI-\mw}x+y^*\mU_{\mI-\mw}y\right),
\]
or equivalently $\widetilde{\mw}\approx_{\epsilon/(1-\epsilon/2)}\mw$. 

Now suppose $\widetilde{\mw}\approx_{\epsilon}\mw$. It follows that $\mU_{\widetilde{\mw}}\approx_{\epsilon}\mU_{\mw}$ and hence 
\[
(1-\epsilon)\cdot\mU_{\mI-\mw}\preceq \mU_{\mI-\widetilde{\mw}}\preceq (1+\epsilon) \cdot\mU_{\mI-\mw},
\]
which implies that $\mU_{\mI-\mw}\preceq (1/(1-\epsilon))\cdot \mU_{\mI-\widetilde{\mw}}$. Now we can write that for all $x,y\in\C^n$
\begin{align*}
\left|x^*(\mw-\widetilde{\mw})y\right|&\leq\frac{\epsilon}{2}\cdot\left(x^*\mU_{\mI-\mw}x+y^*\mU_{\mI-\mw}y\right)\\
&\leq \frac{\epsilon}{2}\cdot\left(\frac{1}{1-\epsilon}\cdot x^*\mU_{\mI-\widetilde{\mw}}x+y^*\mU_{\mI-\mw}y\right)\\
&\leq \frac{\epsilon}{2\cdot(1-\epsilon)}\cdot\left(x^*\mU_{\mI-\widetilde{\mw}}x+y^*\mU_{\mI-\mw}y\right).
\end{align*}
A similar calculation shows
\[
\left|x^*(\mw-\widetilde{\mw})y\right|\leq \frac{\epsilon}{2\cdot(1-\epsilon)}\cdot\left(x^*\mU_{\mI-\mw}x+y^*\mU_{\mI-\widetilde{\mw}}y\right)
\]
and hence $\widetilde{\mw}\minapprox_{\epsilon/(1-\epsilon)}\mw$.
\end{proof}

Now we show that min spectral approximation is preserved under schur complements with no loss in approximation quality.

\begin{theorem}
Let $\mw,\widetilde{\mw}\in\C^{n}$ and suppose that $\widetilde{\mw}\minapprox_{\epsilon}\mw$ for $\epsilon\in(0,1)$. Let $F\subseteq[n]$ such that $(\mI_{|F|}-\mw_{FF})$ is invertible and let $C=[n]\setminus F$. Then 
\[
\mI_{|C|}-\schur(\mI_n-\widetilde{\mw},C)\minapprox_{\epsilon}\mI_{|C|}-\schur(\mI_n-\mw,C)
\]
\end{theorem}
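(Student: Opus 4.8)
The plan is to run essentially the same argument as in the proof of Theorem~\ref{thm:sc_approx}, but to exploit the two-sided minimum built into $\minapprox$ so as to avoid the spectral-norm conversion that forced the $1/(1-3\epsilon/2)$ loss there. Write $\ma = \mI_n - \widetilde{\mw}$ and $\mb = \mI_n - \mw$, so $\ma - \mb = \mw - \widetilde{\mw}$, and observe that the hypothesis $\widetilde{\mw}\minapprox_\epsilon\mw$ says exactly that for all $u,v\in\C^n$,
\[
|u^*(\ma-\mb)v| \le \frac{\epsilon}{2}\min\bigl\{\,u^*\mU_{\mb}u + v^*\mU_{\ma}v,\ \ v^*\mU_{\mb}v + u^*\mU_{\ma}u\,\bigr\}.
\]
Unwinding Definition~\ref{def:min_approx} for the target pair (using $\mI_{|C|}-\bigl(\mI_{|C|}-\schur(\mI_n-\mw,C)\bigr)=\schur(\mb,C)$, and likewise for $\ma$), the desired conclusion is precisely the statement that for all $x^\ell,x^r\in\C^{|C|}$,
\[
\bigl|(x^\ell)^*\bigl(\schur(\ma,C)-\schur(\mb,C)\bigr)x^r\bigr| \le \frac{\epsilon}{2}\min\bigl\{\,(x^\ell)^*\mU_{\schur(\mb,C)}x^\ell + (x^r)^*\mU_{\schur(\ma,C)}x^r,\ \ (x^\ell)^*\mU_{\schur(\ma,C)}x^\ell + (x^r)^*\mU_{\schur(\mb,C)}x^r\,\bigr\}.
\]

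To obtain the first term of this minimum, fix $x^\ell,x^r$ and lift them to $\C^n$ exactly as in Theorem~\ref{thm:sc_approx}: let $y^r$ agree with $x^r$ on $C$ and equal $-\ma_{FF}^{-1}\ma_{FC}x^r$ on $F$ (a right-lift adapted to $\ma$), and let $y^\ell$ agree with $x^\ell$ on $C$ and equal $-[\mb_{FF}^*]^{-1}[\mb_{CF}^*]x^\ell$ on $F$ (a left-lift adapted to $\mb^*$). By construction $\ma y^r$ vanishes on $F$ and equals $\schur(\ma,C)x^r$ on $C$, while $(y^\ell)^*\mb$ vanishes on the $F$-columns and equals $(x^\ell)^*\schur(\mb,C)$ on the $C$-columns; hence $(y^\ell)^*(\ma-\mb)y^r=(x^\ell)^*(\schur(\ma,C)-\schur(\mb,C))x^r$, and, by the same real-part computation as in Theorem~\ref{thm:sc_approx} (applied to $\ma$ as well as to $\mb$), $(y^\ell)^*\mU_{\mb}y^\ell=(x^\ell)^*\mU_{\schur(\mb,C)}x^\ell$ and $(y^r)^*\mU_{\ma}y^r=(x^r)^*\mU_{\schur(\ma,C)}x^r$. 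Applying the first branch of the minimum in the hypothesis with $u=y^\ell$, $v=y^r$ yields the first term of the target minimum with no multiplicative loss. For the second term I would repeat with the roles of $\ma$ and $\mb$ interchanged in the lifts: take $\widehat y^{\,r}$ to be the right-lift of $x^r$ adapted to $\mb$ (equal to $-\mb_{FF}^{-1}\mb_{FC}x^r$ on $F$) and $\widehat y^{\,\ell}$ the left-lift of $x^\ell$ adapted to $\ma^*$ (equal to $-[\ma_{FF}^*]^{-1}[\ma_{CF}^*]x^\ell$ on $F$); then $(\widehat y^{\,\ell})^*(\ma-\mb)\widehat y^{\,r}$ again equals $(x^\ell)^*(\schur(\ma,C)-\schur(\mb,C))x^r$ (it depends only on the Schur complements, not on which valid lift one picks), while now $(\widehat y^{\,r})^*\mU_{\mb}\widehat y^{\,r}=(x^r)^*\mU_{\schur(\mb,C)}x^r$ and $(\widehat y^{\,\ell})^*\mU_{\ma}\widehat y^{\,\ell}=(x^\ell)^*\mU_{\schur(\ma,C)}x^\ell$, so the second branch of the minimum gives the second term. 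Taking the minimum of the two bounds finishes the proof.

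The only thing that requires real care is the bookkeeping of which lift is adapted to which matrix: in each branch, the right-lift must be a genuine right-lift of the matrix whose symmetrization sits on $v$ in the hypothesis and the left-lift a genuine left-lift (adapted to the conjugate transpose) of the matrix whose symmetrization sits on $u$, since that is exactly what makes both relevant quadratic forms collapse to Schur-complement quadratic forms — and it is this freedom, supplied by the $\min$, that the plain directed notion lacks. Re-verifying the identities $(y^\ell)^*\mU_{\mb}y^\ell=(x^\ell)^*\mU_{\schur(\mb,C)}x^\ell$ and the like is the same short real-part calculation as in Theorem~\ref{thm:sc_approx}, and I would tacitly use, as is done there, that $\ma_{FF}=\mI_{|F|}-\widetilde{\mw}_{FF}$ is invertible (or deduce this from the hypothesis when $\epsilon<1$ via the equality-of-invariant-subspaces consequence, cf.\ Lemma~\ref{lem:exact_eigenspace}). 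Because no step ever converts between $\mU_{\ma}$ and $\mU_{\mb}$ through a spectral inequality, no loss in $\epsilon$ is incurred, which is the entire point of min-approximation.
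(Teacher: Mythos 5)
Your proposal is correct and follows essentially the same route as the paper's proof: lift $x^\ell,x^r$ via the $\mb^*$-adapted left-lift and $\ma$-adapted right-lift to collapse both quadratic forms onto Schur-complement forms and extract the first branch of the minimum, then swap the roles of $\ma$ and $\mb$ in the lifts to extract the second branch, with no spectral-norm conversion and hence no loss in $\epsilon$. The bookkeeping you highlight (which lift is adapted to which matrix, and the tacit invertibility of $\ma_{FF}$) matches the paper's treatment exactly.
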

\begin{proof}
Let $\ma = \mI_n - \widetilde{\mw}$, $\mb = \mI_n - \mw$. Note that since $\widetilde{\mw} \minapprox_\epsilon \mw$, we have for all $x,y \in \C^n$
\begin{equation}\label{eq:ab_approx2}
|x^* (\mb - \ma) y| \leq \frac{\epsilon}{2}\min\left\{x^* \mU_\mb x + y^* \mU_\ma y, y^* \mU_\mb y + x^* \mU_\ma x \right\},
\end{equation}

Our goal is to show that $
\mI_{|C|}-\schur(\ma,C)\minapprox_{\epsilon}\mI_{|C|}-\schur(\mb,C)
$, which is equivalent to 
\begin{align*}
&\left|
[x^{\ell}]^{*}\left(\schur(\ma,C)-\schur(\mb,C)\right)x^{r}\right|\\
&\leq
\frac{\epsilon}{2}\cdot\min\left\{[x^{\ell}]^{*}\mU_{\schur(\ma,C)}x^{\ell}+[x^{r}]^{*}\mU_{\schur(\mb,C)}[x^{r}],[x^{\ell}]^{*}\mU_{\schur(\mb,C)}x^{\ell}+[x^{r}]^{*}\mU_{\schur(\ma,C)}[x^{r}]\right\}.
\end{align*}
for all $x^{\ell}, x^{r} \in \C^{|C|}$.

Let $x^{r},x^{\ell}\in\C^{C}$ be arbitrary.
Now define $y^{r}\in\C^{n}$ by $y_{C}^{r}=x^{r}$ and $y_{F}^{r}=-\ma_{FF}^{-1}\ma_{FC}x^{r}$.
Further, define $y^{\ell}\in\C^{n}$ by $y_{C}^{\ell}=x^{\ell}$,
$y_{F}^{\ell}=-[\mb_{FF}^{*}]^{-1}[\mb_{CF}^{*}]x^{\ell}$. Note
that $y^{r}$ and $y^{\ell}$ are defined so that 
\[
\ma y^{r}=\left(\begin{array}{cc}
\ma_{FF} & \ma_{FC}\\
\ma_{CF} & \ma_{CC}
\end{array}\right)\left(\begin{array}{c}
-\ma_{FF}^{-1}\ma_{FC}x^{r}\\
x^{r}
\end{array}\right)=\left(\begin{array}{c}
\vzero_{F}\\
\schur(\ma,C)x^{r}
\end{array}\right)
\]
and
\[
[y^{\ell}]^{*}\mb=\left(\begin{array}{c}
-[\mb_{FF}^{*}]^{-1}[\mb_{CF}^{*}]x^{\ell}\\
x^{\ell}
\end{array}\right)^{*}\left(\begin{array}{cc}
\mb_{FF} & \mb_{FC}\\
\mb_{CF} & \mb_{CC}
\end{array}\right)=\left(\begin{array}{c}
\vzero_{F}\\
\schur(\mb,C)^{*}x^{\ell}
\end{array}\right)^{*}\,.
\]
Consequently $[y^{\ell}]^{*}\ma y^{r}=[x^{\ell}]^{*}\schur(\ma,C)x^{r}$
and $[y^{\ell}]^{*}\mb y^{r}=[x^{\ell}]^{*}\schur(\mb,C)x^{r}$. We also have $[y^{\ell}]^{*}\mU_{\mb}y^{\ell}=[x^{\ell}]^{*}\mU_{\schur(\mb,C)}x^{\ell}$ and $[y^{r}]^{*}\mU_{\ma}y^{r}=[x^{r}]^{*}\mU_{\schur(\ma,C)}x^{r}$. Therefore, by Equation~\eqref{eq:ab_approx2} we get
\begin{align*}
\left|[x^{\ell}]^{*}\schur(\ma,C)x^{r}-[x^{\ell}]^{*}\schur(\mb,C)x^{r}\right|
&=
\left|[y^{\ell}]^{*}\left(\ma-\mb\right)y^{r}\right|\\
&\leq
\frac{\epsilon}{2}\cdot\left([y^{\ell}]^{*}\mU_{\mb}y^{\ell}+[y^{r}]^{*}\mU_{\ma}y^{r}\right)\\
&=\frac{\epsilon}{2}\cdot\left([x^{\ell}]^{*}\mU_{\schur(\mb,C)}x^{\ell}+[x^{r}]^{*}\mU_{\schur(\ma,C)}x^{r}\right).
\end{align*}
Redefining $y^{\ell}$ and $y^r$, we can get the analogous inequality 
\[
\left|[x^{\ell}]^{*}\schur(\ma,C)x^{r}-[x^{\ell}]^{*}\schur(\mb,C)x^{r}\right|\leq\frac{\epsilon}{2}\cdot\left([x^{\ell}]^{*}\mU_{\schur(\ma,C)}x^{\ell}+[x^{r}]^{*}\mU_{\schur(\mb,C)}x^{r}\right).
\]
Therefore $\mI_{|C|}-\schur(\ma,C)\minapprox_{\epsilon}\mI_{|C|}-\schur(\mb,C)$.
\end{proof}

\section{Proof of Lemma \ref{lem:expander_approx_complete}}
\label{app:expander_approx_complete}
\begin{namedtheorem}[Lemma \ref{lem:expander_approx_complete} restated]
Let $G$ be a strongly connected, regular directed multigraph on $n$ vertices with transition matrix $\mw$ and let $\mJ\in\mathbb{R}^{n\times n}$ be a matrix with $1/n$ in every entry (i.e. $\mJ$ is the transition matrix of the complete graph with a self loop on every vertex). Then $\lambda(G)\leq \lambda$ if and only if $\mw\capprox_{\lambda}\mJ$.
\end{namedtheorem}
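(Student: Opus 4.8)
The plan is to peel off the quantifier over unit-circle multipliers using the equivalent formulations of unit-circle approximation, reduce everything to a single operator-norm condition on $\mw-\mJ$, and then observe that $\|\mw-\mJ\| = \lambda(G)$.

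First I would record the relevant algebra. Since $G$ is regular, $\mw\vec{\bone}=\vec{\bone}$ and $\vec{\bone}^\top\mw=\vec{\bone}^\top$, and $\mJ=\vec{\bone}\vec{\bone}^\top/n$ is the orthogonal projection onto $\mathrm{span}(\vec{\bone})$; hence $\mw\mJ=\mJ\mw=\mJ$, and therefore $\mJ(\mw-\mJ)=(\mw-\mJ)\mJ=0$. In particular $\mw-\mJ$ annihilates $\vec{\bone}$ and agrees with $\mw$ on $\vec{\bone}^\perp$ (where $\mJ$ vanishes), and $\mw$ preserves $\vec{\bone}^\perp$, so writing any $x$ as $\tfrac{\vec{\bone}^\top x}{n}\vec{\bone}+x^\perp$ with $x^\perp\perp\vec{\bone}$ gives $\|(\mw-\mJ)x\|=\|\mw x^\perp\|$ and $\|x\|\ge\|x^\perp\|$, so $\|\mw-\mJ\|_2=\max_{0\ne v\perp\vec{\bone}}\|\mw v\|/\|v\|=\lambda(G)$ (using Lemma~\ref{lem:extension_to_complex} and the remark after the definition of $\lambda(G)$ to pass between real and complex vectors). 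I would also note $\mathrm{span}(\vec{\bone})\subseteq\ker(\mw-\mJ)\cap\ker((\mw-\mJ)^\top)$, since $(\mw-\mJ)\vec{\bone}=0$ and $(\mw^\top-\mJ)\vec{\bone}=0$.

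Next I would invoke Lemma~\ref{lem:unitcirc_equivalences}, item~3: $\mw\capprox_\lambda\mJ$ holds iff for every $z\in\C$ with $|z|=1$ we have $\ker(\mU_{\mI-z\mJ})\subseteq\ker(\mw-\mJ)\cap\ker((\mw-\mJ)^\top)$ and $\|\mU_{\mI-z\mJ}^{+/2}(\mw-\mJ)\mU_{\mI-z\mJ}^{+/2}\|\le\lambda$. Because $\mJ$ is real symmetric, $\mU_{\mI-z\mJ}=\mI-\Real(z)\mJ=(1-\Real(z))\mJ+(\mI-\mJ)$, which is PSD for $|z|=1$, with kernel $\{0\}$ when $z\ne1$ and kernel $\mathrm{span}(\vec{\bone})$ when $z=1$; in either case this kernel sits inside $\mathrm{span}(\vec{\bone})$, so the kernel condition is automatic by the previous paragraph. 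Moreover $\mU_{\mI-z\mJ}^{+/2}=\beta_z\mJ+(\mI-\mJ)$ with the scalar $\beta_z=(1-\Real(z))^{-1/2}$ for $z\ne1$ and $\beta_1=0$; since $\mJ(\mw-\mJ)=(\mw-\mJ)\mJ=0$, conjugating $\mw-\mJ$ by this matrix leaves it unchanged, so $\mU_{\mI-z\mJ}^{+/2}(\mw-\mJ)\mU_{\mI-z\mJ}^{+/2}=\mw-\mJ$ for every unit $z$. Thus the norm condition collapses, uniformly in $z$, to $\|\mw-\mJ\|\le\lambda$.

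Combining the two, $\mw\capprox_\lambda\mJ\iff\|\mw-\mJ\|\le\lambda\iff\lambda(G)\le\lambda$, which is the claim. The only point requiring care — and it is not a genuine obstacle — is keeping all norms and kernels interpreted consistently over $\C^n$ while the statement of $\lambda(G)$ is phrased over $\R^n$; Lemma~\ref{lem:extension_to_complex} handles this. If one prefers to avoid citing Lemma~\ref{lem:unitcirc_equivalences}, item~3 directly, an equivalent route uses item~2 of that lemma (reduce $\mw\capprox_\lambda\mJ$ to $z\mw\approx_\lambda z\mJ$ for all unit $z$) together with the complex reformulation of directed approximation in Lemma~\ref{lem:equivdef_helper_complex}, which lands on the same operator-norm condition after the $|z|=1$ factors cancel.
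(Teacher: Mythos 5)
Your proof is correct, and it takes a genuinely different route from the paper's. The paper proves the two directions separately at the level of the bilinear-form inequality: for the forward direction it invokes the geometric-mean form of the approximation condition and plugs in the maximizing vectors orthogonal to $\vec{\bone}$; for the reverse direction it verifies the defining inequality for all $x,y$ and all unit $z$ by splitting off the constant components and applying Cauchy--Schwarz to $x^*\mw y$ on $\vec{\bone}^{\perp}$. You instead route both directions through the operator-norm characterization (Lemma~\ref{lem:unitcirc_equivalences}, item~3), and the key step is your observation that $\mU_{\mI-z\mJ}^{+/2}=\beta_z\mJ+(\mI-\mJ)$ commutes into the identity against $\mw-\mJ$ because $\mJ(\mw-\mJ)=(\mw-\mJ)\mJ=0$, so the $z$-dependence disappears entirely and the whole condition collapses to $\|\mw-\mJ\|_2\le\lambda$, which you correctly identify with $\lambda(G)$. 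All the supporting computations check out: regularity gives $\mw\vec{\bone}=\vec{\bone}$ and $\vec{\bone}^\top\mw=\vec{\bone}^\top$, hence $\mw\mJ=\mJ\mw=\mJ$; the kernel of $\mU_{\mI-z\mJ}$ is $\{0\}$ or $\mathrm{span}(\vec{\bone})$ and in either case lands inside $\ker(\mw-\mJ)\cap\ker((\mw-\mJ)^\top)$, so the kernel hypothesis is unconditional and does not disturb the chain of equivalences. What your approach buys is a single, symmetric argument that delivers the ``if and only if'' in one pass and makes transparent \emph{why} the unit-circle quantifier is vacuous here (conjugation by $\mU_{\mI-z\mJ}^{+/2}$ fixes $\mw-\mJ$); what the paper's buys is self-containment at the level of Definition~\ref{def:unitcirc_approx}, needing only the elementary equivalences of Lemma~\ref{lem:equivdef_helper} rather than the full machinery of Lemma~\ref{lem:unitcirc_equivalences}.
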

\begin{proof}
First we will show that if $\mw\capprox_{\lambda}\mJ$ then $\lambda(G)\leq\lambda$. By the equivalence of items 1 and 2 in Lemma \ref{lem:equivdef_helper}, we have that for all $x,y\in\C^n$
\[
x^*(\mw-\mJ)y\leq \lambda\cdot\sqrt{x^*(\mI-\mJ)x\cdot y^*(\mI-\mJ)y}.
\]
Note that we can write
\[
\lambda(G)=\max_{u\perp\bone,v\perp\bone}\frac{u^*\mw v}{\|u\|\|v\|}.
\]
Let $\bar{u}$ and $\bar{v}$ be vectors that achieve the maximization above. Setting $x=\bar{u}$ and $y=\bar{v}$, and noting that $\mJ \bar{v}=\mJ \bar{u}=\bzero$ because $\bar{u}$ and $\bar{v}$ are perpendicular to $\bone$, our inequality becomes 
\[
\bar{u}^*\mw\bar{v}\leq \lambda\cdot\sqrt{\|\bar{u}\|^2\cdot\|\bar{v}\|^2}=\|\bar{u}\|\cdot\|\bar{v}\|.
\]
Dividing by $\|\bar{u}\|\cdot\|\bar{v}\|$ completes the proof. 

For the other direction, assume $\lambda(G)\leq \lambda$ and we will show that for all $x,y\in\mathbb{C}^n$ and all $z\in\C$ such that $|z|=1$, 
\begin{equation}
\label{eq:expander_approx_complete}
|x^{*}(\mw-\mJ)y|\leq \lambda\cdot\sqrt{|x^{*}(\mI-z\mJ)x|\cdot |y^{*}(\mI-z\mJ)y|}
\end{equation}

Note that for all constant vectors $v$ (i.e. $v$ has the same entry in every coordinate) we have by the regularity of $\mw$ and $\mJ$,
\[
\mw v = \mw^{\top}v=\mJ v=\mJ^{\top}v=v.
\]
So when $x$ or $y$ is constant, the left-hand side of Inequality (\ref{eq:expander_approx_complete}) is zero and the inequality is true (because the right-hand side is always non-negative). Furthermore, orthogonality to $\vec{\bone}$ is preserved under $\mI,\mJ,\mw,$ and $\mw^{\top}$ so it suffices to consider vectors $x,y\perp\vec{\bone}$. Fix two such vectors. For $v\perp\vec{\bone}$, we have $\mJ v=\vec{\bzero}$. So Inequality (\ref{eq:expander_approx_complete}) becomes 
\begin{align*}
|x^{*}\mw y|&\leq \lambda\cdot\sqrt{x^{*}x\cdot y^{*}y}\\
&=\lambda\cdot\|x\|\cdot\|y\|.
\end{align*}
Applying Cauchy-Schwarz, we have
\begin{align*}
  |x^{\top}\mw y|&\leq \|x\|\cdot \|\mw y\|  \\
  &\leq \lambda\cdot\|x\|\cdot\|y\|
\end{align*}
where the last line follows from the assumption that $\lambda(G)\leq \lambda$.
\end{proof}

\section{Omitted Proofs from Section~\ref{sec:squaring_solver}}\label{sec:solver_app}
Below is the statement of Lemma 2.3 from \cite{CKKPPRS18}. we give a proof of Lemma~\ref{lem:CumulativeErrorBlock} using this lemma.

\begin{lemma}[CKKPPRS18 Lemma 2.3]
	\label{lem:CumulativeErrorNew}
Consider a sequence of $m$-by-$m$ matrices
$\ms^{(0)}$,$\ms^{(1)}$,$\ldots$, $\ms^{(m)}$ such that
\begin{enumerate}
\item $\ms^{(i)}$ has non-zero entries only on the indices $[i + 1, m]$, 
\item The left/right kernels of $\ms^{(i)}$ are equal, and 
  after restricting $\ms^{(i)}$ to the indices $[i + 1, m]$, the
  kernel of the resulting matrix equals the coordinate restriction
  of the vectors in the kernel of $\ms$. Formally, 
  $\ker\left(\ms^{(i)}_{[i + 1, m], [i + 1, m]}\right) = \{ b_{[i + 1,m]} : b \in \ker(\ms^{(0)})\}$.
\item The symmetrization
of each $\ms^{(i)}$,
$\mU_{\ms^{(i)}} = \frac{1}{2} (\ms^{(i)} + (\ms^{(i)})^\top)$,
is positive semi-definite.
\end{enumerate}
Let $\mm = \mm^{(0)} = \ms^{(0)}$, and
define matrices $\mm^{(1)}, \mm^{(2)}, \ldots, \mm^{(m)}$
iteratively by
\[
\mm^{\left( i + 1 \right)}
\defeq
\mm^{\left( i \right)}
+
\left( \ms^{\left( i + 1 \right)}
- \schur\left(\mm^{\left( i \right)},\left[i + 1, m\right]\right) \right)
\qquad
\forall~0 \leq i < p_{\max}.
\]
If for a subsequence of indices
$1 = i_0 < i_1 < i_2 < \ldots < i_{p_{\max}}$
associated scaling parameters
$0 < \theta_{0}, \theta_{1}, \ldots , \theta_{p_{\max} - 1} < 1/2$ such that $\sum_{p=0}^{p_{\max}-1} \theta_p = 1$,
and some global error $0 < \epsilon < 1/2$,
we have for every $0 \leq p < p_{\max}$:
\[
\left\|\mU_{\ms^{\left(i_p\right)}}^{+/2}
	\left(\mm^{\left(i_p\right)} - \mm^{\left(i_{p + 1}\right)} \right)
	\mU_{\ms^{\left(i_p\right)}}^{+/2}\right\|
\leq
\theta_{p} \epsilon,
\]
then for a matrix-norm defined from the symmetrization of the
$\ms^{(i_p)}$ matrices and the scaling parameters:
\[
\mf
=
\sum_{0 \leq p < p_{\max}}
\theta_{p} \mU_{\ms^{\left(i_{p}\right)}},
\]
we have:
\begin{enumerate}
\item \label{part:CumulativeError}
for each $0 \leq i \leq p_{\max}$,
\[
\left\|\mf^{+/2}
	\left(\mm- \mm^{\left(i\right)}\right)
    \mf^{+/2}\right\|_2
\leq \epsilon,
\]
\item \label{part:FNormBounds}
The final matrix $\mm^{(p_{\max})}$ satisfies
\[
\mm^{\left(p_{\max} \right) \top}
\mf^{+}
\mm^{\left(p_{\max} \right)}
\succeq \frac{1}{10 p_{\max}^2} \cdot \mf.
\]
\end{enumerate}
\end{lemma}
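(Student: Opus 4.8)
The plan is to recast every hypothesis and both conclusions as statements about bilinear forms, using the equivalence (Lemma~\ref{lem:equivdef_helper}) between a bound of the form $\|\mm^{+/2}\ma\mn^{+/2}\|_2\le\epsilon$ together with the kernel containments $\ker(\mm)\subseteq\ker(\ma^\top)$, $\ker(\mn)\subseteq\ker(\ma)$, and the bilinear inequality $|u^\top\ma v|\le\epsilon\sqrt{(u^\top\mm u)(v^\top\mn v)}$. Once everything is in this language, Part~\ref{part:CumulativeError} will follow by telescoping the per-gap errors and applying Cauchy--Schwarz with the weights $\theta_p$, and Part~\ref{part:FNormBounds} will follow from Part~\ref{part:CumulativeError} by a perturbation/inversion argument that uses the block LU identity~\eqref{eq:schur_decomp} and the semidefiniteness hypothesis (Condition~3). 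The first thing I would nail down is the support and kernel bookkeeping: since $\ms^{(i)}$ is supported on the block $[i+1,m]$ and step $i$ of the recursion only alters coordinates in $[i+1,m]$, each difference $\mm^{(i+1)}-\mm^{(i)}=\ms^{(i+1)}-\schur(\mm^{(i)},[i+1,m])$ is supported on $[i+1,m]$, so the cumulative per-gap difference $\mm^{(i_p)}-\mm^{(i_{p+1})}$ is supported on $[i_p+1,m]$; Condition~2 then gives $\ker(\mU_{\ms^{(i_p)}})\subseteq\ker(\mm^{(i_p)}-\mm^{(i_{p+1})})\cap\ker\bigl((\mm^{(i_p)}-\mm^{(i_{p+1})})^\top\bigr)$, so by Lemma~\ref{lem:equivdef_helper} the $p$-th hypothesis is equivalent to
\[
\bigl|u^\top(\mm^{(i_p)}-\mm^{(i_{p+1})})v\bigr|\ \le\ \theta_p\epsilon\,\sqrt{\bigl(u^\top\mU_{\ms^{(i_p)}}u\bigr)\bigl(v^\top\mU_{\ms^{(i_p)}}v\bigr)}\qquad\forall\,u,v .
\]
I would also record the elementary facts that, because $\mf=\sum_p\theta_p\mU_{\ms^{(i_p)}}$ is a nonnegative combination of PSD matrices, $\ker(\mf)=\bigcap_p\ker(\mU_{\ms^{(i_p)}})$, $\mf\succeq\theta_p\mU_{\ms^{(i_p)}}$ for each $p$, and $u^\top\mf u=\sum_p\theta_p\,u^\top\mU_{\ms^{(i_p)}}u$.

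For Part~\ref{part:CumulativeError} I would telescope, writing the deviation of $\mm$ from a given checkpoint matrix $\mm^{(i_q)}$ as the sum of the per-gap differences over the gaps already passed (the boundary convention of the recursion has to be checked here, but it is exactly what makes the sum telescope), and then estimate
\begin{align*}
\bigl|u^\top(\mm-\mm^{(i_q)})v\bigr|
&\le \sum_{p} \theta_p\epsilon\,\|u\|_{\mU_{\ms^{(i_p)}}}\|v\|_{\mU_{\ms^{(i_p)}}}\\
&\le \epsilon\Bigl(\sum_p\theta_p\|u\|_{\mU_{\ms^{(i_p)}}}^2\Bigr)^{1/2}\Bigl(\sum_p\theta_p\|v\|_{\mU_{\ms^{(i_p)}}}^2\Bigr)^{1/2}
= \epsilon\,\|u\|_{\mf}\|v\|_{\mf},
\end{align*}
where the middle step is Cauchy--Schwarz over the index $p$ and uses only $\sum_p\theta_p=1$. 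Since in addition $\ker(\mf)\subseteq\ker(\mm-\mm^{(i_q)})\cap\ker\bigl((\mm-\mm^{(i_q)})^\top\bigr)$ (from $\ker(\mf)=\bigcap_p\ker(\mU_{\ms^{(i_p)}})$ and the per-gap kernel inclusions), Lemma~\ref{lem:equivdef_helper} converts this bilinear bound back to $\|\mf^{+/2}(\mm-\mm^{(i_q)})\mf^{+/2}\|_2\le\epsilon$, which is the claim. The conceptual point is that the $\theta_p$ appearing in the hypotheses are precisely the weights that let this telescoping close without losing a factor of $p_{\max}$.

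For Part~\ref{part:FNormBounds} I would instantiate Part~\ref{part:CumulativeError} at the last checkpoint to get $\|\mf^{+/2}(\mm-\mm^{(p_{\max})})\mf^{+/2}\|_2\le\epsilon<1/2$, so that $\mf^{+/2}\mm^{(p_{\max})}\mf^{+/2}$ is a norm-$<\!1/2$ perturbation of $\mf^{+/2}\mm\mf^{+/2}$; combining this with Condition~3 (so that $\mU_\mm$ and every symmetrized Schur complement that occurs is PSD) reduces the target to a lower bound of the form $\mm^\top\mf^+\mm\succeq\Omega(p_{\max}^{-2})\,\mf$ for the exact matrix $\mm=\ms^{(0)}$. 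To obtain that bound I would expand $\mf=\sum_p\theta_p\mU_{\ms^{(i_p)}}$, bound each $\mU_{\ms^{(i_p)}}$ above by a constant multiple of the symmetrized Schur complement of $\mm$ onto $[i_p+1,m]$ (again absorbing the already-controlled cumulative error from Part~\ref{part:CumulativeError}), then run the block LU/Schur factorization~\eqref{eq:schur_decomp} in reverse to bound that Schur complement by $O(p_{\max})\cdot\mm^\top\mf^+\mm$, and finally average over $p$ with the weights $\theta_p$; carrying the constants through yields the stated $1/(10p_{\max}^2)$.

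The main obstacle is this reverse inequality in Part~\ref{part:FNormBounds}. The support/kernel bookkeeping (verifying that at every stage all the matrices in play live on the same shrinking coordinate block and share a common kernel, which is exactly what Conditions~1--2 are engineered to guarantee, so that the pseudoinverse square roots and Schur-complement manipulations are all well defined and Lemma~\ref{lem:equivdef_helper} can be applied in both directions) is fiddly but essentially routine. By contrast, chaining the inverse-Schur-complement bound across all $p_{\max}$ groups while re-incorporating the per-group $\theta_p\epsilon$ errors, and performing the weighted average so as to lose only a factor $p_{\max}^2$ rather than a larger power of $p_{\max}$, is where the genuine care is required; a careless grouping loses an extra $p_{\max}$.
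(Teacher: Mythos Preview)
The paper does not prove this lemma: it is quoted verbatim from \cite{CKKPPRS18} (as Lemma 2.3 there) and used as a black box to establish Lemma~\ref{lem:CumulativeErrorBlock}. So there is no ``paper's own proof'' to compare against.

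As for your proposal on its own merits: the plan for Part~\ref{part:CumulativeError} is correct and is essentially how the argument goes in \cite{CKKPPRS18}---telescope over the checkpoints, apply the bilinear form of each per-gap hypothesis, and close with Cauchy--Schwarz in the index $p$ using $\sum_p\theta_p=1$. The kernel bookkeeping you outline is also the right preliminary step. For Part~\ref{part:FNormBounds}, your sketch is in the right direction but, as you yourself flag, the reverse inequality chaining the Schur complements back to $\mm^\top\mf^+\mm$ is where the real work lies; what you have written there is a plan rather than a proof, and the constant-tracking to land exactly on $1/(10p_{\max}^2)$ requires the detailed argument in \cite{CKKPPRS18}. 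If you want to complete it, consult that source directly.
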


\begin{lemma}\label{lem:l_i_diff}
Let $\mL^{(i)}$'s be $2^k n$ by $2^k n$ matrices defined as Equation~\eqref{eq:L_i_fac}, then 
$$
\mL^{(i+1)}-\mL^{(i)} = 
\begin{bmatrix}
    0& 0 \\
    0 & \mI_{2^{k-i-1} n}-\mc_{2^{k-i-1}}\otimes \mw_{i+1}
\end{bmatrix}
- \begin{bmatrix}
    0 & 0 \\
    0 & \mI_{2^{k-i-1} n} - \mc_{2^{k-i-1}} \otimes \mw_{i}^2
\end{bmatrix}.
$$
\end{lemma}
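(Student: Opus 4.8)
The plan is to exploit the recursive LU structure already set up in Section~\ref{sect:solver} and reduce everything to a support argument. Write $\mD^{(i)}$ for the block-diagonal middle matrix appearing in \eqref{eq:L_i_fac}, so that $\mL^{(i)} = \mx_1\cdots\mx_i\,\mD^{(i)}\,\my_i\cdots\my_1$ with lower-right block of $\mD^{(i)}$ equal to $\mI_{2^{k-i}n}-\mc_{2^{k-i}}\otimes\mw_i$ and identity elsewhere. Applying the one-step factorization \eqref{eq:lu_recursion} (equivalently \eqref{eq:schur_decomp} together with the identity $\schur(\mI_{2^{k-i}n}-\mc_{2^{k-i}}\otimes\mw_i,\,H)=\mI_{2^{k-i-1}n}-\mc_{2^{k-i-1}}\otimes\mw_i^2$ recorded in Section~\ref{sect:solver}) to that lower-right block gives $\mD^{(i)}=\mx_{i+1}\,\widetilde{\mD}^{(i+1)}\,\my_{i+1}$, where $\widetilde{\mD}^{(i+1)}$ is block-diagonal with lower-right block $\mI_{2^{k-i-1}n}-\mc_{2^{k-i-1}}\otimes\mw_i^2$, and $\mx_{i+1},\my_{i+1}$ are exactly the triangular factors from \eqref{eq:L_i_fac} (note $\mx_{i+1}$ and $\my_{i+1}$ involve $\mw_i=\mw_{(i+1)-1}$, as required). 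Hence $\mL^{(i)}=\mx_1\cdots\mx_{i+1}\,\widetilde{\mD}^{(i+1)}\,\my_{i+1}\cdots\my_1$, whereas by definition $\mL^{(i+1)}=\mx_1\cdots\mx_{i+1}\,\mD^{(i+1)}\,\my_{i+1}\cdots\my_1$ with lower-right block of $\mD^{(i+1)}$ equal to $\mI_{2^{k-i-1}n}-\mc_{2^{k-i-1}}\otimes\mw_{i+1}$. Subtracting, $\mL^{(i+1)}-\mL^{(i)} = \mx_1\cdots\mx_{i+1}\,\mE\,\my_{i+1}\cdots\my_1$, where $\mE := \mD^{(i+1)}-\widetilde{\mD}^{(i+1)}$ is precisely the right-hand side of the lemma — a matrix that vanishes outside the last $2^{k-i-1}n$ coordinates.

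The remaining (and only substantive) step is to show that the triangular factors act trivially on $\mE$, i.e. $\mx_j\mE=\mE$ and $\mE\my_j=\mE$ for all $1\le j\le i+1$, so that peeling the factors off one at a time yields $\mL^{(i+1)}-\mL^{(i)}=\mE$. This is a support argument. Let $B$ denote the coordinate block $[(2^k-2^{k-i-1})n+1,\,2^k n]$ on which $\mE$ is supported (both its rows and its columns). The off-diagonal (non-identity) block of $\mx_j$ lives in columns $[(2^k-2^{k-j+1})n+1,\,(2^k-2^{k-j})n]$; since $j\le i+1$ we have $2^{k-j}\ge 2^{k-i-1}$, so these columns all lie at indices $\le (2^k-2^{k-i-1})n$ and are therefore disjoint from $B$. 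Consequently $(\mx_j-\mI)$ annihilates any vector supported on $B$, and as every column of $\mE$ is supported on $B$ we get $(\mx_j-\mI)\mE=0$, i.e. $\mx_j\mE=\mE$. Symmetrically, the off-diagonal block of $\my_j$ lives in rows $[(2^k-2^{k-j+1})n+1,\,(2^k-2^{k-j})n]$, again disjoint from $B$; since every row of $\mE$ is supported on $B$, $\mE(\my_j-\mI)=0$, i.e. $\mE\my_j=\mE$. Applying these first to collapse $\mE\my_{i+1}\cdots\my_1=\mE$ and then $\mx_1\cdots\mx_{i+1}\mE=\mE$ gives $\mL^{(i+1)}-\mL^{(i)}=\mE$, which is the claim.

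The main obstacle, such as it is, is purely the index bookkeeping in the second step: correctly identifying which power of two bounds the column range of each $\mx_j$ and the row range of each $\my_j$, and checking that the inequality $j\le i+1$ is exactly what forces those ranges to miss the block $B$ (including the boundary cases $i=0$ and $i=k-1$, where $\mc_{2^{k-i-1}}$ degenerates to $\mc_1=[1]$). Everything else is a direct consequence of the factorization \eqref{eq:L_i_fac} and the Schur-complement computation already established, so no new ideas are needed beyond what is in Section~\ref{sect:solver}.
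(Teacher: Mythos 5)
Your proof is correct and is essentially the paper's argument: both rest on the one-step Schur-complement factorization of the cycle-lifted Laplacian (you factor the middle matrix of $\mL^{(i)}$ one level further so that both terms share the factors $\mx_1\cdots\mx_{i+1}$ and $\my_{i+1}\cdots\my_1$, whereas the paper multiplies $\mx_{i+1},\my_{i+1}$ into the middle matrix of $\mL^{(i+1)}$ --- the same computation run in opposite directions), followed by the observation that the remaining triangular factors fix the error block. Your explicit support/index argument for that last step is a more careful rendering of what the paper dismisses as ``easy to verify,'' but no new idea is involved.
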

\begin{proof}
From Equation~\eqref{eq:L_i_fac}, we have
$$
\mL^{(i+1)} = \mx_1 \cdots \mx_{i+1}
\begin{bmatrix}
    \mI_{(2^{k}-2^{k-i-1})n} & 0 \\
    0 & \mI_{2^{k-i-1}n} - \mc_{2^{k-i-1}} \otimes \mw_{i+1}
\end{bmatrix}
\my_{i+1} \cdots \my_1,
$$
applying $\mx_{i+1}$, and $\my_{i+1}$ to the middle matrix we get,
$$
\mL^{(i+1)} = \mx_1 \cdots \mx_{i}
\begin{bmatrix}
    \mI_{(2^{k}-2^{k-i})n} & 0 &0 \\
    0 & \mI_{\frac{2^{k}n}{2^{i+1}}} & - \mc_{\frac{2^{k}n}{2^{i+1}}} \otimes \mw_{i+1} \\
    0 &  -\mI_{\frac{2^{k}n}{2^{i+1}}} \otimes \mw_{i+1} & \mI_{\frac{2^{k}n}{2^{i+1}}} - \mc_{\frac{2^{k}n}{2^{i+1}}} \otimes \mw_{i+1} + \mc_{\frac{2^{k}n}{2^{i+1}}} \otimes \mw_{i}^2
\end{bmatrix}
\my_{i} \cdots \my_1.
$$
Therefore,
$$
\mL^{(i+1)} - \mL^{(i)} = \mx_1 \cdots \mx_{i}
\begin{bmatrix}
    0 & 0 \\
    0 &  - \mc_{2^{k-i-1}} \otimes \mw_{i+1} + \mc_{2^{k-i-1}} \otimes \mw_{i}^2
\end{bmatrix}
\my_{i} \cdots \my_1.
$$
Note that the error only resides on the lower right block of the middle matrix. It is easy to verify that hitting the middle matrix by $\mx^{(i)}$'s from left and $\my^{(i)}$'s from right does not change the matrix (as the non-zero part is only multiplied by identity blocks). Therefore,
$$
\mL^{(i+1)} - \mL^{(i)} = 
\begin{bmatrix}
    0 & 0 \\
    0 &  - \mc_{2^{k-i-1}} \otimes \mw_{i+1} + \mc_{2^{k-i-1}} \otimes \mw_{i}^2
\end{bmatrix}.
$$

\end{proof}

Now we are ready to give a proof of Lemma~\ref{lem:CumulativeErrorBlock}.

\begin{proof}[Proof of Lemma~\ref{lem:CumulativeErrorBlock}]
From $\ms^{(i)}$'s and $\mL^{(i)}$'s, we build a sequence of $\hat{\ms}^{(j)}$'s and $\hat{\mm}^{(j)}$'s that satisfy the conditions of Lemma~\ref{lem:CumulativeErrorNew}, and using that we derive the statement of Lemma~\ref{lem:CumulativeErrorBlock}. For $0 \leq i < k$, and $0 \leq j < 2^{k-i-1}n$, let $a_i\eqdef (2^k-2^{k-i})n$, $\hat{\ms}^{(a_k)} = \ms^{(k)}$, and
$$
\hat{\ms}^{(a_i+j)} = \begin{cases}
\ms^{(i)}, & \text{if }j = 0 \\
\schur{(\hat{\mm}^{a_i+j-1}, [a_i+j,2^k n])} & \text{otherwise.}
\end{cases}
$$
and
\[
\hat{\mm}^{\left( h + 1 \right)}
\defeq
\hat{\mm}^{\left( h \right)}
+
\left( \hat{\ms}^{\left( h + 1 \right)}
- \schur\left(\hat{\mm}^{\left( h \right)},\left[h + 1, 2^k n\right]\right) \right)
\qquad
\forall~0 \leq h < (2^k -1) n.
\]

Note that $\hat{\ms}$'s satisfy all the three premises in Lemma~\ref{lem:CumulativeErrorNew}. First $\hat{\ms}^{(i)}$ has non-zero entries only on the indices $[i+1, 2^k n]$. Further as all $\hat{\ms}^{(i)}$'s are random-walk Laplacian of Eulerian connected aperiodic regular digraphs, the left and right kernels are the same, all have the same kernel up to the restriction to non-zero entries, and all $\mU_{\ms^{(i)}}$'s are PSD. Therefore all the three premises of the lemma are satisfied.

Next we show that for all $i$'s $\mL^{(i+1)}$ approximates $\mL^{(i)}$ in the norm defined by $\mU_{\ms^{(i)}}$. By Lemma~\ref{lem:l_i_diff},
$$
\mL^{(i+1)}-\mL^{(i)} = \begin{bmatrix}
    0 & 0\\
    0 & \mc_{2^{k-i-1}} \otimes \mw_i^2 - \mc_{2^{k-i-1}} \otimes \mw_{i+1}.
\end{bmatrix}
$$
Now, given $\mw_{i+1} \capprox_{\epsilon/k} \mw_i^2$, by Theorem~\ref{thm:capprox_ring} we get
$$
\|\mU_{\schur(\ms_i,H_i)}^{+/2} \left(\mL^{(i+1)}-\mL^{(i)}\right)\mU_{\schur(\ms_i,H_i)}^{+/2}\| \leq \epsilon/k.
$$
Since $\mU_{\schur(\ms_i,H_i)} \preceq 2 \mU_{\ms^{(i)}}$, 
$$
\|\mU_{\ms^{(i)}}^{+/2} \left(\mL^{(i+1)}-\mL^{(i)}\right)\mU_{\ms^{(i)}}^{+/2}\| \leq 2\epsilon/k.
$$
By construction, we have $\hat{\ms}^{(a_i)} = \ms^{(i)}$ and $\hat{\mm}^{(a_i)} = \mL^{(i)}$ for all $0\leq i \leq k$. Therefore, we get
\[
\left\|\mU_{\hat{\ms}^{\left(a_i\right)}}^{+/2}
	\left(\hat{\mm}^{\left(a_i\right)} - \hat{\mm}^{\left(a_{i + 1}\right)} \right)
	\mU_{\hat{\ms}^{\left(a_i\right)}}^{+/2}\right\|
\leq
2\epsilon/k,
\]
Thus by Lemma~\ref{lem:CumulativeErrorNew}, for $\mf = \frac{2}{k}\sum_{i=0}^k \mU_{\ms^{(i)}}$,
$$
\|\mf^{+/2} (\mL - \mL^{(i)}) \mf^{+/2}\| \leq \epsilon \quad \forall 0 \leq i \leq k
$$
and
$$
{\mL^{(k)}}^\top \mf^{+} {\mL^{(k)}} \succeq \frac{1}{40k^2} \mf.
$$
\end{proof}

\begin{lemma}\label{lem:psd_ortho_proj}
Given a PSD matrix $\mU \in \R^{n\times n}$, and an orthogonal projection $\Pi \in \R^{n\times n}$ such that $\mU \Pi = \Pi \mU$ we have
$$
\mU \Pi \preceq \mU
$$

\end{lemma}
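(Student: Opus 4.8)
The plan is to prove that $\mU-\mU\Pi$ is positive semidefinite by writing it in the form $\mb^\top\mU\mb$ and appealing to Proposition~\ref{prop:psd_mult_sides}. First I would record two elementary facts: (i) $\mI-\Pi$ is again an orthogonal projection, hence symmetric and idempotent, so $\mI-\Pi=(\mI-\Pi)^\top(\mI-\Pi)$; and (ii) the hypothesis $\mU\Pi=\Pi\mU$ is equivalent to $\mU(\mI-\Pi)=(\mI-\Pi)\mU$, and in particular forces $\mU\Pi$ to be symmetric (so that the claimed PSD inequality is even meaningful).

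The main computation is then the short chain
\[
\mU-\mU\Pi \;=\; \mU(\mI-\Pi) \;=\; \mU(\mI-\Pi)(\mI-\Pi) \;=\; (\mI-\Pi)\,\mU\,(\mI-\Pi) \;=\; (\mI-\Pi)^\top\mU(\mI-\Pi),
\]
where the first equality is algebra, the second uses idempotence of $\mI-\Pi$, the third moves one copy of $\mI-\Pi$ past $\mU$ using commutativity, and the last uses symmetry of $\mI-\Pi$. Applying Proposition~\ref{prop:psd_mult_sides} with the PSD matrix $\mU$ and $\mb=\mI-\Pi$ gives $(\mI-\Pi)^\top\mU(\mI-\Pi)\succeq 0$, i.e. $\mU\Pi\preceq\mU$. (Equivalently, one can argue directly: for every $x$, $x^\top(\mU-\mU\Pi)x=((\mI-\Pi)x)^\top\mU((\mI-\Pi)x)\ge 0$.)

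There is essentially no obstacle: this is a standard fact about commuting PSD matrices and projections, and the only subtlety worth flagging is that commutativity is genuinely used — without it $\mU\Pi$ need not even be symmetric — but it is supplied by hypothesis and invoked exactly once, in the middle step of the displayed chain.
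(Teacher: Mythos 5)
Your proof is correct and rests on the same two ingredients as the paper's: commutativity of $\mU$ with the projection and positive semidefiniteness of $\mU$ on the complement of $\im(\Pi)$. The paper phrases it by decomposing a test vector $x=x_1+x_2$ with $x_2=(\mI-\Pi)x$ and observing $x^\top\mU x - x^\top\mU\Pi x = x_2^\top\mU x_2\ge 0$, which is exactly your identity $\mU-\mU\Pi=(\mI-\Pi)^\top\mU(\mI-\Pi)$ evaluated on $x$, so the two arguments are the same up to packaging.
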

\begin{proof}
Let $x \in \R^n$ be an arbitrary vector, and let $x_1$ denote the projection of $x$ into $\im(\Pi)$ and $x_{2}$ be the projection to the subspace orthogonal to $\im(\Pi)$, such that $x = x_1 + x_2$. Note that $\Pi x_1 = x_1$, and $\Pi x_2 = 0$. Therefore we have,
$$
(x_1+x_2)^\top \mU (x_1+x_2) = (\Pi x_1+x_2)^\top \mU (\Pi x_1+x_2) = x_1^\top \mU x_1 + x_2^\top \mU x_2,
$$
and 
$$
(x_1+x_2)^\top \mU \Pi (x_1+x_2) = (x_1+x_2)^\top \mU \Pi x_1
= (x_1+x_2)^\top \Pi \mU  x_1
= x_1^\top \Pi \mU \Pi x_1
= x_1^\top \mU x_1.
$$
Therefore,
$$
x^\top \mU \Pi x = x_1^\top \mU x_1 \leq x_1^\top \mU x_1 + x_2^\top \mU x_2 = x^\top \mU x.
$$
Since the choice of $x$ was arbitrary this completes the proof.
\end{proof}

\begin{lemma}[Lemma 2.6 in \cite{CKKPPRS18}]\label{lem:lem2.6CKK}
Suppose we are given matrices $\mL$, $\widetilde{\mL}$ and a positive semi-definite matrix $\mf$ such that
$\ker(\mf) \subseteq \ker(\mL) = \ker(\mL^\top) = \ker(\widetilde{\mL}) = \ker(\widetilde{\mL}^\top)$ and
\begin{enumerate}
    \item $\|\mf^{+/2} (\mL - \widetilde{\mL}) \mf^{+/2}\| \leq \epsilon$
    \item $\widetilde{\mL}^\top \mf^{+} \widetilde{\mL} \succeq \gamma \mf,$
\end{enumerate}
then $\|\mI_{\im(\mL)} - \widetilde{\mL}^+ \mL \|_{\mf} \leq \epsilon \sqrt{\gamma^{-1}}$.
\end{lemma}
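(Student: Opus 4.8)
The plan is to reduce the statement to the genuinely invertible situation by passing to the subspace orthogonal to the common kernel, to rewrite $\mI_{\im(\mL)} - \widetilde{\mL}^+\mL$ as $\widetilde{\mL}^+(\widetilde{\mL} - \mL)$, and then to split this product into an ``error'' factor controlled by hypothesis~1 and a ``preconditioner quality'' factor controlled by hypothesis~2.

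First I would settle the kernel/image bookkeeping. Write $K \eqdef \ker(\mL) = \ker(\mL^\top) = \ker(\widetilde{\mL}) = \ker(\widetilde{\mL}^\top)$ and $V \eqdef K^{\perp}$. Any $v \in K \cap \ker(\mf)^{\perp}$ has $\widetilde{\mL} v = 0$, so by hypothesis~2, $0 = v^\top \widetilde{\mL}^\top \mf^+ \widetilde{\mL} v \ge \gamma\, v^\top \mf v = \gamma\|\mf^{1/2} v\|_2^2$, forcing $v = 0$; combined with $\ker(\mf)\subseteq K$ this gives $\ker(\mf) = K$. Hence $\mf$, $\mL$, $\widetilde{\mL}$ all have kernel $K$ and image $V$; each of $\mf^{1/2}$, $\mf^{+/2}$, $\mL$, $\widetilde{\mL}$ restricts to a map $V \to V$; $\mf^{1/2}|_V$ and $\mf^{+/2}|_V$ are mutually inverse; $\widetilde{\mL}|_V$ is invertible, so $\widetilde{\mL}^+|_V = (\widetilde{\mL}|_V)^{-1}$ and $\widetilde{\mL}^+$ kills $K$; and $\mf^{1/2}\mf^{+/2} = \mf^{+/2}\mf^{1/2} = \Pi_V$, the orthogonal projection onto $V$, which is exactly $\mI_{\im(\mL)}$ in the lemma's notation. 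In particular $\widetilde{\mL}^+\widetilde{\mL} = \Pi_V$, so
\[
\mI_{\im(\mL)} - \widetilde{\mL}^+\mL \;=\; \widetilde{\mL}^+\widetilde{\mL} - \widetilde{\mL}^+\mL \;=\; \widetilde{\mL}^+(\widetilde{\mL} - \mL).
\]

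Next I would pass to $\mf$-conjugated coordinates. Since the columns of $\widetilde{\mL} - \mL$ lie in $V$ we have $\Pi_V(\widetilde{\mL}-\mL) = \widetilde{\mL}-\mL$, so inserting $\Pi_V = \mf^{1/2}\mf^{+/2}$ and using the identity $\|\ma\|_\mf = \|\mf^{1/2}\ma\mf^{+/2}\|_2$ from the Preliminaries,
\[
\|\mI_{\im(\mL)} - \widetilde{\mL}^+\mL\|_\mf = \bigl\|\bigl(\mf^{1/2}\widetilde{\mL}^+\mf^{1/2}\bigr)\bigl(\mf^{+/2}(\widetilde{\mL} - \mL)\mf^{+/2}\bigr)\bigr\|_2 \le \bigl\|\mf^{1/2}\widetilde{\mL}^+\mf^{1/2}\bigr\|_2 \cdot \epsilon,
\]
the last step using submultiplicativity of $\|\cdot\|_2$ and hypothesis~1. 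It remains to show $\|\mf^{1/2}\widetilde{\mL}^+\mf^{1/2}\|_2 \le \gamma^{-1/2}$. Set $\mc \eqdef \mf^{+/2}\widetilde{\mL}\mf^{+/2}$. Conjugating hypothesis~2 by $\mf^{+/2}$ — applying Proposition~\ref{prop:psd_mult_sides} to the difference $\widetilde{\mL}^\top\mf^+\widetilde{\mL} - \gamma\mf$, and using $\mf^+ = \mf^{+/2}\mf^{+/2}$ and $\mf^{+/2}\mf\mf^{+/2} = \Pi_V$ — gives $\mc^\top\mc \succeq \gamma\,\Pi_V$. Since $\mf^{+/2}|_V$ and $\widetilde{\mL}|_V$ are invertible, $\mc$ restricts to an invertible map of $V$ and vanishes on $K$, whence $\mc^+ = \mf^{1/2}\widetilde{\mL}^+\mf^{1/2}$ (this is where $\ker(\mf) = K$ is essential). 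The bound $\mc^\top\mc \succeq \gamma\Pi_V$ says every singular value of the restriction $\mc|_V$ is at least $\sqrt{\gamma}$, so $\|\mc^+\|_2 = \sigma_{\min}(\mc|_V)^{-1} \le \gamma^{-1/2}$. Combining the two displayed bounds, $\|\mI_{\im(\mL)} - \widetilde{\mL}^+\mL\|_\mf \le \epsilon\,\gamma^{-1/2} = \epsilon\sqrt{\gamma^{-1}}$.

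The only genuinely delicate part is the linear-algebra bookkeeping of the previous two paragraphs: checking that the various pseudoinverses act as honest inverses after restriction to $V$, that the factorization $\mf^{1/2}\widetilde{\mL}^+(\widetilde{\mL}-\mL)\mf^{+/2} = (\mf^{1/2}\widetilde{\mL}^+\mf^{1/2})(\mf^{+/2}(\widetilde{\mL}-\mL)\mf^{+/2})$ is an exact identity, and that $\mc^+ = \mf^{1/2}\widetilde{\mL}^+\mf^{1/2}$ as matrices and not merely on $V$. All of this becomes transparent once one fixes an orthonormal basis adapted to the splitting $\C^n = V \oplus K$, in which $\mf$, $\mL$, $\widetilde{\mL}$ are simultaneously block-diagonal with an invertible $V$-block and a zero $K$-block, reducing every manipulation to the ordinary invertible finite-dimensional case.
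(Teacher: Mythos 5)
Your proof is correct. Note that the paper does not prove this lemma itself --- it is imported verbatim as Lemma 2.6 of \cite{CKKPPRS18} --- and your argument (establishing $\ker(\mf)=\ker(\mL)$, rewriting $\mI_{\im(\mL)} - \widetilde{\mL}^+\mL$ as $\widetilde{\mL}^+(\widetilde{\mL}-\mL)$, and splitting the $\mf$-norm into a preconditioner-quality factor bounded by $\gamma^{-1/2}$ via hypothesis~2 and an error factor bounded by $\epsilon$ via hypothesis~1) is the same standard argument given in that reference.
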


\begin{lemma}[Lemma B.3 in \cite{CKKPPRS18}]\label{lem:lemB.3CKK}
Let $\mL,\widetilde{\mL}, \mf$ be arbitrary matrices with $\ker(\mf) =\ker(\mf^\top) = \ker(\mL) = \ker(\mL^\top) = \ker(\widetilde{\mL}) = \ker(\widetilde{\mL}^\top)$. If $\|\mf^{+/2} (\mL - \widetilde{\mL}) \mf^{+/2}\| \leq \epsilon$ and $\widetilde{\mL}^\top \mf^{+} \widetilde{\mL} \succeq \gamma \mf,$ then $\mL^\top F^+ \mL \approx_{O(\frac{\epsilon}{\sqrt{\gamma}}+\frac{\epsilon^2}{\gamma})} \widetilde{\mL}^\top F^+ \widetilde{\mL}$.
\end{lemma}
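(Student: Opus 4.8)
\textbf{Proof proposal for Lemma~\ref{lem:lemB.3CKK}.}
Throughout, for positive semidefinite $\ma,\mb$ the notation $\ma\approx_c\mb$ means $(1-c)\mb\preceq\ma\preceq(1+c)\mb$ (so in particular $\ker\ma=\ker\mb$). The plan is to expand $\mL^\top\mf^+\mL$ around $\widetilde{\mL}^\top\mf^+\widetilde{\mL}$, bound the resulting cross and second-order terms by $\mf$ using hypothesis~1, and then absorb $\mf$ into $\widetilde{\mL}^\top\mf^+\widetilde{\mL}$ using hypothesis~2.

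First I would pass to the common image. Let $V$ be the image of $\mf$ (equal, by the kernel hypotheses, to the image of $\mL$, $\widetilde{\mL}$, and of their transposes) and $\Pi$ the orthogonal projection onto $V$. Then $\mf^{1/2}\mf^{+/2}=\mf^{+/2}\mf^{1/2}=\Pi$, $\mf^{1/2}\mf^{+}\mf^{1/2}=\Pi$, $\mf^{+}\mf^{1/2}=\mf^{1/2}\mf^{+}=\mf^{+/2}$, and $\Pi\mL\Pi=\mL$, $\Pi\widetilde{\mL}\Pi=\widetilde{\mL}$. Writing $\mM:=\mf^{+/2}(\mL-\widetilde{\mL})\mf^{+/2}$, hypothesis~1 gives $\|\mM\|\le\epsilon$, and $\mM=\Pi\mM\Pi$; these identities yield the decomposition $\mL=\widetilde{\mL}+\mf^{1/2}\mM\mf^{1/2}$.

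Next I would multiply out, using the collapses above, to get
$$
\mL^\top\mf^+\mL=\widetilde{\mL}^\top\mf^+\widetilde{\mL}+\big(\widetilde{\mL}^\top\mf^{+/2}\mM\mf^{1/2}+\mf^{1/2}\mM^\top\mf^{+/2}\widetilde{\mL}\big)+\mf^{1/2}\mM^\top\mM\mf^{1/2}.
$$
The last term is $\preceq\epsilon^2\mf$ since $\mM^\top\mM\preceq\epsilon^2\Pi$ and $\mf^{1/2}\Pi\mf^{1/2}=\mf$. For the middle term, apply $A^\top B+B^\top A\preceq tA^\top A+t^{-1}B^\top B$ (valid for any $t>0$) with $A=\mf^{+/2}\widetilde{\mL}$ and $B=\mM\mf^{1/2}$; since $A^\top A=\widetilde{\mL}^\top\mf^+\widetilde{\mL}$ and $B^\top B=\mf^{1/2}\mM^\top\mM\mf^{1/2}\preceq\epsilon^2\mf$, the middle term lies between $\pm\big(t\,\widetilde{\mL}^\top\mf^+\widetilde{\mL}+t^{-1}\epsilon^2\mf\big)$. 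Finally I would invoke hypothesis~2 as $\mf\preceq\gamma^{-1}\widetilde{\mL}^\top\mf^+\widetilde{\mL}$ to obtain
$$
\big(1-t-\tfrac{\epsilon^2}{t\gamma}\big)\widetilde{\mL}^\top\mf^+\widetilde{\mL}\;\preceq\;\mL^\top\mf^+\mL\;\preceq\;\big(1+t+\tfrac{\epsilon^2}{t\gamma}+\tfrac{\epsilon^2}{\gamma}\big)\widetilde{\mL}^\top\mf^+\widetilde{\mL},
$$
and choosing $t=\epsilon/\sqrt{\gamma}$ turns $t+\epsilon^2/(t\gamma)$ into $2\epsilon/\sqrt{\gamma}$, so both sides are within $c:=O(\epsilon/\sqrt{\gamma}+\epsilon^2/\gamma)$ of $\widetilde{\mL}^\top\mf^+\widetilde{\mL}$. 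If $c\ge 1$ the statement is vacuous; otherwise the displayed lower bound together with $\widetilde{\mL}^\top\mf^+\widetilde{\mL}\succeq\gamma\mf$ and $\mL\Pi=\mL$ shows $\ker(\mL^\top\mf^+\mL)=\ker(\widetilde{\mL}^\top\mf^+\widetilde{\mL})=\ker\mf$, so $\approx_c$ is well-posed and the lemma follows.

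The only genuinely delicate point is the normalization step: justifying $\mL=\widetilde{\mL}+\mf^{1/2}\mM\mf^{1/2}$ and all the $\mf^{1/2}$--$\mf^{+/2}$ collapses. This is exactly where the full six-fold equality of kernels is used --- it is what lets $\Pi$ be inserted and deleted freely on either side of $\mL$, $\widetilde{\mL}$, and $\mf$. Everything after that is routine manipulation of Loewner inequalities.
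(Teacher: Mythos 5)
Your proof is correct. Note that the paper does not prove this lemma at all---it is imported verbatim from \cite{CKKPPRS18}---so there is no in-paper argument to compare against; your route (normalizing to $\mL=\widetilde{\mL}+\mf^{1/2}\mM\mf^{1/2}$ via the kernel identities, bounding the cross term by $t\,\widetilde{\mL}^\top\mf^+\widetilde{\mL}+t^{-1}\epsilon^2\mf$, absorbing $\mf$ through $\mf\preceq\gamma^{-1}\widetilde{\mL}^\top\mf^+\widetilde{\mL}$, and optimizing $t=\epsilon/\sqrt{\gamma}$) is the standard argument and matches the one in the cited reference, with the kernel bookkeeping handled correctly.
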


\begin{lemma}[Lemma 13 in \cite{CKPPSV16}]\label{lem:lem13CKP} Let $\mL$ be an Eulerian directed Laplacian, $\tr(\mU^{+/2}\mL^\top \mU^+ \mL \mU^{+/2}) \leq 2(n-1)^2$.
\end{lemma}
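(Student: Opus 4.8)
The plan is to rewrite the left-hand side as a Frobenius norm, exploit that the normalized Laplacian of an Eulerian graph is a \emph{normal} matrix, and reduce the bound to a degree-weighted trace estimate for the symmetrized Laplacian.

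Write $\mU := \mU_\mL = \tfrac12(\mL+\mL^\top)$ and assume $\mL$ is connected (otherwise work component by component), so $\ker\mU = \ker\mL = \ker\mL^\top = \mathrm{span}(\vec{\bone})$, and set $\ma := \mU^{+/2}\mL\mU^{+/2}$. Since $\mU^{+/2}$ is symmetric, the quantity to bound is exactly $\tr(\ma^\top\ma) = \|\ma\|_F^2$. First I would record the identity $\mU^{+/2}\mL^\top\mU^+\mL\mU^{+/2} = 2\ma - \ma^2$: from $\mL^\top = 2\mU - \mL$ and $\vec{\bone}^\top\mL = 0$ we get $\mL^\top\mU^+\mL = 2\mU\mU^+\mL - \mL\mU^+\mL = 2\mL - \mL\mU^+\mL$, and conjugating by $\mU^{+/2}$ turns the two terms into $2\ma$ and $\ma^2$. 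Taking traces and using $\tr\ma = \tr(\mU^+\mL) = \tr(\mU^+\mU) = n-1$ (the cross term $\tr\big(\mU^+(\mL-\mU)\big)$ vanishes, since $\mL-\mU$ is skew-symmetric and $\mU^+$ symmetric), the target reduces to
\[
\tr(\ma^2) \;\geq\; 2(n-1) - 2(n-1)^2 \;=\; -2(n-1)(n-2).
\]

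Next I would observe that $\ma$ is \emph{normal}. Writing $\ma = \Pi + \mK$, where $\Pi := \mU^{+/2}\mU\mU^{+/2}$ is the orthogonal projection onto $\ker(\mU)^\perp$ (rank $n-1$) and $\mK := \mU^{+/2}\,\tfrac12(\mL-\mL^\top)\,\mU^{+/2}$ is skew-symmetric with $\Pi\mK = \mK\Pi = \mK$ (using the Eulerian identity $\mL\vec{\bone} = 0$), one checks $\ma\ma^\top = \Pi - \mK^2 = \ma^\top\ma$; the key input is that Eulerianness forces $\mL\mU^+\mL^\top = \mL^\top\mU^+\mL$. Consequently $\ma$ has exactly $n-1$ nonzero eigenvalues, all of the form $1 + i t_\ell$ (real part exactly $1$, since the Hermitian part of $\ma$ is $\Pi$), and because they occur in conjugate pairs, $\tr(\ma^2) = (n-1) - \sum_\ell t_\ell^2$ while $\|\ma\|_F^2 = (n-1) + \sum_\ell t_\ell^2 = (n-1) + \|\mK\|_F^2$. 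So the whole statement is equivalent to the clean inequality $\sum_\ell t_\ell^2 = \|\mK\|_F^2 \leq (n-1)(2n-3)$: a bound on the sum of squared imaginary parts of the eigenvalues of the normalized directed Laplacian $\mU^+\mL$.

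For the remaining bound I would use the incidence decomposition $\mL = \sum_{e\in E} w_e\,\vb_e\,\ve_{u_e}^\top$, where $e$ ranges over the directed edges with weights $w_e$, $u_e$ is the tail of $e$, and $\vb_e := \ve_{u_e}-\ve_{v_e}$. Then $\ma = \sum_e w_e\,(\mU^{+/2}\vb_e)(\mU^{+/2}\ve_{u_e})^\top$, so $\|\ma\|_F^2 = \tr(MN)$, where $M$ and $N$ are the Gram matrices of the families $\{\sqrt{w_e}\,\mU^{+/2}\vb_e\}_e$ and $\{\sqrt{w_e}\,\mU^{+/2}\ve_{u_e}\}_e$. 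Since $\sum_e w_e\,\vb_e\vb_e^\top = 2\mU$, the matrix $M$ is twice an orthogonal projection --- all of its nonzero eigenvalues equal $2$ (and $\tr M = \sum_e w_e R_{\mathrm{eff}}(e) = 2(n-1)$ by Foster's theorem, a consistency check) --- so $\tr(MN) \leq \|M\|_2\,\tr(N) = 2\tr(N)$; and since $\sum_e w_e\,\ve_{u_e}\ve_{u_e}^\top = \md$, the out-degree matrix of $G$ (which, $G$ being Eulerian, is also the degree matrix of the symmetrized graph underlying $\mU$), $\tr(N) = \tr(\md\mU^+)$. It therefore suffices to prove $\tr(\md\,\mU_\mL^+) \leq (n-1)^2$. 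This last inequality --- a Kirchhoff-index / Kemeny-constant-type estimate for the symmetrized graph --- is the step I expect to be the main obstacle: I would attack it by writing $(\mU^+)_{uu} = \tfrac1n\sum_v R_{\mathrm{eff}}(u,v) - \tfrac1n\tr(\mU^+)$, combining with $d_u = \tfrac12\sum_{e\ni u} w_e$ and Foster's identity, and controlling the resulting effective-resistance sums via commute-time bounds for $n$-vertex connected graphs. (For the directed cycle one has $\|\ma\|_F^2 = 2\tr(\md\mU^+) = (n^2-1)/3$, so the order of growth of this route is essentially tight.)
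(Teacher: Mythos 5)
Your reformulation in the first two thirds of the argument is correct: the identity $\mU^{+/2}\mL^\top\mU^+\mL\mU^{+/2}=2\ma-\ma^2$, the computation $\tr(\ma)=n-1$, the normality of $\ma=\Pi+\mK$, and the resulting equivalence of the lemma with $\|\mK\|_F^2\le (n-1)(2n-3)$ all check out. (For what it's worth, the paper does not prove this lemma at all---it imports it from \cite{CKPPSV16}---so the only question is whether your argument stands on its own.)

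It does not, and the problem is not merely that you left the last step open: the inequality $\tr(\md\,\mU_\mL^+)\le (n-1)^2$ that you reduce to is \emph{false}, so no effective-resistance argument can close it. Take $n=3$ and the bidirected multigraph path $1\text{--}2\text{--}3$ in which the edge $\{1,2\}$ has multiplicity $W$ and the edge $\{2,3\}$ has multiplicity $1$; this is Eulerian, $\mL=\mU$ is symmetric, and $\md=\diag(W,W+1,1)$. The effective resistances are $R_{12}=1/W$, $R_{23}=1$, $R_{13}=1+1/W$, and a direct computation gives $\tr(\md\,\mU^+)=\tfrac{1}{9}\left(2W+10+2/W\right)$, which exceeds $(n-1)^2=4$ already for $W\ge 13$ and grows without bound, while the left-hand side of the lemma equals $\tr(\Pi)=n-1=2$ here. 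The step that loses everything is $\tr(MN)\le\|M\|_2\,\tr(N)$: it is a valid inequality, but $\tr(N)=\tr(\md\,\mU^+)$ can be arbitrarily larger than $\|\ma\|_F^2$ whenever the graph has an edge of large weight (equivalently, a small nonzero eigenvalue of $\mU$ misaligned with the heavy degrees), so bounding $\tr(MN)$ by crude norms of the two Gram factors cannot work. To finish from your (correct) intermediate target $\sum_\ell t_\ell^2\le (n-1)(2n-3)$ you need an argument that uses the joint structure of $\mL$ and $\mU$ more tightly---e.g., via $\prod_\ell(1+it_\ell)=\det'(\mL)/\det'(\mU)$ and the matrix-tree theorem, which is the route taken in \cite{CKPPSV16}---rather than decoupling the edge-incidence and degree Gram matrices.
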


\section{Proof of Lemma \ref{lem:derandspacecomplexity}}
\label{app:derandspacecomplexity}
\begin{namedtheorem}[Lemma \ref{lem:derandspacecomplexity} restated]
Let $G_{0}$ be a $d$-regular, directed multigraph on $n$ vertices with a two-way labeling and $H_1,\ldots,H_k$ be $c$-regular undirected graphs with two-way labelings where for each $i\in[k]$, $H_{i}$ has $d\cdot c^{i-1}$ vertices. For each $i\in[k]$ let 
\[
G_i=G_{i-1}\ds H_i.
\]
Then given $v_0\in[n], i_{0}\in[d\cdot c^{i-1}], j_0\in[c]$, \textup{Rot}$_{G_i}(v,(i_0,j_0))$ can be computed in space $O(\log(n\cdot d)+k\cdot\log c)$ with oracle queries to \textup{Rot}$_{H_1},\ldots,\mathrm{Rot}_{H_k}$.
\end{namedtheorem}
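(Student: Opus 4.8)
The plan is to prove the space bound by induction on $i$, by exhibiting an explicit recursive procedure for $\mathrm{Rot}_{G_i}$ that unwinds Definition~\ref{def:derandsquare} and is organized so that the recursion depth $k$ does \emph{not} cause a multiplicative blowup in space. Recall from Definition~\ref{def:derandsquare} that evaluating $\mathrm{Rot}_{G_i}$ on input $(v_0,(i_0,j_0))$ with $i_0\in[d c^{i-1}]$ and $j_0\in[c]$ amounts to: (1) one recursive call $(v_1,i_1)=\mathrm{Rot}_{G_{i-1}}(v_0,i_0)$; (2) one oracle call $(i_2,j_1)=\mathrm{Rot}_{H_i}(i_1,j_0)$; (3) one recursive call $(v_2,i_3)=\mathrm{Rot}_{G_{i-1}}(v_1,i_2)$; (4) output $(v_2,(i_3,j_1))$. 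Since a $G_{i-1}$-label lies in $[d c^{i-1}]$, which is exactly the vertex set of $H_i$, the types in step~(2) match up, and nothing here uses undirectedness of the $G_j$'s (only the $H_j$'s are undirected, which is irrelevant to computing rotation maps).

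First I would set up the data layout that makes the space accounting work. Represent the ``state'' of the computation as a single string holding a current vertex $v\in[n]$ together with a current label, where a $G_i$-label is stored as a $G_{i-1}$-label (a string of $\log d+(i-1)\log c$ bits) followed by one more digit in $[c]$. With this convention the two recursive calls to $\mathrm{Rot}_{G_{i-1}}$ in steps~(1) and~(3) act \emph{in place} on the vertex register together with the $G_{i-1}$-label prefix of the state, leaving the trailing digit ($j_0$, then $j_1$) untouched, while step~(2) rewrites exactly that prefix and the trailing digit, leaving the vertex untouched. Then the whole state string has length $O(\log(nd)+k\log c)$ and is overwritten in place across the recursion. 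The only information the level-$i$ procedure must retain \emph{across} its two nested level-$(i-1)$ calls is one bit saying which of the two calls is in progress, together with the shared value of $i$ (one $O(\log k)$-bit counter that just decrements on recursion). Crucially, the transient scratch needed to service an oracle query $\mathrm{Rot}_{H_i}$ has size $O(\log(d c^{i-1})+\log c)=O(\log d+k\log c)$ but is live only when no nested level-$(i-1)$ computation is running (that scratch is released before step~(3) recurses), so at most one such scratch region is in use at any time; the same goes for the single scratch region ($O(\log(nd))$ cells) used to evaluate $\mathrm{Rot}_{G_0}$ at the bottom of the recursion.

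Assembling this, I would prove by induction on $i$ that $\mathrm{Rot}_{G_i}$ can be evaluated using a shared in-place state of $O(\log(nd)+k\log c)$ cells, a stack of at most $k$ one-bit ``phase'' markers, an $O(\log k)$-bit level counter, and one transient oracle/base-case scratch region of $O(\log(nd)+k\log c)$ cells, for a total of $O(\log(nd)+k\log c)$; the base case $i=0$ is just reading the two-way labeling of $G_0$ directly from the input, and the inductive step is the four-step procedure above under the in-place conventions, invoking the composition behavior of space-bounded subroutines (Proposition~\ref{prop:composition}) only in the ``space is reused, not added'' regime. The step I expect to require the most care — and the main obstacle — is precisely this space accounting: one has to argue rigorously that the depth-$k$ recursion contributes only $O(k)$ \emph{additive} bits (the phase stack) rather than a multiplicative factor of $k$, and this hinges on (a) the in-place manipulation of a single state string of globally bounded length and (b) the fact that the per-level oracle scratch regions are pairwise non-overlapping in time. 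Correctness of the procedure (immediate from unwinding Definition~\ref{def:derandsquare}) and the mechanics of the oracle calls should then be routine.
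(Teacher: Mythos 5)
Your proposal is correct and follows essentially the same route as the paper's proof: both unwind Definition~\ref{def:derandsquare} into two in-place recursive calls to $\mathrm{Rot}_{G_{i-1}}$ sandwiching one oracle call to $\mathrm{Rot}_{H_i}$, acting on a shared state/tape holding $(v,(i_0,j_0))$, and both observe that because the two recursive calls run in succession their space is reused, so each level contributes only $O(\log c)$ additively (the paper phrases this as the recurrence $\Space(G_i)=\Space(G_{i-1})+O(\log c)$ under explicit Turing-machine head conventions, which is your global state string plus phase-marker accounting in different clothing). No gaps.
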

\begin{proof}
When $c$ is subpolynomial we are reasoning about sublogarithmic space complexity, which can depend on the model. So we will be explicit about the model we are using. We compute the rotation map of $G_i$ on a multi-tape Turing machine with the following input/output conventions:

\begin{itemize}
\item Input Description:
\begin{itemize}
\item Tape 1 (read-only): Contains the initial input graph $G_0$, with the head at the leftmost position of the tape.
\item Tape 2 (read-write): Contains the input triple $(v_0,(i_0,j_0))$, where $v_0$ is a vertex of $G_i$, $i_0\in [d\cdot c^{i-1}]$ is an edge label in $G_i$, and $j_0\in[c]$ is an edge label in $H_i$ on a {\em read-write} tape, with the head at the {\em rightmost} position of $j_{0}$. The rest of the tape may contain additional data.
\item Tapes 3+ (read-write): Blank worktapes with the head at the leftmost position.
\end{itemize}

\item Output Description:
\begin{itemize}
\item Tape 1: The head  should be returned to the leftmost position.
\item Tape 2: In place of $(v_0,(i_0,j_0))$, it should contain the output $(v_2,(i_3,j_1))=\mathrm{Rot}_{G_i}(v_0,(i_0,j_0))$ as described in Definition \ref{def:derandsquare}. The head should be at the rightmost position of $j_1$ and the rest of the tape should remain unchanged from its state at the beginning of the computation.
\item Tapes 3+ (read-write): Are returned to the blank state with the heads at the leftmost position.
\end{itemize}
\end{itemize}

Let $\Space(G_i)$ be the amount of space required to compute the rotation map of graph $G_i$. We will show that for all $i\in[k]$, $\Space(G_i)=\Space(G_{i-1})+O(\log c)$. Note that $\Space(G_0)=O(\log (nd))$. 

Fix $i\in [k]$. We begin with $v_0\in[n], i_{0}\in[d\cdot c^{i-1}]$ and $j_0\in[c]$ on tape 2 with the head on the rightmost position of $j_0$ and we want to compute Rot$_{G_i}(v_0,(i_0,j_0))$. We move the head left to the rightmost position of $i_0$, recursively compute Rot$_{G_{i-1}}(v_0,i_0)$ so that tape 2 now contains $(v_1,i_1,j_0)$.  Then we move the head to the rightmost position of $j_0$ and compute Rot$_{H_i}(i_1,j_0)$ so that tape 2 now contains $(v_1,i_2,j_1)$. Finally, we move the head to the rightmost position of $i_2$ and compute Rot$_{G_{i-1}}(v_1,i_2)$ so that tape 2 contains $(v_2,i_3,j_1)$. 

This requires 2 evaluations of the rotation map of $G_{i-1}$ and one evaluation of the rotation map of $H_i$. Note that we can reuse the same space for each of these evaluations because they happen in succession. The space needed on top of $\Space(G_{i-1})$ is the space to store edge label $j_0$, which uses $O(\log c)$ space. So $\Space(G_i)=\Space(G_{i-1})+O(\log c)$. Since $i$ can be as large as $k$ and $\Space(G_0)=O(\log (n\cdot d))$ we get that for all $i\in[k]$, $\Space(G_i)=O(\log n\cdot d + k\cdot\log c)$.
\end{proof}

\end{document}